\setlist{  
  listparindent=\parindent,
  parsep=0pt,
}
 \renewenvironment{abstract}{%
   \small%
   \begin{center}%
     {\bfseries \sffamily\abstractname\vspace{-.5em}\vspace{\z@}}%
   \end{center}%
   \quotation
 }
\declaretheoremstyle[
spaceabove=\parsep, spacebelow=\parsep,
headfont={\sffamily\color{Navy}},
bodyfont=\normalfont,
notefont=\sffamily, notebraces={(}{)},
postheadspace=0.5em,
mdframed={
  linecolor=Navy!50,
  linewidth=1,
  innertopmargin=1pt,
  innerbottommargin=6pt,
  innerleftmargin=6pt,
  innerrightmargin=6pt,
  roundcorner=1pt,
}
]{thmstyle}
\declaretheoremstyle[
spaceabove=\parsep, spacebelow=\parsep,
headfont={\sffamily\color{DarkGreen}},
bodyfont=\normalfont,
notefont=\sffamily, notebraces={(}{)},
postheadspace=0.5em,
mdframed={
  linecolor=DarkGreen!50,
  linewidth=1,
  innertopmargin=1pt,
  innerbottommargin=6pt,
  innerleftmargin=6pt,
  innerrightmargin=6pt,
  roundcorner=1pt,
}
]{defstyle}
\declaretheorem[%
name=Theorem,
parent=section,
style=thmstyle
]%
{theorem}
\declaretheorem[%
name=Lemma,
sibling=theorem,
style=thmstyle
]%
{lemma}
\declaretheorem[%
name=Corollary,
sibling=theorem,
style=thmstyle
]%
{corollary}
\declaretheorem[%
name=Proposition,
sibling=theorem,
style=thmstyle
]%
{proposition}
\declaretheorem[%
name=Claim,
sibling=theorem,
style=thmstyle
]%
{claim}
\declaretheorem[%
name=Fact,
sibling=theorem,
style=thmstyle
]%
{fact}
\declaretheorem[%
name=Definition,
sibling=theorem,
style=defstyle
]%
{definition}
\numberwithin{equation}{section}
\newcommand{\PROB}[2][]{\Pr_{#1} \left[ #2 \right]}
\newcommand{\Prob}[2][]{\Pr_{#1} \bigl[ #2 \bigr]}
\newcommand{\set}[1]{\{ #1 \}}
\newcommand{\Set}[1]{\bigl\{ #1 \bigr\}}
\newcommand{\eqperiod}{\enspace .}
\newcommand{\eqcomma}{\enspace ,}
\newcommand{\ie}{i.e.,\ }
\newcommand{\lbQ}{\kappa}
\newcommand{\disjointunion}{\overset{.}{\cup}}
\DeclareMathOperator{\poly}{poly}
\DeclareMathOperator*{\img}{img}
\DeclareMathOperator*{\vc}{vc}
\DeclareMathOperator{\size}{Size}
\newcommand{\rect}{Q}
\newcommand{\ind}{\mathbbm{1}}
\newcommand{\indic}[1]{\ind_{\set{#1 \text{ is a clique}}}}
\newcommand{\eps}{\varepsilon}
\newcommand{\etanew}{\eta}
\DeclareMathOperator{\core}{core}
\newcommand{\lexvc}{W}
\newcommand{\compatible}{large\xspace}
\newcommand{\calD}{\mathcal{D}}
\newcommand{\calF}{\mathcal{F}}
\newcommand{\calG}{\mathcal{G}}
\newcommand{\calH}{\mathcal{H}}
\newcommand{\calP}{\mathcal{P}}
\newcommand{\calQ}{\mathcal{Q}}
\newcommand{\calT}{\mathcal{T}}
\DeclareMathOperator*{\E}{\mathbb{E}}
\newcommand{\R}{\mathbb{R}}
\newcommand{\N}{\mathbb{N}}
\crefname{case}{Case}{Cases}
\Crefname{case}{Case}{Cases}
\crefname{property}{Pro\-per\-ty}{Pro\-per\-ties}
\Crefname{property}{Pro\-per\-ty}{Pro\-per\-ties}
\renewcommand{\refeq}[1]{(\ref{#1})}
\declaretheorem[%
sibling=theorem,
name=Theorem,
style=thmstyle
]%
{restatabletheorem}
\declaretheorem[%
sibling=theorem,
name=Lemma,
style=thmstyle
]%
{restatablelemma}
\renewenvironment{thmt@restatable}[3][]{%
  \thmt@toks{}
  \stepcounter{thmt@dummyctr}
  \long\def\thmrst@store##1{%
    \@xa\gdef\csname #3\endcsname{%
      \@ifstar{%
        \thmt@thisistheonefalse\csname thmt@stored@#3\endcsname
      }{%
        \thmt@thisistheonetrue\csname thmt@stored@#3\endcsname
      }%
    }%
    \@xa\long\@xa\gdef\csname thmt@stored@#3\@xa\endcsname\@xa{%
      \begingroup
      \ifthmt@thisistheone
      \else
        \@xa\protected@edef\csname the#2\endcsname{%
          \thmt@trivialref{thmt@@#3}{??} (Restated)}%
        \ifcsname r@thmt@@#3\endcsname\else
          \G@refundefinedtrue
        \fi
        \@xa\let\csname c@#2\endcsname=\c@thmt@dummyctr
        \@xa\let\csname theH#2\endcsname=\theHthmt@dummyctr
        \let\label=\@gobble
        \let\ltx@label=\@gobble
        \def\thmt@restorecounters{}%
        \@for\thmt@ctr:=\thmt@innercounters\do{%
          \protected@edef\thmt@restorecounters{%
            \thmt@restorecounters
            \protect\setcounter{\thmt@ctr}{\arabic{\thmt@ctr}}%
          }%
        }%
        \thmt@trivialref{thmt@@#3@data}{}%
      \fi
      \ifthmt@restatethis
        \thmt@restatethisfalse
      \else
        \csname #2\@xa\endcsname\ifx\@nx#1\@nx\else[{#1}]\fi
      \fi
      \ifthmt@thisistheone
         \thmt@rst@storecounters{#3}%
        \label{thmt@@#3}%
      \fi
      ##1%
      \csname end#2\endcsname
      \ifthmt@thisistheone\else\thmt@restorecounters\fi
      \endgroup
    }
    \csname #3\@xa\endcsname\ifthmt@thisistheone\else*\fi
    \@xa\end\@xa{\@currenvir}
  }
  \thmt@collect@body\thmrst@store
}{%
}
\newcommand{\su}[1]{}
\newcommand{\ap}[1]{}
\newcommand{\kr}[1]{}
\title{Clique Is Hard on Average for Sherali-Adams\\ with Bounded
  Coefficients%
  \thanks{%
    This is the full-length version of a paper
    with the 
    title \emph{``Clique Is Hard on Average for Unary Sherali-Adams''}
    that appeared in the \emph{Proceedings of the 64th Annual IEEE
      Symposium on Foundations of Computer Science (FOCS '23)}.}}
\author{%
\makebox[.23\linewidth]{Susanna F. de Rezende}\\
\textsl{Lund University}%
\and
\makebox[.23\linewidth]{Aaron Potechin}\\
\textsl{University of Chicago}%
\and
\makebox[.23\linewidth]{Kilian Risse}\\
\textsl{EPFL}%
}
\date{\today}
\begin{document}
\maketitle

\begin{abstract}
  We prove that Sherali-Adams with polynomially bounded coefficients
  requires proofs of size $n^{\Omega(d)}$ to rule out the existence of
  an $n^{\Theta(1)}$-clique in Erd\H{o}s-R\'{e}nyi random graphs whose
  maximum clique is of size $d\leq 2\log n$. This lower bound is tight
  up to the multiplicative constant in the exponent.
  We obtain this result by introducing a technique inspired by
  pseudo-calibration which may be of independent interest. The
  technique involves defining a measure on monomials that precisely
  captures the contribution of a monomial to a refutation. This
  measure intuitively captures progress and should have further
  applications in proof complexity.
\end{abstract}


\pagestyle{fancy}
\fancyhead{}
\fancyfoot{}
\renewcommand{\headrulewidth}{0pt}
\renewcommand{\footrulewidth}{0pt}
\fancyfoot[C]{\sffamily\thepage}

\pagenumbering{roman}

\thispagestyle{empty}
\newpage

\setcounter{tocdepth}{2}{\sffamily\tableofcontents}
\newpage

\pagenumbering{arabic}
\setcounter{page}{1}

\newcommand{\NP}{\textbf{NP}}
\newcommand{\Pclass}{\textbf{P}}
\newcommand{\Wclass}{\textbf{W}[1]}

\section{Introduction}
The problem of identifying a maximum clique in a given graph, that is,
finding a fully connected subgraph of maximum size, is one of the
fundamental problems of theoretical computer science. Already
mentioned by Cook~\cite{Cook71ComplexityTheoremProving} in his seminal
paper introducing the theory of \NP-complete problems, it was one of
the first combinatorial problems proven \NP-hard by
Karp~\cite{Karp72Reducibility}. Building on the PCP theorem this
result was later strengthened to even rule out polynomial time
algorithms that approximate the maximum clique size within a factor
of~$n^{1-\eps}$~\cite{Hastad99Clique,Zuckerman07LinearDegreeExtractors},
unless $\Pclass = \NP$.

A related problem is $k$-clique: given an $n$-vertex graph, determine
whether it contains a clique of size $k$. This problem can be solved
in time $O(n^k)$ by iterating over all subsets of vertices of size $k$
and checking whether one of them is a clique. This naïve algorithm is
essentially the fastest known; a clever use of fast matrix
multiplication~\cite{NP85CplxSubgraph} allows a slight improvement
upon the constant in the exponent but no algorithms with a sublinear
dependence on $k$ in the exponent are known.

If we only assume $\Pclass \neq \NP$, it is unknown whether there are
faster algorithms for $k$-clique. However, improving upon the linear
dependence on $k$ in the exponent would disprove the exponential time
hypothesis~\cite{CHKX04LinearFPTreductions} and getting rid of the
dependence on $k$ 
altogether would imply that the class of fixed parameter tractable
problems collapses to
$\Wclass$~\cite{DF95FPTandCompletenessII}. Hence, if one is willing to
make the strong assumption that the exponential time hypothesis holds,
then the naïve algorithm has essentially optimal running time in the
worst-case.

Besides studying $k$-clique in the worst-case, 
one may consider it in the average-case setting. Suppose
the given graph is an Erd\H{o}s-Rényi graph with edge probability
around the threshold of containing a $k$-clique. Does $k$-clique
require time $n^{\Omega(k)}$ on such graphs? Or, even less
ambitiously, is there an algorithm running in time $n^{o(k)}$ that
decides the $n^\eps$-clique problem on such graphs?
It is unlikely that the hardness of such average-case questions can be
based on worst-case hardness assumptions such as
$\textbf{P} \neq \textbf{NP}$ or the exponential time
hypothesis~\cite{BT06}. They are, in fact, being used as hardness
assumptions themselves: the \emph{planted clique conjecture} states
that $n^{1/2-\eps}$-clique requires time $n^{\Omega(\log n)}$ on
Erd\H{o}s-Rényi graphs with edge probability~$1/2$.

In order to obtain \emph{unconditional} lower
bounds -- that do not rely on hardness assumptions --
for such average-case questions, we focus
on limited models of computation.
This approach has turned out to be quite fruitful and several results
of this form have emerged over the past few decades.
For Boolean circuits, Rossman~\cite{Rossman08, Rossman10} proved two
remarkable results: he showed that monotone circuits, i.e., circuits
consisting of $\vee$ and $\wedge$ gates only, as well as circuits of
constant depth require size $\Omega(n^{k/4})$ to refute the existence
of a $k$-clique in the average-case setting.

Instead of studying circuits, it is also possible to 
approach this problem through the
lens of proof complexity. Very broadly, proof complexity studies
certificates of unsatisfiability of propositional formulas. As we
cannot argue about certificates of unsatisfiability in general we
consider certificates of a certain form, or in terms of proof
complexity, refutations in a given proof system.
For instance, if we prove that any certificate in a proof system $P$
that witnesses that a given $n$-vertex graph contains no $k$-clique
requires length $n^{\Omega(k)}$ on average,
then we immediately obtain
average-case $n^{\Omega(k)}$ running time lower bounds for any
algorithm whose trace can be interpreted as a proof in the system $P$. 
It is often the case that state-of-the-art algorithms can be captured
by seemingly simple proof systems, as was shown to be the case
for clique algorithms~\cite{ABRLNR21}.

It is often the case that weak proof systems are sensitive to the 
precise encoding of combinatorial principles.
The $k$-clique formula is no exception: it is somewhat straightforward
to prove almost optimal $n^{\Omega(k)}$ resolution size lower bounds
for the less usual binary encoding of the $k$-clique formula
\cite{LPRT17ComplexityRamsey} and these lower bounds can even be
extended to an $n^{\Omega(k)}$ lower bound for the Res($s$) proof
system for constant $s$~\cite{DGGM20}.  For the more natural unary
encoding not much is known.  There are essentially optimal
$n^{\Omega(k)}$ average-case size lower bounds for regular resolution
\cite{ABRLNR21,Pang21} and tree-like resolution
\cite{BGLR12Parameterized,Lauria18}.  For resolution, there are
two 
average-case lower bounds that hold in different regimes: for
$n^{5/6} \ll k \le n/3$, Beame et al.~\cite{BIS07} proved an
average-case $\exp(n^{\Omega(1)})$ size lower bound and for
$k\leq n^{1/3}$, Pang~\cite{Pang21} proved a $2^{k^{1-o(1)}}$ lower
bound.
It is a long standing open problem, mentioned, e.g., in
~\cite{BGLR12Parameterized}, to prove an unconditional $n^{\Omega(k)}$
resolution size lower bound for the unary encoding -- even in the
worst case.

Little is known about the average-case hardness of the
$k$-clique formula in the semi-algebraic setting. There are optimal
degree lower bounds for $k \leq n^{1/2 - \eps}$ 
for the Sum-of-Squares proof system
\cite{MPW15SumOfSquaresPlantedClique, BHKKMP16clique, Pang21-sos}, but 
there are no non-trivial lower bounds on size.
For Nullstellensatz, however,
if restricted to not use dual variables,
then size lower bounds follow by a simple syntactic
argument~\cite{Margulies08Thesis}.  Prior to our work no other size
lower bounds were known for algebraic or semi-algebraic proof systems.

\subsection{Our Result}

In this work we establish that Sherali-Adams~\cite{AS94,DM13} with
polynomially bounded coefficients requires size $n^{\Omega(D)}$ to
refute the $n^{1/100}$-clique formula on random graphs whose maximum
clique size is of size $D \leq 2 \log n$. Qualitatively this
establishes the planted clique conjecture for Sherali-Adams with
polynomially bounded coefficients. This is the first size lower bound
on the clique formula for a semi-algebraic proof system.

\begin{theorem}[Informal]
  \label{thm:main}
  For all integers $n \in \N^+$ and $ D \le 2 \log n$, if
  $G \sim \calG(n,n^{-2/D})$ is an Erd\H{o}s-R\'enyi random graph,
  then it holds asymptotically almost surely that Sherali-Adams with
  polynomially bounded coefficients requires size at least
  $n^{\Omega(D)}$ to refute the claim that~$G$ contains a clique of
  size $k$, for any $k \leq n^{1/67}$.
\end{theorem}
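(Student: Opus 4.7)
The plan is to prove the size lower bound by constructing a linear functional $\mu$ on the polynomial ring in the clique-selection variables that separates the constant $-1$ from all axiom-multiplications with polynomially bounded coefficients. A Sherali-Adams refutation of the $k$-clique formula is an identity of the form $-1 = \sum_i c_i A_i m_i$ modulo the Boolean axioms, where each $A_i$ is a non-negative axiom, each $m_i$ is a monomial, and each $|c_i| \le \poly(n)$ by the bounded-coefficient hypothesis. If I can exhibit $\mu$ with $|\mu(A \cdot m)| \le \beta$ for every axiom $A$ and every monomial $m$ but $|\mu(-1)| \ge \alpha$, then the size of any refutation is at least $\alpha/(\poly(n) \cdot \beta)$, and the target is to arrange $\alpha/\beta = n^{\Omega(D)}$.

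I would construct $\mu$ by a pseudo-calibration procedure inspired by the planted distribution: consider sampling $G \sim \calG(n, n^{-2/D})$ together with a planted $D$-clique, and let $\mu$, on a monomial $m = \prod_j x_{v_j, i_j}$, measure, in a suitably truncated sense, the pseudo-likelihood that the planted clique is consistent with $m$. The novelty, foreshadowed in the abstract, is to define this measure directly on monomials rather than only as a pseudo-expectation on assignments; this makes it possible to track the contribution of each individual proof-line $A_i m_i$ and is what converts a degree lower bound into a size lower bound. The analysis then splits by axiom family: the edge axioms $x_{u,i} x_{v,j}$ for non-edges $(u,v)$ are where the random-graph hypothesis enters, since asymptotically almost surely the number of near-$D$-cliques through any non-edge is small, limiting the contribution of each edge axiom; the clique-position equalities $\sum_v x_{v,i} = 1$ and the Boolean axioms $x_{v,i}^2 = x_{v,i}$ require instead that $\mu$ be \emph{nearly invariant} under the corresponding rewritings, which I would arrange by building the right symmetry into the truncation.

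The main obstacle, I expect, is calibrating $\mu$ so that it is simultaneously large on $-1$, tolerant of the equality and Boolean axioms, and robust against arbitrary monomials amplified by polynomially large coefficients. The naive pseudo-calibration projection used in Sum-of-Squares lower bounds is not directly applicable here because its implicit coefficients can be exponentially large, and such blow-ups would destroy the per-axiom bound $\beta$; an honest truncation adapted to the bounded-coefficient regime is therefore essential, and I expect its design to be the core technical contribution. In addition, I anticipate needing a union-bound/concentration argument over $G \sim \calG(n, n^{-2/D})$ to preclude atypical subgraphs that would concentrate too much measure on a single axiom. Once these pieces fit together, $\alpha$ should be of constant order while $\beta$ decays as $n^{-\Omega(D)}$ up to polynomial factors, yielding the claimed $n^{\Omega(D)}$ size lower bound uniformly for $k \le n^{1/67}$.
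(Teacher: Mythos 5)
Your high-level strategy coincides with the paper's: the separating functional you describe is exactly the paper's ``pseudo-measure'' (the LP dual of minimum-coefficient-size Sherali-Adams), the measure is indeed a truncated, pseudo-calibration-style count of cliques consistent with a monomial, the block and Boolean axioms are handled exactly (the paper's measure maps them to $0$ identically), and the edge axioms are where the randomness of $G$ enters. However, there is a genuine gap in how you set up the duality, and it hides the hardest part of the proof. A Sherali-Adams certificate is $\sum_j g_j p_j + f_0 \equiv -1$ where $f_0$ is an \emph{arbitrary non-negative combination of monomials}, not a combination of axiom multiples. You cannot ask for $|\mu(A\cdot m)|\le\beta$ uniformly over this part: that would force $|\mu(m)|\le\beta$ for every monomial, which is incompatible with $\mu(1)\approx 1$. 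The correct dual constraint coming from $f_0$ is one-sided: $\mu(m)\ge -\beta$ for \emph{every} monomial $m$. Proving this approximate non-negativity is the bulk of the paper (it requires decomposing the rectangle of tuples consistent with an arbitrary monomial into $n^{O(d)}$ pieces that are either negligibly small, subrectangles of edge axioms, or ``good'' rectangles on which the measure concentrates around a strictly positive value). Your phrase ``robust against arbitrary monomials'' does not identify this requirement, and without it the size bound does not follow.

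A second, related gap is that the truncation --- which you correctly flag as the core design problem --- is left unspecified, and the natural choice (truncating the Fourier expansion by degree, as in SoS pseudo-calibration) would not work here. The paper truncates by the \emph{minimum vertex cover} of the pattern graph $H$ in the expansion $\sum_{t}\sum_{H}\chi_{H(t)}(G)$. This choice matters twice: a matching of size $\vc(H)/2$ guarantees enough independent randomness in each surviving character, and, crucially, the family of patterns with vertex cover exactly $d$ (of which there are about $2^{dk}$, far too many to bound term by term) admits a partition into $n^{O(d)}$ classes indexed by small ``core'' graphs, within each of which the characters can be summed in closed form. Any truncation you propose must support an analogous grouping, both for the edge-axiom bound and for the non-negativity argument; this is not something a generic concentration or union-bound step supplies.
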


Note that Sherali-Adams with polynomially bounded coefficients is
stronger than unary Sherali-Adams and is incomparable to
resolution~\cite{GHJMPRT22}. Our result further applies to the
SubCubeSums proof system~\cite{FMSV23MaxSAT} as
our proof strategy gives a lower bound on the sum of the magnitude of
the coefficients of a Sherali-Adams refutation, ignoring Boolean
axioms.

Let us stress that the size lower bound holds regardless of the degree
of the refutation.  This is a somewhat unique feature of our technique
-- all other lower bound strategies for Sherali-Adams and
Sum-of-Squares are tailored to proving degree lower bounds, which, if
strong enough, imply size lower bounds by the size-degree
relation~\cite{AH18SosTradeoff}.  Since the clique formula has
refutations of degree $D$ we cannot expect to obtain size lower bounds
through this connection for $D\leq \sqrt{n}$. We therefore introduce a
new technique, inspired by pseudo-calibration~\cite{BHKKMP16clique},
that is more refined -- for any monomial $m$, of arbitrary degree, we
determine a lower bound on the size of the smallest
bounded-coefficient Sherali-Adams derivation of $m$.

\subsection{Organization}

The rest of this paper is organized as follows.  In \cref{sec:prelim}
we introduce some basic terminology to then outline our proof strategy
in \cref{sec:lb-overview} where we also attempt to convey some
intuition. With the motivation at hand from \cref{sec:lb-overview} we
then go on to define the central combinatorial concept of a
\emph{core} of a graph in \cref{sec:cores-bounds} and a notion of
pseudorandomness in \cref{sec:random}.  We proceed in
\cref{sec:outline} to prove the main theorem for any graph satisfying
our notion of pseudorandomness, but postpone the proof of one of the
main lemmas to \cref{sec:good-rect}.  Finally, in
\cref{sec:random-proof}, we show that a random graph satisfies our
pseudorandomness property and conclude with some open problems in
\cref{sec:conclusion}.


\section{Preliminaries}
\label{sec:prelim}

Natural logarithms (base $\mathrm{e}$) are denoted by $\ln$, whereas
base $2$ logarithms are denoted by $\log$.  For integers $n \ge 1$ we
introduce the shorthand $[n] = \set{1, 2, \ldots, n}$ and sometimes
identify singletons $\set{u}$ with the element $u$. Let
$\binom{S}{\ell}$ denote the set of subsets of $S$ of size $\ell$ and,
for a given a random variable $X$ and an event $P$, we denote by
$\ind_{P}(X)$ the indicator random variable that is $1$ if $P$ holds
and $0$ otherwise.

\subsection{Semantic Sherali-Adams}
\label{sec:SA}

Let $\calP = \set{p_1 = 0, \ldots, p_m = 0}$ be a polynomial system of
equations over Boolean variables $x_1, \ldots, x_n$ and their twin
variables $\bar{x}_1, \ldots, \bar{x}_n$. Denote by
\begin{align}
  B_n =
  \set{x_i(1-x_i) \mid i \in [n]}
  \cup
  \set{\bar{x}_i(1-\bar{x}_i) \mid i \in [n]}
  \cup
  \set{1 - \bar{x}_i - {x}_i \mid i \in [n]}
\end{align}
the corresponding Boolean axioms and negation axioms and let $I_{B_n}$
denote the ideal generated by $B_n$, that is, $I_{B_n}$ consists of
all polynomials of the form $\sum_{j} r_j q_j$ where the $r_j$ are
arbitrary polynomials and $q_j \in B_n$. For an ideal $I$ and
polynomials $p$ and $q$ we write $p \equiv q \mod I$ if $p-q \in I$.

A \emph{semantic Sherali-Adams refutation of $\calP$} is a sequence of
polynomials $(g_1, \ldots, g_m, f_0)$ such that $f_0$ is of the form
\begin{equation}
  f_0 =
  \sum_{\substack{A,B\subseteq [n]\\\alpha_{A,B} \ge 0}}
  \alpha_{A,B} \prod_{i\in A}x_i \prod_{i\in B}\bar{x}_i
\end{equation}
and it holds that
\begin{equation}\label{eq:SArefutation}
  \sum_{j\in [m]}g_jp_j + f_0 \equiv -1 \mod I_{B_n} \eqperiod
\end{equation}
The \emph{size} of a refutation $\pi$, denoted by $\size(\pi)$, is the
sum of the size of the binary encodings of the non-zero coefficients
on the left hand side of~\refeq{eq:SArefutation} when all polynomials
are expanded as a sum of monomials (without cancellations), while the
\emph{coefficient size} of $\pi$ is the sum of the magnitudes of the
coefficients of the aforementioned monomials.

We note that this definition differs from that of the usual Sherali-Adams proof 
system~\cite{AS94,DM13} where Boolean axioms and negation axioms
are written out explicitly and the size is measured taking also these axioms
into account. We are not
the first to disregard the size contribution of these axioms:
most size lower bounds for Sherali-Adams also apply in this setting.
The question of whether these two size measures are polynomially related 
was raised explicitly in~\cite{FHRSV24}, where the above system is referred
to as \emph{succinct} Sherali-Adams, and remains open.

To verify that semantic Sherali-Adams is a Cook-Reckhow proof system, 
we need to check that a semantic Sherali-Adams refutation is verifiable in 
time polynomial in the size of the refutation. This was originally shown
in~\cite{FHRSV24}, but
we provide a simpler and more direct proof of this fact.

\begin{proposition}
  Let $\calP$ be a polynomial system of equations over $n$ Boolean
  variables and their twin variables. A semantic Sherali-Adams refutation $\pi$
  of $\calP$ can be verified in time
  $O\bigl(\size(\pi)^2 \cdot \poly(n)\bigr)$.
\end{proposition}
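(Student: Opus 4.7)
The plan is to exhibit an explicit verification algorithm and analyze its runtime. Given $\pi = (g_1, \ldots, g_m, f_0)$, I would first form the monomial expansion of $H := \sum_{j \in [m]} g_j p_j + f_0 + 1$ (rewriting $\equiv -1$ as $\equiv 0$). By the definition of $\size(\pi)$, this expansion has at most $\size(\pi)$ non-zero monomials, with coefficient bit-lengths summing to $\size(\pi)$. Forming all pairwise products of the monomials appearing in each $g_j$ and $p_j$ and collecting them takes $O(\size(\pi)^2 \cdot \poly(n))$ time. In the same pass I would check that every $\alpha_{A,B}$ in the explicit representation of $f_0$ is non-negative, which is a trivial syntactic inspection.

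Next, I would reduce each monomial modulo $I_{B_n}$ by applying the rewrites $x_i^k \to x_i$, $\bar{x}_i^k \to \bar{x}_i$, and $x_i \bar{x}_i \to 0$. The final rewrite is sound because $x_i \bar{x}_i = x_i(1 - x_i) + x_i(x_i + \bar{x}_i - 1)$ is a sum of two elements of $I_{B_n}$. Combining like terms then produces a polynomial $H' = \sum_{A \cap B = \emptyset} d_{A,B} \prod_{i \in A} x_i \prod_{i \in B} \bar{x}_i$ with at most $\size(\pi)$ distinct terms, in additional time $O(\size(\pi) \cdot \poly(n))$.

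Finally, since $I_{B_n}$ is the vanishing ideal of the set $\set{(\alpha, 1 - \alpha) : \alpha \in \set{0,1}^n}$, checking $H \equiv 0 \mod I_{B_n}$ reduces to checking that $H'$ evaluates to zero at every such tuple. Each term of $H'$ is the indicator of a subcube of $\set{0,1}^n$, so the resulting Boolean function $H'$ lies in a subspace of $\R^{\set{0,1}^n}$ of dimension at most $\size(\pi)$. My plan is to certify $H' \equiv 0$ by evaluating at a well-chosen family of $O(\size(\pi))$ assignments---for instance the indicators $\ind_A$ and $\ind_{[n] \setminus B}$ for each supporting pair $(A, B)$---and verifying that every evaluation is zero. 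Each evaluation costs $O(\size(\pi) \cdot \poly(n))$ time. The main obstacle I expect is establishing that this family yields a full-rank evaluation map on the relevant subspace; I would handle this by ordering the monomials by $|A|$ and exploiting the resulting block-triangular structure of the evaluation matrix, with the $\ind_{[n] \setminus B}$ points breaking ties between monomials sharing the same $A$. Once this is shown, the total verification time is $O(\size(\pi)^2 \cdot \poly(n))$ as claimed.
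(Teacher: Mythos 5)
Your first two steps are sound: expanding $\sum_j g_jp_j + f_0 + 1$ into at most $\size(\pi)^2$ monomials, checking $\alpha_{A,B}\geq 0$, and multilinearizing via $x_i^2\to x_i$, $\bar x_i^2\to\bar x_i$, $x_i\bar x_i\to 0$ correctly reduce the task to deciding whether a normal-form polynomial $H'=\sum_{A\cap B=\emptyset} d_{A,B}\prod_{i\in A}x_i\prod_{i\in B}\bar x_i$ vanishes as a function on $\set{0,1}^n$ (and you rightly note that $H'=0$ syntactically is not the right test, since e.g.\ $x_i+\bar x_i-1\equiv 0$). The gap is in the final step. Soundness of your verifier requires that the only function in $\mathrm{span}\set{\ind_{(A,B)}}$ vanishing on your $O(\size(\pi))$ chosen points is the zero function, and the block-triangularity sketch does not establish this. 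Ordering columns by $|A|$, the row indexed by $\ind_A$ indeed has zero entries on columns $(A',B')$ with $|A'|>|A|$, but its restriction to the block of columns sharing the first component $A$ is the all-ones row (every $(A,B')$ evaluates to $1$ at $\ind_A$), so these rows contribute rank one per block no matter how many monomials share that $A$; and the tie-breaking rows $\ind_{[n]\setminus B}$ are live for monomials across many different $A$-blocks at once, which destroys any block-triangular structure. So the key interpolation lemma is left unproved; it is a genuinely delicate statement precisely because the indicators $\ind_{(A,B)}$ are linearly dependent, and I do not see that your choice of points is obviously sufficient.

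The paper's proof sidesteps interpolation altogether: it views the (expanded) left-hand side as a vector $\pi\in\R^{2^n}$ indexed by assignments and certifies $\pi=\varmathbb{0}$ by computing $\langle\pi,\pi\rangle$ exactly, expanding the square into $O(\size(\pi)^2)$ pairwise inner products $\langle\ind_{m},\ind_{m'}\rangle$, each of which is either $0$ or a power of $2$ computable in $\poly(n)$ time. If you replace your evaluation-at-points step with this squared-norm computation applied to $H'$, your argument closes with the same $O(\size(\pi)^2\cdot\poly(n))$ bound and no interpolation lemma is needed.
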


\begin{proof}
  Denote the coefficient of a monomial $m$ in a polynomial
  $p \in \calP$ by $\gamma_{p}(m)$. Further, given a monomial $m$, let
  $\ind_m$ be the vector of length $2^n$, indexed by assignments
  $\rho \in \set{0, 1}^n$, that corresponds to the truth table of $m$:
  the entry with index $\rho$ is~$1$ if the monomial $m$ evaluates
  to~$1$ under $\rho$ and~$0$ otherwise. We denote the all-ones vector
  $\ind_1$ by $\ind$, and the all-zero vector $\ind_0$ by $\varmathbb{0}$.

  With this notation at hand we may equivalently define a semantic
  Sherali-Adams refutation of~$\calP$ as a sequence of rational
  vectors $(\beta_1, \ldots, \beta_{|\calP|}, \alpha)$ indexed by
  multilinear monomials with $\alpha \geq 0$ and
  \begin{align}
    \sum_{p\in \calP}
    \sum_{m_1,m_2} 
    \beta_{p}(m_1) \cdot \gamma_{p}(m_2) \cdot \ind_{m_1 \cdot m_2} - 
    &\sum_{m} \alpha(m) \cdot \ind_m - \ind =  \varmathbb{0} \eqperiod
  \end{align}
  View the left hand side as a vector $\pi$ indexed by assignments
  $\rho \in \set{0,1}^n$. Since the vector $\pi$ is rational,
  we have that $\pi$ is the $0$-vector if and
  only if the inner product is $0$, that is,
  $\langle \pi, \pi \rangle = 0$. This can be efficiently checked by
  expanding the inner product
  \begin{align}
    \begin{split}
      \langle \pi, \pi \rangle
      &=
        \Big(
        \sum_{p\in \calP} \sum_{\substack{m_1,m_2}} 
        \beta_{p}(m_1) \cdot
        \gamma_{p}(m_2) \cdot
        \ind_{m_1 \cdot m_2}
        \Big)^2
        +
        \Big(\sum_{m} \alpha(m) \cdot \ind_m\Big)^2
        +
        \langle\ind,\ind\rangle\\
      &\quad{}-
        2
        \sum_{p \in \calP}
        \sum_{m_1,m_2}
        \sum_m
        \alpha(m) \cdot
        \beta_{p}(m_1) \cdot
        \gamma_{p}(m_2)
        \cdot
        \langle\ind_{m_1\cdot m_2},\ind_{m} \rangle\\
      &\quad{}-
        2
        \sum_{p \in \calP}
        \sum_{m_1,m_2}
        \beta_{p}(m_1)\cdot
        \gamma_{p}(m_2)
        \cdot
        \langle\ind_{m_1\cdot m_2},\ind \rangle
        +
        2
        \sum_m
        \alpha(m)
        \cdot
        \langle\ind_{m},\ind \rangle \eqperiod
    \end{split}
  \end{align}
  Observe that in the above expression all the inner products can be
  efficiently computed. Hence checking whether $\pi$ is indeed a valid semantic
  Sherali-Adams refutation boils down to verifying that the above sum
  of bounded rationals is equal to $0$.
\end{proof}

As the distinction between semantic Sherali-Adams and Sherali-Adams is
not essential for what follows, we refer to semantic Sherali-Adams
simply as Sherali-Adams going forward.
A Sherali-Adams refutation $\pi$ of $\calP$ is a \emph{Sherali-Adams
  refutation with $f(n)$-bounded-coefficients} if the magnitude of all
coefficients is bounded by $f(n)$ and we call $\pi$ a
\emph{Sherali-Adams refutation with $\poly$-bounded-coefficients} if
for some constant $c > 0$ it holds that $\pi$ is a Sherali-Adams
refutation with $n^c$-bounded-coefficients. Let us record the
following observation.

\begin{proposition}\label{prop:bounded-coeff}
  If Sherali-Adams requires coefficient size $s$ to refute $\calP$,
  then Sherali-Adams with $f(n)$-bounded-coefficients requires size
  $s/f(n)$ to refute $\calP$.
\end{proposition}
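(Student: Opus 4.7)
The plan is a short direct counting argument that relates the two size measures, exploiting the crucial fact that every nonzero term in the refutation contributes at least one bit to its binary size.

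First I would take any Sherali-Adams refutation $\pi$ with $f(n)$-bounded coefficients, and let $N$ denote the number of nonzero monomial-coefficient pairs that appear on the left hand side of~\refeq{eq:SArefutation} after expansion without cancellations. Since the binary encoding of any nonzero rational has bitsize at least $1$, we immediately get $N \le \size(\pi)$.

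Next I would bound the coefficient size of $\pi$. By the $f(n)$-boundedness assumption, every one of these $N$ coefficients has magnitude at most $f(n)$, so the sum of their magnitudes is at most $N \cdot f(n) \le \size(\pi) \cdot f(n)$. In other words, the coefficient size of $\pi$ is at most $\size(\pi) \cdot f(n)$.

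Finally, since by hypothesis every Sherali-Adams refutation of $\calP$ has coefficient size at least $s$, the refutation $\pi$ in particular satisfies $s \le \size(\pi) \cdot f(n)$, which rearranges to $\size(\pi) \ge s/f(n)$. As $\pi$ was an arbitrary $f(n)$-bounded-coefficient refutation, this yields the claimed size lower bound. There is no real obstacle here; the only subtlety is being explicit that the "size" measure bounds the number of distinct monomial terms, which is exactly what converts a per-coefficient magnitude bound into an overall coefficient-size bound.
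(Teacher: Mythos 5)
Your proof is correct and follows essentially the same argument as the paper: each nonzero coefficient has magnitude at most $f(n)$, so a refutation with $N$ nonzero terms has coefficient size at most $N\cdot f(n)\ge s$, and $N$ lower bounds the size. The paper states this in one line; your version just makes the accounting explicit.
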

\begin{proof}
  As every coefficient in an $f(n)$-bounded-coefficient refutation is
  bounded by $f(n)$, there need to be at least $s/f(n)$ monomials with
  a non-zero coefficient.
\end{proof}

\emph{Unary Sherali-Adams} is a subsystem of Sherali-Adams where all
coefficients of monomials are either $+1$ or $-1$ and the
right-hand-side of~\cref{eq:SArefutation} is any negative integer
\begin{equation}\label{eq:uSArefutation}
  \sum_{j\in [m]}g_jp_j + f_0 \equiv -M \mod I_{B_n}\eqcomma
\end{equation}
where $f_0$ is again a non-negative sum of monomials.

\begin{proposition}\label{prop:coeff-unary}
  If Sherali-Adams requires coefficient size $s$ to refute
  $\calP$, then unary Sherali-Adams requires size at least
  $s$ to refute $\calP$.
\end{proposition}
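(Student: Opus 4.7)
The plan is to start with a unary Sherali-Adams refutation of minimum size $s'$ and transform it into a (general) Sherali-Adams refutation whose coefficient size is at most $s'$. By definition of $s$, this forces $s \le s'$, which is exactly what we need.

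Concretely, given a unary refutation $(g_1, \ldots, g_m, f_0)$ satisfying $\sum_j g_j p_j + f_0 \equiv -M \bmod I_{B_n}$ with $M$ a positive integer, all coefficients on the left-hand side equal to $\pm 1$, and $f_0$ a non-negative combination of monomials, I would simply scale the whole identity by $1/M$. The resulting tuple $(g_1/M, \ldots, g_m/M, f_0/M)$ satisfies $\sum_j (g_j/M) p_j + (f_0/M) \equiv -1 \bmod I_{B_n}$, and $f_0/M$ is still a non-negative combination of monomials since $M > 0$. Hence this is a valid (general) Sherali-Adams refutation of $\calP$.

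Now count coefficient size. The unary refutation has size $s'$, meaning a total of $s'$ monomials appear with non-zero (that is, $\pm 1$) coefficients across the polynomials $g_1, \ldots, g_m, f_0$. After scaling, every such coefficient has magnitude $1/M \le 1$, so the sum of magnitudes is $s'/M \le s'$. Therefore the constructed refutation has coefficient size at most $s'$, and by the assumption that any Sherali-Adams refutation of $\calP$ has coefficient size at least $s$, we conclude $s \le s'$, as desired.

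There is no real obstacle here: the only thing to check carefully is that scaling by $1/M$ preserves the structural requirement that $f_0$ be a non-negative combination of monomials (which it does since $M > 0$) and that the right-hand side becomes $-1$ as required by \cref{eq:SArefutation}. Both are immediate.
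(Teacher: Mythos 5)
Your proof is correct and follows exactly the paper's argument: divide the unary refutation, which sums to $-M$, by $M \ge 1$ to obtain a general Sherali-Adams refutation whose coefficient size is at most the number of monomials in the unary refutation. The extra checks you perform (non-negativity of $f_0/M$ and the right-hand side becoming $-1$) are the same routine verifications the paper leaves implicit.
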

\begin{proof}
  We can transform any unary Sherali-Adams refutation of size $s$,
  summing to an integer $-M$, to a Sherali-Adams refutation of
  coefficient size at most $s$ by dividing the left hand side by
  $M \ge 1$.
\end{proof}

\subsection{Graph Theory}

Before defining the $k$-clique formula, 
we introduce some terminology and notation that we use
throughout the paper.

Unless stated otherwise, $G$
denotes a $k$-partite graph
with partitions $V_1, \ldots, V_k$ of size $n$ each. We call a
partition $V_i$ a \emph{block} and, for $S \subseteq [k]$, denote by
$V_S$ the vertices in blocks in $S$, that is,
$V_S = \bigcup_{i \in S} V_i$. For disjoint sets $W_1, \ldots, W_s$ we
let a \emph{tuple} $t= (w_1, \ldots, w_s)$ be a sequence of vertices
satisfying $w_i \in W_i$ for all $i \in [s]$. All tuples we consider
are defined with respect to the partition $V_1, \ldots, V_k$, though,
at times, may only be defined over a subset of the blocks, that is,
not all tuples are of size $k$. For a tuple $t = (v_1, \ldots, v_k)$
and a set $S \subseteq [k]$ we denote the projection of $t$ onto $S$
by $t_S = (v_i\,\mid\,i \in S)$. An \emph{$s$-tuple} is a tuple of
size $s$ and sometimes it is convenient for us to think of a tuple as
a set of vertices. We take the liberty to interchangeably identify a
tuple as a sequence as well as a set and hope that this causes no
confusion.

A set $Q$ of tuples is a \emph{rectangle} if for some $S\subseteq [k]$
the set $Q$ can be written as the
Cartesian product of sets $U_i \subseteq V_i$ for
$i \in S$, i.e., $Q = \bigtimes_{i \in S} U_i$;
in other words, 
$Q$ contains
all tuples $t = (u_i\,\mid\,i \in S)$ satisfying $u_i \in U_i$ for all
$i \in S$. Rectangles, unless explicitly stated, consist of $k$-tuples
only, that is, $Q = \bigtimes_{i \in [k]} U_i$. 
Given a rectangle $Q$ and a set
$S \subseteq [k]$ we let $Q_S$ be the projection of $Q$ onto the
blocks in $S$: if $Q = \bigtimes_{i \in [k]} U_i$, then
$Q_S = \bigtimes_{i \in S} U_i$ and, in particular, we have
$Q_i = U_i$ for $i \in [k]$.

While $G$ always denotes a large graph, the graphs $H$ and $F$ denote
small graphs: throughout the paper $H$ and $F$ are graphs on $k$
labeled vertices. Usually these graphs have a small vertex cover and
graphs denoted by $F$ furthermore have many isolated vertices. For a
graph $H$ we denote the minimum vertex cover by $\vc(H)$ and sometimes
refer to $H$ as a \emph{pattern graph}, whereas $F$ is usually a
\emph{core graph} (see \cref{sec:cores-bounds}). We denote by $\calH$
the set of graphs on $k$ labeled vertices and for a parameter
$i\in \N^+$ let $\calH_i \subseteq \calH$ be the family of graphs with
a minimum vertex cover of size at most $i$, that is, all graphs
$H \in \calH_i$ satisfy $\vc(H) \le i$.

We record here two simple lemmas about graphs that will come in handy 
throughout the paper.

\begin{lemma}
  \label{lem:count-H}
  There are at most $2^{c \log k + b(c -(b+1)/2)} \le 2^{c(b+\log k)}$
  graphs $H$ over $k$ vertices with a vertex cover of size $b$ and
  $\bigl\lvert V\bigl(E(H)\bigr)\bigr\rvert \le c$.
\end{lemma}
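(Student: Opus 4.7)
The plan is a straightforward three-step union bound that encodes each graph $H$ by the data $(B, W, E(H))$, where $B$ is a vertex cover of size $b$, $W$ collects the non-isolated vertices outside $B$, and $E(H)$ is the edge set. The bound $2^{c \log k + b(c-(b+1)/2)}$ should fall out as the product of the three counts, and overcounting is harmless because we only want an upper bound.

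First I would choose the vertex cover: there are at most $\binom{k}{b} \le k^b$ ways to pick $B \subseteq [k]$ with $|B|=b$. Second, since $\bigl\lvert V(E(H))\bigr\rvert \le c$ and $B$ contains at most $b$ non-isolated vertices (we may assume $b \le c$, as otherwise the condition on non-isolated vertices already implies $\vc(H) \le c \le b$ and the statement is trivially absorbed into the smaller case), I would pick a set $W \subseteq [k] \setminus B$ with $|W| = c - b$ that contains every non-isolated vertex of $H$ outside $B$, padding with arbitrary vertices if needed; this contributes at most $\binom{k-b}{c-b} \le k^{c-b}$ choices. Multiplying the two gives at most $k^c = 2^{c \log k}$ ways to specify the pair $(B, W)$.

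Third, once $(B, W)$ is fixed, every edge of $H$ has an endpoint in $B$ (since $B$ covers $H$) and, if the other endpoint is not in $B$, it lies in $W$ (since it is non-isolated and outside $B$). Hence $E(H)$ is a subset of
\begin{equation*}
  \binom{B}{2} \cup \bigl\{\{u,v\} : u \in B,\ v \in W\bigr\}\eqcomma
\end{equation*}
which has size $\binom{b}{2} + b(c-b) = b(c-(b+1)/2)$, so there are at most $2^{b(c-(b+1)/2)}$ choices for $E(H)$. Multiplying the three counts yields the first bound $2^{c\log k + b(c-(b+1)/2)}$, and the weaker bound $2^{c(b+\log k)}$ follows from $b(c-(b+1)/2) \le bc$.

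The only step that requires care is justifying that the triple $(B, W, E(H))$ actually determines a valid graph with vertex cover $B$ and non-isolated vertex set contained in $B \cup W$; everything else is a direct application of elementary binomial bounds. I do not anticipate any real obstacle — the lemma is essentially a bookkeeping statement whose content lies in recognising that the product $\binom{k}{b}\binom{k-b}{c-b}$ telescopes to $k^c$ rather than into the slightly wasteful $\binom{k}{c}\binom{c}{b}$.
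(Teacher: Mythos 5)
Your proof is correct and follows essentially the same route as the paper's: choose the $b$ cover vertices, choose the $c-b$ further vertices allowed to be non-isolated, and then count edge subsets of $\binom{B}{2}\cup(B\times W)$, giving $\binom{k}{b}\binom{k-b}{c-b}2^{\binom{b}{2}}2^{b(c-b)}\le 2^{c\log k+b(c-(b+1)/2)}$. No gaps.
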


\begin{proof}
  We first choose the $b$ vertices from the $k$ vertices that form the
  vertex cover. Then, from the remaining $k-b$ vertices, we choose
  $c-b$ vertices that may be incident to an edge. We can add edges
  that are incident to the vertex cover and the other $c-b$ vertices
  and thus get that there are at most
  \begin{align}
    \binom{k}{b}
    \binom{k-b}{c-b}
    2^{\binom{b}{2}}
    2^{b(c-b)}
    \le 2^{ c\log k + b(c - (b+1)/2)}
  \end{align}
  many such graphs.
\end{proof}

Recall that a maximal matching of $H$ is a matching that cannot be
extended in $H$.

\begin{proposition}
  \label{clm:match}
  Any maximal matching in a graph $H$ is of size at least
  $\lceil \vc(H)/2\rceil$.
\end{proposition}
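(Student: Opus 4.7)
The plan is the standard classical argument relating maximal matchings to vertex covers. Let $M$ be an arbitrary maximal matching of $H$, and let $V(M) \subseteq V(H)$ denote the set of vertices incident to some edge of $M$, so that $|V(M)| = 2|M|$. The key observation I would establish first is that $V(M)$ is itself a vertex cover of $H$.

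To see this, suppose for contradiction that there is an edge $e = \{u,v\} \in E(H)$ with $u, v \notin V(M)$. Then $e$ is disjoint from every edge of $M$, so $M \cup \{e\}$ is still a matching, contradicting the maximality of $M$. Hence every edge of $H$ has at least one endpoint in $V(M)$, i.e., $V(M)$ is a vertex cover.

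Since $\vc(H)$ is the minimum size of a vertex cover, we get $2|M| = |V(M)| \geq \vc(H)$, and therefore $|M| \geq \vc(H)/2$. As $|M|$ is an integer, this can be strengthened to $|M| \geq \lceil \vc(H)/2 \rceil$, completing the proof.

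Since this is a folklore fact with a two-line argument, there is no real obstacle; the only point worth double-checking is the maximality-vs-maximum distinction (a maximal matching need not be of maximum size, but the above argument only uses maximality in the sense that no edge can be added).
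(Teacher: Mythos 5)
Your proof is correct and follows exactly the same argument as the paper: maximality of $M$ implies $V(M)$ is a vertex cover, hence $2|M| \ge \vc(H)$. The paper's version is just terser, omitting the explicit rounding step you spell out.
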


\begin{proof}
  Since $M$ is maximal, all edges of $H$ are incident to $V(M)$. Thus
  the set $V(M)$ is a vertex cover of $H$.
\end{proof}

The distribution of random graphs we consider in this paper
is a $k$-partite version of the Erd\H{o}s-Rényi distribution.
For a fixed set $V$ of $n$ vertices and a real number $0 \le p \le 1$, 
the Erd\H{o}s-Rényi distribution $\calG(n,p)$ 
is the distribution of random graphs on vertex set $V$ where every potential
edge $e = \set{u,v}$, for vertices $u\neq v$, 
is sampled independently with probability~$p$.
As was done in~\cite{BIS07,ABRLNR21}, we work with the block model,
which is defined as follows. 
Given $k$ blocks $V_1, \ldots, V_k$ of size $n$ and a real number
$0 \le p \le 1$, we denote by $\calG(n,k,p)$ the distribution over
graphs on the vertex set $V_{[k]}$ defined by sampling each edge
$e = \set{u,v}$ independently with probability $p$ if $u$ and $v$ are
in distinct blocks. We sometimes refer to such pairs of vertices
$\set{u,v}$ as \emph{potential edges}. Edges within the same block are
never included and hence $\calG(n,k,p)$ is a distribution over
$k$-partite graphs.

\subsection{Clique Formula}
\label{sec:formula}
Below we present an encoding of the $k$-clique formula on $k$-partite graphs
as a system of polynomial equations. 

Given a $k$-partite graph $G$ with blocks $V_1, \ldots, V_k$ of size
$n$ we define the \emph{$k$-clique formula over~$G$} stating that $G$ has
a $k$-clique (with one vertex per block) as follows. The
formula is defined over $2kn$ variables: each vertex $v \in V_{[k]}$
is associated with two variables $x_v$ and $\bar x_v$. 
The intended meaning is that $x_v$ is $1$ if $v$ is in the identified $k$-clique
and $0$ otherwise, and $\bar x_v = 1 - x_v$.
Since we define the Sherali-Adams proof system (see \cref{sec:SA}) 
to operate modulo the ideal generated by the Boolean axioms $y(1-y)$
and the negation axioms $1 - \bar x_v - x_v$, we do not need to explicitly include
such axioms in the formula.

For each block $V_i$ we introduce the \emph{block axiom}
$\sum_{v \in V_i} x_v - 1 = 0$ stating that precisely one vertex from each
block is chosen and for each pair of vertices
$\set{u, v} \not\in E(G)$ in distinct blocks we introduce the
\emph{edge axiom} $x_ux_v = 0$ that ensures that non-neighbors are not
simultaneously selected. We note that we could also include edge
axioms for pairs of vertices in the same block but we choose not to since
these are easily derivable from the block axioms.

It should be evident that this formula is satisfiable modulo the Boolean
and the negation axioms if and only if
there is a $k$-tuple $t$ such that the vertex induced subgraph $G[t]$ is a
clique.

We make two remarks about this choice of encoding.
The first is that our lower bound strategy is completely agnostic to
the encoding of the block axioms. 
We could equally well have considered the polynomial translation
of the CNF encoding, or the binary encoding, or any 
encoding of the constraints that one vertex of each block should 
be chosen to be part of the clique.

The second remark is that, as argued in Beame, Impagliazzo and Sabharwal~\cite{BIS07}, 
the reason for choosing to define the formula over $k$-partite graphs
is that proving a lower
bound for this encoding implies a lower bound for other natural encodings. 
In particular, if we consider the $k$-clique formula defined 
over a (not necessarily $k$-partite) graph $G=(V,E)$ on $kn$ vertices,
as the formula containing the axiom $\sum_{v\in V} x_v = k$ and 
edge axioms $x_ux_v = 0$ for each pair of vertices
$\set{u, v} \not\in E$, then 
for any equal-sized $k$-partition $V = V_1\,\dot\cup\,\cdots\,\dot\cup\,V_k$
it holds that
a lower bound for the $k$-clique formula on $G$ with
this partition implies the same lower bound for the non-$k$-partite
formula.

\begin{proposition}[\cite{BIS07}]\label{lem:switch-distibutions}
  Let $k,n \in \N^+$ be integer and let $G$ be a graph on $kn$
  vertices. Then the minimum Sherali-Adams coefficient size to refute
  the $k$-clique formula over $G$ is bounded from below by the
  coefficient size required to refute the $k$-clique formula defined
  with respect to any equal-sized $k$-partition of $G$.
\end{proposition}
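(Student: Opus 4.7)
The plan is to transform a non-$k$-partite Sherali-Adams refutation of coefficient size $s$ into a $k$-partite refutation of coefficient size at most $s$. Write the given non-$k$-partite refutation as
\begin{equation*}
  g_B \cdot p_B \;+\; \sum_{\{u,v\} \notin E(G)} g_{\{u,v\}} \cdot x_u x_v \;+\; f_0 \;\equiv\; -1 \mod I_{B_n}\eqcomma
\end{equation*}
where $p_B = \sum_{v} x_v - k$. The crux of the argument is the algebraic identity $p_B = \sum_{i=1}^k p_{\mathrm{block},i}$ (where $p_{\mathrm{block},i} = \sum_{v \in V_i} x_v - 1$), which lets us replace the single block-axiom usage $g_B p_B$ by $\sum_{i=1}^k g_B \cdot p_{\mathrm{block},i}$. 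In the expanded-without-cancellation coefficient-size measure this rewriting is an exact equality, because the total $\ell_1$-mass of the coefficients of $p_B$ equals $\sum_i$ of the $\ell_1$-mass of the coefficients of $p_{\mathrm{block},i}$.

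For cross-block non-edges, the axiom $x_u x_v = 0$ appears in both formulas and we set $g'_{\{u,v\}} := g_{\{u,v\}}$ verbatim.

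The genuinely non-trivial step is the elimination of the within-block non-edge axiom usages $\sum_{\{u,v\} \text{ within}} g_{\{u,v\}} x_u x_v$, as these do not correspond to any $k$-partite axiom. Semantically such an axiom is forced by the block axiom of its block: any assignment with $\sum_{v \in V_i} x_v = 1$ and $u, v \in V_i$ distinct already satisfies $x_u x_v = 0$. The algebraic counterpart is the identity
\begin{equation*}
  x_u x_v \;\equiv\; p_{\mathrm{block},i} \;+\; \bar{x}_u \bar{x}_v \;-\; \sum_{w \in V_i \setminus \{u,v\}} x_w \mod I_{B_n}
\end{equation*}
for $u,v \in V_i$ distinct, which is obtained from $x_u x_v = (x_u + x_v - 1) + (1-x_u)(1-x_v)$ by substituting $x_u + x_v - 1 = p_{\mathrm{block},i} - \sum_{w \in V_i \setminus \{u,v\}} x_w$ and using $1 - x_w \equiv \bar{x}_w \mod I_{B_n}$. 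The idea is to use this identity to shift each within-block contribution partly into the block-axiom coefficient $g'_i$ (via the $p_{\mathrm{block},i}$ summand) and partly into $f'_0$ (via the non-negative literal monomial $\bar{x}_u \bar{x}_v$), absorbing the remaining linear residue against the existing bookkeeping.

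The main obstacle is to carry out this redistribution in a way that simultaneously (i) preserves the total coefficient size and (ii) produces an $f'_0$ that is a \emph{non-negative} combination of literal monomials, as required by the definition of a semantic refutation. The naive choice $f'_0 := f_0 + \sum_{\text{within}} g_{\{u,v\}} x_u x_v$ is insufficient because nothing in the definition of $\pi$ constrains the sign of the within-block contribution pointwise; the redistribution therefore has to exploit the structural constraint $f_0 \geq 0$ from the original refutation, splitting each $g_{\{u,v\}}$ into its positive and negative parts (viewed as a combination of literal monomials) and routing the negative parts through the block-axiom identity above so that any monomial added to $f'_0$ appears with a non-negative coefficient. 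Verifying that this route does not inflate the $\ell_1$-mass beyond the budget afforded by the original refutation is the key bookkeeping step and follows the template of Beame, Impagliazzo, and Sabharwal~\cite{BIS07}.
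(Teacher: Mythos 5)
You take the same route the paper does: its entire proof is the one-line observation that the non-$k$-partite axioms are derivable from the $k$-partite ones (citing \cite{BIS07}), and your handling of the cardinality axiom (as the exact sum $\sum_i p_{\mathrm{block},i}$, with matching $\ell_1$-mass $k(n+1)$) and of the cross-block edge axioms is correct and lossless. The gap is in the within-block step, and it is a genuine one rather than deferred bookkeeping. In a semantic Sherali-Adams refutation a polynomial has only two possible homes: it is either a multiple of an axiom or part of the non-negative combination $f_0$. Your identity $x_u x_v \equiv p_{\mathrm{block},i} + \bar x_u \bar x_v - \sum_{w \in V_i \setminus \set{u,v}} x_w \bmod I_{B_n}$ leaves the residue $g_{\set{u,v}}\cdot\bigl(\bar x_u \bar x_v - \sum_{w} x_w\bigr)$, which is neither: it is not a multiple of any $k$-partite axiom, and it is not a non-negative combination of monomials even when $g_{\set{u,v}} \ge 0$, because of the $-g_{\set{u,v}}\sum_w x_w$ part. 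Splitting $g_{\set{u,v}}$ into positive and negative parts does not create a home for this term, so the ``routing'' your last paragraph appeals to is a restatement of the obstacle, not a resolution of it.

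One can in principle push the entire within-block contribution into the block-axiom multiplier, since on the Boolean cube $x_u x_v$ vanishes wherever $p_{\mathrm{block},i}$ does and hence $x_u x_v \equiv h\cdot p_{\mathrm{block},i} \bmod I_{B_n}$ for some polynomial $h$; but then property (i), preservation of coefficient size, is exactly what has to be proved and is not. The natural multipliers $h$ (obtained, e.g., by iterating $m \equiv \tfrac{1}{|S|+1}\bigl(m\,p_{\mathrm{block},i} - \sum_{w} m x_w\bigr)$ over monomials $m$ supported on $S \subseteq V_i$) have exponentially many monomials, and the expanded, cancellation-free mass of $(g_{\set{u,v}} h)\, p_{\mathrm{block},i}$ is not controlled by that of $g_{\set{u,v}} x_u x_v$. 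So the single non-trivial case of the simulation is the one left open. It is worth noting that in the paper's actual application this case is harmless for a reason that lives on the dual side: the lower bound is established via the pseudo-measure $\mu_d$, and $\mu_d(m\, x_u x_v) = 0$ identically for $u,v$ in the same block because $Q(m\, x_u x_v) = \emptyset$; that argument does not transfer to the primal simulation you are attempting.
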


This proposition was proven in \cite{BIS07} for resolution size,
and it is straightforward to see that it holds
for Sherali-Adams coefficient size.
Indeed, it is enough to observe that the non-$k$-partite $k$-clique
formula can be easily derived from the $k$-partite $k$-clique formula.

\section{Main Theorem and Proof Overview}
\label{sec:lb-overview}

The main result of this paper is a tight, up to constants in the
exponent, Sherali-Adams coefficient size lower bound for $k$-clique
formulas over Erd\H{o}s-R\'enyi random
graphs. 
\begin{theorem}[Main theorem]
  \label{thm:main-formal}
  Let $k$ and $D$ be functions of $n$ such that $D \le 2 \log n$ and
  $k \leq n^{1/66}$. If ${G\sim \calG(n, k, n^{-2/D})}$, then
  asymptotically almost surely Sherali-Adams requires coefficient size
  $n^{\Omega(D)}$ to refute the $k$-clique formula over $G$.
\end{theorem}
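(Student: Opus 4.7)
The plan is to prove the coefficient size lower bound by a potential-function argument inspired by pseudo-calibration. Because the $k$-clique formula admits a refutation of degree $D$, the standard route of lifting degree lower bounds via the size-degree relation \cite{AH18SosTradeoff} cannot yield the desired $n^{\Omega(D)}$ size bound. Instead, one assigns each monomial an intrinsic ``contribution'' and shows that any refutation must accumulate enough total contribution to force many monomials to appear with nonzero coefficient.

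Concretely, I would introduce a nonnegative measure $\mu$ on multilinear monomials with two key properties: (i) for every valid Sherali-Adams refutation of the $k$-clique formula over $G$, the weighted sum $\sum_m |c_m| \cdot \mu(m)$ over all monomials $m$ appearing with nonzero coefficient $c_m$ is bounded below by a positive constant; and (ii) $\mu(m) \le n^{-\Omega(D)}$ for every monomial $m$, provided $G$ satisfies the pseudorandomness property defined in \cref{sec:random}. Together, (i) and (ii) force at least $n^{\Omega(D)}$ nonzero monomials, yielding the coefficient size lower bound, from which \cref{prop:bounded-coeff} gives the corresponding size lower bound for polynomially bounded coefficients.

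To establish (i), I would design $\mu$ via a pseudo-expectation $\tilde{E}$ obtained by pseudo-calibration against a planted $k$-clique distribution, suitably truncated to keep the definition finite. The functional $\tilde{E}$ would satisfy $\tilde{E}[1] = 1$ and, crucially, $|\tilde{E}[m \cdot p]| \le \mu(m)$ for every monomial $m$ and every axiom $p$ of the $k$-clique formula, as well as $\tilde{E}[m] \le \mu(m)$ for the nonnegative combination $f_0$. Applying $\tilde{E}$ to the refutation identity $\sum_j g_j p_j + f_0 \equiv -1$ modulo the Boolean ideal then transfers the $-1$ on the right to a lower bound on the weighted coefficient sum on the left.

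The main technical obstacle is property (ii), which is the content of \cref{sec:good-rect}. For each monomial $m$ one considers the pattern graph induced on the variables of $m$ together with its core as introduced in \cref{sec:cores-bounds}, and then splits the analysis into two regimes depending on whether the core is compatible with extending to a $k$-clique in $G$. If the core is already too large or inconsistent, pseudorandomness of $G$ directly forces $\mu(m)$ to be small; if the core is small and consistent, one must carefully count the surviving clique-extensions in the pseudorandom graph and bound them using the sparsity $p = n^{-2/D}$. This bound must hold uniformly over monomials of arbitrary degree, which is the novel feature that allows us to bypass the degree barrier. Finally, \cref{sec:random-proof} verifies via standard Chernoff-style concentration arguments that $G \sim \calG(n, k, n^{-2/D})$ satisfies the pseudorandomness property asymptotically almost surely, completing the proof.
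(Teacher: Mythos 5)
Your overall architecture---a pseudo-calibration-inspired functional, truncated to control variance, combined with a duality/potential argument that converts per-monomial bounds into a coefficient-size lower bound---matches the paper's (cf.\ \cref{def:pseudo-measure,prop:pseudo}). However, there is a genuine gap in the conditions you impose on $\tilde{E}$ for the non-negative part $f_0$. You require $\tilde{E}[m]\le\mu(m)\le n^{-\Omega(D)}$ for every monomial $m$ that may appear in $f_0$. First, this is inconsistent with $\tilde{E}[1]=1$, since the constant monomial is an admissible term of $f_0$. Second, and more importantly, an upper bound on $\tilde{E}[m]$ is the wrong direction: applying $\tilde{E}$ to $\sum_j g_jp_j+f_0\equiv-1$ yields $1\le\sum_{j,m}\lvert\beta_{j,m}\rvert\,\lvert\tilde{E}[m\,p_j]\rvert+\sum_m\alpha_m\bigl(-\tilde{E}[m]\bigr)$, so to charge the $f_0$ terms against the coefficient size one needs a \emph{lower} bound $\tilde{E}[m]\ge-n^{-\Omega(D)}$, i.e., approximate non-negativity on all monomials (\cref{it:non-neg} of \cref{def:pseudo-measure}); no two-sided smallness bound is possible, nor is one needed. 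This is not a cosmetic fix: establishing $\mu_d(m)\ge-n^{-\Omega(D)}$ is the most technically demanding part of the argument (\cref{lem:nonneg} and all of \cref{sec:good-rect}), requiring a decomposition of the rectangle $Q(m)$ into small rectangles, subrectangles of edge axioms, and ``good'' rectangles on which the measure concentrates around a strictly positive value. Your sketch offers no plan for this step.

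A secondary inaccuracy: the pattern graphs and their cores are not attached to ``the graph induced on the variables of $m$.'' A monomial only determines the rectangle $Q(m)$ of tuples it rules out; the pattern graphs $H$ are the (exponentially many) Fourier characters with $\vc(H)\le d$ summed over in the definition of $\mu_d$, and the cores serve to partition this family of $H$'s so that the resulting character sums over $Q(m)$ can be bounded on a pseudorandom $G$. Likewise, the character-sum properties of \cref{def:well-behaved} are verified by a moment/encoding argument rather than by standard Chernoff bounds. With the non-negativity condition corrected and a proof of it supplied, your outline would coincide with the paper's.
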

Note that \cref{thm:main} follows directly from \cref{thm:main-formal}
along with \cref{lem:switch-distibutions,prop:bounded-coeff}, where we
assume that the coefficient size is bounded by a polynomial $n^c$
where $c$ is a constant independent of $D$. The main result of the
conference version of this paper~\cite{DPR23} follows from
\cref{thm:main-formal} and
\cref{lem:switch-distibutions,prop:coeff-unary}.

In the rest of this section we outline our proof strategy. We intend
to come up with a so-called \emph{pseudo-measure} which lower bounds
the coefficient size of a Sherali-Adams refutation. Before we get
ahead of ourselves let us define what a pseudo-measure is. 

\begin{definition}[Pseudo-measure]\label{def:pseudo-measure}
  Let $\delta > 0$ and $\calP$ be a set of polynomials over the
  polynomial ring $\R[x_1, \ldots, x_n, \bar x_1, \ldots, \bar x_n]$.
  A linear function
  $\mu\colon \R[x_1, \ldots, x_n, \bar x_1, \ldots, \bar x_n] \rightarrow
  \R$, mapping polynomials to reals, is a
  \emph{$\delta$-pseudo-measure for $\calP$} if for all monomials $m$
  and all polynomials $p \in \calP$ it holds that
  \begin{enumerate}
  \item $\mu(1) = 1$, \label[property]{it:one}
  \item $|\mu(m \cdot p)| \leq \delta$, and \label[property]{it:axiom}
  \item $\mu(m)\geq -\delta$. \label[property]{it:non-neg}
  \end{enumerate}
\end{definition}

A concept related to the notion of a pseudo-measure has previously
appeared in \cite{PZ22php-ns} for the Nullstellensatz proof system
over the reals. We also note that in a recent work Hub\'a\v{c}ek,
Khaniki and Thapen~\cite{HKT24} defined a similar notion for
Sherali-Adams with bounded degree.  We have the following simple
proposition.

\begin{proposition}
  \label{prop:pseudo}
  There is a $\delta$-pseudo-measure for $\calP$ if and only if any
  Sherali-Adams refutation of $\calP$ requires coefficient size
  $1/\delta$.
\end{proposition}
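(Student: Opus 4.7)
The plan is to prove both directions via standard linear programming duality applied to the LP whose feasibility corresponds to the existence of a $\delta$-pseudo-measure.

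For the forward direction, I would suppose $\mu$ is a $\delta$-pseudo-measure and apply it to the identity $\sum_j g_j p_j + f_0 \equiv -1 \mod I_{B_n}$ defining an arbitrary refutation $\pi$. Writing $g_j = \sum_m a_{j,m} m$ and $f_0 = \sum_A \alpha_A m_A$ with $\alpha_A \geq 0$, linearity of $\mu$ gives $\sum_{j,m} a_{j,m}\,\mu(m\, p_j) + \sum_A \alpha_A\,\mu(m_A) = -1$. \Cref{it:axiom} yields $|\mu(m\, p_j)| \leq \delta$, and \cref{it:non-neg} together with $\alpha_A \geq 0$ gives $\sum_A \alpha_A\,\mu(m_A) \geq -\delta \sum_A \alpha_A$. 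Hence $\delta\bigl(\sum_j \|g_j\|_1 + \|f_0\|_1\bigr) \geq 1$, and since the coefficient size of $\pi$ is at least $\sum_j \|g_j\|_1 + \|f_0\|_1$, the desired lower bound follows. This step implicitly assumes $\mu$ is well-defined on the quotient $\R[x,\bar x]/I_{B_n}$, which is natural because the refutation is specified only modulo this ideal; concretely, one restricts attention to multilinear monomials.

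For the converse, I would argue by LP duality. The set of $\delta$-pseudo-measures is the feasible region of a finite linear program in the variables $\mu(q)$ indexed by multilinear monomials $q$, with the equality $\mu(1) = 1$ and the inequalities $\mu(m\,p) \leq \delta$, $-\mu(m\,p) \leq \delta$, and $-\mu(m) \leq \delta$. If this LP is infeasible, Farkas' lemma supplies $\lambda \in \R$ together with non-negative multipliers $b_{m,p}^{\pm}$ (for the two-sided inequalities) and $c_m$ (for the non-negativity constraints) satisfying the polynomial identity $\sum_{m,p}(b_{m,p}^+ - b_{m,p}^-)(m\,p) - \sum_m c_m m \equiv -\lambda \mod I_{B_n}$ together with the strict inequality $\lambda + \delta\bigl(\sum_{m,p}(b_{m,p}^+ + b_{m,p}^-) + \sum_m c_m\bigr) < 0$. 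Setting $g_p = \sum_m(b_{m,p}^+ - b_{m,p}^-)m$ and $f_0 = \sum_m c_m m$ (which has non-negative coefficients), the identity reads $\sum_p g_p p - f_0 \equiv -\lambda \mod I_{B_n}$. The strict inequality forces $\lambda < 0$, so $M := -\lambda > 0$, and dividing through by $M$ yields a valid Sherali-Adams refutation $\sum_p (-g_p/M) p + (f_0/M) \equiv -1 \mod I_{B_n}$ whose coefficient size is strictly less than $1/\delta$.

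The main obstacle is setting up the duality cleanly, especially reconciling sign conventions so that the $c_m$-multipliers produce the non-negative polynomial $f_0$ demanded by a Sherali-Adams refutation and verifying that Farkas forces $\lambda$ to be strictly negative (so that scaling yields the required right-hand side $-1$ rather than something of indeterminate sign). A minor bookkeeping subtlety is that the paper's coefficient size expands products $g_j p_j$ without cancellations, whereas the natural quantity bounded by duality is $\sum_j \|g_j\|_1 + \|f_0\|_1$; this is harmless since the former dominates the latter whenever $\|p_j\|_1 \geq 1$, which holds for the axioms under consideration.
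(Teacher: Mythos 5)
Your proposal is correct and rests on the same underlying idea as the paper's proof, namely LP duality between the minimum-coefficient-size refutation program and the pseudo-measure program; the paper writes both LPs explicitly and invokes strong duality (normalizing by $\mu(1)$), whereas you unpack the easy direction as a direct weak-duality computation and obtain the converse via Farkas' lemma on the pseudo-measure feasibility system. The bookkeeping caveat you flag---that the duality naturally bounds $\sum_j \lVert g_j\rVert_1 + \lVert f_0\rVert_1$ rather than the literal coefficient size with the $\lVert p_j\rVert_1$ factors---is present in the paper's proof as well, so your treatment is, if anything, slightly more explicit about it.
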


\begin{proof}
  Given a monomial $m$, let $\ind_m$ be the vector
  of length $2^n$, indexed by assignments $\rho \in \set{0, 1}^n$,
  that corresponds to the truth table of $m$, 
  that is, on index $\rho$ the vector $\ind_m$ is~$1$ if 
  the monomial $m$ evaluates to~$1$ 
  on assignment $\rho$ and~$0$ otherwise. 
  We denote the all-ones vector $\ind_1$ by $\ind$.

  Let $\calP$ be a polynomial system of equations over Boolean variables
  $x_1, \ldots, x_n$ and their twin variables
  $\bar{x}_1, \ldots, \bar{x}_n$. We denote the coefficient of a
  monomial $m$ in a polynomial $p \in \calP$ by $\gamma_p(m)$.
  We may write a linear program that searches for a Sherali-Adams
  refutation of $\calP$ of minimum coefficient size as
  \begin{equation}
    \begin{array}{ll@{}ll}
      \text{minimize}
      & \displaystyle \sum_{m} \alpha({m})
        \hspace{5pt} +
        \sum_{\substack{p\in \calP}} \sum_{\substack{m}}
        \big(\beta^+_{p}(m) + \beta^-_{p}(m)\big)\\
      \text{subject to}
      & \displaystyle \sum_{p\in \calP} \sum_{\substack{m_1,m_2}} 
        \big(\beta^+_{p}(m_1) - \beta^-_{p}(m_1)\big)
        \cdot \gamma_{p}(m_2) \cdot \ind_{m_1 \cdot m_2} - 
      &\displaystyle\sum_{m} \alpha(m) \cdot \ind_m =  \ind
      \\
      &\displaystyle\alpha
        , \displaystyle\beta^+
        , \displaystyle\beta^-
        \geq 0 \eqperiod
    \end{array}
  \end{equation}
  We appeal to duality to obtain the following dual program to the
  above. Denote by $\mu$ the vector of dual variables of dimension
  $2^n$ and introduce for a monomial $m$ the shorthand
  $\mu(m) = \mu^T \ind_m$ and for a polynomial
  $q = \sum_{m}\gamma(m) \cdot m$ let
  $\mu(q) = \sum_{m} \gamma(m) \cdot \mu(m)$. The dual to the above linear
  program may be expressed as
  \begin{equation}
    \begin{array}{ll@{}ll}
      \text{maximize}
      & \displaystyle \mu(1)\\[10pt]
      \text{subject to}
      &\displaystyle
        \mu(m) \geq - 1
      & \forall m\\[7pt]
      & \displaystyle 
        \left| \mu(m\cdot p) \right| \leq 1 
        \hspace{10pt}
          & \forall 
            m, \forall p\in \calP \eqperiod
    \end{array}
  \end{equation}
  Thus the notion of a pseudo-measure for $\calP$ is indeed the dual
  object of a Sherali-Adams refutation of $\calP$ with small
  coefficient size. By appealing to strong duality and normalizing by
  $\mu(1)$ the claim follows.
\end{proof}

\subsection{Our Pseudo-Measure}
\label{sec:pseudo-measure}

In what follows we define our pseudo-measure $\mu$ for the $k$-clique
formula. We may think of $\mu$ as a progress measure: it assigns to
each monomial a real value which can be thought of as the contribution
of this monomial towards the refutation of the $k$-clique
formula. Thus, intuitively, we would like to associate each monomial
with the fraction of potentially satisfying assignments that it rules
out. In order to define this a bit more formally, let us introduce the
set of potentially satisfying assignments.

We say that an assignment $\alpha$ is \emph{potentially satisfying}
for the $k$-clique formula if there is a graph $G$ such that the
$k$-clique formula defined over $G$ is satisfied by $\alpha$. This set
of assignments can be easily characterized: if we associate each
$k$-tuple $t$ with the assignment $\rho_t$ that sets all variables
$x_u$ to $1$ if $u \in t$ and to $0$ otherwise, then the set of
potentially satisfying assignments of the $k$-clique formula is
\begin{align}
  \set{
  \rho_t
  \mid
  t
  \in
  V_1 \times V_2 \times \cdots \times V_k
  }
  \eqperiod
\end{align}

We say that a monomial $m$ \emph{rules out} an assignment $\rho$ if
$\rho(m) = 1$. As there is a one-to-one correspondence between
potentially satisfying assignments and tuples, it is convenient to
think of the tuples that a monomial rules out. We thus associate each
monomial $m$ with the set
\begin{align}
  Q(m) = \set{t \mid \rho_t(m) = 1}
\end{align}
of ruled out $k$-tuples. Note that $Q(1)$ is the set of all tuples,
that is, $Q(1) = V_1 \times V_2 \times \cdots \times V_k$ and the set
$Q(x_ux_v)$ associated with an edge axiom $x_ux_v$ consists of all
$k$-tuples that contain the vertices $u$ and $v$.

More generally, it is not too hard to see that the set of ruled out
tuples of a monomial is a rectangle and that for each rectangle $Q$
there is at least one monomial $m$ such that $Q$ is the set of tuples
ruled out by $m$. We thus often discuss rectangles and it is
implicitly understood that if a statement holds for all rectangles,
then it also holds for all monomials. Finally, observe that if a
monomial $m$ satisfies $m = m_1\cdot m_2$, then
$Q(m) \subseteq Q(m_1)$.

For intuition we will now discuss two na\"ive, and fatally flawed,
attempts to define a pseudo-measure.
For our first attempt,
we simply associate each monomial with the
fraction of ruled out tuples, that is we map a monomial $m$ to
\begin{align}
  \frac{|Q(m)|}{|Q(1)|} = n^{-k} \cdot |Q(m)| \eqperiod
\end{align}
This measure maps the monomial $1$ to $1$ and is clearly non-negative
and hence satisfies \cref{it:one,it:non-neg} of
\cref{def:pseudo-measure} for any $\delta > 0$. Furthermore, again for
any $\delta > 0$, it satisfies \cref{it:axiom} of
\cref{def:pseudo-measure} for the block axioms. Only the edge axioms
cause trouble: the rectangle $Q(x_ux_v)$ associated with the edge
axiom $x_ux_v$ is a $n^{-2}$ fraction of all tuples. As such,
according to \cref{prop:pseudo}, this pseudo-measure may only gives us
an $n^2$ coefficient size lower bound---not quite what we are after.

We may try to remedy this by not associating a monomial $m$ with
\emph{all} tuples in $Q(m)$ but rather only with a subset of $Q(m)$
that depends on the graph $G$. One very naïve attempt would be to
associate $m$ with the number of $k$-cliques that it rules out, that
is, we may associate a monomial $m$ with the normalized measure
\begin{align}\label{eq:def1-mu}
  \tilde{\mu}(m) = n^{-k}\sum_{t \in Q(m)} 2^{\binom{k}{2}} \indic{t}(G) \eqperiod
\end{align}
This definition, at least \emph{in expectation} over
$G\sim\calG(n,k,1/2)$, satisfies all properties of a pseudo-measure:
the monomial $1$ is mapped to $1$, the axioms are all mapped to $0$
and the measure is non-negative. 

The obvious problem is that all graphs we consider do \emph{not}
contain a $k$-clique and hence everything (including the monomial $1$)
is mapped to $0$. Put differently, the main problem is that the random
variable $\tilde{\mu}(1)$ over $G \sim \calG(n,k,1/2)$ has too large
variance. We reduce this variance as pioneered by Barak et
al.~\cite{BHKKMP16clique}: we expand \cref{eq:def1-mu} in the Fourier
basis and truncate the resulting expression. A careful choice of the
truncation, along with some significant effort, allows us to argue
that, on the one hand, the measure associated with $1$ is large (\ie
the variance is reduced) while, on the other hand, the measure
associated with edge axioms is still tightly concentrated around $0$
and that the measure is essentially non-negative (\ie the variance did
not increase significantly). This measure thus constitutes a valid
pseudo-measure as defined in \cref{def:pseudo-measure} up to
normalization.
In order to state the precise definition of our pseudo-measure $\mu$
we need some notation.

If $p$ denotes the probability that a potential edge $e$ is present in
the graph, then the character $\chi_e(G)$ is defined by
\begin{align}
  \chi_e(G) =
  \begin{cases}
    \frac{1-p}{p} & \text{if } e  \in E(G)\\
    -1 &\text{otherwise;}
  \end{cases}
\end{align}
and for a set of potential edges $E$ we let
$\chi_E(G) = \prod_{e \in E} \chi_e(G)$.  It is convenient for us to
work with the above (non-standard) basis as it allows us to easily
cancel characters in case an edge is missing.  Observe that for
$p=1/2$ this is the usual $\pm 1$ Fourier basis. First time readers
are advised to keep this case in mind for the remainder of the
article.
Let us record some useful facts.

\begin{proposition}
  Let $k, \ell, n \in \N^+$, let $p > 0$ and $G \sim
  \calG(n,k,p)$. For any potential edge $e$, sampled with probability
  $p$, it holds that $\E_G[\chi_e(G)] = 0$ and
  $\E_G[\chi^{2\ell}_e(G)] = (1-p)(1+(\frac{1-p}{p})^{2\ell-1})$. In
  particular, for $\ell=1$, we have $\E_G[\chi^2_e(G)] = (1-p)/p$. A
  useful bound is $\E_G[\chi^{2\ell}_e(G)] \le p^{-2\ell}$. Finally,
  observe that for any tuple $t$ we have that
  $\sum_{E \subseteq \binom{t}{2}} \chi_E(G)$ is $p^{-\binom{k}{2}}$
  if $t$ is a clique and $0$ otherwise.
\end{proposition}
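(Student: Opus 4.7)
The proposition collects four elementary calculations about the non-standard character $\chi_e$, so the plan is simply to compute each one directly from the definition and combine them via a factorisation trick for the last assertion. The main thing to be careful about is the $p^{-2\ell}$ upper bound, which is slightly loose and needs the (standard) assumption that $p$ is at most a constant below~$1$; otherwise everything is a one-line computation.

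First I would unpack $\chi_e(G)$ according to whether the potential edge $e$ lies in $E(G)$: with probability $p$ one has $\chi_e(G)=(1-p)/p$ and with probability $1-p$ one has $\chi_e(G)=-1$. Substituting this into $\E_G[\chi_e(G)]$ yields $p\cdot (1-p)/p + (1-p)\cdot(-1) = 0$, and substituting into $\E_G[\chi_e^{2\ell}(G)]$ yields
\begin{align}
  p\cdot\Bigl(\tfrac{1-p}{p}\Bigr)^{2\ell} + (1-p)\cdot(-1)^{2\ell}
  = (1-p)\Bigl(1 + \Bigl(\tfrac{1-p}{p}\Bigr)^{2\ell-1}\Bigr)\eqcomma
\end{align}
which specialises to $(1-p)/p$ for $\ell=1$ after combining the two terms over a common denominator.

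Next I would establish the bound $\E_G[\chi_e^{2\ell}(G)] \le p^{-2\ell}$. From the exact expression above, the second term is at most $p^{-(2\ell-1)}$ (since $1-p\le 1$), while the first term is at most $1$; for $p$ at most a constant below~$1$ (which is the only regime of interest), $1 \le p^{-1}\cdot p^{-(2\ell-1)} = p^{-2\ell}$ up to a factor that can be absorbed, so both terms are dominated by $p^{-2\ell}$. In our intended regime $p = n^{-2/D}$ this is comfortably satisfied, so no additional care is required.

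Finally, for the clique-indicator identity, I would exploit that the sum over subsets $E \subseteq \binom{t}{2}$ of a product over $e\in E$ factorises as a product of $(1+\chi_e(G))$ over $e\in\binom{t}{2}$. On the present/absent cases one has $1+\chi_e(G) = 1/p$ if $e \in E(G)$ and $1+\chi_e(G) = 0$ otherwise, so the product equals $p^{-\binom{k}{2}}$ exactly when every pair in $\binom{t}{2}$ is an edge of $G$ (that is, when $t$ is a clique) and vanishes as soon as a single edge is missing. The factorisation step is the only non-routine move in the proof, but it is immediate once one notices that a sum over subsets of a product of character values is the expansion of a product of $(1+\chi_e)$ factors.
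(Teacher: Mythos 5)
The paper records this proposition without proof, and your direct computation is the right (and essentially only) approach; the expectations, the $\ell=1$ specialisation, and the factorisation $\sum_{E\subseteq\binom{t}{2}}\chi_E(G)=\prod_{e\in\binom{t}{2}}\bigl(1+\chi_e(G)\bigr)$ with $1+\chi_e(G)\in\{1/p,0\}$ are all correct.

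The one step that does not quite deliver the stated claim is the bound $\E_G[\chi_e^{2\ell}(G)]\le p^{-2\ell}$. Bounding the two terms separately by $1$ and $p^{-(2\ell-1)}$ and then absorbing constants only yields $2p^{-2\ell}$, and your appeal to ``$p$ at most a constant below $1$'' is an extra hypothesis not present in the proposition. The clean argument (used later in the paper, in the proof of \cref{lem:encode}) is a convex-combination bound: since $(1-p)^{2\ell}\le 1$ and $1\le p^{-2\ell}$,
\begin{align}
  \E_G[\chi_e^{2\ell}(G)]
  = (1-p)\cdot 1 + p\cdot\Bigl(\tfrac{1-p}{p}\Bigr)^{2\ell}
  \le (1-p)\cdot p^{-2\ell} + p\cdot p^{-2\ell}
  = p^{-2\ell}\eqcomma
\end{align}
which holds for every $0<p\le 1$ with no slack and no extra assumption. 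With that one-line replacement your proof is complete.
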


To concisely state our pseudo-measure we need some further
notation. We consider sums of tuples and want to treat edge sets that
are equal up to the mapping onto a $k$-tuple as the same. More
precisely, if we have two $k$-tuples
$t=(v_1, \ldots, v_k),t'=(v_1', \ldots, v_k')$ and edge sets
$E \subseteq \binom{t}{2}$ and $E' \subseteq \binom{t'}{2}$ such that
$\set{v_i,v_j} \in E$ if and only if $\set{v_i',v_j'} \in E'$, then we
want to identify $E$ and $E'$ as the same edge set. To this end we
consider \emph{pattern graphs $H$} (similar to the \emph{shape graphs}
in the terminology of \cite{BHKKMP16clique}) over the vertex set
$[k]$. For a tuple $t=(v_1, \ldots, v_k)$ and a pattern graph $H$ we
let $H(t)$ be the edge set that contains the edge $\set{v_i, v_j}$ if
and only if the edge $\set{i,j}$ is present in $H$. See
\cref{fig:H-to-G} for an illustration.
\begin{figure}
  \centering
  \includegraphics{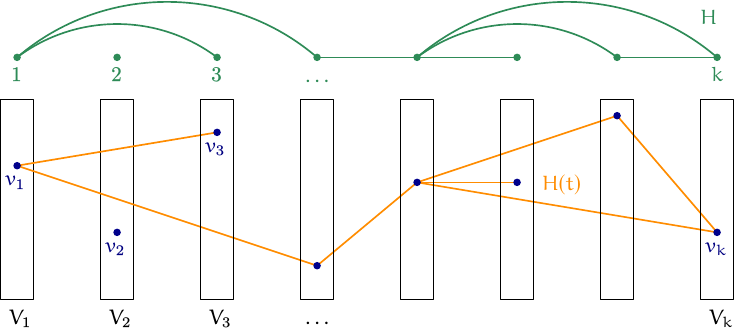}
  \caption{A pattern graph $H$ mapped onto a tuple $t=(v_1, \ldots, v_k)$}
  \label{fig:H-to-G}
\end{figure}

With this notation at hand we define our pseudo-measure as
\begin{align}\label{eq:def-mu}
  \mu(m) = \mu_d\bigl(Q(m)\bigr) = 
  n^{-k}
  \sum_{t \in Q(m)}
  \sum_{\substack{H\\ \vc(H)\le d}}
  \chi_{H(t)}(G) \eqcomma
\end{align}
where the second sum is over all graphs $H$ over $[k]$ vertices with
vertex cover at most $d$, and $d = \etanew D$ is a small constant
$\etanew > 0$ times the maximum clique size of $G$. 

Observe that Boolean axioms, the negation axioms and the block axioms
multiplied by an arbitrary monomial are all mapped to $0$ by
$\mu$. Hence it remains to prove that the measure $\mu$ maps the
constant $1$ monomial to a large value, that $\mu$ is small on
subrectangles of edge axioms, \ie any edge axiom multiplied by a
monomial is mapped to a small value, and that all monomials are
mapped to an approximately non-negative value.

By inspecting the second moment of $\mu(1)$ it is not too hard to see
that there is quite a bit of freedom on how to choose the truncation
in the definition of $\mu$ while maintaining the property that
$\mu(1) = 1 \pm n^{-\Omega(1)}$ asymptotically almost surely. However,
ensuring that the edge axioms are associated with small measure is
more delicate. Here we heavily rely on our choice to truncate
according to the minimum vertex cover. More specifically we rely on
two crucial properties of graphs $H$ satisfying $\vc(H) = d$: firstly,
we use the fact that such graphs contain a matching of size
$\lceil d/2 \rceil$ (see \cref{clm:match}) and, secondly, that the
family of these graphs satisfies a monotonicity property which leads
to a useful partition of this family. For more details about this
partition we refer to \cref{sec:cores-bounds}.
Let us mention that it is conceivable that one could 
increase the bound on $k$ for which our results hold by truncating
according to the size of the maximum matching. As we do not know how
to define the above mentioned partition with respect to the maximum
matching we truncate according to the minimum vertex cover.

In the following sections we try to present some intuition as to why
$\mu_d$ is a pseudo-measure up to normalization, that is, why it
satisfies \cref{def:pseudo-measure} where \cref{it:one} is relaxed to
approximately $1$. In \cref{sec:whole-space} we verify
that $G$ sampled from $\calG(n, k, 1/2)$ asymptotically
almost surely satisfies
$\mu(1) = \mu_d(\bigtimes_{i\in [k]} V_i) = 1 \pm n^{-\Omega(1)}$. As
mentioned, this follows by a straightforward second moment argument.
In \cref{sec:axioms-sketch} we outline why any subrectangle $Q$
of an edge axiom satisfies $|\mu_d(Q)| \le n^{-\Omega(d)}$. This proof
motivates the definitions in
\cref{sec:cores-bounds,sec:random}. Finally, in
\cref{sec:nonneg-sketch}, we provide some high-level overview of how
to prove that any rectangle $Q$ is mapped to an approximately
non-negative value, that is, it holds that
$\mu_d(Q) \geq - n^{-\Omega(d)}$. This is the most technically
challenging part of the paper.

\subsection{Expected Behavior of Our Pseudo-Measure}
\label{sec:whole-space}

The measure $\mu_d(Q)$ of any rectangle $Q$ satisfies 
\begin{align}
  \E_{G}[\mu_d(Q)]
  =
  n^{-k}
  \sum_{t \in Q}
  \sum_{H \in \calH_d}
  \E_G[\chi_{H(t)}(G)]
  =
  n^{-k}
  \sum_{t \in Q}
  \E_G[\chi_{\emptyset(t)}(G)]
  = n^{-k}|Q| \eqperiod
\end{align}
In particular, as $Q(1) = \bigtimes_{i \in [k]} V_i$, it holds that
$\E_G[\mu(1)] = 1$. In what follows we show that, for $p = 1/2$, the
measure is somewhat concentrated around the expected value. The
concentration, though, is far from enough to perform a union bound
over all rectangles to argue that the measure behaves as expected on
all rectangles simultaneously.

We show that the measure concentrates by an application of Chebyshev's
inequality. To this end we analyze the second moment: for $p=1/2$ we
have
\begin{align}
  \E_G[\mu_d^2(Q)] 
    &=
      n^{-2k} \cdot
      \sum_{H \in \calH_d} \sum_{t,t' \in Q}
      \E_G[\chi_{H(t)}(G)\chi_{H(t')}(G)]\\
    &=
      n^{-2k} \cdot
      \sum_{H \in \calH_d}
      \sum_{
      \substack{t,t' \in Q:\\
      t^{}_{V(E(H))} = t'_{V(E(H))}}
      }
      \E_G[\chi_{H(t)}(G)\chi_{H(t')}(G)]\displaybreak[0]\\ 
    &=
      n^{-2k} \cdot
      \sum_{H \in \calH_d}
      \big|
      \Set{
      (t,t') \colon t,t'\in Q \text{ and } t^{}_{V(E(H))} = t'_{V(E(H))}
      }
      \big|\displaybreak[0]\\
    &=
      n^{-2k} \cdot
      \sum_{H \in \calH_d}
      |Q_{V(E(H))}| \cdot |Q_{[k] \setminus V(E(H))}|^2 \\
    &\le\label{eq:wholespace-overview}
      n^{-k}|Q| \cdot
      \Big(1 + 
      \sum_{\substack{H \in \calH_d\\ H \neq \emptyset}}
      n^{-|V(E(H))|}
    \Big)\eqperiod
\end{align}
A careful application of \cref{lem:count-H} allows us to bound the
number of pattern graphs $H$ we sum over in
\eqref{eq:wholespace-overview} to conclude that
$\E[\mu_d^2(Q)] = |Q|n^{-k}\big(1 \pm n^{-\Omega(1)}\big)$, as long as
$k$ and $d$ are small. By virtue of Chebyshev's inequality we then
conclude that $\mu(1) = 1 \pm n^{-\Omega(1)}$ asymptotically almost
surely.

A natural attempt to prove that the measure is mostly non-negative is
to analyze higher moments in the hope that these are closely
concentrated around the (positive) expected value. The fundamental
difficulty in analyzing the pseudo-measure $\mu_d$ is that we have to
analyze exponentially many rectangles simultaneously. Since there is
such a large number of rectangles, for each input graph $G$, there
will be some rectangles where the value of $\mu_d$ differs
considerably from the expected value.

For example, the measure on a rectangle $Q$ with only a few vertices
$Q_i$ in some block $V_i$ heavily depends on the behavior of the edges
incident to the vertices in $Q_i$. Hence, if $Q_i$ is small enough, we
expect large deviations from the expected value. A slightly
simplified, though more concrete, example of this phenomenon goes as
follows: let $v_1 \in V_1$ and $v_2 \in V_2$, let $Q$ be the rectangle
that consists of all tuples that contain $v_1$ as well as $v_2$, and
let $H$ be the graph with the single edge $\set{1,2}$. In this setting
the sum $\sum_{t \in Q}{\chi_{H(t)}(G)}$ heavily depends on whether
the edge $\set{v_1, v_2}$ is present in $G$: if the edge is present,
then the sum is equal to $n^{k-2}\frac{1-p}{p}$ and, if the edge is
not present, then it is equal to $-n^{k-2}$. This indicates that on
some rectangles the measure heavily depends on a few edges and we can
thus not hope to naïvely prove concentration of the measure over all
rectangles.

This slightly simplified example can be generalized to show that for a
fixed $H$ there is always a small number of rectangles where the value
contributed by $H$ is much larger than expected.  Part of the
technical challenge of the proof is to identify these bad rectangles
and to handle them separately.

\subsection{Edge Axioms Should Have Small Measure}
\label{sec:axioms-sketch}

We now explain the main ideas for bounding the magnitude of the
measure of edge axioms. Recall that all other axioms are mapped to $0$
by $\mu$ and we are thus just left to show that the value of the edge
axioms is closely concentrated around $0$. 

For every pair of vertices $\set{u,v} \notin E(G)$ in distinct blocks
we have an edge axiom $p_{uv} = x_ux_v$ stating that at least one of
$x_u$ and $x_v$ are set to $0$. Let $Q$ be a subrectangle of
$Q(p_{uv})$. Note that for every such rectangle $Q$ there is a
monomial $m$ such that $Q=Q(m \cdot p_{uv})$ and hence these are the
correct rectangles to consider if we want to prove \cref{it:axiom} of
\cref{def:pseudo-measure}. In other words, if we manage to show for
all such $Q$ that $|\mu_d(Q)| \le n^{-\Omega(d)}$, then it follows
that for all monomials $m$ it holds that
$|\mu_d(m \cdot p_{uv})| \le n^{-\Omega(d)}$, as needed.

We first show that for a fixed pair of vertices
$\set{u,v} \not\in E(G)$, with good probability, all such
subrectangles $Q$ have small absolute measure. By a union bound over
all missing edges we then conclude that all subrectangles $Q$ of an
edge axiom satisfy $|\mu_d(Q)| \le n^{-\Omega(d)}$. Let us fix an edge
$\set{u, v} \notin E(G)$.

If $Q$ is empty, then there is nothing to prove as $\mu_d(Q)$ is
trivially $0$. Hence we may assume that $Q$ is non-empty, that is, $Q$
has at least one vertex per block and hence each tuple in $Q$ contains
both $u$ and $v$. Let $i\neq j \in [k]$ such that $u \in V_i$ and
$v \in V_j$. For $e = \set{i,j}$ we may write
\begin{align}
  \mu_d(Q)
  &= n^{-k}
    \sum_{t \in Q}
    \sum_{H \in \calH_d}\chi_{H(t)}(G)\\
  &= n^{-k}
    \sum_{t \in Q}
    \Big(
      \sum_{\substack{H \in \calH_d\\e\notin H}}
      \chi_{H(t)}(G)+
      \sum_{\substack{H \in \calH_d\\ e\in H}}
      \chi_{H(t)}(G)
    \Big)\\
  &= n^{-k} \label{eq:ax-sum-of-boundary}
    \sum_{t \in Q}
    \sum_{
    \substack{
      H:\,
      \vc(H)= d,\\
      \vc(H \cup \set{e})=d+1}
      }
    \chi_{H(t)}(G) \eqcomma
\end{align} 
where the last equality follows from the fact that every tuple
$t \in Q$
contains $u$ and $v$ and thus, if $e \notin H$, then
$
\chi_{H(t)}(G) =
- \chi_{H(t)}(G) \cdot \chi_{\set{u,v}}(G) =
- \chi_{(H \cup \set{e})(t)}(G)
$
as $\set{u,v} \not\in E(G)$. 

The naïve approach to bounding $|\mu_d(Q)|$ is to try to bound the
magnitude of $\sum_{t \in Q} \chi_{H(t)}(G)$ for each $H$ separately
and to then multiply this bound by the number of graphs $H$ we sum
over. Recall from \cref{lem:count-H} that there are about $2^{dk}$
graphs with a minimum vertex cover of size~$d$.  As the magnitude of
$\sum_{t \in Q} \chi_{H(t)}(G)$ typically has value
$\Omega(n^{-d}|Q|)$, for large rectangles $Q$, even with the optimal
bound $|\sum_{t \in Q} \chi_{H(t)}(G)| \leq O(n^{-d}|Q|)$, we can only
show a bound of
\begin{align}
  |\mu_d(Q)|
  \le
  n^{-k}
  \sum_{
  \substack{H:\,
  \vc(H)= d,\\
  \vc(H \cup \set{e})=d+1}
  }
  O\big(|Q| n^{-d}\big)
  \le O\big(2^{dk} n^{-d}\big) =
  O\big(\exp(d(k-\log n))\big)\eqperiod
\end{align}
Note that for $k = \poly(n)$ much larger than both $d$ and $\log n$
the bound is at best $\exp\bigl(O(k)\bigr)$. We require a bound of the
form $|\mu_d(Q)| \le n^{-\Omega(d)}$, which is much smaller than
$\exp\bigl(O(k)\bigr)$.

Instead of bounding the magnitude of $\sum_{t \in Q} \chi_{H(t)}(G)$
for each $H$ separately, we partition the relevant set of graphs into
different families and proceed to bound the magnitude of
$\sum_{H \in \calH(F,E^\star_F)}\sum_{t \in Q} \chi_{H(t)}(G)$ for
each such family $\calH(F,E^\star_F)$. More precisely, we have
families of graphs indexed by graphs $F$ with at most $3d$
non-isolated vertices of the form
\begin{align}
\calH(F, E^\star_F) = \set{H \mid E(H) = E(F) \cup E,~\text{where}~E
  \subseteq E^\star_F} \eqcomma
\end{align}
that partition the set of graphs $H$ satisfying $\vc(H)=d$ and
$\vc(H \cup \set{e}) = d+1$. Using these families we can bound
the magnitude of $\mu_d(Q)$ by
\begin{align}
  |\mu_d(Q)|
  &=
    n^{-k}
    \Big|
    \sum_{t \in Q}
    \sum_{
      \substack{H:\, \vc(H)= d,\\ \vc(H\cup
      \set{e})=d+1}
      }
    \chi_{H(t)}(G)
    \Big|\\
  &\le
    n^{-k}
    \sum_{F}
    \Big|
    \sum_{t \in Q}
    \sum_{H \in \calH(F,E^\star_F)}
    \chi_{H(t)}(G)
    \Big|\\
  &=
    n^{-k}
    \sum_{F}
    \Big|
    \sum_{t \in Q}
    \chi_{F(t)}(G)
    \sum_{E \subseteq E^\star_F}
    \chi_{E(t)}(G)
    \Big|
    \eqperiod\label{eq:ax-sketch}
\end{align}
Observe that the innermost sum is, up to normalization, the indicator
function of whether the edge set $E^\star_F(t)$ is present in $G$. In
fact the innermost sum, with the appropriate definition of
$E^\star_F$, is simply a statement about the common neighborhood sizes
of different subsets of $t$ in $G$.  We will need to argue that for
random graphs, with high probability, all such sets behave as expected
and the innermost sums are therefore bounded.

Furthermore, since each graph $F$ has at most $3d$ with incident
edges, there are fewer such graphs: according to \cref{lem:count-H}
at most $2^{3d(d + \log k)}$. Since $k \le n^{1/66}$ and
$d \leq 2 \eta \log n$, for some small constant $\eta$, it holds that
there are at most $2^{d(d + \log k)} \lesssim n^{d/50}$ many such
graphs $F$. Thus, an upper bound of $n^{k-\Omega(d)}$ on the 
absolute value of 
two innermost sums in \cref{eq:ax-sketch}
can now be
used to obtain the claimed bound $|\mu_d(Q)| \le n^{-\Omega(d)}$.
This completes the proof sketch for bounding the measure on edge axioms.

In \cref{sec:cores-bounds} we formally define these \emph{core} graphs
$F$ and the families $\calH(F, E^\star_F)$. In \cref{sec:random} we
introduce the pseudorandomness property of graphs we rely on in order to
bound the two innermost sums in \cref{eq:ax-sketch}. In
\cref{sec:axioms-short} we formally prove that the measure on
subrectangles of axioms is bounded in absolute value and lastly, in
\cref{sec:random-proof}, we verify that random graphs indeed satisfy
the necessary pseudorandomness properties.

\subsection{Rectangles Should Be Approximately Non-Negative}
\label{sec:nonneg-sketch}

To show that all rectangles $Q$ have essentially
non-negative measure, the main idea is to decompose $Q$ into a collection $\calQ$ of
rectangles satisfying the following properties.
\begin{enumerate}
\item The collection $\calQ$ is small, that is,
  $|\calQ| \le n^{O(d)}$.
\item Each rectangle $Q \in \calQ$ is either
  \begin{enumerate}
  \item very small: $|Q| \le n^{(1-\eps) k}$ and hence $|\mu_d(Q)|$ is
    negligible,
  \item a subrectangle of an axiom and thus, as argued in
    \cref{sec:axioms-sketch}, $|\mu_d(Q)|$ is bounded, or
  \item \label{it:good-rectangles} all common neighborhoods in $Q$ are
    of expected size and therefore
    \begin{align}
    \mu_d(Q) \approx |Q|/n^k > 0 \eqperiod
    \end{align}
  \end{enumerate}
\end{enumerate}
In other words, $\calQ$ contains some rectangles that have negligible
measure and a collection of larger rectangles on which the measure
behaves as expected. As the latter rectangles have strictly positive
measure we may conclude that our pseudo-measure is essentially
non-negative on all rectangles.

We bound the measure on small rectangles by summing the maximum
possible magnitude of any character appearing in the definition of our
pseudo measure.

\begin{lemma}\label{lem:rect-small}
  Any rectangle $Q$ satisfies
  $|\mu_{d}(Q)| \le O\big(|Q|n^{-k}k^dp^{-dk}\big)$.
\end{lemma}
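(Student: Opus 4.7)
The plan is to apply the triangle inequality to move the absolute value inside the double sum defining $\mu_d(Q)$, and then to bound the inner sum over pattern graphs in a way that is uniform in the tuple $t$. That is, I would start with
\begin{align}
  |\mu_d(Q)| \le n^{-k} \sum_{t \in Q} \sum_{H \in \calH_d} |\chi_{H(t)}(G)|\eqperiod
\end{align}
Since $|Q| n^{-k}$ is already the right leading factor, everything reduces to establishing the bound $\sum_{H \in \calH_d} |\chi_{H(t)}(G)| \le O(k^d p^{-dk})$ uniformly in $t$.

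To bound the inner sum, I would group graphs by a size-$d$ vertex cover. Every $H \in \calH_d$ has a minimum vertex cover of size at most $d$ and, assuming $d \le k$, can be padded to one of size exactly $d$. Hence
\begin{align}
  \calH_d \subseteq \bigcup_{C \in \binom{[k]}{d}} \set{H : C \text{ is a vertex cover of } H}\eqperiod
\end{align}
For a fixed $C$, the set $\set{H : C \text{ is a vertex cover of } H}$ is in bijection with subsets of the at most $dk$ potential edges of $K_k$ incident to $C$, so the sum factors:
\begin{align}
  \sum_{\substack{H \in \calH_d\\ C \text{ v.c. of } H}} |\chi_{H(t)}(G)|
  \;=\; \prod_{e \in E_C} \bigl(1 + |\chi_{e(t)}(G)|\bigr)\eqcomma
\end{align}
where $E_C$ denotes the set of edges of $K_k$ incident to $C$. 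Each individual factor is bounded by $\max\bigl(2, 1 + (1-p)/p\bigr)$, which in the paper's small-$p$ regime ($p \le 1/2$) is simply $1/p$. Since $|E_C| = d(k-d) + \binom{d}{2} \le dk$, each such product is at most $p^{-dk}$.

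Putting this together and taking a union over the $\binom{k}{d} \le k^d$ choices of cover (which may overcount but only gives an upper bound), I obtain
\begin{align}
  \sum_{H \in \calH_d} |\chi_{H(t)}(G)| \le \binom{k}{d}\, p^{-dk} \le k^d\, p^{-dk}\eqcomma
\end{align}
independent of $t$, and multiplying by $|Q| n^{-k}$ yields the claimed bound $|\mu_d(Q)| \le O(|Q| n^{-k} k^d p^{-dk})$. The only subtle point is the grouping step: one must ensure that the over-counting introduced by graphs with multiple size-$d$ vertex covers is harmless, which it is since we are only aiming for an upper bound. No further concentration or randomness of $G$ is needed; the argument is purely combinatorial and uses only the worst-case magnitudes of the characters.
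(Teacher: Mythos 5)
Your proof is correct and follows essentially the same route as the paper: group the pattern graphs by a small vertex cover, bound each character by its worst-case magnitude, and sum over subsets of the at most $dk$ edges incident to the cover to pick up the factor $p^{-dk}$ (your product $\prod_{e}(1+|\chi_{e(t)}(G)|)$ is just the factored form of the paper's binomial sum $\sum_j\binom{ik}{j}\bigl(\tfrac{1-p}{p}\bigr)^j$). The overcounting from multiple covers is indeed harmless, exactly as you note.
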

\begin{proof}
  We bound $\mu_d(Q)$ by counting the number of pattern graphs $H$ we sum over
  multiplied by the maximum magnitude of each such character. We have that
  \begin{align}
    |\mu_d(Q)|
    &\le
      n^{-k}
      \sum_{i=0}^d
      \sum_{j=i}^{ik}
      \Big|
      \sum_{t \in Q}
      \sum_{\substack{H:\\ \vc(H) = i\\ |E(H)| = j}}
      \chi_{H(t)}(G)
      \Big|\\
    &\le
      |Q|
      \cdot
      n^{-k}
      \sum_{i=0}^d
      \binom{k}{i}
      \sum_{j=0}^{ik}
      \binom{ik}{j}\Big(\frac{1-p}{p}\Big)^j\\
    &=
      |Q|
      \cdot
      n^{-k}
      \sum_{i=0}^d
      \binom{k}{i}
      p^{-ik}
    \le
      O
      \big(
      |Q|
      n^{-k}
      k^d
      p^{-dk}
      \big)\eqcomma
  \end{align}
  as claimed.
\end{proof}

We implement the above proof outline in
\cref{sec:partition-rect}. Proving that our pseudo-measure
concentrates around a positive value on rectangles as described in
\cref{it:good-rectangles} is the most delicate part of our proof. In
fact, the above proof outline is somewhat inaccurate in that the value the
pseudo-expectation concentrates around is not simply $|Q|/n^k$ but
further depends on the number of small blocks in the rectangle $Q$. We
refer to \cref{def:good} for the precise definition of these
rectangles and to \cref{lem:rect-good} for the claimed concentration
inequality. \Cref{sec:good-rect} is dedicated to the proof of
\cref{lem:rect-good}.

\section{Cores}
\label{sec:cores-bounds}

In this section we introduce the notion of a core of a pattern graph,
which will be used ex\-ten\-sive\-ly throughout the rest of the
paper. Our notion of a core seems to be loosely connected to the
notion of a \emph{vertex cover kernel} as used in parameterized
complexity (see, e.g., the survey by Fellows et
al.~\cite{FJKRW18vcKernel}).

\subsection{Cores and Boundaries}

Recall that when bounding the measure of subrectangles of axioms
$A_e$, we were left with sums over graphs $H$ such that $\vc(H)= d$
and $\vc(H\cup \set{e})=d+1$ (see \cref{eq:ax-sum-of-boundary}).  Such
graphs motivate the following definition of sets of graphs in the
\emph{boundary} of an edge.

\begin{definition}[Boundary]\label{def:boundary}
  Let $i \in \N$, $H$ be a graph and $e \in \binom{V(H)}{2}$ be an
  edge. The graph $H$ is in the \emph{$(i, e)$-boundary}, denoted by
  $\calH_i(e)$, if and only if $\vc(H) = i$ and
  $\vc(H \cup \set{e}) = i+1$. Furthermore, we say that $H$ is in the
  \emph{$e$-boundary} if and only if $H$ is in an $(i, e)$-boundary
  for some $i \in \N$.
\end{definition}

As mentioned in the proof sketch bounding the edge axioms, we cannot
bound each $H$ in the $e$-boundary separately (there are too many
pattern graphs $H$) so we partition such graphs according to
\emph{cores} as explained below.

\begin{definition}[Core]\label{def:core}
  A vertex induced subgraph $F$ of $H$ is a \emph{core} if any minimum
  vertex cover of $F$ is also a vertex cover of $H$.
\end{definition}

Ultimately we are interested in cores that are induced by small vertex
sets. It turns out that, in general, we cannot hope for cores of a
graph $H$ that are induced by fewer than $3\cdot \vc(H)$ many
vertices: as the graph $H$ that consists of $\vc(H)$ vertex disjoint
paths of length $2$ has only a single core $F = H$, the best we can
hope for are cores of size $3\cdot \vc(H)$.

The notions of cores and $(i,e)$-boundaries interact nicely in the
following sense.

\begin{proposition}
  \label{prop:core-boundary}
  A core of a graph $H$ is in the $(i, e)$-boundary if and only if $H$
  is.
\end{proposition}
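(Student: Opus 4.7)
The plan is to prove the stronger statement that whenever $F$ is a core of $H$, one has the two equalities $\vc(F) = \vc(H)$ and $\vc(F \cup \set{e}) = \vc(H \cup \set{e})$ simultaneously; the biconditional in the proposition is then immediate from \cref{def:boundary}.

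First I would establish $\vc(F) = \vc(H)$. The upper bound $\vc(F) \le \vc(H)$ follows because $F$ is a vertex induced subgraph of $H$, so any vertex cover of $H$ restricts to a vertex cover of $F$. For the matching lower bound $\vc(H) \le \vc(F)$, I simply take a minimum vertex cover $C$ of $F$; by the defining property of a core, $C$ is also a vertex cover of $H$, giving $\vc(H) \le |C| = \vc(F)$.

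Next I would establish $\vc(F \cup \set{e}) = \vc(H \cup \set{e})$. The inequality $\vc(F \cup \set{e}) \le \vc(H \cup \set{e})$ is again immediate since $F \cup \set{e}$ is a subgraph of $H \cup \set{e}$. For the reverse inequality I would split into two cases. If $\vc(F \cup \set{e}) = \vc(F) + 1$, then we are done, because $\vc(H \cup \set{e}) \le \vc(H) + 1 = \vc(F) + 1 = \vc(F \cup \set{e})$, where the first step uses that adding a single endpoint of $e$ to any minimum vertex cover of $H$ covers $H \cup \set{e}$. Otherwise $\vc(F \cup \set{e}) = \vc(F)$, and any minimum vertex cover $C$ of $F \cup \set{e}$ then has $|C| = \vc(F)$ and covers every edge of $F$, so $C$ is itself a minimum vertex cover of $F$. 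Invoking the core property, $C$ covers $H$; since $C$ also covers $e$ by construction, it covers $H \cup \set{e}$, yielding $\vc(H \cup \set{e}) \le |C| = \vc(F \cup \set{e})$.

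There is no real obstacle here: the proof is essentially a careful unpacking of the definitions of core and of $(i,e)$-boundary. The only mild subtlety is the case split in the second equality, which is needed so that the cover $C$ produced from $F \cup \set{e}$ is genuinely a \emph{minimum} vertex cover of $F$ before the core property can be applied.
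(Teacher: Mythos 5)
Your proof is correct and follows essentially the same route as the paper: the case $\vc(F\cup\set{e})=\vc(F)$ in your argument is exactly the paper's contradiction step (a minimum cover of $F\cup\set{e}$ of size $\vc(F)$ is a minimum cover of $F$, hence covers $H$ and also $e$), and the other case matches the paper's easy direction via $\vc(H\cup\set{e})\le\vc(H)+1$. You merely repackage it as a direct proof of the two equalities $\vc(F)=\vc(H)$ and $\vc(F\cup\set{e})=\vc(H\cup\set{e})$, which is a presentational difference only.
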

\begin{proof} Let $F$ be a core of $H$.  We first argue that if a core
  $F$ of the graph $H$ is in the $(i, e)$-boundary, then so is
  $H$. Indeed, by definition it holds that $\vc(F) = \vc(H) =
  i$. Moreover, $F$ being in the $(i, e)$-boundary implies that the
  minimum vertex cover of $F \cup \set{e}$ has size $i+1$, and
  therefore the minimum vertex cover of $H \cup \set{e}$ must also be
  $i+1$ since $F$ is a subgraph of $H$.

  It remains to argue that if $H$ is in the $(i, e)$-boundary, then so
  is the core $F$.  By definition of core, $\vc(F) = \vc(H) = i$.
  Suppose, for the sake of contradiction, that $F$ is not in the
  $(i, e)$-boundary and thus $\vc(F \cup \set{e}) = i$. Let $\lexvc$
  be a minimum-sized vertex cover of $F \cup \set{e}$. Since
  $|\lexvc| = i$, it holds that $\lexvc$ is also a minimum-sized vertex
  cover of $F$ and thus, by definition of core, $\lexvc$ is also a
  vertex cover of $H$. But this contradicts the assumption that $H$ is
  in the $(i, e)$-boundary since $\lexvc$ also covers the edge $e$ and
  hence is a vertex cover of size $i$ of $H \cup \set{e}$.
\end{proof}

Recall that $\calH$ is the set of graphs on $k$ labeled vertices. We
consider a map $\core$ from $\calH$ to small cores that satisfies
certain properties as described below.

\begin{theorem}
  \label{lem:compression}
  There is a map $\core$ that maps graphs $H \in \calH$ to a core of
  $H$ with the following properties. For every graph $F$ in the image
  of $\core$ we have that $|V(E(F))| \leq 3\cdot \vc(F)$ and that
  there exists an edge set
  $E^\star_F \subseteq V\bigl(E(F)\bigr) \times \big([k] \setminus
  V\bigl(E(F)\bigr)\big)$ such that $\core(H) = F$ if and only if
  $E(H) = E(F) \cup E$ for $E \subseteq E^\star_F$.
\end{theorem}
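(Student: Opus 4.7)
The plan is to construct $\core$ by an iterative peeling procedure. Given $H$, maintain $V' := V\bigl(E(H)\bigr)$ and at each step remove the lex-smallest \emph{peelable} vertex $u \in V'$---meaning every minimum vertex cover of $H[V' \setminus \{u\}]$ is also a vertex cover of $H$---until no such vertex remains; then set $\core(H) := H[V']$. The core property of the returned $F := \core(H)$ is preserved after every peel by the very definition of peelability, so $F$ is indeed a core of $H$.

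For the size bound $\lvert V\bigl(E(F)\bigr)\rvert \le 3 \vc(F)$, fix a maximum matching $M$ of $F$, so $|V(M)| = 2|M| \le 2\vc(F)$. Let $U := V\bigl(E(F)\bigr) \setminus V(M)$; by maximality of $M$ the set $U$ is independent in $F$ with all $F$-neighbors in $V(M)$, and it suffices to show $|U| \le |M|$. Each $u \in U$ fails to be peelable from $F$, so there exists a minimum vertex cover $C_u$ of $F - u$ missing some $F$-neighbor $w(u) \in V(M)$ of $u$; pairing $w(u)$ with its matching partner $w'(u)$ enables an augmenting-path-style argument: if two distinct $u, u' \in U$ were charged to opposite endpoints of a single matching edge $\{a, a'\}$, the path $u\text{--}a\text{--}a'\text{--}u'$ would be $M$-augmenting, contradicting maximality of $M$. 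A slightly more careful case analysis, also leveraging the non-peelability of the matched vertices themselves to rule out two $U$-vertices sharing the same witness $w(\cdot)$, then yields $|U| \le |M| \le \vc(F)$ and hence the claimed bound.

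Define
\[
E^\star_F := \Set{ \set{u, v} \colon u \in V\bigl(E(F)\bigr) \text{ lies in every minimum vertex cover of } F,\ v \in [k] \setminus V\bigl(E(F)\bigr) } \eqperiod
\]
A short direct check shows that adding edges from $E^\star_F$ to $F$ preserves both $\vc(F)$ and the forcedness of the incident vertex $u$, because $u$ already covers the new edge in every minimum cover of $F$. Thus, for $H := F \cup E$ with $E \subseteq E^\star_F$, every vertex $v \in [k] \setminus V\bigl(E(F)\bigr)$ made non-isolated by $E$ has its entire $H$-neighborhood contained in the forced vertices of $H$, making $v$ peelable from $H$ at the outset; iterating these peelings leaves exactly $F$, which is itself core-reduced, so $\core(H) = F$. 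Conversely, if $\core(H) = F$, an induction on the peel order shows that every edge of $H$ incident to a peeled vertex must have its other endpoint in $V\bigl(E(F)\bigr)$ and forced in $F$, and that no edge of $H$ connects two peeled vertices---either would cause the peeling to fail at some intermediate step. Hence $E(H) \setminus E(F) \subseteq E^\star_F$, completing the preimage characterization.

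The principal technical obstacle is the charging argument for $|U| \le |M|$: the na\"ive bound $|U| \le |V(M)| = 2|M|$ gives only $\lvert V\bigl(E(F)\bigr)\rvert \le 4\vc(F)$, so extracting the tight factor $3$ requires careful combination of non-peelability with augmenting-path arguments and with the forcedness structure of minimum vertex covers specifically in core-reduced graphs. The converse direction of the preimage characterization, while conceptually clear, likewise requires delicate bookkeeping across peeling steps to track how the forced subset of $V\bigl(E(F)\bigr)$ evolves as successive peelable vertices are stripped away.
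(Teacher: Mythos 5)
There is a genuine gap in the preimage characterization, and it is fatal to the ``if'' direction as you have set things up. Since $\core$ is a function, the families $\set{F \cup E \mid E \subseteq E^\star_F}$ taken over distinct $F$ in the image must be pairwise disjoint; your definition of $E^\star_F$ (all edges from vertices forced into every minimum vertex cover of $F$ to vertices outside $V(E(F))$) does not guarantee this. Concretely, take $k \ge 5$ and consider the stars $F = K_{1,\set{2,3}}$ and $F' = K_{1,\set{3,5}}$ with center $1$. Both are stuck under your peeling (each is its own core), so both lie in $\img(\core)$; the center is the unique minimum vertex cover of each, so $\set{1,5} \in E^\star_F$ and $\set{1,2} \in E^\star_{F'}$, and the graph $H = K_{1,\set{2,3,5}}$ lies in both syntactic preimages. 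Your characterization would then force $\core(H) = F$ and $\core(H) = F'$ simultaneously. Worse, your own peeling rule disagrees with your characterization: in $H$ the lex-smallest peelable vertex is $2$ (removing it leaves $K_{1,\set{3,5}}$, whose unique minimum cover $\set{1}$ covers $H$), so the procedure peels $2$ first and outputs $K_{1,\set{3,5}}$, not the graph your ``if'' argument predicts. The step ``every outside vertex touched by $E$ is peelable at the outset'' is true but irrelevant: vertices of $F$ itself can also be peelable in $H$, and the lex-smallest rule may strip one of those instead. No reordering of the peel fixes this; the set $E^\star_F$ itself must be pruned so that adding an edge $\set{v,w}$ never changes which vertices the construction would have selected. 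This is exactly what the paper does: its $\core$ is built from the lex-first minimum vertex cover $\lexvc$ together with two lex-first maximal sets $U_1, U_2$ matchable into $\lexvc$, and its explicit $E^\star_F$ (\cref{sec:explicit}) excludes edges $\set{v,w}$ with $w \in A_{U_1 \cap [v-1]} \cup A_{U_2 \cap [v-1]}$, i.e., precisely those edges whose presence would have caused $v$ to be recruited into $U_1$ or $U_2$ ahead of the existing choice.

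A secondary issue is that your size bound is not actually proved. The paper gets $\bigl\lvert V\bigl(E(F)\bigr)\bigr\rvert \le 3\vc(F)$ for free from $\lvert U_2\rvert \le \lvert U_1\rvert \le \lvert\lexvc\rvert$, whereas you need the nontrivial claim that every stuck graph under your peeling satisfies $\lvert U\rvert \le \lvert M\rvert$ for a maximum matching $M$, and the proposal only gestures at the charging argument (``a slightly more careful case analysis''). The augmenting-path step ruling out two $U$-vertices charged to opposite endpoints of one matching edge is fine, but the case of two $U$-vertices sharing the same witness $w(\cdot)$ is exactly the hard case and is left unargued. If you want to salvage the peeling approach, both the size bound and, especially, a corrected $E^\star_F$ with a disjointness proof would need to be supplied; as written, the construction does not satisfy the theorem.
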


We prove \cref{lem:compression} in \cref{sec:core-proofs} below.  From
now on we only consider the cores given by the map $\core$ as in
\cref{lem:compression}. With a slight abuse of nomenclature we say
that $\core(H)$ is \emph{the core} of $H$.
Note that for a graph $F$ in the image of $\core$ we have that
$\core^{-1}(F) = \calH(F, E^\star_F) = \set{H \mid E(H) = E(F) \cup
  E,~\text{for}~E \subseteq E^\star_F}$, as introduced in
\cref{sec:axioms-sketch}. 

\subsection{Proof of \cref{lem:compression}}
\label{sec:core-proofs}

We first explain how to construct a map $\core$ and then prove that it
satisfies the required properties. In order to construct the mapping
we require an order on subsets of vertices of $H$: consider every
subset of vertices as a sequence of vertices sorted in ascending order
and say that a set $U$ is lexicographically smaller than a set $V$ if
the ascending sequence $(u_1, \ldots, u_s)$ of $U$ is
lexicographically smaller than the sequence $(v_1, \ldots, v_t)$ of
$V$, that is, for $i = 1, \ldots, \min\set{s,t}$ compare $u_i$ with $v_i$: if
$u_i < v_i$ (respectively $u_i > v_i$), then $U$ is lexicographically
smaller (respectively larger) than $V$; if $u_i = v_i$ continue with
$i = i+1$; if the prefix of length $\min\set{s,t}$ is equal, then $U$
is lexicographically smaller (larger) than $V$ if $s < t$
(respectively $s > t$).

Equivalently $U$ is lexicographically smaller than $V$ if the smallest
element in the symmetric difference of $U$ and $V$ is contained in
$U$. From this alternate definition we immediately obtain the
following property of the order.

\begin{fact}\label{fact:lex}
  Let $W$ and $V$ be sets such that $W$ is lexicographically smaller
  than $V$ and $W \not\subset V$. Then, for any element $w$, the set
  $W \cup \set{w}$ is lexicographically smaller than $V$.
\end{fact}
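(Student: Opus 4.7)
My approach is to work directly with the alternate characterization just introduced: $U$ is lex smaller than $V$ iff the smallest element of the symmetric difference $U \triangle V$ lies in $U$. Inserting a single element $w$ into $W$ affects a symmetric difference in a very controlled way, and this makes the alternate characterization the right lens.

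First I would extract a concrete witness from the hypothesis: letting $x$ be the smallest element of $W \triangle V$, the assumption that $W$ is lex smaller than $V$ gives $x \in W$, and combining this with $W \not\subset V$ yields $x \in W \setminus V$. The second, key observation is that $(W \cup \set{w}) \triangle V$ can differ from $W \triangle V$ only at the single point $w$: inserting $w$ into $W$ either adds $w$ to or removes $w$ from the symmetric difference, depending on whether $w \in V$. A short case analysis on the minimum $y$ of $(W \cup \set{w}) \triangle V$ then closes the argument. If $y = w$, then $y \in W \cup \set{w}$ trivially. Otherwise $y \in W \triangle V$, so $y \geq x$; but $x$ still lies in the new symmetric difference (since $x \in W \setminus V \subseteq (W \cup \set{w}) \setminus V$), forcing $y = x \in W \subseteq W \cup \set{w}$. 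Either way the minimum belongs to $W \cup \set{w}$, which by the alternate characterization is precisely the statement that $W \cup \set{w}$ is lex smaller than $V$.

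The only subtlety — and the reason the hypothesis $W \not\subset V$ appears — is to guarantee an element of $W$ lying outside $V$, thereby ruling out the degenerate situation in which $W \cup \set{w} \subseteq V$ would flip the direction of the lex comparison. Once this is in hand the fact drops out in a handful of lines, and I do not anticipate any serious obstacle beyond keeping the case split clean.
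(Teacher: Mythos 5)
Your argument is correct and is exactly the "immediate" derivation the paper has in mind: the paper offers no explicit proof, simply asserting that the fact follows from the alternate characterization of the order via minima of symmetric differences, and your case analysis on $\min\bigl((W\cup\set{w})\,\triangle\,V\bigr)$ is precisely that derivation spelled out. The only quibble is that the hypothesis $W\not\subset V$ is really needed to make the alternate characterization valid in both directions (it rules out the proper-initial-segment edge cases where "smallest element of the symmetric difference lies in $W$" and "lexicographically smaller" come apart), which is a slightly different role than the one you ascribe to it, but this does not affect the correctness of your chain of deductions.
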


We extend this notion in the natural manner to vertex covers and say
that a minimum vertex cover $\lexvc$ is the \emph{lexicographically
  first minimum vertex cover of $H$} if $\lexvc$ is the
lexicographically smallest minimum vertex cover of $H$.

\begin{algorithm}[t]
  \caption{Computes the core of $H$}
  \label{alg:core}
  \begin{algorithmic}[1]
    \Procedure{Core}{$H$}

    \State $\lexvc \gets$ lex first minimum vertex cover of $H$
    
    \State $U_1 \gets$ lex first maximal set in $[k] \setminus \lexvc$
    with matching $M_1 \subseteq H$ from $U_1$ to $\lexvc$ of size
    $|U_1|$

    \State $U_2 \gets$ lex first maximal set in
    $[k] \setminus (\lexvc \cup U_1)$ with matching $M_2 \subseteq H$
    from $U_2$ to $\lexvc$ of size $|U_2|$

    \State \textbf{return} $H[\lexvc \cup U_1 \cup U_2]$
    
    \EndProcedure
  \end{algorithmic}
\end{algorithm}

The map $\core$ is constructed as follows (see \cref{alg:core} for an
algorithmic description). Given a graph $H$ with lexicographically
first minimum vertex cover $\lexvc$ define the following two sets
$U_1$ and $U_2$ of vertices. Let $U_1 \subseteq [k] \setminus \lexvc$
be the lexicographically first maximal (with respect to set inclusion)
set of vertices with a matching $M_1\subseteq H$ from $U_1$ to
$\lexvc$ that covers all of $U_1$, that is, $U_1 \subseteq V(M_1)$ and
$M_1 \subseteq U_1 \times \lexvc$. Similarly let
$U_2 \subseteq [k] \setminus (\lexvc \cup U_1)$ be the
lexicographically first maximal set of vertices with a matching $M_2$
from $U_2$ to $\lexvc$ of size $|U_2|$. The core of $H$ is defined to
be $\core(H) = H[\lexvc \cup U_1 \cup U_2]$. An illustration is
provided in \cref{fig:core}.

Let us record some simple observations

\begin{claim}\label{clm:prop-core}
  For $H$, $\lexvc, U_1, U_2$ and matchings $M_1,M_2$ defined as
  above, it holds that
  \begin{enumerate}
  \item $\lexvc \cap V(M_1) \supseteq \lexvc \cap V(M_2)$,
    \label{it:m1-m2}
  \item any edge
    $e \in H \setminus H[\lexvc \cup U_1]$
    is incident to $\lexvc \cap V(M_1)$, and
    \label{it:edge-outside-1}
  \item any edge
    $e \in H \setminus H[\lexvc \cup U_1 \cup U_2]$
    is incident to $\lexvc \cap V(M_2)$.
    \label{it:edge-outside}
  \end{enumerate}
\end{claim}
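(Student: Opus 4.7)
All three parts are proved by contradiction with the maximality (under inclusion) of $U_1$ or $U_2$, exploiting that any edge uncovered by $\lexvc$ cannot exist since $\lexvc$ is a vertex cover of $H$.

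For \cref{it:m1-m2}, suppose for contradiction that there is some $w \in \lexvc \cap V(M_2)$ with $w \notin V(M_1)$. Then there is a vertex $u \in U_2$ with $\set{u,w} \in M_2$. Since $u \in U_2 \subseteq [k]\setminus(\lexvc \cup U_1)$, in particular $u \notin U_1$ and $u \notin V(M_1)$. Hence $M_1 \cup \set{\set{u,w}}$ is still a matching from $U_1 \cup \set{u}$ to $\lexvc$ of size $|U_1|+1$, contradicting the maximality of $U_1$.

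For \cref{it:edge-outside-1}, let $e = \set{w, u} \in H \setminus H[\lexvc \cup U_1]$. Since $\lexvc$ is a vertex cover of $H$, at least one endpoint of $e$ lies in $\lexvc$; and since $e \notin H[\lexvc \cup U_1]$, not both endpoints can lie in $\lexvc$. Say $w \in \lexvc$ and $u \notin \lexvc$. Moreover $u \notin U_1$, for otherwise $e \in H[\lexvc \cup U_1]$. Assume toward contradiction that $w \notin V(M_1)$. Then $M_1 \cup \set{e}$ is a matching from $U_1 \cup \set{u}$ into $\lexvc$, again contradicting the maximality of $U_1$. Therefore $w \in \lexvc \cap V(M_1)$, which is what we wanted.

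For \cref{it:edge-outside}, the argument is entirely analogous with $U_1, M_1$ replaced by $U_2, M_2$: let $e = \set{w,u} \in H \setminus H[\lexvc \cup U_1 \cup U_2]$; by the same vertex-cover argument one endpoint $w$ lies in $\lexvc$ while the other endpoint $u$ satisfies $u \notin \lexvc \cup U_1 \cup U_2$. If $w$ were not in $V(M_2)$, then $u \notin V(M_2)$ (since all non-$\lexvc$ endpoints of $M_2$ lie in $U_2$), so $M_2 \cup \set{e}$ would extend $U_2$ by $u$, contradicting maximality of $U_2$.

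The only step that might need care is \cref{it:m1-m2}, where one must check that the candidate extension $u \in U_2$ is genuinely free with respect to $M_1$; but this follows immediately from $U_2 \cap (\lexvc \cup U_1) = \emptyset$ and the fact that $V(M_1) \subseteq \lexvc \cup U_1$. The rest is a straightforward unwinding of the definitions in \cref{alg:core}.
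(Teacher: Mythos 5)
Your proof is correct and follows the same route as the paper's (which is considerably terser, essentially just invoking maximality of $U_1$ and $U_2$): in each part you extend the relevant matching by one edge to contradict the inclusion-maximality of $U_1$ or $U_2$. The details you spell out — that the non-$\lexvc$ endpoint lies outside $V(M_1)$ (resp.\ $V(M_2)$) so the extension really is a matching — are exactly the right ones.
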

\begin{proof}
  \Cref{it:m1-m2} follows by maximality of $U_1$. For
  \cref{it:edge-outside-1}, let
  $e = \set{v,w} \in H \setminus H[\lexvc \cup U_1]$. As $\lexvc$ is a
  vertex cover of $H$ we may assume that $w \in \lexvc$ and hence, as
  $e \not \in H[\lexvc \cup U_1]$, the vertex $v$ is not in
  $\lexvc \cup U_1$. Hence by maximality of $U_1$ it holds that
  $w \in \lexvc \cap V(M_1)$. The same argument establishes
  \cref{it:edge-outside}.
\end{proof}

Since $|U_2| \leq |U_1| \leq |\lexvc|$, it follows that
$\bigl|V\bigl(E(F)\bigr)\bigr| \leq 3\,|\lexvc| \leq 3 \vc(H)$.
To argue that this map satisfies the other required properties, we
first prove that the size of the minimum vertex cover of
$H[\lexvc \cup U_1]$ is the same as that of $H$. Clearly
$\vc(H) \ge \vc\bigl(H[\lexvc \cup U_1]\bigr)$ so it remains to prove
the opposite inequality.

\begin{figure}
  \centering
  \includegraphics{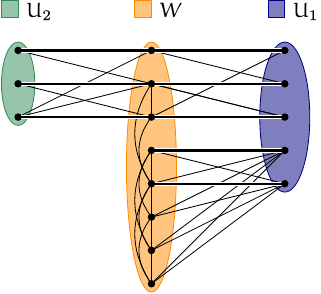}
  \caption{A core with edges in $M_1$ and $M_2$ highlighted}
  \label{fig:core}
\end{figure}

\begin{lemma}
  \label{cl:core-vc}
  For $H$, $\lexvc$ and $U_1$ defined as above, it holds that
  $\vc(H) \le \vc\bigl(H[\lexvc \cup U_1]\bigr)$.
\end{lemma}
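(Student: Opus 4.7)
The plan is to show that every vertex cover $C$ of $H[\lexvc \cup U_1]$ can be converted into a vertex cover $\tilde C$ of $H$ satisfying $|\tilde C| \le |C|$; since $\lexvc$ is by definition a minimum vertex cover of $H$, this gives $|\lexvc| \le |\tilde C| \le |C|$, and minimizing over $C$ yields $\vc(H) \le \vc(H[\lexvc \cup U_1])$.

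The first observation is that, since $\lexvc$ is a vertex cover of $H$, the set $[k] \setminus \lexvc$ is independent in $H$, so every edge of $H$ falls into exactly one of three types: (i) both endpoints in $\lexvc$, (ii) one in $\lexvc$ and one in $U_1$, or (iii) one in $\lexvc$ and one in $N := [k] \setminus (\lexvc \cup U_1)$. The subgraph $H[\lexvc \cup U_1]$ contains only edges of types (i) and (ii), so the delicate point when building $\tilde C$ from $C$ is to cover the (iii)-edges that $C$ may leave uncovered, without inflating the cardinality.

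To handle this, I would exploit the inclusion-maximality of $U_1$ via König-type reasoning. Consider the bipartite graph $B$ with parts $U_1 \cup N$ and $\lexvc$ whose edges are the (ii)- and (iii)-edges of $H$. The matching $M_1$ saturates $U_1$, and any $M_1$-augmenting path from a vertex $v \in N$ would let us extend $U_1$ to $U_1 \cup \set{v}$ with a matching of size $|U_1|+1$ to $\lexvc$, contradicting inclusion-maximality; hence $M_1$ is a maximum matching of $B$. Let $Z$ denote the set of vertices of $B$ reachable from $N$ by $M_1$-alternating paths, and set $Z_1 := U_1 \cap Z$ and $Z_W := \lexvc \cap Z$. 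The standard alternating-path analysis yields the two structural facts that drive the proof: $M_1$ restricts to a perfect matching between $Z_1$ and $Z_W$ (so $|Z_1| = |Z_W| =: s$), and every (iii)-edge $\set{v,w}$ with $v \in N$ satisfies $w \in Z_W$, since $v \in Z$ and the one-edge alternating path $v \to w$ places $w$ in $Z$.

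Finally, I would define $\tilde C := (C \setminus Z_1) \cup Z_W$ and verify the two required properties. That $\tilde C$ is a vertex cover of $H$ is a brief case check: (i)-edges are covered by $C \cap \lexvc \subseteq \tilde C$; (iii)-edges are covered because their $\lexvc$-endpoint lies in $Z_W \subseteq \tilde C$; for a (ii)-edge $\set{u,w}$ covered in $C$ via $u \in Z_1$, either $w$ is the $M_1$-partner of $u$ (whence $w \in Z_W$), or appending the non-matching edge $\set{u,w}$ to the alternating path witnessing $u \in Z$ produces an alternating path to $w$, again forcing $w \in Z_W$. The size comparison reduces by inclusion-exclusion (using $C \subseteq \lexvc \cup U_1$, so $C \cap Z = C \cap (Z_1 \cup Z_W)$) to the identity $|\tilde C| - |C| = |Z_W| - |C \cap Z|$. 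The main obstacle, and essentially the only non-routine step, is then to show $|C \cap Z| \ge s$; this follows from König applied to the bipartite subgraph between $Z_1$ and $Z_W$: the set $C \cap Z$ covers it, and the $s$-edge matching $M_1 \cap (Z_1 \times Z_W)$ forces any such cover to have cardinality at least $s$.
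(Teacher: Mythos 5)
Your proof is correct, but it takes a genuinely different route from the paper's. The paper argues by contradiction: it picks a vertex cover $\lexvc'$ of $H[\lexvc \cup U_1]$ of size $\vc(H)-1$ that maximizes $|\lexvc' \cap V(M_1)|$, locates an uncovered edge leaving $\lexvc \cup U_1$, and then iteratively grows two sets $A \subseteq \lexvc \setminus \lexvc'$ and $B \subseteq \lexvc' \cap U_1$, splitting into the case $|A|=|B|$ (which yields a swapped cover contradicting the extremal choice) and the case $|A|>|B|$ (which yields an augmenting path contradicting the maximality of $U_1$). You instead give a direct transformation: starting from an \emph{arbitrary} cover $C$ of $H[\lexvc \cup U_1]$, you observe that maximality of $U_1$ is exactly the statement that $M_1$ is a maximum matching of the auxiliary bipartite graph between $U_1 \cup N$ and $\lexvc$, and then you run the standard K\H{o}nig alternating-path analysis from the unmatched side $N$ to produce the sets $Z_1, Z_W$ and the modified cover $\tilde C = (C \setminus Z_1) \cup Z_W$; the size bound $|C \cap Z| \ge s$ follows from the matching $M_1 \cap (Z_1 \times Z_W)$ alone (K\H{o}nig is not even needed there, only the trivial matching-versus-cover inequality). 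Your version buys modularity and a slightly stronger conclusion (every cover of the induced subgraph converts into one of $H$ of no larger size, not just the minimum one), and it makes the role of $U_1$'s maximality transparent as ``no augmenting path''; the paper's version avoids setting up the auxiliary bipartite graph and alternating-path formalism at the cost of the extremal choice and the two-case bookkeeping. One small point worth making explicit if you write this up: an alternating path starting at $v \in N$ cannot pass through any other vertex of $N$, since $N$-vertices are unmatched, so the reachability analysis stays within $\set{v} \cup U_1 \cup \lexvc$ as intended.
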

\begin{proof}
  Let $M_1$ be the matching from $U_1$ to $\lexvc$ that covers $U_1$
  and let $R$ be the vertices in $\lexvc$ that are not matched by
  $M_1$, that is, $R = \lexvc \setminus V(M_1)$.
  Towards contradiction suppose that the graph $H[\lexvc \cup U_1]$ has a
  vertex cover of size $\vc(H)-1$.

  By \cref{clm:prop-core}, \cref{it:edge-outside-1}, the only edges in
  $H$ which do not appear in the graph $H[\lexvc \cup U_1]$ are edges
  between a vertex in $\lexvc \cap V(M_1)$ and a vertex outside of
  $\lexvc \cup U_1$.

  Let $\lexvc'$ be a vertex cover of $H[\lexvc \cup U_1]$ of size
  $\vc(H)-1$ that maximizes $|\lexvc'\cap V(M_1)|$.  If $\lexvc'$ is a
  vertex cover of $H$ we have reached a contradiction. Otherwise,
  there exists a vertex $u \in \lexvc \cap V(M_1)$ that is not
  contained in $\lexvc'$ and an edge $e_0 = \set{u,w} \in E(H)$ for
  $w \in [k] \setminus (\lexvc \cup U_1)$.

  We now want to argue that either we can construct a vertex cover of
  $H[\lexvc \cup U_1]$ of size $\vc(H)-1$ that contradicts the fact
  that $\lexvc'$ maximized $|\lexvc'\cap V(M_1)|$, or we can construct
  a strict superset of $U_1$ with a matching to $\lexvc$,
  contradicting the maximality of $U_1$.
 
  We iteratively construct two sets,
  $A \subseteq \lexvc\setminus \lexvc'$ and
  $B\subseteq \lexvc'\cap U_1$ as follows. We start by including $u$
  in $A$. We then iteratively include in $B$ all vertices $v$ in $U_1$
  that are matched by $M_1$ to some vertex in $A$, that is, vertices
  $v$ such that $\set{v,v'} \in M_1$ for some $v'\in A$; and include
  in $A$ all vertices of $\lexvc$ not covered by $\lexvc'\setminus
  B$. Note that $|A|\geq |B|$ and that we keep the invariant that
  $B \subseteq \lexvc'$ since the edges in $M_1$ must be covered by
  some vertex.
  We consider two cases.
  \begin{description}[font=\mdseries]
  \item[Case $|A| = |B|$:] In this case,
    $\big(\lexvc'\cup A \big) \setminus B$ is a vertex cover of
    $H[\lexvc \cup U_1]$ of size $\vc(H)-1$ contradicting the fact that
    $\lexvc'$ maximized $|\lexvc'\cap V(M_1)|$.
    
  \item[Case $|A| > |B|$:] This can only happen if there is a vertex
    $v$ in $R\cap A$. In this case, we can define an augmenting path
    from $v$ to $w$, alternating between edges not in $M_1$ and edges
    in $M_1$. This implies we can define a matching that matches all
    of $U_1$ to vertices in $\lexvc$ as well as the vertex $w$. This
    is in contradiction with the maximality of $U_1$.\qedhere
  \end{description}
\end{proof}

To prove that $H[\lexvc \cup U_1 \cup U_2]$ is a core of $H$ we must show
that any minimum vertex cover of $H[\lexvc \cup U_1 \cup U_2]$ is a vertex
cover of $H$.

\begin{corollary}
  \label{cor:vc-iff}
  Any minimum vertex cover of $H[\lexvc \cup U_1 \cup U_2]$ is also a
  vertex cover of $H$.
\end{corollary}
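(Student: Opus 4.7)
The plan is to show that any minimum vertex cover $C$ of $H[\lexvc \cup U_1 \cup U_2]$ must be entirely contained in $\lexvc \cup U_1$, and then to use the matching $M_2$ together with \cref{clm:prop-core} to argue that $C$ also covers every edge of $H$ that lies outside the induced subgraph.

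First, I would establish the chain of inequalities
\begin{align}
  \vc\bigl(H[\lexvc \cup U_1]\bigr)
  \le \vc\bigl(H[\lexvc \cup U_1 \cup U_2]\bigr)
  \le \vc(H)
  \le \vc\bigl(H[\lexvc \cup U_1]\bigr) \eqcomma
\end{align}
where the first two inequalities follow from the subgraph relations and the last is \cref{cl:core-vc}. Hence all three quantities are equal to $\vc(H)$, and so $|C| = \vc(H)$. Now consider $C_1 := C \cap (\lexvc \cup U_1)$. Every edge of $H[\lexvc \cup U_1]$ is an edge of $H[\lexvc \cup U_1 \cup U_2]$ with both endpoints in $\lexvc \cup U_1$, so any endpoint of it that lies in $C$ already lies in $C_1$; thus $C_1$ is a vertex cover of $H[\lexvc \cup U_1]$. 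Since $|C_1| \le |C| = \vc(H) = \vc(H[\lexvc \cup U_1])$, equality $C = C_1$ must hold, giving $C \subseteq \lexvc \cup U_1$ and, in particular, $C \cap U_2 = \emptyset$.

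It remains to verify that $C$ covers every edge of $H$. Take any edge $e \in H$; if $e \in H[\lexvc \cup U_1 \cup U_2]$ it is covered by assumption, so suppose $e = \set{v,w}$ with $e \notin H[\lexvc \cup U_1 \cup U_2]$. By \cref{clm:prop-core}\cref{it:edge-outside} we may assume $v \in \lexvc \cap V(M_2)$; then $v$ is matched by $M_2$ to some $u \in U_2$, and the edge $\set{v,u} \in M_2$ lies in $H[\lexvc \cup U_1 \cup U_2]$. Since $C$ covers this edge but $u \in U_2$ and $C \cap U_2 = \emptyset$, we must have $v \in C$, which covers $e$ in $H$. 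The main subtle point is showing $C \subseteq \lexvc \cup U_1$, which crucially uses \cref{cl:core-vc} to match dimensions; I do not expect any genuine obstacle beyond that.
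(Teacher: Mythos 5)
Your proposal is correct and follows essentially the same route as the paper: both use \cref{cl:core-vc} to equate the three vertex cover sizes, deduce that a minimum vertex cover of $H[\lexvc \cup U_1 \cup U_2]$ avoids $U_2$, then use the matching $M_2$ to force $\lexvc \cap V(M_2)$ into the cover and conclude via \cref{clm:prop-core}, \cref{it:edge-outside}. Your restriction argument via $C_1 = C \cap (\lexvc \cup U_1)$ is just a slightly more explicit phrasing of the paper's observation that $C \setminus U_2$ would otherwise be a too-small cover of $H[\lexvc \cup U_1]$.
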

\begin{proof}
  By \cref{cl:core-vc} we have that
  \begin{align}
    \vc(H) =
    \vc\bigl(H[\lexvc \cup U_1]\bigr) =
    \vc\bigl(H[\lexvc \cup U_1 \cup U_2]\bigr) \eqperiod
  \end{align}
  Therefore, any minimum vertex cover $\lexvc'$ of
  $H[\lexvc \cup U_1 \cup U_2]$ has no vertex in $U_2$ (otherwise,
  $\lexvc'\setminus U_2$ would be a smaller vertex cover of
  $H[\lexvc \cup U_1]$). As $\lexvc'$ needs to cover the edges of the
  matching $M_2$ from $U_2$ to $\lexvc$, the vertex cover $\lexvc'$
  contains all vertices in $V(M_2) \cap \lexvc$.

  According to \cref{clm:prop-core}, \cref{it:edge-outside}, only
  vertices in $V(M_2) \cap \lexvc$ may have edges incident in $H$ that
  are not present in $H[\lexvc \cup U_1 \cup U_2]$. Hence $\lexvc'$ is
  a vertex cover of $H$; the statement follows.
\end{proof}

It remains to argue that for every $F \in \img(\core)$ there is a set
$E^\star_F$ such that $\core(H) = F$ if and only if $H = F \cup E$ for
$E \subseteq E^\star_F$. Let $E^\star_F = \bigcup_{H \in
  \core^{-1}(F)} H \setminus F$.  By definition, if $\core(H) = F$,
then there is an~$E \subseteq E^\star_F$ such that $H = F \cup E$. We
establish the reverse direction by the following two claims.

In \cref{sec:explicit} we provide an alternate proof that
characterizes the set $E^\star_F$ explicitly.

\begin{claim}\label{cl:idempotent}
  Let $\core(H) = F$ and let $\lexvc, U_1, U_2$ (respectively
  $\lexvc', U_1', U_2'$) denote the sets as in \cref{alg:core} when
  run on $H$ (respectively on $F$). It holds that $\lexvc = \lexvc'$,
  $U_1 = U_1'$ and $U_2 = U_2'$.
\end{claim}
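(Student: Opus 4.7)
The plan is to establish the three equalities $\lexvc = \lexvc'$, $U_1 = U_1'$, and $U_2 = U_2'$ in sequence, each time arguing that the object selected by \cref{alg:core} (first a lex-first minimum vertex cover, then a lex-first maximal saturable set in a bipartite subgraph) is determined by data shared between $F$ and $H$.

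For $\lexvc = \lexvc'$, I would show that $F$ and $H$ have identical families of minimum vertex covers. One inclusion is immediate: any min VC of $H$ also covers $F \subseteq H$, and $\vc(F) = \vc(H)$ by \cref{cor:vc-iff}. For the reverse, generalizing the argument inside \cref{cor:vc-iff}, if $\lexvc''$ is a min VC of $F$ then $\lexvc'' \cap U_2 = \emptyset$ (otherwise $\lexvc'' \cap (\lexvc \cup U_1)$ would be a VC of $H[\lexvc \cup U_1]$ of size strictly smaller than $\vc(H[\lexvc \cup U_1]) = \vc(H)$, contradicting \cref{cl:core-vc}), and hence $\lexvc''$ must contain all of $V(M_2) \cap \lexvc$ to cover $M_2 \subseteq F$; by \cref{clm:prop-core}, \cref{it:edge-outside}, every edge in $H \setminus F$ is incident to $V(M_2) \cap \lexvc \subseteq \lexvc''$, so $\lexvc''$ is also a min VC of $H$. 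Since the families of min VCs agree, so do their lex-first elements.

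For $U_1 = U_1'$ I would invoke the standard bipartite-graph fact that every maximal (by set inclusion) saturable subset on one side has cardinality equal to the maximum matching size, proved by applying Berge's augmenting-path argument to the symmetric difference of a matching saturating the set and a maximum matching. Applying this in $H$ and in $F$ to the bipartite graph between $[k] \setminus \lexvc$ and $\lexvc$, and noting that the maximum matching size is the same in $F$ and $H$ since $M_1 \subseteq F \subseteq H$, one gets $|U_1| = |U_1'|$. Now $U_1$ is saturable in $F$ via $M_1$, and maximal in $F$ because any $F$-extension would also be an $H$-extension; conversely, since $|U_1' \cup \set{w}|$ would exceed the maximum matching size in $H$, the set $U_1'$ is also maximal in $H$. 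Thus $U_1$ is a candidate for the lex-first choice in $F$ and $U_1'$ is a candidate for the lex-first choice in $H$, forcing $U_1 = U_1'$.

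Step 3, $U_2 = U_2'$, is verbatim the argument of Step 2 applied to the bipartite graph between $[k] \setminus (\lexvc \cup U_1)$ and $\lexvc$, with $M_2$ playing the role of $M_1$; the already-established equalities $\lexvc = \lexvc'$ and $U_1 = U_1'$ ensure that both algorithms really do search over the same ground set. I expect the main obstacle to be Step 1: one has to recognize that the combination of $M_2$ and \cref{clm:prop-core}, \cref{it:edge-outside}, is exactly what forces every minimum vertex cover of $F$ to also cover the ``extra'' edges in $H \setminus F$. Once this structural fact is in hand, Steps 2 and 3 reduce to the max-matching characterization of maximal saturable sets together with routine lex-first bookkeeping.
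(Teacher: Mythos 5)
Your proposal is correct and follows essentially the same route as the paper's proof: $\lexvc=\lexvc'$ via \cref{cor:vc-iff} (which you re-derive rather than cite), and $U_1=U_1'$, $U_2=U_2'$ from the observation that the relevant matchings lie entirely inside the induced subgraph $F$, so saturable candidate sets transfer between $F$ and $H$ in both directions. Your appeal to the transversal-matroid fact that all inclusion-maximal saturable sets have the same cardinality, used to conclude that $U_1'$ is also maximal in $H$, cleanly pins down a detail the paper leaves implicit, but it does not change the substance of the argument.
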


\begin{proof}
 By definition of the map $\core$ given by \cref{alg:core}, we have that
 $F = H[\lexvc \cup U_1 \cup U_2]$.
  \Cref{cor:vc-iff} implies that $\lexvc$ is  the
  lexicographically first minimum vertex cover of $F$
  (otherwise $\lexvc$ would not be the
  lexicographically first minimum vertex cover of $H$)
  and hence $\lexvc = \lexvc'$.

  Clearly any matching $M_1$ in $H$ from $\lexvc$ to $U_1$ is also
  in $F= H[\lexvc \cup U_1 \cup U_2]$, 
  and any matching $M_1'$ in $F$ from $\lexvc$ to $U_1'$
  is also in $H\supseteq F$, and thus it must hold $U_1 = U_1'$.
  Similarly, any matching $M_2$ in $H[[k]\setminus U_1]$ 
  from $\lexvc$ to $U_2$ is also
  in $F[[k]\setminus U_1]= H[\lexvc \cup U_2]$, 
  and any matching $M_2'$ in $F[[k]\setminus U_1]$ from $\lexvc$ to $U_2'$
  is also in $H[[k]\setminus U_1]\supseteq F[[k]\setminus U_1]$, 
  and thus it must hold $U_2 = U_2'$.
\end{proof}

Note that \cref{cl:idempotent} implies that $\core(F)=F$ and that the
sets $\lexvc, U_1$ and $U_2$ as in \cref{alg:core} run on any two
graphs $H$ and $H'$ such that $\core(H)=\core(H')$ are identical.

\begin{lemma}
  For $F \in \img(\core)$ and for all $E \subseteq E^\star_F$ it holds that
  $\core(F \cup E) = F$.
\end{lemma}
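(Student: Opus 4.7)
My plan is to verify that the algorithm applied to $H = F \cup E$ produces the same $\lexvc$, $U_1$, and $U_2$ as when applied to $F$; once this is established, $\core(H) = H[\lexvc \cup U_1 \cup U_2] = F$ follows immediately. The starting point is the structural observation that every $e \in E \subseteq E^\star_F$ has one endpoint in $\lexvc$ and the other in $[k] \setminus V(E(F))$, where $V(E(F)) = \lexvc \cup U_1 \cup U_2$. Indeed, by definition $e \in H_e \setminus F$ for some $H_e \in \core^{-1}(F)$, and by \cref{cl:idempotent} the graph $H_e$ yields the same $\lexvc, U_1, U_2$ and satisfies $F = H_e[\lexvc \cup U_1 \cup U_2]$; hence $e$ has an endpoint outside $\lexvc \cup U_1 \cup U_2$, while \cref{clm:prop-core}, \cref{it:edge-outside} forces the other endpoint into $\lexvc$. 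This immediately gives $H[\lexvc \cup U_1 \cup U_2] = F$, and since $\lexvc$ is then a vertex cover of $H$ of size $\vc(F) \le \vc(H)$ it must be minimum, and lex-first because any lex-smaller minimum vertex cover of $H$ would also be a minimum vertex cover of $F$.

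The substantive part is showing that the algorithm's choice of $U_1$ on $H$ equals $U_1$, which I will split into maximality and the lex-first property. For maximality, suppose some $v \in [k] \setminus (\lexvc \cup U_1)$ admits a matching $M$ of $U_1 \cup \set{v}$ of size $|U_1| + 1$ in $H$. Since $U_1 \subseteq V(E(F))$ and no $E$-edge is incident to $V(E(F)) \setminus \lexvc$, the restriction $M|_{U_1}$ must lie entirely in $F$. If $v \in U_2 \subseteq V(E(F))$, then $v$'s matching edge is also in $F$, so all of $M$ lies in $F$, contradicting maximality of $U_1$ in $F$. Otherwise $v \in [k] \setminus V(E(F))$ is matched by some edge $(u, v) \in E$ with $u \in \lexvc \setminus V(M|_{U_1})$; picking a witness $H_{(u,v)} \in \core^{-1}(F)$ containing $(u, v)$, the matching $M|_{U_1} \cup \set{(u,v)}$ sits inside $H_{(u,v)}$ and contradicts the maximality of $U_1$ inside $H_{(u,v)}$ that \cref{cl:idempotent} supplies.

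For the lex-first property I will use that $U_1$ is produced by the natural greedy procedure (process $v = 1, 2, \ldots$ in order, adding $v$ whenever $U_{\text{current}} \cup \set{v}$ admits a matching of size $|U_{\text{current}}| + 1$), and argue by induction on the step that greedy on $F$ and on $H$ make identical decisions. If they first diverge at step $v_0$, the current sets coincide, and the matching in $H$ witnessing $v_0$'s addition must use an $E$-edge (otherwise greedy on $F$ would also have added $v_0$); by the structural observation, this edge must match $v_0$ itself, so $v_0 \in [k] \setminus V(E(F))$ is matched by some $(u, v_0) \in E$. Passing to the witness $H_{(u, v_0)} \in \core^{-1}(F)$, the very same matching would force greedy on $H_{(u, v_0)}$ to add $v_0$, contradicting the conclusion of \cref{cl:idempotent} that the greedy output on $H_{(u, v_0)}$ equals $U_1$, which does not contain $v_0$. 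An identical argument applied to the domain $[k] \setminus (\lexvc \cup U_1)$ will then give $U_2(H) = U_2$, and combining everything yields $\core(H) = F$. The main obstacle is this lex-first verification, since plain set-theoretic maximality does not by itself rule out $E$-edges reordering the greedy choices, and it is here that the per-edge witnesses $H_e$ from the definition of $E^\star_F$ are essential.
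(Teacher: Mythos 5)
Your proof is correct, but it takes a genuinely different route from the paper's. The paper reduces the statement to a two-witness claim — if $\core(F\cup E')=\core(F\cup E'')=F$ then $\core(F\cup E)=F$ for every $E\subseteq E'\cup E''$ — and, implicitly inducting over witnesses whose union covers $E^\star_F$, resolves the exchange step at the lex-first point of disagreement by routing the one relevant matching edge through either $F\cup E'$ or $F\cup E''$ and invoking \cref{fact:lex}. You instead run a direct simulation: from the structural fact that every edge of $E^\star_F$ joins $\lexvc$ to $[k]\setminus V\bigl(E(F)\bigr)$ (correctly extracted from \cref{cl:idempotent} and \cref{clm:prop-core}, \cref{it:edge-outside}), you show that all three choices made by \cref{alg:core} on $F\cup E$ coincide with those on $F$, and whenever an $E$-edge could affect a decision you localize the contradiction to the single witness $H_e\in\core^{-1}(F)$ containing that edge. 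This avoids the union/induction bookkeeping and is arguably more transparent, and the maximality and first-divergence steps check out. The one thing you must add is a proof of the claim you lean on throughout, namely that the lexicographically first maximal matchable set equals the output of the element-by-element greedy scan: both the "first divergence" framing and the assertion that greedy on $H_e$ reaches step $v_0$ with current set $U_1\cap[v_0-1]$ depend on it. It is true and needs only downward closure of matchability (greedy's output is maximal since subsets of matchable sets are matchable, and it is lex-first since at the smallest index $m$ of the symmetric difference with any other matchable set $V$ the prefix $V\cap[m]$ is itself matchable, so greedy would have taken $m$), but it is not stated anywhere in the paper and deserves its own short argument.
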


\begin{proof}
  Let $E', E''$ be subsets of~$E^\star_F$ such that
  $\core(F \cup E') = \core(F \cup E'') = F$. It suffices to show that
  if
  $E \subseteq E'\cup E''$, then it holds that $\core(F \cup E) = F$.
  Let $\lexvc, U_1, U_2$ ($\lexvc', U_1', U_2'$) be the sets as in
  \cref{alg:core} when run on $F \cup E $ (respectively on
  $F \cup E'$). By \cref{cl:idempotent} the sets $\lexvc', U_1'$ and
  $U_2'$ could be equivalently defined as the sets from
  \cref{alg:core} when run on $F$ or $F \cup E''$.

  We first show that $\lexvc = \lexvc'$.  
  By \cref{cl:idempotent} we have that $\lexvc'$ is  the
  lexicographically first minimum vertex cover of $F$.
  It suffices to argue that 
  $\lexvc'$ is a vertex cover of $F \cup E$ since
  any vertex cover of $F \cup E$ is a vertex cover of $F$.
  But this is easy to see since $\lexvc'$ is a vertex cover of $F \cup E'$ 
  and of $F \cup E''$ and thus it is a vertex cover of $F \cup E' \cup E''$,
  and hence also of $F\cup E \subseteq F \cup E' \cup E''$.
 
  Suppose for the sake of contradiction that $U_1 \neq U_1'$ and let
  $u$ be the smallest vertex in the symmetric difference of $U_1$ and
  $U_1'$.
  Note that any matching $M_1'$ in $F\cup E'$ from $\lexvc$ to $U_1'$
  is also in~$F$. Hence either $U_1' \subseteq U_1$ (since $U_1$ is a
  maximal set with a matching $M_1 \subseteq F \cup E$ into $\lexvc$)
  or $U_1'$ is lexicographically larger than $U_1$ (as $U_1$ is the
  lexicographically first such set). In both cases it holds that
  $u\in U_1$.

  Let ${U}=(U_1 \cap U_1') \cup \{u\}$ and observe that since
  $u \not \in U_1'$, it holds that either $U_1' \subsetneq U$ or
  $U_1'$ is lexicographically larger than $U$. Let $M$ be a matching
  in $F \cup E$ from $U$ to $\lexvc$ that covers $U$ (which exists as
  ${U}\subseteq U_1$) and denote by $e$ the edge in $M$ adjacent to
  $u$. Since $e \in F\cup E \subseteq F \cup E' \cup E''$ it must be
  the case that $e\in F\cup E'$ or $e\in F\cup E''$. Without loss of
  generality suppose that $e\in F\cup E'$ and thus
  $M\subseteq F\cup E'$. Note that this contradicts the choice of
  $U_1'$: if $U_1' \subsetneq U$, then $U_1'$ is not maximal and, if
  $U_1'$ is lexicographically larger than $U$, then by
  \cref{fact:lex}, it is not the lexicographically first maximal set
  that can be matched to $\lexvc$.  We conclude that $U_1 = U_1'$.

  A similar argument can be used to show that $U_2 = U_2'$ and thus
  $\core(F \cup E) = F$ as claimed.
\end{proof}

\section{Well-Behaved Graphs}
\label{sec:random}

In this section, we define the notion of \emph{well-behaved} graphs,
which is based on two combinatorial properties of graphs related to
common neighborhoods of small tuples, and two analytic properties that
bound certain character sums. In \cref{sec:random-proof} we then show
that random graphs are asymptotically almost surely well-behaved. In
the following sections we prove that our measure satisfies the
required conditions to obtain our Sherali-Adams coefficient size lower
bound for any well-behaved graph.

Let us start by introducing the concepts needed to define
well-behaved graphs. 
We say a rectangle $Q$ is \emph{$s$-small} if $|Q_i| \leq s$ for all
$i\in[k]$ and, given a set $A\subseteq[k]$, a rectangle $Q$ is said to
be \emph{$(s,A)$-\compatible} if $|Q_i| > s$ for all $i\in A$.  For
any set $\calD$ we say that a function $f : \calD \rightarrow \R^+$ is
\emph{$r$-bounded} if $f(x) \leq r$ for all $x\in \calD$.

We require some terminology and notation from graph theory.  The
neighborhood of a vertex $v \in V$ in a graph $G=(V,E)$ is
$N(v) = N_G(v) = \set{u \mid \set{u,v} \in E}$ and the neighborhood of
a set of vertices $U \subseteq V$ is
$N(U) = N_G(U) = \set{v \not\in U \mid \exists u \in U: \set{u,v} \in
  E}$. For a set $W \subseteq V$ the neighborhood of a vertex $v$ in
$W$ is $N(v, W) = N(v) \cap W$ and similarly for a set $U$ we let the
neighborhood of $U$ in $W$ be $N(U, W) = N(U) \cap W$. The common
neighborhood of $U$ is $N^\cap(U) = \bigcap_{u \in U}N(u)$ and the
common neighborhood of $U$ in $W$ is
$N^\cap(U, W) = N^\cap(U) \cap W$. This notation is naturally extended
to a tuple $t$ by considering $t$ as a set of vertices.

The next two definitions are purely combinatorial. They are similar to
definitions that have appeared in previous papers on $k$-clique
\cite{BIS07,BGLR12Parameterized,ABRLNR21}. Recall that throughout the
paper graphs denoted by $G$ are $k$-partite with partitions
$V_1, \ldots, V_k$ of size $n$ each.

\begin{definition}[Bounded common neighborhoods]
  A graph $G$ has \emph{$(\beta,p)$-bounded common neighborhoods from
    $Q = \bigtimes_{i \in A} Q_i$ to $R\subseteq V(G)$} if it holds
  that for all $B\subseteq A$ and all $t \in Q_B$
  \begin{align*}
    |N^{\cap}(t,R)| \in  (1\pm \beta ) p^{|t|}|R| \eqperiod
  \end{align*}
  A graph $G$ has \emph{$(\beta,p,d)$-bounded common neighborhoods in
    every block} if for all $A\subseteq [k]$ of size at most $d$ and
  all $i \in [k]\setminus A$, $G$ has $(\beta,p)$-bounded common
  neighborhoods from $V_A$ to $V_i$.
\end{definition}


While it turns out that random graphs do have bounded common
neighborhoods, the graph induced by a rectangle may certainly have
tuples with ill-behaved common neighborhoods: we may for example have
an isolated vertex in a rectangle. The following definition roughly
states that, while there may be tuples with ill-behaved neighborhoods
in a rectangle, there is a large sub-rectangle which has bounded
common neighborhoods.

\begin{definition}[Bounded error sets]
  \label{def:bounded}
  A graph $G$ has $(s, w, \beta, p, d)$-\emph{bounded error sets} if
  for all rectangles $Q = \bigtimes_{i \in [k]} Q_i$ satisfying
  $|Q_i| \geq s$ or $|Q_i| = 0$ it holds that there exists a small set
  of vertices $W \subseteq V(G)$, $|W| \le w$, such that for all
  $S \subseteq [k]$ of size at most $d$ it holds that all tuples
  $t \in \bigtimes_{i \in S} (Q_i\setminus W) $ satisfy
  \begin{align*}
    \bigl|N^\cap(t, Q_j\setminus W)\bigr| \in
    (1\pm \beta)p^{|t|}\bigl|Q_j \setminus W\bigr|
  \end{align*}
  for all $j\in [k]\setminus S$. We refer to $W$ as the \emph{error
    set of $Q$}.
\end{definition}

Recall from the edge axiom proof sketch in \cref{sec:axioms-sketch}
that we require bounds of the form $n^{k-\Omega(\vc(F))}$ on the
absolute value of certain character sums. It turns out that, in order
to prove that monomials are mapped to an essentially non-negative
value, we need tighter (depending on $|Q|$) as well as ``localized''
versions of these bounds. For conciseness we introduce the following
terminology.

\begin{definition}[Bounded character sums]
  Let $s \in \N^+$, $B \subseteq [k]$, $Q_B = \bigtimes_{i \in B} Q_i$
  and $F$ be a core graph. A graph $G$ has \emph{$s$-bounded character
    sums over $Q_B$ for $F$} if it holds that
  \begin{align*}
    \Big|
    \sum_{t \in Q_B}
    \sum_{H \in \calH(F, E^\star_F[B])}
    \chi_{H[B](t)}(G)
    \Big|
    \le  s
    \eqperiod
  \end{align*}
\end{definition}

We are now ready to state the pseudorandomness property of graphs that
allows us to prove average-case Sherali-Adams coefficient size lower
bounds for the $k$-clique formula. As
\cref{it:bounded-char-general,it:bounded-char-tight} are somewhat
difficult to parse we give an informal description upfront.

\Cref{it:bounded-char-general} states that all character sums over the
families $\calH(F, E^\star_F)$ are of bounded magnitude if the
rectangle considered has large minimum block size. Smaller rectangles
are unfortunately not as well-behaved. However, for certain
rectangles, we can guarantee something similar:
\cref{it:bounded-char-tight} states that if the common neighborhood of
small tuples in a rectangle are bounded, then the mentioned character
sums can still be bounded.

First time readers may, for now, choose to skip the formal definition
of \cref{it:bounded-char-tight}. It might be more insightful to first
read \cref{sec:outline,sec:good-rect} and return to
\cref{it:bounded-char-tight} once it is used.

\begin{definition}[Well-behaved graphs]
  \label{def:well-behaved}
  We say that a $k$-partite graph $G$ with partitions of size $n$ is
  \emph{$D$-well-behaved} if, for $p=n^{-2/D}$, the following
  properties hold:
  \begin{enumerate}
  \item \label[property]{it:expected-neigh} $G$ has $(1/k, p, D/4)$-bounded common
    neighborhoods in every block.
  \item \label[property]{it:bounded-error} There exists a constant
    $C \in \R^+$ such that $G$ has $(2s, s, 1/k, p, \ell)$-{bounded
      error sets} for all $\ell \le D/4$ and
    $s \ge C k^4 \ell \ln n/p^{2\ell}$.
  \item \label[property]{it:bounded-char-general} For any core $F$
    satisfying $\vc(F) \le D/4$, and any
    $\big(n/2, V\bigl(E(F)\bigr)\big)$-\compatible~ rectangle $Q$ it
    holds that $G$ has $s$-bounded character sums over $Q_{[n]}$ for
    $F$, where
    \begin{align*}
      s =
      6
      \cdot
      p^{-{|E(F)|}} 
      \cdot
      n^{k-\lambda \vc(F)/4} \eqcomma
    \end{align*}
    for  any $\lambda < 1 - \log(k)/\log(n)$.
  \item \label[property]{it:bounded-char-tight}
    
    Let $F$ be a core satisfying $\vc(F) \le D/4$, let
    $\Lambda \geq 20\,k \log n$, denote by $B \subseteq [k]$ a set of
    vertices and let $A = V\bigl(E(F)\bigr) \cap B$. Then for any
    rectangle $Q$ which is $(4\Lambda)$-small and
    $(\Lambda, B)$-\compatible{} the following holds. If $G$ has
    $(3/k,p)$-bounded common neighborhoods from $Q_A$ to $Q_i$ for
    every $i\in B \setminus A$, then $G$ has $s$-bounded character
    sums over $Q_B$ for $F$, where
    \begin{align*}
      s =
      O\big(
      p^{-{|E(F[B])|}} 
      \cdot
      \left({\Lambda}/{10\,k\log n}\right)^{-\vc(F[B])/4}
      \cdot
      |Q_B|
      \big)\eqperiod
    \end{align*}
  \end{enumerate}
\end{definition}

In what follows we often state that a graph $G$ is $D$-well-behaved in
which case it is implicitly understood that $G$ is $k$-partite with
partitions of size $n$. In \cref{sec:random-proof} we prove that a
graph $G$ sampled from the distribution $\calG(n,k,n^{-2/D})$ is
asymptotically almost surely $D$-well-behaved.

\begin{restatable}{restatabletheorem}{Gwellbehaved}
  \label{lem:Gwellbehaved}
  If $n$ is a large enough integer, $k \in \N^+$ and $D \in \R^+$
  satisfy $4\leq D \le 2 \log n$ and $k \le n^{1/5}$, then
  $G\sim \mathcal{G}(n, k, n^{-2/D})$ is asymptotically almost surely
  $D$-well-behaved.
\end{restatable}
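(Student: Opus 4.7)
The plan is to verify each of the four properties in the definition of $D$-well-behaved independently, since they concern rather different quantities. Throughout, let $p = n^{-2/D}$, so that $p^{D/4} \ge n^{-1/2}$, and note that $n p^{D/4} \ge n^{1/2}$ is large enough for sharp concentration arguments. Properties~\ref{it:expected-neigh} and~\ref{it:bounded-error} are purely combinatorial statements about common neighborhoods, whereas \cref{it:bounded-char-general,it:bounded-char-tight} are analytic statements about character sums.

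For \cref{it:expected-neigh}, I would fix $A \subseteq [k]$ with $|A| \le D/4$, a tuple $t \in V_A$, and a block $V_i$. Then $|N^\cap(t, V_i)|$ is a sum of $n$ independent Bernoullis each with probability $p^{|t|}$, so a multiplicative Chernoff bound yields a $(1 \pm 1/k)$-deviation probability of at most $\exp\bigl(-\Omega(np^{|t|}/k^2)\bigr)$. Since $np^{|t|}/k^2 \ge n^{1/2}/k^2 \gg D\log n$ in the stated parameter range, a union bound over the at most $k \cdot n^{D/4} \cdot \binom{k}{D/4}$ triples $(A, t, i)$ succeeds. For \cref{it:bounded-error}, I would construct the error set $W$ of a rectangle $Q$ greedily: as long as some small tuple $t \in Q_S \setminus W$ witnesses a deviation of $|N^\cap(t, Q_j \setminus W)|$ from the target, remove the offending vertex (or a small set of them) and add to $W$. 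Showing that this process terminates with $|W| \le s$ reduces to proving that, for each fixed $S$, the number of $t \in \bigtimes_{i\in S} Q_i$ whose common neighborhood sizes deviate too far is bounded. This ``deviation count'' bound is itself a Chernoff-plus-union-bound argument applied to the random graph (conditional on the combinatorial structure of $Q$), where the threshold $s \gtrsim k^4 \ell \ln n / p^{2\ell}$ is precisely what is needed for the tails to beat the union bound over all tuples and all rectangle shapes.

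For \cref{it:bounded-char-general,it:bounded-char-tight}, the key observation is the Fourier identity
\begin{align}
  \sum_{E \subseteq E^\star_F} \chi_{E(t)}(G)
  = p^{-|E^\star_F|} \cdot \ind\bigl[E^\star_F(t) \subseteq E(G)\bigr] \eqcomma
\end{align}
which lets me rewrite the character sum over $\calH(F, E^\star_F)$ as $p^{-|E^\star_F|} \sum_{t} \chi_{F(t)}(G) \cdot \ind[E^\star_F(t)\subseteq E(G)]$. I would then apply the high-moment method: compute the $2\ell$-th moment of this quantity, expanding it as a sum over tuples $(t_1,\ldots,t_{2\ell})$, where only those tuples whose induced characters pair up via the basis orthogonality survive in expectation. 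A careful counting of such surviving configurations — grouped by how many blocks each pair of tuples shares — yields a bound of the form (expected magnitude)$^{2\ell}$ times a modest combinatorial overhead. Markov's inequality with $\ell = \Theta(\log n)$ then controls a single $(F, Q)$ pair, and a union bound over at most $2^{O(D(D+\log k))}$ cores $F$ (via \cref{lem:count-H}) and over the relevant rectangle shapes completes the argument. For \cref{it:bounded-char-tight}, the additional hypothesis of $(3/k,p)$-bounded common neighborhoods from $Q_A$ to $Q_i$ is used inside the moment computation to control how many tuples $t \in Q_B$ actually satisfy $E^\star_F(t) \subseteq E(G)$, producing the extra $|Q_B|$ and $(\Lambda/10k\log n)^{-\vc(F[B])/4}$ factors.

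The main obstacle is the union bound for \cref{it:bounded-char-general}, because there are exponentially many rectangles $Q$ to control simultaneously. The trick is that the relevant character sum depends only on the projections $(Q_i)_{i \in V(E(F))}$ (the remaining blocks contribute only a multiplicative $\prod_{i \notin V(E(F))} |Q_i|$ factor and hence can be absorbed), so the effective union is over at most $|V(E(F))| \le 3D/4$ subsets of $[n]$. Choosing the moment order $\ell$ large enough to beat this union bound, while keeping the moment calculation tractable, is where the parameter restriction $k \le n^{1/5}$ and the exponent $\lambda < 1 - \log k / \log n$ in \cref{it:bounded-char-general} come in. I expect the most delicate bookkeeping to be in the moment computation, where one must track how the graphs $F$ and $H \supseteq F$ glue together across the $2\ell$ copies of the sum and argue that the ``diagonal'' contribution dominates.
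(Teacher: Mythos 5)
Your treatment of \cref{it:expected-neigh,it:bounded-error} matches the paper's (Chernoff bounds plus union bounds; for the error sets the paper bounds the number of pairwise \emph{disjoint} deviating tuples, which supplies the independence your ``deviation count'' step needs), and your Fourier identity rewriting $\sum_{E\subseteq E^\star_F}\chi_{E(t)}(G)$ as $p^{-|E^\star_F|}\ind\bigl[E^\star_F(t)\subseteq E(G)\bigr]$ is exactly the paper's weight function $\xi_{F,Q,G,B}$. The gaps are in how you control \cref{it:bounded-char-general,it:bounded-char-tight}.

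First, the claimed reduction of the union bound is false: the weight $\ind\bigl[E^\star_F(t)\subseteq E(G)\bigr]$ depends on the coordinates of $t$ in blocks \emph{outside} $V\bigl(E(F)\bigr)$, because $E^\star_F$ consists precisely of edges between $V\bigl(E(F)\bigr)$ and its complement. Hence the character sum does not factor into $\prod_{i\notin V(E(F))}|Q_i|$ times a function of $Q_{V(E(F))}$, and the union really is over all $2^{\Theta(nk)}$ rectangles. The paper's fix is to condition on $G\setminus G[V_A]$ --- the weight $\xi$ is measurable with respect to this conditioning and is $r$-bounded thanks to \cref{it:expected-neigh} --- and then to take the moment order \emph{polynomially} large: $m=n^{2-\lambda}/4$ for \cref{it:bounded-char-general} and $m=10\Lambda k\log n$ for \cref{it:bounded-char-tight}. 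Your choice $\ell=\Theta(\log n)$ yields failure probability only $n^{-\Theta(1)}$, which cannot survive a union bound even over the $2^{O(D^2+D\log k)}$ cores, let alone over rectangles. The moment order is in turn capped by $\kappa=\min_{\{u,v\}\in M}|Q_u\times Q_v|$ (the counting argument requires $m\le\kappa$), which is exactly why \cref{it:bounded-char-tight} carries the restriction $\Lambda\ge 20\,k\log n$.

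Second, you do not say where the exponent $\vc(F)/4$ comes from, and grouping configurations ``by how many blocks each pair of tuples shares'' will not produce it. In the paper's \cref{lem:encode} the gain over the trivial moment bound is $(m/\kappa)^{|M|/2}$ for a maximal matching $M\subseteq F$, obtained by an encoding argument in which every matching edge forced to coincide with an earlier copy saves $\log\kappa-\log m$ bits; combined with $|M|\ge\vc(F)/2$ from \cref{clm:match} this is the one place the vertex-cover truncation of $\mu_d$ is actually cashed in. The moment computation therefore has to be organized around a matching of $F$, not around generic block-sharing patterns, and without this step the ``diagonal dominates'' conclusion does not yield the required $n^{-\lambda\vc(F)/4}$ and $(\Lambda/10\,k\log n)^{-\vc(F[B])/4}$ factors.
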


\section{Clique Is Hard on Well-Behaved Graphs}
\label{sec:outline}

In this section we prove that our measure $\mu_d$ is an
$n^{-\Omega(D)}$-pseudo-measure for the $k$-clique formula, if the
formula is defined over a $D$-well-behaved graph $G$.

\begin{theorem}
  \label{thm:mu}
  There are constants $\eta, c > 0$ and $D_0 \in \N$ such that the
  following holds for large enough $n \in \N$ and all $D$ satisfying
  $D_0 < D \le 2\log n$. If $D \le k \le n^{1/66}$, $d = \eta D$ and
  $G$ is a $D$-well-behaved $k$-partite graph with $n$ vertices per
  block, then the normalized measure $\mu(m) = \mu_d(m)/\mu_d(1)$ is
  an $n^{-cD}$-pseudo-measure for the $k$-clique formula over $G$.
\end{theorem}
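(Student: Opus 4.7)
My plan is to verify the three conditions of \cref{def:pseudo-measure} for the normalized measure $\mu = \mu_d/\mu_d(1)$ with $d = \eta D$ and $\delta = n^{-cD}$, where the constants $\eta$ and $c$ will be pinned down at the end so that all three error bounds simultaneously fit. The well-behavedness of $G$ replaces the probabilistic arguments sketched in \cref{sec:lb-overview} with deterministic bounds on character sums and common neighborhoods. Condition \cref{it:non-neg} will be by far the delicate one; \cref{it:one} and \cref{it:axiom} should fall out of a common ``group-by-core and invoke well-behavedness'' template.

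For \cref{it:one}, since $\mu(1) = 1$ holds by normalization, it suffices to show $\mu_d(1) = 1 \pm n^{-\Omega(1)}$ so that the division is a benign $(1 \pm o(1))$ factor. I would write
\[
\mu_d(1) - 1 \;=\; n^{-k} \sum_{t \in V_{[k]}} \sum_{\substack{H \in \calH_d \\ E(H) \neq \emptyset}} \chi_{H(t)}(G),
\]
group the inner sum by cores via \cref{lem:compression}, and bound each class using \cref{it:bounded-char-general} of \cref{def:well-behaved} applied to the rectangle $V_{[k]}$, which is trivially $(n/2, V(E(F)))$-\compatible{}. The count of cores from \cref{lem:count-H}, together with $|V(E(F))| \le 3\,\vc(F)$, $p = n^{-2/D}$, $D \le 2\log n$ and $k \le n^{1/66}$, makes the geometric sum over $\vc(F)$ dominated by the $\vc(F) = 1$ term, giving $|\mu_d(1) - 1| \le n^{-\Omega(1)}$.

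For \cref{it:axiom}, the Boolean, negation and block axioms multiplied by any monomial vanish pointwise on the potentially satisfying assignments $\rho_t$ and hence are mapped to $0$ by $\mu_d$. Only edge axioms $x_u x_v$ with $\set{u,v} \notin E(G)$ require work. For any subrectangle $Q$ of $Q(x_u x_v)$ I would rewrite $\mu_d(Q)$ as a sum over the $e$-boundary (\cref{def:boundary}) with $e = \set{i,j}$ exactly as in \cref{sec:axioms-sketch}, partition the boundary by cores, and apply \cref{it:bounded-char-general} inside each class. Each appearing core has $\vc(F) \ge d$, so the per-class bound $p^{-|E(F)|} n^{k - \lambda \vc(F)/4}$ combined with the core count from \cref{lem:count-H} (controllable because $|V(E(F))| \le 3\,\vc(F)$ and $k \le n^{1/66}$) yields $|\mu_d(Q)| \le n^{-\Omega(D)}$. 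A union bound over missing edges completes \cref{it:axiom}.

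The main obstacle is \cref{it:non-neg}. Here I would realize the decomposition of \cref{sec:nonneg-sketch}: for an arbitrary rectangle $Q$, iteratively split it into a family $\calQ$ of at most $n^{O(d)}$ subrectangles, each of which falls into one of three categories. Category (i) is \emph{very small}, $|Q'| \le n^{(1-\eps)k}$, for which \cref{lem:rect-small} together with $k^d p^{-dk} = n^{o(k)}$ yields $|\mu_d(Q')| \le n^{-\Omega(k)}$. Category (ii) is a subrectangle of an edge axiom, bounded by the previous step. Category (iii) is ``good'' in the sense of \cref{def:good}, where the forthcoming concentration statement \cref{lem:rect-good} gives $\mu_d(Q') \approx |Q'|/n^k \ge 0$ up to error $n^{-\Omega(D)}$. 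The engine of the decomposition is \cref{it:bounded-error} of \cref{def:well-behaved}, which yields, for any sufficiently large rectangle, a small error set $W$ whose removal leaves all small-tuple common neighborhoods close to their expected sizes; peeling off such error sets and splitting off blocks with abnormally large common neighborhoods (each split either lands in category (ii) or produces a rectangle significantly smaller than its parent) drives the recursion. The delicate step, and the place where \cref{it:bounded-char-tight} enters (through the proof of \cref{lem:rect-good}), is the accounting that both keeps $|\calQ| \le n^{O(d)}$ and guarantees every leaf lands in one of the three categories; this is also where $\eta$ and $c$ are ultimately fixed. Summing the three bounds and dividing by $\mu_d(1) = 1 - o(1)$ then completes the $n^{-cD}$-pseudo-measure claim.
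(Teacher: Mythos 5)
Your proposal is correct and follows essentially the same route as the paper: block, Boolean and negation axioms vanish by linearity; both $\mu_d(1)\approx 1$ and the edge-axiom bound come from partitioning pattern graphs by their cores and invoking \cref{it:bounded-char-general} of \cref{def:well-behaved} together with \cref{lem:count-H}; and non-negativity comes from the same three-way rectangle decomposition (small / subrectangle of an axiom / good) with the deferred concentration lemma on good rectangles. The one step you gloss over is that a subrectangle of an edge axiom has singleton blocks that may lie in $V\bigl(E(F)\bigr)$, so \cref{it:bounded-char-general} does not apply directly; the paper patches this with a routine inclusion--exclusion over small blocks (\cref{lem:bound-any-rectangle}).
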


From \cref{thm:mu} along with \cref{lem:Gwellbehaved} and
\cref{prop:pseudo} we obtain \cref{thm:main-formal}.

In order 
to prove that the measure $\mu$ satisfies the properties of a
pseudo-measure as listed in \cref{def:pseudo-measure}, we show that
$\mu_d$ maps any axiom multiplied by a monomial to approximately~$0$
and that all monomials are associated with an essentially non-negative
value. Finally, we argue that $\mu_d(1) \ge 1 - n^{-\Omega(1)}$.

Recall that the clique formula consists of block axioms
$\sum_{v \in V_i} x_v - 1 = 0$ for each block $V_i$ and edge axioms
$x_ux_v = 0$ for every non-edge in the graph.
By linearity of $\mu_d$ over the tuples it holds for any monomial
$m$ 
that $\mu_d\big(m(\sum_{v \in V_i} x_{v} - 1)\big) =
0$.
The lemma below implies that for edge axioms $p_{uv}=x_ux_v$ 
it holds that $|\mu_d(mp_{uv})| \le n^{-cD}$, for any monomial $m$. 
As mentioned in \cref{sec:nonneg-sketch}, we also rely on this lemma
to prove that the measure is essentially non-negative. Since that
proof requires a careful choice of parameters we need to state the
lemma with some precision.

\begin{restatable}{restatablelemma}{axiomsmall}
  \label{lem:small-measure}
  Let $G$ be a $D$-well-behaved graph, let $n$ be a large enough
  integer and let $d = \eta D \le 2 \eta \log n$ for some constant
  $\eta > 0$. 
  It holds that all edge axioms $p_{uv}$
  and all rectangles $Q\subseteq Q({p_{uv}})$ satisfy
  $|\mu_d(Q)| \le O\big(\big(n^{\lambda /4 - 12 \eta}/(2k)^3\big)^{-d}\big)$
  for any $\lambda < 1 - \log(k)/\log(n)$.
\end{restatable}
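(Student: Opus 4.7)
Fix a non-edge $\set{u,v}\notin E(G)$ with $u\in V_{i_u}$, $v\in V_{i_v}$, set $e=\set{i_u,i_v}$, and assume $Q\subseteq Q(p_{uv})$ is non-empty (the empty case is trivial); then $Q_{i_u}=\set{u}$ and $Q_{i_v}=\set{v}$, so every $t\in Q$ contains both $u$ and $v$. Unfolding $\mu_d(Q)$ and splitting the inner sum by whether $e\in E(H)$, the pairs $(H, H\cup\set{e})$ with both sides in $\calH_d$ cancel because $\chi_{\set{u,v}}(G)=-1$ on a non-edge. Only graphs in the $(d,e)$-boundary (\cref{def:boundary}) with $e\notin H$ survive, yielding
\begin{align}
  \mu_d(Q)= n^{-k}\sum_{t\in Q}\sum_{\substack{H\in\calH_d(e)\\ e\notin H}}\chi_{H(t)}(G) \eqperiod
\end{align}
Grouping $H$ according to $\core(H)=F$ and invoking \cref{prop:core-boundary} reduces the outer sum to one over cores $F$ in the $(d,e)$-boundary with $e\notin F$, $\vc(F)=d$, and $|V(E(F))|\le 3d$.

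The next step is to bound, for each such core, the contribution $\sum_{t\in Q}\chi_{F(t)}(G)\sum_{E\subseteq E^\star_F}\chi_{E(t)}(G)$. The key observation is $1+\chi_{e'}(G)=p^{-1}\ind[e'\in E(G)]$, which collapses the subset sum to $p^{-|E^\star_F|}\ind[E^\star_F(t)\subseteq E(G)]$; the resulting expression is exactly the character sum bounded by property \cref{it:bounded-char-general} of \cref{def:well-behaved}, yielding $6\,p^{-|E(F)|}\,n^{k-\lambda\vc(F)/4}$ whenever the rectangle is $(n/2, V(E(F)))$-\compatible. The obstacle is that $Q$ has $|Q_{i_u}|=|Q_{i_v}|=1$, and possibly other small blocks, so \cref{it:bounded-char-general} does not apply verbatim. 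To bypass this I plan to pin $t_{i_u}=u$, $t_{i_v}=v$ and pull the fixed characters $\chi_{\set{u,t_j}}(G)\in\set{-1,(1-p)/p}$ out of the sum; the remaining expression is a character sum for the induced core $F[[k]\setminus\set{i_u,i_v}]$, which still has vertex cover at most $d$ and admissible edge set, and on the blocks of $Q$ that are large \cref{it:bounded-char-general} applies directly, while for any remaining small blocks one can fall back on \cref{it:bounded-char-tight} or on a direct bound via \cref{it:bounded-error}.

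Finally, I count cores and combine. \Cref{lem:count-H} bounds the number of admissible cores by $2^{3d(d+\log k)}\le (2k)^{3d}\,n^{O(\eta)\,d}$ using $d\le 2\eta\log n$, and since $|E(F)|\le \binom{d}{2}+d\bigl(|V(E(F))|-d\bigr)\le 5d^2/2$ together with $p=n^{-2/D}$ gives $p^{-|E(F)|}\le n^{5\eta d}$, the per-core bound is $6\,n^{O(\eta)d}\,n^{k-\lambda d/4}$. Multiplying by the prefactor $n^{-k}$ yields $|\mu_d(Q)|\le O\bigl((2k)^{3d}\,n^{(O(\eta)-\lambda/4)d}\bigr)$, and absorbing the implicit constants into the $12\eta$ slack produces the claimed $O\bigl((n^{\lambda/4-12\eta}/(2k)^3)^{-d}\bigr)$. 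The main obstacle is the singleton-block reduction in the second paragraph: this is precisely the point where the direct applicability of \cref{it:bounded-char-general} breaks, and one has to carefully combine the various well-behavedness properties to recover an $n^{k-\Omega(d)}$ bound per core in full generality.
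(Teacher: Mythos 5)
Your skeleton matches the paper's: the cancellation against $\chi_{\set{u,v}}(G)=-1$ reducing to the $(d,e)$-boundary, the partition into core families $\calH(F,E^\star_F)$ via \cref{prop:core-boundary}, and the final count of cores via \cref{lem:count-H} with $p^{-|E(F)|}\le n^{O(\eta)d}$ are all exactly the paper's steps. The divergence, and the gap, is at the step you yourself flag as the main obstacle: bounding $\sum_{t\in Q}\sum_{H\in\calH(F,E^\star_F)}\chi_{H(t)}(G)$ when $Q$ has singleton (or otherwise small) blocks, so that \cref{it:bounded-char-general} of \cref{def:well-behaved} does not apply directly.

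Your proposed workaround does not go through as stated. You cannot ``pull the fixed characters $\chi_{\set{u,t_j}}(G)$ out of the sum'': for an edge of $F$ from $i_u$ to another block $j$, the character depends on which vertex $t_j\in Q_j$ the tuple selects, so it is not a constant over the sum; factoring it out would require splitting $Q_j$ into $N(u)\cap Q_j$ and its complement, at which point you are no longer summing over $Q$ and the induced-core reduction changes both the family $\calH(F,E^\star_F)$ and the vertex cover (which for $F[[k]\setminus\set{i_u,i_v}]$ may drop below $d$, weakening the exponent). The fallback to \cref{it:bounded-char-tight} or \cref{it:bounded-error} is also not available off the shelf: \cref{it:bounded-char-tight} requires the rectangle to be $(4\Lambda)$-small and to have verified bounded common neighborhoods, neither of which you establish here. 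The paper instead proves a separate lemma (\cref{lem:bound-any-rectangle}) that handles \emph{arbitrary} rectangles by inclusion--exclusion: letting $T\subseteq A=V(E(F))$ be the blocks with $|Q_i|<n/2$, it writes $Q$ as a signed sum of $2^{|T|}$ rectangles $Q^S$ in which each small block is replaced by $V_i\setminus Q_i$ or $V_i$; every $Q^S$ is $(n/2,A)$-\compatible, so \cref{it:bounded-char-general} applies to each, at the cost of only a factor $2^{|A|}\le 2^{3d}$, which is absorbed into the final count. If you replace your pinning argument with this inclusion--exclusion step, the rest of your proof is correct and identical to the paper's.
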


Note that by choosing $\lambda = 1/2$, and 
considering $k\leq n^{1/66}$ and $\eta>0$ small enough,
\cref{lem:small-measure} implies that any
subrectangle of an edge axiom satisfies $|\mu_d(Q)| \le n^{-cD}$ for
some small enough constant $c$.
We postpone the proof of \cref{lem:small-measure} to \cref{sec:axioms-short}.

In addition to the bound on the magnitude of the measure on the axioms
we also need to prove that the measure is essentially non-negative.
We state this formally below and defer the proof to
\cref{sec:partition-rect}.

\begin{restatable}{restatablelemma}{nonneg}
  \label{lem:nonneg}
  There are constants $\eta, c > 0$ such that if $G$ is a
  $D$-well-behaved graph, $n$ is large enough,
  $d = \eta D \le 2\eta \log n$ and $D \le k \le n^{1/66}$, then any
  rectangle $Q$ satisfies $\mu_d(Q) \ge - n^{-cD}$.
\end{restatable}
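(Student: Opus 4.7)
My plan is to decompose $Q$ into a disjoint union $Q = \disjointunion_{Q' \in \calQ} Q'$ of a small family $\calQ$ of sub-rectangles, classify each $Q' \in \calQ$ into one of three types, and bound $\mu_d$ on each type; by additivity of $\mu_d$ over disjoint tuple sets, $\mu_d(Q) = \sum_{Q' \in \calQ} \mu_d(Q')$. The three types are: (i) \emph{small} rectangles, with $|Q'| \le n^{(1-\eps)k}$ for some small constant $\eps > 0$; (ii) sub-rectangles of some edge axiom $Q(p_{uv})$; and (iii) \emph{good} rectangles, in which every relevant common neighborhood is close to its expected size and on which $\mu_d(Q') \ge |Q'|/n^k - n^{-\Omega(D)}$.

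To build $\calQ$ I would iteratively invoke the bounded-error-set property (\cref{it:bounded-error}) of $D$-well-behaved graphs. Given a current sub-rectangle $Q'$ with large enough blocks, this property produces an error set $W$ of $s = \poly(k, 1/p)\log n$ vertices whose removal leaves a good rectangle $\bigtimes_i (Q'_i \setminus W)$; the remaining tuples of $Q'$ are recovered by branching on which block and which vertex of $W$ to pin. A branch is closed off as soon as the rectangle becomes small (type (i)), is forced inside an axiom by two pinned non-neighbors (type (ii)), or is itself good (type (iii)). Since each branching step pins one new vertex of the growing core, the depth is $O(d)$, and each step has branching factor $\poly(n)$, giving $|\calQ| \le n^{O(d)}$.

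The type (i) contributions are controlled by \cref{lem:rect-small}: a direct calculation using $d = \eta D$, $k \le n^{1/66}$, and $p = n^{-2/D}$ shows that each such $Q'$ satisfies $|\mu_d(Q')| \le n^{-\Omega(D)}$ provided $\eps$ is chosen sufficiently larger than $\eta$, and summing over $|\calQ| \le n^{O(d)}$ terms preserves the bound $n^{-\Omega(D)}$ for small enough $\eta$. The type (ii) contributions are bounded by \cref{lem:small-measure}, which directly gives $|\mu_d(Q')| \le n^{-\Omega(D)}$ per rectangle. Together with the essentially non-negative contribution of type (iii), these pieces combine to yield $\mu_d(Q) \ge -n^{-cD}$.

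The main obstacle is the concentration statement for good rectangles, namely \cref{lem:rect-good}: that $n^{-k}\sum_{t \in Q'}\sum_{H \in \calH_d}\chi_{H(t)}(G) \ge |Q'|/n^k - n^{-\Omega(D)}$ for every good $Q'$. My plan is to isolate the $H = \emptyset$ term (which contributes exactly $|Q'|/n^k$), group the remaining $H$'s by their cores $F$ via the compression map of \cref{lem:compression}, apply the tight character-sum bound (\cref{it:bounded-char-tight}) to each family $\calH(F, E^\star_F)$---exploiting that the bounded common neighborhoods of $Q'$ satisfy the hypothesis of that property---and finally sum over cores $F$ using the counting bound \cref{lem:count-H}. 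This is the most delicate step of the argument, and is precisely what \cref{sec:good-rect} is devoted to.
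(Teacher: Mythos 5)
Your proposal is correct and follows essentially the same route as the paper: the same three-way decomposition into small rectangles, subrectangles of edge axioms, and good rectangles (the paper's \cref{lem:decompose}), bounded respectively by \cref{lem:rect-small}, \cref{lem:small-measure}, and the concentration on good rectangles (\cref{lem:rect-good}), with the family size $n^{O(d)}$ absorbed by choosing $\eta$ small. The only cosmetic difference is that the paper's concentration statement for a good rectangle with $\ell$ pinned vertices is multiplicative around $p^{-\ell(k-(\ell+1)/2)}|Q'|n^{-k}$ rather than the additive bound you state, but either form yields the required positivity.
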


In \cref{sec:whole-space} we argued that with high probability
$\mu_d(1)$ is approximately $1$ if $G$ is a random graph and
$p=1/2$. We now show that this holds for
any $D$-well-behaved graph.

\begin{lemma}
  \label{lem:whole-space-again}
  There are constants $\eta, c > 0$ such that for $n$ large enough,
  $k \le n^{1/20}$, $D \le 2 \log n$ and $d=\eta D$ it holds that if
  $G$ is a $D$-well-behaved graph, then $\mu_d(1) \ge 1 -n^{-c}$.
\end{lemma}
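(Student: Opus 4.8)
The plan is to isolate the contribution of the empty pattern graph, which turns out to equal $1$ exactly, and to control everything else by grouping pattern graphs according to their cores and applying \cref{it:bounded-char-general} of well-behavedness -- the same mechanism used for edge axioms in \cref{sec:axioms-sketch}. Concretely, recall that $\mu_d(1) = \mu_d\bigl(\bigtimes_{i\in[k]}V_i\bigr) = n^{-k}\sum_{t}\sum_{H\in\calH_d}\chi_{H(t)}(G)$. Since $\chi_{\emptyset(t)}(G)=1$ for every tuple $t$ and there are $n^k$ tuples, the term $H=\emptyset$ contributes exactly $1$, so
\[
  \bigl|\mu_d(1)-1\bigr| = n^{-k}\Bigl|\sum_{t}\sum_{H\in\calH_d,\,H\neq\emptyset}\chi_{H(t)}(G)\Bigr|\eqperiod
\]

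The next step is to decompose by cores. By \cref{lem:compression}, the set of nonempty $H\in\calH_d$ is partitioned into the families $\calH(F,E^\star_F)=\core^{-1}(F)$ ranging over cores $F\in\img(\core)$ with $1\le\vc(F)\le d$; here I use that $\vc(\core(H))=\vc(H)$ (from \cref{cl:core-vc} and \cref{cor:vc-iff}) and that $\vc(F)\ge 1$ forces $F$ to contain an edge, so $V(E(F))\neq\emptyset$. Applying the triangle inequality and then \cref{it:bounded-char-general} -- which is valid because $Q=\bigtimes_{i\in[k]}V_i$ is trivially $\bigl(n/2,V(E(F))\bigr)$-\compatible, because $\vc(F)\le d=\eta D\le D/4$ for $\eta$ small, and because $\lambda=3/4$ satisfies $\lambda<1-\log k/\log n$ since $k\le n^{1/20}$ -- gives
\[
  \bigl|\mu_d(1)-1\bigr|\le n^{-k}\sum_{F:\,1\le\vc(F)\le d}6\,p^{-|E(F)|}\,n^{\,k-3\vc(F)/16}
  \;=\; 6\sum_{F:\,1\le\vc(F)\le d}p^{-|E(F)|}\,n^{-3\vc(F)/16}\eqperiod
\]

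It remains to check that this sum is $n^{-\Omega(1)}$. Writing $b=\vc(F)$ and using $|V(E(F))|\le 3b$ from \cref{lem:compression}, we have $|E(F)|\le\binom{3b}{2}\le\tfrac92 b^2$, hence $p^{-|E(F)|}=n^{2|E(F)|/D}\le n^{9b^2/D}\le n^{9\eta b}$ (using $b\le\eta D$); and by \cref{lem:count-H} with $c=3b$ there are at most $2^{3b(b+\log k)}\le n^{3b(2\eta+1/20)}$ cores with $\vc(F)=b$ (using $b\le d\le 2\log n$ and $\log k\le\tfrac1{20}\log n$). Combining, $\bigl|\mu_d(1)-1\bigr|\le 6\sum_{b\ge 1}n^{(15\eta+3/20-3/16)b}=6\sum_{b\ge 1}n^{(15\eta-3/80)b}$, so choosing $\eta$ small enough that $15\eta\le\tfrac3{160}$ makes every exponent at most $-\tfrac3{160}b$ and the geometric series $O(n^{-3/160})$; thus for $n$ large enough $\mu_d(1)\ge 1-n^{-c}$ for a suitable constant $c>0$. (If $d<1$ the sum is empty and $\mu_d(1)=1$ exactly, so nothing needs to be proved.)

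The one genuinely delicate point is the choice of parameters: the gain $n^{-\lambda\vc(F)/4}$ coming from \cref{it:bounded-char-general} must simultaneously absorb the price $p^{-|E(F)|}=n^{O(\eta)\vc(F)}$ paid when passing to cores and the number $n^{O((\eta+\log k/\log n)\vc(F))}$ of cores summed over. This is exactly why we take $\eta$ to be a small enough absolute constant and $\lambda$ close to its ceiling $1-\log k/\log n$ (which is bounded away from $0$ since $k\le n^{1/20}$); everything else is routine estimation.
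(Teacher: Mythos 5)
Your proof is correct and follows essentially the same route as the paper's: isolate the $H=\emptyset$ term, partition the remaining pattern graphs by their cores, bound each family via \cref{it:bounded-char-general} of \cref{def:well-behaved}, and absorb both the $p^{-|E(F)|}$ factor and the count of cores from \cref{lem:count-H} into the $n^{-\lambda\vc(F)/4}$ gain by taking $\eta$ small. The only differences are immaterial constants (you take $\lambda=3/4$ where the paper takes $4/5$).
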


\begin{proof}
  This is a direct consequence of the definition of a $D$-well-behaved
  graph and \cref{lem:compression}. Recall the map $\core$ as defined
  in \cref{lem:compression} and the families of graphs
  \begin{align}\label{eq:F-family}
    \calH(F, E^\star_F) = \set{H \mid E(H) = E(F) \cup E, \text{~for~}
    E \subseteq E^\star_F} \eqcomma
  \end{align}
  defined for core graphs $F \in \img(\core)$ such that
  $\vc(F) \le d$. Recall that each such core graph $F$ satisfies that
  $|V\bigl(E(F)\bigr)| \leq 3 \vc(F)$ and hence
  $|E(F)| \leq 3\bigl(\vc(F)\bigr)^2 \leq 3d \vc(F)$.

  By appealing to \cref{it:bounded-char-general} of a $D$-well-behaved
  graph, that is, \cref{it:bounded-char-general} of
  \cref{def:well-behaved}, with~$\lambda = 4/5$ we can conclude that
  for every $F \in \img(\core)$ it holds that
  \begin{align}\label{eq:bound-core}
    n^{-k}
    \Big|
    \sum_{t \in Q(1)}
    \sum_{H\in \calH(F,E^\star_F) }\chi_{H(t)}(G)
    \Big|
    \leq 6\,p^{-|E(F)|}n^{-\vc(F)/5}
    \leq n^{-\vc(F)/6}
    \eqcomma 
  \end{align}
  where we used the bound
  $p^{-|E(F)|} \leq p^{-3d\vc(F)} \leq n^{6 \eta \vc(F)}$, which holds
  since $p = n^{-2/D}$ and $d = \eta D \leq 2\eta \log n$, and,
  furthermore, relied on the assumption that $\eta$ is a small enough
  constant.

  Recall that the families
  $\set{\calH(F, E^\star_F) \mid F \in \img(\core)}$ as defined in
  \cref{eq:F-family} partition the set $\calH_d$ of graphs of vertex
  cover at most $d$. We may hence write
  \begin{align}
    \mu_d(1)
    &=
      n^{-k}
      \sum_{H \in \calH_d}
      \sum_{t \in Q(1)}
      \chi_{H(t)}(G)\\
    &= 1 + n^{-k}
      \sum_{\substack{H \in \calH_d\\ H \neq \emptyset}}
    \sum_{t\in Q(1)}
    \chi_{H(t)}(G)\\
    &\geq 1 - n^{-k}
      \sum_{i = 1}^d
      \sum_{\substack{F \in \img(\core)\\ \vc(F) = i}}
    \Big|
		\sum_{t \in Q(1)}
    \sum_{H\in \calH(F,E^\star_F) }
    \chi_{H(t)}(G)
    \Big| \eqperiod
  \end{align}
  Since each graph $F \in \img(\core)$ satisfies that
  $|V\bigl(E(F)\bigr)| \leq 3 \vc(F)$, by appealing to
  \cref{lem:count-H}, we obtain the bound
  $|\set{F \in \img(\core)\mid\vc(F) = i}| \leq 2^{3i(d + \log
    k)}$. Combining this bound along with the bound from
  \cref{eq:bound-core} we may conclude that
  \begin{align}
    \mu_d(1)
    &\geq 1 - n^{-k}
		\sum_{i = 1}^d
		\sum_{\substack{F \in \img(\core)\\ \vc(F) = i}}
    \Big|
		\sum_{t \in Q(1)}
    \sum_{H\in \calH(F,E^\star_F) }
    \chi_{H(t)}(G)
    \Big| \\
    &\geq 1 - \sum_{i=1}^d 2^{3i(d + \log k)} n^{-i/6}\\
    &\geq
      1- n^{-c}
    \eqcomma
  \end{align}
  for some small constant $c > 0$. The final inequality relies on the
  assumptions $d \le 2 \eta \log n$, that $\eta$ is a small enough
  constant and that $k \le n^{1/20}$. This concludes the proof.
\end{proof}

This completes the proof of \cref{thm:mu} modulo
\cref{lem:small-measure} and \cref{lem:nonneg}. In
\cref{sec:axioms-short} we prove \cref{lem:small-measure} and the
proof of \cref{lem:nonneg} is provided in \cref{sec:partition-rect}.

\subsection{Axioms Have Small Measure} 
\label{sec:axioms-short}

In this section we show that any subrectangle of an edge axiom has
small measure in absolute value. We rely on the following technical
lemma.

\begin{lemma}\label{lem:bound-any-rectangle}
  If $G$ is a $D$-well-behaved graph, then for any core graph $F$ and
  any rectangle~$Q$ it holds that
  \begin{align*}
    \Big|
    \sum_{t\in Q}
    \sum_{H\in\calH(F,E^\star_F)}
    \chi_{H(t)}(G)
    \Big|
    \leq
    6 \cdot
    2^{|A|}\cdot
    p^{-|E(F)|}\cdot
    n^{k - \lambda \vc(F)/4}
    \eqcomma
  \end{align*}
  where $A = V\bigl(E(F)\bigr)$ and $\lambda < 1 - \log(k)/\log(n)$.
\end{lemma}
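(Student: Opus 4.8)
The plan is to deduce the bound from \cref{it:bounded-char-general} of \cref{def:well-behaved} by an inclusion--exclusion over the blocks in which $Q$ fails to be large. Throughout, abbreviate $\Phi_F(t) = \sum_{H\in\calH(F,E^\star_F)}\chi_{H(t)}(G)$, so that the quantity to be bounded is $\bigl|\sum_{t\in Q}\Phi_F(t)\bigr|$; recall also that the relevant regime is $\vc(F)\le D/4$, which is the only case in which this lemma is invoked. The point is that $\Phi_F$ is a function of the tuple $t$ (and of $G$) alone and does not depend on the ambient rectangle, so we are free to replace the blocks of $Q$ one at a time.

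First I would isolate the problematic blocks. Let $A = V\bigl(E(F)\bigr)$ and let $A_0 = \set{i\in A \mid |Q_i|\le n/2}$ be the blocks of $A$ in which $Q$ is not large. For $i\in A_0$ write $V_i = Q_i\disjointunion \bar Q_i$ with $\bar Q_i = V_i\setminus Q_i$, and note $|\bar Q_i| = n - |Q_i| \ge n/2$. Splitting the sum over the $i$-th coordinate as $\sum_{v_i\in Q_i} = \sum_{v_i\in V_i} - \sum_{v_i\in\bar Q_i}$ simultaneously for all $i\in A_0$ gives
\begin{align*}
  \sum_{t\in Q}\Phi_F(t) \;=\; \sum_{\tau\subseteq A_0}(-1)^{|\tau|}\sum_{t\in Q^{\tau}}\Phi_F(t)\eqcomma
\end{align*}
where $Q^{\tau}$ is the rectangle with $Q^{\tau}_i = \bar Q_i$ for $i\in\tau$, $Q^{\tau}_i = V_i$ for $i\in A_0\setminus\tau$, and $Q^{\tau}_i = Q_i$ for $i\notin A_0$. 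Each such $Q^{\tau}$ is $\bigl(n/2, V(E(F))\bigr)$-\compatible{}: for $i\in A\setminus A_0$ we kept $Q_i$ with $|Q_i|>n/2$, while for $i\in A_0$ the $i$-th set is either $V_i$ (size $n$) or $\bar Q_i$ (size at least $n/2$); only the edge case $|Q_i|=n/2$ is slightly fiddly and is handled by first splitting off a single vertex of that block.

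It then remains to bound each summand. Since $\vc(F)\le D/4$ and $Q^{\tau}$ is $\bigl(n/2,V(E(F))\bigr)$-\compatible{}, \cref{it:bounded-char-general} of \cref{def:well-behaved}, applied with the given $\lambda < 1-\log(k)/\log(n)$, tells us that $G$ has $s$-bounded character sums over $Q^{\tau}$ (with block set $[k]$) for $F$ with $s = 6\,p^{-|E(F)|}n^{k-\lambda\vc(F)/4}$; unwinding the definition this is precisely $\bigl|\sum_{t\in Q^{\tau}}\Phi_F(t)\bigr|\le 6\,p^{-|E(F)|}n^{k-\lambda\vc(F)/4}$. Summing over the at most $2^{|A_0|}\le 2^{|A|}$ choices of $\tau$ and using the triangle inequality yields
\begin{align*}
  \Bigl|\sum_{t\in Q}\Phi_F(t)\Bigr| \;\le\; 2^{|A|}\cdot 6\,p^{-|E(F)|}\,n^{k-\lambda\vc(F)/4}\eqcomma
\end{align*}
which is exactly the asserted inequality. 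The main point to get right is not an obstacle so much as careful bookkeeping: all the analytic content sits inside \cref{it:bounded-char-general} (established separately in \cref{sec:random-proof}), and the only thing to verify here is that every rectangle produced by the inclusion--exclusion is genuinely $\bigl(n/2,V(E(F))\bigr)$-\compatible{}, which we arranged by splitting exactly the deficient blocks of $A$ and exploiting $|V_i\setminus Q_i|\ge n/2$.
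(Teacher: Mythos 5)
Your proof is correct and follows essentially the same route as the paper: an inclusion--exclusion over the blocks of $V\bigl(E(F)\bigr)$ where $Q$ is not large, replacing each such $Q_i$ by $V_i$ and $V_i\setminus Q_i$, applying \cref{it:bounded-char-general} of \cref{def:well-behaved} to each of the at most $2^{|A|}$ resulting large rectangles, and concluding by the triangle inequality. The only difference is cosmetic bookkeeping (the paper takes $T=\set{i : |Q_i|<n/2}$ so that the complements are strictly larger than $n/2$, whereas you take $|Q_i|\le n/2$ and patch the boundary case separately), so nothing further is needed.
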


\begin{proof}
  Let $F$ be a core graph, let $A=V\bigl(E(F)\bigr)$ and
  $s= 6 \cdot p^{-|E(F)|} \cdot n^{k - \lambda \vc(F)/4}$.  By
  \cref{it:bounded-char-general} of \cref{def:well-behaved} we have
  that if $Q$ is $(n/2,A)$-\compatible~, that is, if $Q$ satisfies
  $|Q_i| \geq n/2$ for all $i\in A$, then
  $\big|\sum_{t\in Q}\sum_{H\in\calH(F,E^\star_F)}\chi_{H(t)}(G)\big|
  \leq s$.

  Given any rectangle $Q$ (not necessarily $(n/2,A)$-\compatible), let
  $T\subseteq A$ be the set of blocks of $Q$ such that $|Q_i| <
  n/2$. By a simple inclusion-exclusion argument, we obtain that
  \begin{align}
    Q =
    \sum_{S\subseteq T} (-1)^{|S|}
    \Big(
    \bigtimes_{i \in S}
    (V_i\setminus Q_i)
    \Big)
    \times
    \Big(
    \bigtimes_{i \in T\setminus S} V_i
    \Big)
    \times
    \Big(
    \bigtimes_{i \in [k]\setminus T} Q_i
    \Big) \eqperiod
  \end{align}

  For $S\subseteq T$, denote by $Q^S$ the rectangle
  $\Big(\bigtimes_{i \in S} (V_i\setminus Q_i)\Big) \times
  \Big(\bigtimes_{i \in T\setminus S} V_i\Big) \times
  \Big(\bigtimes_{i \in [k]\setminus T} Q_i\Big) $. Note that $Q^S$ is
  $(n/2,A)$-\compatible~and therefore by
  \cref{it:bounded-char-general} of \cref{def:well-behaved}, $G$ has
  $s$-bounded character sums over $Q^S$ for $F$.
  This implies that
  \begin{align}
    \Big|
    \sum_{t\in Q}
    \sum_{H\in\calH(F,E^\star_F)}\chi_{H(t)}(G)
    \Big| 
    \leq
    \sum_{S\subseteq T}
    \Big|
    \sum_{t\in Q^S}
    \sum_{H\in\calH(F,E^\star_F)}
    \chi_{H(t)}(G)\Big|
    \leq
    2^{|A|} \cdot s \eqcomma
  \end{align}
  as claimed.
\end{proof}

We are now ready to prove \cref{lem:small-measure} restated for
convenience.

\axiomsmall*

\begin{proof}
  Fix an edge $\set{u, v} \notin E(G)$, let $i,j \in [k]$ such that
  $u \in V_i$ and $v \in V_j$, consider the edge axiom
  $p_{uv} = x_{u}x_{v}$ and let $Q \subseteq Q({p_{uv}})$ be an
  arbitrary subrectangle of this edge axiom. Recall that $\calH_d$
  denotes the set of graphs $H$ satisfying $\vc(H) \leq d$, and as
  explained in \cref{sec:axioms-sketch}, recall that every tuple
  $t \in Q$ contains the vertices $u$ and $v$. Thus for
  $e = \set{i,j}$ we may cancel a character $\chi_H$ satisfying
  $e \not\in H$ with the character $\chi_{H \cup \set{e}}$ to obtain
  that
  \begin{align}
    \mu_d(\rect)
    &= n^{-k}
      \sum_{t \in \rect}
      \sum_{H \in \calH_d}\chi_{H(t)}(G)\\
    &= n^{-k}
      \sum_{t \in \rect}
      \sum_{
      H \in \calH_d(e)
      }
      \chi_{H(t)}(G) \eqcomma
  \end{align}
  where $\calH_d(e)$, as defined in \cref{def:boundary}, denotes the
  set of graphs in the $(d,e)$-boundary, that is, all graphs $H$ such
  that $\vc(H) = d$ and if we add the edge $e$ to $H$, then the size
  of the minimum vertex cover increases. Let the map $\core$ be as
  guaranteed by \cref{lem:compression}. Recall that according to
  \cref{prop:core-boundary} the graph $\core(H)$ is in the
  $(d,e)$-boundary $\calH_d(e)$ if and only if $H$ is. Hence the sets
  \begin{align}
    \Set{
    \core^{-1}(F) = 
    \calH(F, E^\star_F)
    \mid
    F \in \calH_d(e) \text{~and~} F \in \img(\core) 
    }
  \end{align}
  partition the $(d,e)$-boundary and we may thus write
  \begin{align}
    |\mu_d(\rect)|
    &=
      \Big|
      n^{-k} 
      \sum_{t \in \rect}
      \sum_{H\in \calH_d(e)} \chi_{H(t)}(G)
      \Big|\\
    &=
      \Big|
      n^{-k} 
      \sum_{\substack{F \in \calH_d(e)\\
      F \in \img(\core)}}
      \sum_{t \in \rect}
      \sum_{H\in\calH(F,E^\star_F)}\chi_{H(t)}(G)
      \Big|\\
    &\leq
      n^{-k} 
      \sum_{\substack{F \in \calH_d(e)\\ F \in \img(\core)}}
      \Big|
      \sum_{t \in \rect}
      \sum_{H\in\calH(F,E^\star_F)}\chi_{H(t)}(G)
      \Big|\eqperiod
  \end{align}

  By \cref{lem:bound-any-rectangle} each inner part can be bounded by
  $6 \cdot 2^{|V(E(F))|} \cdot p^{-|E(F)|} \cdot n^{k - \lambda
    \vc(F)/4}$. Note that $ \vc(F) = d $ and, according to
  \cref{lem:compression}, it holds that $|V\bigl(E(F)\bigr)| \leq
  3d$. Hence since $d \le 2\eta \log n$ it holds that
  $p^{-|E(F)|} \leq p^{-3d^2} = n^{6\eta d}$ and we may thus conclude
  that
  \begin{align}
    |\mu_d(\rect)|
    &\le
      \sum_{\substack{F \in \calH_d(e)\\ F \in \img(\core)}}
      6 \cdot 2^{3d} \cdot n^{-d(\lambda/4 - 6\eta)}\\
    &\le
      2^{3d(d + \log k)} \cdot
      6 \cdot
      2^{3d} \cdot
      n^{-d(\lambda/4 - 6\eta)}\\
    &\le
      6 \cdot
      \big(n^{\lambda /4 - 12 \eta}/(2k)^{3}\big)^{-d}
      \eqcomma
  \end{align}
  where we used \cref{lem:count-H} to bound the number of core graphs
  and relied, again, on the assumption $d \le 2 \eta \log n$. This
  concludes the proof of \cref{lem:small-measure}.
\end{proof}

\subsection{All Rectangles Are Approximately Non-Negative}
\label{sec:partition-rect}

In this section we prove that the measure is essentially non-negative
on all monomials modulo a concentration inequality whose proof we
postpone to \cref{sec:good-rect}. For convenience we restate the
precise claim.

\nonneg*

Recall from the proof sketch given in \cref{sec:nonneg-sketch} that we
intend to decompose any given rectangle into a small family $\calQ$ of
rectangles such that each rectangle $Q \in \calQ$ either
\begin{enumerate}
\item contains few tuples,\label{it:small-rect}
\item is a subrectangle of an edge axioms, or\label{it:sub-axiom}
\item is a so-called \emph{good rectangle}. \label{it:good-rect}
\end{enumerate}
Since rectangles as described in \cref{it:small-rect,it:sub-axiom}
have negligible measure (see \cref{lem:rect-small,lem:small-measure})
essentially all the measure is concentrated on these good
rectangles. Hence if we can show that the measure concentrates around
a strictly positive value on such good rectangles, then the statement
follows. Let us introduce these \emph{good rectangles}.

Before defining these rectangles formally, let us give an informal
description. A good rectangle $Q$ consists of two parts. The first
part is very small: on a few blocks the rectangle $Q$ only consists of
single vertices. Each vertex in this small part is adjacent to
\emph{all} other vertices in $Q$. Equivalently, on this small part we
have a clique and the remaining vertices in $Q$ are in the common
neighborhood of this clique.

On the other blocks, where $Q$ does not consist of a single vertex, we
require that these blocks are large, of size at least $s =
\poly(n)$. In addition we also require that all common neighborhoods
are bounded on this large part. An illustration of a good rectangle
can be found in \cref{fig:good-rect}. The formal definition follows.

\begin{definition}[Good rectangle]
  \label{def:good} Let $G$ be a $k$-partite graph and let
  $s,\beta,p,d \in \R^+$ and $R \subseteq [k]$. A rectangle
  $Q = \bigtimes_{i \in [k]} Q_i$ is \emph{$(s, \beta, p, d, R)$-good
    for $G$} if it satisfies the following properties.
  \begin{enumerate}
  \item If $i \in R$, then $Q_i = \set{v_i}$; otherwise
    $ |Q_i| \geq s$.
  \item For all $i \in R$ it holds that
    $N(v_i) \supseteq \bigcup_{j \neq i} Q_j$.
  \item \label[property]{it:good-bounded} For all
    $S \subseteq [k] \setminus R$ of size at most $d$ and for all
    $i \not\in R \cup S$, $G$ has $(\beta,p)$-bounded common
    neighborhoods from $Q_S$ to $Q_i$.
    \label{item:good-neigh}
  \end{enumerate}
\end{definition}

\begin{figure}
  \centering
  \includegraphics{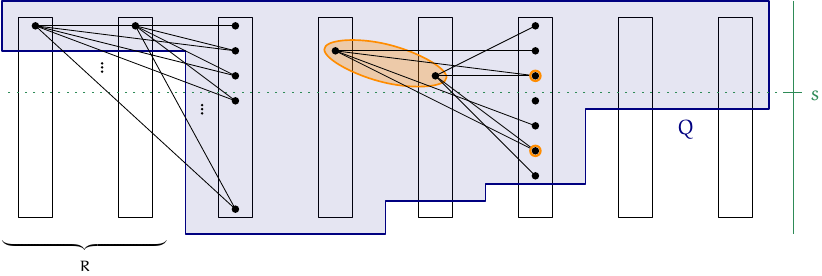}
  \caption{The rectangle $Q$ is a good rectangles as the vertices in
    $R$ have all vertices as neighbors, the blocks outside $R$ are
    large and small tuples on these blocks have common neighborhoods
    of expected size}
  \label{fig:good-rect}
\end{figure}

On good rectangles the measure is tightly concentrated around the
expected value. In \cref{sec:good-rect} we prove the following
concentration bound.

\begin{restatable}{restatablelemma}{goodRectangle}
  \label{lem:rect-good}
  For constants $\eps > 0$ and $\eta < 1/50$, for $n, k, d \in \N$ and
  $p = n^{-2/D}\le 1/2$ satisfying $d \le \eta D$ and $D \le k \le n$
  the following holds. If $s  \ge k^{13}n^{48\eta + \eps}\log n$
  and $G$ is a $D$-well-behaved graph, then any
  $(s, 1/k, p, d, R)$-good rectangle $Q$ for $G$ with $|R| = \ell < d$
  satisfies
  \begin{align*}
    \mu_d(Q) =
    p^{-\ell(k - (\ell + 1)/2)}\,
    |Q|\,
    n^{-k}
    \big(1 \pm O(n^{-\eps/8})\big) \eqperiod
  \end{align*}
\end{restatable}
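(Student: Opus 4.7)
The plan is to isolate the main term of $\mu_d(Q)$ coming from patterns whose edges lie entirely in the ``$R$-star'' $H^{\star} = \binom{R}{2} \cup \bigl(R \times ([k]\setminus R)\bigr)$, and then bound the remaining contribution via the tight character-sum estimate of \cref{def:well-behaved}.

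Because each singleton $v_i$ with $i \in R$ is adjacent in $G$ to every other vertex appearing in $Q$, we have $\chi_e(G) = (1-p)/p$ for every $e \in H^{\star}(t)$ and every $t \in Q$. Any $H$ with $\vc(H) \leq d$ decomposes uniquely as $H = H' \sqcup H_{R^c}$ with $H' \subseteq H^{\star}$ and $H_{R^c} \subseteq \binom{[k]\setminus R}{2}$, and for $H_{R^c} = \emptyset$ the constraint $\vc(H) \le d$ is automatic since $\vc(H') \leq \ell \leq d$. Hence
\begin{align*}
\sum_{H' \subseteq H^{\star}} \chi_{H'(t)}(G) = \prod_{e \in H^{\star}(t)}\bigl(1 + (1-p)/p\bigr) = p^{-|H^{\star}|}
\end{align*}
for every $t \in Q$, and multiplying by $n^{-k}|Q|$ yields exactly the claimed leading term since $|H^{\star}| = \binom{\ell}{2} + \ell(k-\ell) = \ell(k-(\ell+1)/2)$.

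For the error contribution from $H_{R^c} \neq \emptyset$, I bound the restricted inner sum $\sum_{H'} ((1-p)/p)^{|H'|}$ by the unrestricted $p^{-|H^{\star}|}$, so it suffices to prove
\begin{align*}
\sum_{\substack{H_{R^c} \neq \emptyset\\ \vc(H_{R^c}) \leq d}} \Bigl| \sum_{t' \in Q_{[k]\setminus R}} \chi_{H_{R^c}(t')}(G) \Bigr| \leq |Q_{[k]\setminus R}| \cdot O(n^{-\eps/8}).
\end{align*}
I would group the $H_{R^c}$'s by their core $F$ via the map of \cref{lem:compression}; since $R$ is isolated in every $H_{R^c}$, each such core satisfies $V(E(F)) \subseteq [k]\setminus R$. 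For each core I would appeal to \cref{it:bounded-char-tight} of \cref{def:well-behaved} with $B = [k] \setminus R$, the required $(3/k, p)$-bounded common-neighborhood hypothesis being supplied by \cref{it:good-bounded} of the good-rectangle definition.

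The main technical obstacle is that \cref{it:bounded-char-tight} requires the rectangle to be simultaneously $(4\Lambda)$-small and $(\Lambda, B)$-\compatible, whereas $Q_{[k] \setminus R}$ only supplies $|Q_i| \geq s$ on $B$. I would resolve this by a dyadic partition: set $\Lambda = s$ (which satisfies $\Lambda \geq 20 k \log n$ by hypothesis) and partition each $Q_i$, $i \in B$, arbitrarily into sub-blocks of size in $(s, 2s]$; the induced product partition decomposes $Q_{[k]\setminus R}$ into sub-rectangles that are simultaneously $(4\Lambda)$-small and $(\Lambda, B)$-\compatible. Since each sub-rectangle bound is proportional to its size, summing telescopes to a per-core bound of $O\bigl(p^{-|E(F)|}(s/(10 k \log n))^{-\vc(F)/4} |Q_{[k]\setminus R}|\bigr)$. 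Finally, summing over cores of vertex cover $i = 1, \ldots, d$ via the estimate $2^{3i(i + \log k)}$ from \cref{lem:count-H} and $|E(F)| = O(i^2)$, the polynomial factors $n^{O(\eta i)}$ arising from the core count and $p^{-|E(F)|}$ are compensated by the $n^{-12 \eta i}$ factor extracted from $s \geq k^{13} n^{48\eta + \eps} \log n$, leaving a per-$i$ bound of $n^{-\eps i/4}$. The resulting geometric series yields $O(n^{-\eps/4})$, which is stronger than the required $O(n^{-\eps/8})$, completing the proof.
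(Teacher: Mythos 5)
Your computation of the main term is correct and agrees with the paper's (your $|H^\star| = \ell(k-(\ell+1)/2)$ is exactly $|S_{[\ell]}|$ in the paper's notation), but the reduction of the error term has a genuine gap. When you split $H = H' \sqcup H_{R^c}$, the constraint $\vc(H'\cup H_{R^c})\le d$ couples the two pieces: the inner sum over $H'$ equals the full product $p^{-|H^\star|}$ only when $\vc(H_{R^c})\le d-\ell$, while for $d-\ell<\vc(H_{R^c})\le d$ it is a smaller quantity $c(H_{R^c})$ that depends on exactly which partial stars keep the vertex cover at most $d$ --- i.e.\ on the detailed structure of $H_{R^c}$, not merely on its core. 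Consequently you cannot pull out a uniform factor $p^{-|H^\star|}$ and reduce to $\sum_F\bigl|\sum_{H_{R^c}\in\core^{-1}(F)}\sum_{t'}\chi_{H_{R^c}(t')}(G)\bigr|$: the varying weights $c(H_{R^c})$ destroy the cancellation structure inside each core family, and taking absolute values per individual $H_{R^c}$ instead is hopeless since there are roughly $2^{dk}$ such graphs. These coupled terms are precisely the ``boundary sums'' in the paper's proof, which isolates them via \cref{clm:split-sum} (peeling off the star edges one at a time and recording when a graph lands in a boundary $\calH_d(\set{i,j})$) and then bounds them in \cref{lem:boundary} using the delicate core count of \cref{claim:bound-F-boundary-sum}, which trades the deficit of the prefactor $p^{-|S_{[i-1]}\cup S_i^{\le j-1}|}$ relative to $p^{-|S_{[\ell]}|}$ against the number of admissible cores. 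Your proposal contains no substitute for this step, and it is the heart of the lemma.

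A second, more local problem: you cannot partition each $Q_i$ \emph{arbitrarily} into sub-blocks of size in $(s,2s]$ and then invoke \cref{it:bounded-char-tight} of \cref{def:well-behaved}, because that property requires the sub-rectangle to have $(3/k,p)$-bounded common neighborhoods from $Q_A$ into each sub-block, and an arbitrary subset of $Q_i$ of size about $s$ can lie entirely inside or entirely outside the common neighborhood of a given small tuple. The paper needs a probabilistic splitting argument (\cref{lem:split-set} and \cref{cor:cut-up-C}) to obtain a partition that approximately preserves all relevant common neighborhoods. This part of your argument is fixable along those lines, but the boundary-sum issue above is not.
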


In the remainder of this section we prove \cref{lem:nonneg}, assuming
\cref{lem:rect-good}.
As outlined in \cref{sec:nonneg-sketch}, we intend to decompose any
rectangle $Q$ into a small family $\calQ$ of rectangles such that each
rectangle in $\calQ$ either contains few tuples, is a subrectangle of
an edge axiom or is a good rectangle. The following lemma summarizes
our claim.

\begin{lemma}
  \label{lem:decompose}
  Let $G$ be a $D$-well-behaved graph,
  let
  $p = n^{-2/D}$, $d\leq D/4$ and
  $s \ge Ck^4 d \ln n / p^{2d}$
  for some large enough constant $C$.
  Then any rectangle $Q_0$ can be partitioned into a set of rectangles
  $\calQ$ of size $|\calQ| \le 2kn(2s)^{d}$
  such that each $Q \in \calQ$ satisfies that either
  \begin{enumerate}
  \item $Q$ is small:
    $|Q| \le O\big((n\cdot p^d)^{k - d}\big)$,
  \item $Q$ is a subrectangle of an edge axiom, or
  \item $Q$ is $(s, 1/k, p, d, R)$-good for $G$, where
    $R \subseteq [k]$ satisfies $|R| < d$.
  \end{enumerate}
\end{lemma}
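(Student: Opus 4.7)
The plan is to build a recursive partitioning procedure $\textsc{Partition}(Q,R)$ that maintains the invariant that for every $i\in R$ the block $Q_i=\{v_i\}$ is a singleton, the vertices $\{v_i\}_{i\in R}$ are pairwise adjacent in $G$, and each $v_i$ is adjacent to every vertex of every $Q_j$ with $j\notin R$. The procedure branches into three cases. First, if $|R|=d$ we output $Q$ as a small rectangle; the required bound $|Q|\le O\bigl((np^d)^{k-d}\bigr)$ follows because the invariant forces $Q_j\subseteq N^{\cap}(\{v_i\}_{i\in R},V_j)$ for each $j\notin R$, and by \cref{it:expected-neigh} of $D$-well-behaved graphs each such common neighborhood has size at most $(1+1/k)p^dn$. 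Second, if some $j^*\notin R$ has $|Q_{j^*}|<2s$, we enumerate $v\in Q_{j^*}$ (at most $2s$ branches) and dispatch each to an add-and-clean step that sets $Q_{j^*}=\{v\}$, restores the invariant, and recurses. Third, if every $j\notin R$ has $|Q_j|\ge 2s$, we enlarge the $R$-singletons to $\tilde Q_i=V_i$ in order to apply \cref{it:bounded-error} to $\tilde Q$, obtaining an error set $W$ with $|W|\le s$, and output the good rectangle $Q^{(0)}$ defined by $Q^{(0)}_j=Q_j\setminus W$ for $j\notin R$; we then decompose $Q\setminus Q^{(0)}$ using the standard ordered product-difference decomposition into at most $|W|\le s$ rectangles, each fixing one non-$R$ block to a vertex of $W$, and feed each to the add-and-clean step.

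The add-and-clean subprocedure, called when a new block $j$ has been restricted to $\{v\}$, first checks whether $v$ is adjacent to each $v_{i'}\in R$: if not, the whole current rectangle is already a subrectangle of the axiom $x_{v}x_{v_{i'}}$ and is output as such. Otherwise it iterates over all remaining blocks $j''\notin R\cup\{j\}$ and every $u\in Q_{j''}$ that is not adjacent to $v$, outputting $Q|_{Q_{j''}=\{u\}}$ as a subrectangle of the axiom $x_{v}x_{u}$ and deleting $u$ from $Q_{j''}$, before recursing with $R\cup\{j\}$. This guarantees the invariant, and in particular each $Q^{(0)}$ produced in the third case satisfies \cref{def:good}: the $R$-singletons have the required adjacencies by the invariant, every non-$R$ block has size at least $2s-s=s$, and the bounded common neighborhoods come directly from the error-set output.

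For the size bound, the depth of recursion is at most $d$ because $|R|$ strictly grows with each recursive call, and the branching factor at any $\textsc{Partition}$ node is at most $2s$ (either the chosen small block has $<2s$ vertices, or $\sum_{j'\notin R}|W\cap Q_{j'}|\le |W|\le s$). Hence the number of $\textsc{Partition}$ nodes is at most $\sum_{\ell=0}^{d}(2s)^\ell\le 2(2s)^d$, and the number of add-and-clean invocations is bounded similarly. Each such invocation contributes at most $(k-1)n+1\le kn$ axiom subrectangles from the cleaning phase, plus at most one good or one small rectangle is produced per $\textsc{Partition}$ call. Summing yields $|\calQ|\le 2kn(2s)^d$ after absorbing constants.

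The main obstacle is applying \cref{it:bounded-error} when $R\neq\emptyset$: the property requires each block of the input rectangle to have size either $\ge 2s$ or $0$, but $R$-singletons violate this. This is resolved by applying the property to the enlarged rectangle $\tilde Q$ (with $\tilde Q_i=V_i$ on $R$-blocks) and observing that \cref{it:good-bounded} of \cref{def:good} only concerns common neighborhoods among subsets of $[k]\setminus R$, so the $W$ produced from $\tilde Q$ is a valid error set for the original $Q$ as well. The hypothesis $s\ge Ck^4 d\ln n/p^{2d}$ guarantees that every invocation of \cref{it:bounded-error} (parametrized by $\ell\le d$) is legitimate, and a minor secondary point is to verify that the ordered product-difference decomposition in the third case is genuinely a partition of $Q\setminus Q^{(0)}$ into rectangles.
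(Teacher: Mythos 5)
Your proposal is correct and follows essentially the same strategy as the paper's proof: the same three-way case analysis (clean out non-neighbors of singletons into axiom subrectangles, split blocks of size at most $2s$, and peel off a bounded error set to leave a good rectangle), the same recursion depth $d$ bounded via \cref{it:expected-neigh}, and the same $2kn(2s)^d$ accounting. The only cosmetic difference is that you apply \cref{it:bounded-error} to an enlarged rectangle with $\tilde Q_i = V_i$ on the singleton blocks, whereas the paper simply applies it to the projection $Q_A$ onto the large blocks; both resolve the same formal mismatch with \cref{def:bounded}.
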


Before proving \cref{lem:decompose}, let us first show how
\cref{lem:nonneg} follows. The idea of the proof is to apply
\cref{lem:decompose} to a given rectangle $Q_0$ to obtain a collection
$\calQ$ of rectangles. It holds that
$\mu_d(Q_0) = \sum_{Q \in \calQ} \mu_d(Q)$. By \cref{lem:rect-small}
there is a $\delta > 0$ such that all small rectangles $Q \in \calQ$
satisfy $|\mu_d(Q)| \le n^{-\delta D}$ and similarly by
\cref{lem:small-measure} the same holds for $Q \in \calQ$ that are a
subrectangle of an edge axiom. Further, by our choice of parameters,
the size of $\calQ$ is small---we may think of it as $n^{\delta
  D/2}$. We can thus lower bound
\begin{align}
  \mu_d(Q_0) = \sum_{Q \in \calQ} \mu_d(Q) \ge
  -
  n^{-\delta D/2}
  +
  \sum_{
  \substack{
  Q \in \calQ\\
  Q \text{~is good}
  }
  }
  \mu_d(Q)  \eqperiod
\end{align}
\Cref{lem:rect-good} states that the remaining good rectangles in the
above sum have strictly positive value. Thus
$\mu_d(Q_0) \ge -n^{-\delta D/2}$ as claimed. In what follows we
verify that this indeed holds for our choice of parameters.

\begin{proof}[Proof of \cref{lem:nonneg}]
  Let $Q_0$ be any rectangle. Our goal is to show that $\mu_d(Q_0) \ge - n^{-cD}$,
  for a sufficiently small constant $c$.
  Let $D \le k \le n^{1/66}$ be as in the statement of the lemma
  and choose
  $\lambda = 1 - \eps - \log(k)/\log(n)$ for sufficiently small
  constants $\eps > 0$ and $\eta > 0$ such that for
  $s = k^{13} n^{48\eta + \eps} \log n$ it holds that $s \leq n^{\lambda /4 - 12 \eta -
    \eps}/k^3$.  Let $d = \eta D \le 2\eta \log n$ and $p = n^{-2/D}$.
  Note that for our choice of parameters it holds that
  $ s = \omega(k^4 n^{4\eta} \log^2 n)$, hence
  $s = \omega(k^4 d \ln n / p^{2d})$, and we may thus apply
  \cref{lem:decompose} with $d = \eta D$ to the rectangle $Q_0$ to
  obtain a family $\calQ$ of size at most $|\calQ| \le 2kn (2s)^d$.

  By \cref{lem:small-measure}, any subrectangle of an axiom has
  measure bounded by
  $O\big(\big(n^{\lambda /4 - 12 \eta}/(2k)^3 \big)^{-d}\big)$.
  Moreover, according to \cref{lem:rect-small} each small rectangle
  $Q \in \calQ$ has measure of magnitude at most
  \begin{align}
    |\mu_d(Q)|
    &\le
      O\big(|Q|n^{-k}k^dp^{-dk}\big)
      =
      O(n^{-d/2})
      \eqcomma
  \end{align}
  which is even smaller than the bound on axioms.  Since
  $s \leq n^{\lambda /4 - 12 \eta -\eps}/k^3 $, we conclude that the
  measure of all small rectangles and all subrectangles of axioms in
  $\calQ$ add up, in magnitude, to at most
  $|\calQ| \cdot O\big(\big(n^{\lambda /4 - 12 \eta}/(2k)^3
  \big)^{-d}\big) \leq n^{-cD}$, for a small enough constant $c$.

  Hence the measure of $Q_0$ is mostly on the good rectangles of
  $\calQ$ and on these rectangles we know that it is closely
  concentrated around a strictly positive value. Indeed, we can apply
  \cref{lem:rect-good} to any rectangle $Q$ which is
  $(s,1/k,p,d,R)$-good for $G$ to conclude that
  \begin{displaymath}
    \mu_d(Q) =
    p^{-\ell(k - (\ell + 1)/2)} \,
    |Q| \,
    n^{-k}
    \big(
    1\pm
    O(n^{-\eps/8})
    \big) > 0 \eqperiod
  \end{displaymath}
  This concludes the proof of \cref{lem:nonneg} modulo
  \cref{lem:rect-good} and \cref{lem:decompose}.
\end{proof}

Let us proceed to prove \cref{lem:decompose}.

\begin{proof}[Proof of \cref{lem:decompose}]
  Let us describe a recursive decomposition procedure that can be
  applied to any rectangle $Q = \bigtimes_{i \in [k]} Q_i$.
  
  If either $Q$ is small, a subrectangle of an axiom or
  $(s, 1/k, p, d, R)$-good for some $R \subseteq [k]$, then return
  $Q$. Otherwise decompose in the following recursive fashion.
  \begin{enumerate}
  \item \label[case]{case:non-neigh} If there is a singleton
    $Q_i = \set{v_i}$ such that
    $N(v_i) \not\supseteq \bigcup_{j \neq i} Q_j$, then we decompose
    $Q$ into $|Q \setminus N(v_i)| + 1$ many rectangles as
    follows. Denote by $u_1, u_2, \ldots, u_m$ the vertices in $Q$
    that are not a neighbor of $v_i$ and assume that they are in
    blocks $j_1, j_2, \ldots, j_m$. For $\nu = 1, \ldots, m$ we remove
    all tuples that contain the vertex $u_\nu$: let $R^0 = Q$ so we
    can write
    \begin{align}
      Q^\nu
      &=
        \set{u_\nu} \times
        \bigtimes_{j \neq j_\nu} R^{\nu-1}_j \text{\qquad and}
      &R^{\nu}
        =
        \big(R^{\nu-1}_{j_\nu} \setminus u_\nu\big)
        \times
        \bigtimes_{j \neq j_\nu} R^{\nu-1}_j \eqperiod
    \end{align}
    Note that the rectangles $Q^1, \ldots, Q^m, R^m$ partition
    $Q$. Add the $Q^\nu$ to the partition as these are subrectangles
    of edge axioms and recursively decompose $R^m$.
    
  \item \label[case]{case:small} If there is a block $i \in [k]$ of
    size $1 < |Q_i| \le 2s$, then split $Q$ into the $|Q_i|$
    rectangles
    \begin{align}
      \Set{\set{v_i}\times\bigtimes_{j \neq i} Q_j : v_i \in Q_i}
    \end{align}
    and recursively decompose each of these rectangles.

  \item \label[case]{case:error} Let $A$ be the set of blocks of size
    greater than $2s$. Because $G$ is $D$-well-behaved, by
    \cref{it:bounded-error} of \cref{def:well-behaved}, it holds that
    $G$ has $(2s, s, 1/k, p, d)$-bounded error sets. In particular
    $Q_A$ has an error set $U = \set{u_1, \ldots, u_m}$ of size at
    most $s$. Decompose $Q$ into $Q^1, \ldots, Q^m$ and $R^m$ as in
    \cref{case:non-neigh}. By definition the rectangle $R^m$ is
    $(s, 1/k, p, d, [k] \setminus A)$-good and we may thus add it to
    the partition. Recursively decompose the rectangles
    $Q^1, \ldots, Q^m$.
  \end{enumerate}
  
  This completes the description of the decomposition procedure.
  We need to argue that the decomposition $\calQ$ created by above
  procedure is not too large, that is, of size
  $|\calQ| \le 2kn \cdot (2s)^{d}$. Let us start with a few
  observations.

  Because $G$ is $D$-well-behaved it holds that $G$ has
  $(1/k, p, D/4)$-bounded common neighborhoods in every block (see
  \cref{it:expected-neigh} of \cref{def:well-behaved}). Let $Q$ be a
  rectangle with $d$ blocks of size $1$ and with the remaining
  vertices contained in the common neighborhood of these
  singletons. All such rectangles $Q$ are small. Thus the
  decomposition procedure does not need to decompose such rectangles
  $Q$ any further.
  
  Whenever we decompose a rectangle in \cref{case:small,case:error}
  all rectangles that we need to recursively decompose have one more
  singleton. Because we can stop decomposing after identifying $d$
  singletons and in \cref{case:small,case:error} we create at most
  $2s$ many rectangles that require further decomposition we end up
  with at most $2(2s)^d$ many rectangles. We ignored the rectangles
  from \cref{case:non-neigh} so far. But each rectangle that requires
  further decomposition from \cref{case:small,case:error} results in
  at most another $kn$ many rectangles from
  \cref{case:non-neigh}. Thus the size of the family of rectangles is
  bounded by $2kn \cdot (2s)^d$.
\end{proof}

\section{The Measure Is Concentrated on Good Rectangles}
\label{sec:good-rect}

This section is devoted to proving \cref{lem:rect-good} that states
that the measure on good rectangles is very well concentrated. We
restate it here for convenience.

\goodRectangle*

Let us introduce some notation and state \cref{lem:rect-good} once
more in a more convenient form for what follows.
Let $S_i^{\le j}$ be the star with center $i$ and leaves
$[j] \setminus i$, that is, $S_i^{\leq j}$ consists of the vertices
$[k]$ and edges $\Set{\set{i,j'} : j'\in [j], j'\neq i}$. For
simplicity we denote by $S_i$ the star $S_i^{\leq k}$ and for
$I\subseteq [k]$ let $S_I = \cup_{i\in I} S_i$.

\begin{lemma}
  \label{lem:rect-good-full}
  For constants $\eps > 0$ and $\eta < 1/50$, for $n, k, d \in \N$ and
  $p = n^{-2/D} \leq 1/2$ satisfying $d \le \eta D$, $D \le k \le n$
  the following holds.  If $s \ge k^{13}n^{48\eta+\eps} \log n$
  and $G$ is a $D$-well-behaved graph, then any
  $(s, 1/k, p, d, R)$-good rectangle $Q$ for $G$ with $|R| < d$
  satisfies
  \begin{align*}
    \Big|
    p^{-|S_R|} \, |Q| - 
    \sum_{t \in Q} 
    \sum_{H \in \calH_d}
    \chi_{H(t)}(G)
    \Big|
    \le
    O\bigl(
    p^{-|S_R|}\,
    |Q|\,
    n^{-\eps/8}
    \bigr)
    \eqperiod
  \end{align*}
\end{lemma}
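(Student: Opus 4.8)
The plan is to split the sum $\sum_{t\in Q}\sum_{H\in\calH_d}\chi_{H(t)}(G)$ according to the structure of $H$ relative to the ``singleton part'' $R$ of the good rectangle $Q$. Write $Q=\bigtimes_{i\in R}\{v_i\}\times\bigtimes_{i\notin R}Q_i$ and let $L=[k]\setminus R$. For a pattern graph $H$ with $\vc(H)\le d$, group its edges into three kinds: edges inside $R$, edges between $R$ and $L$, and edges inside $L$. First I would observe that for $t\in Q$ the vertices $v_i$, $i\in R$, are fixed, so the characters $\chi_{H(t)}(G)$ corresponding to $R$-internal edges and to $R$--$L$ edges are, for fixed choices of the $L$-endpoints, deterministic (they only depend on whether the corresponding edges of $G$ are present). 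Since $Q$ is good, $v_i$ is adjacent to every vertex of $Q_j$ for $j\ne i$; hence for any edge $e$ incident to some $v_i$ and any $t\in Q$ we have $\chi_e(G)=\tfrac{1-p}{p}$. Summing a character over an edge being always present contributes the multiplicative factor $\tfrac{1-p}{p}$ rather than a cancellation. The key identity to extract is that $\sum_{H'\subseteq S_R}\chi_{H'(t)}(G)=\bigl(1+\tfrac{1-p}{p}\bigr)^{|S_R|}=p^{-|S_R|}$ whenever $t\in Q$ — this is exactly the ``clique $\times$ common-neighborhood'' phenomenon, and it is why the main term is $p^{-|S_R|}|Q|$.

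The next step is to isolate the ``star part''. I would reparametrize: every $H$ with $\vc(H)\le d$ whose edges are all incident to $R$ (i.e.\ $H\subseteq S_R$) contributes, after summing $\chi_{H(t)}$ over $t\in Q$, exactly $p^{-|S_R'|}|Q|$ summed over subgraphs $S_R'\subseteq S_R$ with $\vc\le d$. Since $|R|<d$, \emph{all} subgraphs of $S_R$ have vertex cover at most $|R|<d$, so this sum telescopes to $p^{-|S_R|}|Q|$ — the claimed main term. The remaining contribution comes from $H\in\calH_d$ that have at least one edge \emph{not} incident to $R$, i.e.\ an edge inside $L$, or — more precisely — an edge in $L$ or with both endpoints requiring a cover vertex outside $R$. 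For these I would factor $\chi_{H(t)}(G)=\chi_{H_R(t)}(G)\cdot\chi_{H_L(t)}(G)$ where $H_R$ collects the edges incident to $R$ and $H_L$ is the (nonempty) rest, sum first over the $R$-incident part (again each such edge is present, giving a bounded factor $p^{-|E(H_R)|}\le p^{-|S_R|}$) and then bound $\bigl|\sum_{t\in Q}\chi_{H_L(t)}(G)\bigr|$, summed over the relevant $H_L$, using the bounded-character-sum hypotheses. Here the map $\core$ from \cref{lem:compression} is applied to the $L$-restricted graphs: the $H_L$'s with a fixed core $F$ on the blocks $L$ form a family $\calH(F,E^\star_F[L])$, and \cref{it:bounded-char-tight} of \cref{def:well-behaved} (with $B=L$, $\Lambda$ taken to be essentially $s$, using that $Q$ is $(s)$-large on $L$ and the good-rectangle bounded-common-neighborhood property supplies the hypothesis of that item) gives $\bigl|\sum_{t\in Q_L}\sum_{H\in\calH(F,E^\star_F[L])}\chi_{H[L](t)}(G)\bigr|=O\bigl(p^{-|E(F[L])|}(\Lambda/10k\log n)^{-\vc(F[L])/4}|Q_L|\bigr)$. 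Then I would sum over cores $F$ using \cref{lem:count-H} to count at most $2^{3i(d+\log k)}$ cores with $\vc(F[L])=i\ge1$, and over $i$; the geometric series in $(\Lambda/10k\log n)^{-i/4}\cdot 2^{3i(d+\log k)}\cdot p^{-O(d\,i)}$ converges and yields a bound of the form $O(p^{-|S_R|}|Q|\,n^{-\eps/8})$ once $s\ge k^{13}n^{48\eta+\eps}\log n$ and $\eta<1/50$, $d\le\eta D$, $p=n^{-2/D}$ are plugged in — exactly as in the analogous computation in the proof of \cref{lem:small-measure}.

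The main obstacle, I expect, is the bookkeeping in the intermediate case: pattern graphs $H$ whose cover ``straddles'' $R$ and $L$, i.e.\ where some cover vertices must lie in $L$ but $H$ also has $R$-incident edges, so that $\vc(H)=d$ constrains the $L$-part to have vertex cover $d-|R'|$ for the appropriate $R'\subseteq R$. One must check that restricting $H$ to the blocks in $L$ still lands in a core family $\calH(F,E^\star_F[L])$ to which \cref{it:bounded-char-tight} applies, and that the exponent $\vc(F[L])$ that appears in the savings is still at least $1$ whenever $H$ was not entirely inside $S_R$ — which is where $|R|<d$ is used again. A secondary subtlety is verifying that the good-rectangle hypothesis ($(\beta,p)$-bounded common neighborhoods from $Q_S$ to $Q_i$ for $|S|\le d$, $S\subseteq L$) really does imply the precise precondition of \cref{it:bounded-char-tight} ($(3/k,p)$-bounded common neighborhoods from $Q_A$ to $Q_i$ with $A=V(E(F))\cap L$), which follows since $A$ has size at most $|V(E(F))|\le 3\vc(F)\le 3d\le D/4$ up to adjusting the constant in $\eta$, and since $1/k\le 3/k$. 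Modulo these checks, the argument is a structured repetition of the edge-axiom estimate in \cref{sec:axioms-short}, now carried out ``relative to $R$'' and with the tighter $|Q|$-dependent bounds, and the error terms assemble into the stated $O(p^{-|S_R|}|Q|n^{-\eps/8})$.
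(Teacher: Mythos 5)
Your identification of the main term is right: since every edge of $S_R(t)$ is present for $t\in Q$, the graphs $H\subseteq S_R$ (all of which lie in $\calH_d$ because $|R|<d$) sum exactly to $p^{-|S_R|}|Q|$, and for the remaining graphs with $\vc(H[L])\le d-|R|$ the sum over the $S_R$-part genuinely factors out as $p^{-|S_R|}$ and the $L$-part is handled by cores plus \cref{it:bounded-char-tight}, just as in the paper's "main sum" (\cref{lem:rect-better}). The gap is in the case you flag as "the main obstacle" but do not resolve: pattern graphs $H\in\calH_d$ with $\vc(H[L])>d-|R|$. For these the constraint $\vc(H)\le d$ \emph{couples} the two parts: which subgraphs $H_R\subseteq S_R$ may be attached to a given $H_L=H[L]$ depends on $H_L$ (e.g.\ $H_R=\emptyset$ is always allowed but $H_R=S_R$ never is, since then $R$ lies in every minimum vertex cover and $\vc(H)=|R|+\vc(H_L)>d$), and it does not depend only on $\core(H_L)$. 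Consequently the sum over these $H$ does not factor as $\bigl(\sum_{H_R}\chi_{H_R(t)}\bigr)\cdot\bigl(\sum_{H_L\in\calH(F,E^\star_F[L])}\chi_{H_L(t)}\bigr)$, and you cannot apply the core-family cancellation to the $L$-part while summing the $R$-incident part freely. Bounding each $H_L$ individually instead is hopeless (this is exactly the $2^{dk}$-versus-$n^{d}$ obstruction discussed in \cref{sec:axioms-sketch}).

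The paper resolves this with a different decomposition: \cref{clm:split-sum} peels off the edges of $S_R$ one at a time, pairing $H$ with $H\cup\{e\}$; the pairing fails exactly on the boundary sets $\calH_d(\set{i,j})$, producing "boundary sums" whose prefactors $p^{-|S_{[i-1]}\cup S_i^{\le j-1}|}$ are \emph{strictly smaller} than $p^{-|S_R|}$. That prefactor gap is then essential: \cref{lem:boundary} uses \cref{claim:bound-F-boundary-sum} to trade the leftover factor $p^{|E_1\setminus E^\star_F|}$ against the count of boundary cores, because the per-core savings $(\Lambda/10k\log n)^{-\vc(F[B])/4}$ is measured by the vertex cover inside $B=[k]\setminus R$, which can be much smaller than the cover of $H$ outside $[i-1]$. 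In your scheme the coupled terms would all carry the full $p^{-|S_R|}$ prefactor and there is no compensating factor, so even granting a per-core-family bound on the $L$-part the bookkeeping does not close. You would need to either reproduce the iterative boundary decomposition or supply a substitute for \cref{claim:bound-F-boundary-sum}; as written, the argument for the straddling graphs is missing, not merely unverified.
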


Before proving \cref{lem:rect-good-full} let us verify that
\cref{lem:rect-good} indeed follows.

\begin{proof}[Proof of \cref{lem:rect-good}]
  Recall that we defined our measure $\mu_d$ for a rectangle $Q$ as
  \begin{align}
    \mu_d(Q) =
    n^{-k}
    \sum_{t\in Q}
    \sum_{H \in \calH_d}
    \chi_{H(t)}(G) \eqperiod
  \end{align}
  Hence \cref{lem:rect-good-full} implies, for the appropriate
  parameters, that all $(s, 1/k, p, d, R)$-good rectangles $Q$ for
  $G$ satisfy
  \begin{displaymath}
    \mu_d(Q) =
    p^{-|S_R|}\,
    |Q|\,
    n^{-k}
    \big(
    1 \pm
    O(n^{-\eps/8})
    \big)
    \eqperiod
    \qedhere
  \end{displaymath}
\end{proof}

Let us give some intuition for the statement of
\cref{lem:rect-good-full}. Clearly, if $R = \emptyset$, then we are
showing that the measure of a rectangle is tightly concentrated around
the expected value. Let us explain where the $p^{-|S_{[\ell]}|}$
factor comes from.

Consider two blocks, say blocks $1$ and $2$. For vertices $v_1 \in
V_1$ and $v_2 \in V_2$ let us denote by $Q_{v_1v_2}$ the rectangle
consisting of all tuples that contain $v_1$ as well as
$v_2$. According to \cref{lem:small-measure}, if there is no edge
between $v_1$ and $v_2$, then $\mu_d(Q_{v_1v_2}) \approx 0$. Recall
that the measure of the whole space is roughly $1$---hence if there is
an edge between vertices $v_1$ and $v_2$ as above, we expect that
$\mu(Q_{v_1v_2})$ ``compensates'' for the $0$ value rectangles, that
is, we expect that $\mu(Q_{v_1v_2}) \approx (|Q_{v_1v_2}|/n^k)) \cdot
(1/p) = 1/p n^2$ if the edge $v_1v_2$ is present.  More generally, for
$R\subseteq [k]$ we expect to pick up a factor of $1/p$ for each edge
that we condition on being present between a vertex in $R$ and another
block in $Q$. \cref{lem:rect-good-full} establishes that this is
indeed how the measure behaves.

Let us consider an $(s, 1/k, p, d, R)$-good rectangle $Q$. For ease
of exposition let us assume that $R = [\ell]$, the other cases are
analogous. In other words, we assume, for all $i \in [\ell]$, that it
holds that $|Q_i| = 1$, these $\ell$ vertices form a clique and all
edges from the first $\ell$ vertices to any other vertex in $Q$ are
present.

Recall that $\calH_{d}$ is the family of graphs on $k$ labeled
vertices with a vertex cover of size at most~$d$, and that
$\calH_d(e) \subseteq \calH_d$ denotes the set of graphs in the
$(d,e)$-boundary, as defined in \cref{sec:cores-bounds}.
Finally, for two graphs $H$ and $H'$ defined
over the same vertex set $V$ we denote by $H \setminus H'$ the graph
over $V$ containing all edges of $H$ that are not present in $H'$.

We prove \cref{lem:rect-good-full} in three steps. First we split the sum
of Fourier characters into two parts: the \emph{main sum} and some
\emph{boundary sums}. In a second step we show that the boundary sums
are negligible, i.e., that they add up to very little.
In the final
step we then show that the main sum is tightly concentrated around the
expected value.

The following claim splits the sum of Fourier characters into the main
sum and several boundary sums. We postpone the straightforward 
proof until after we prove \cref{lem:rect-good-full}. 

\begin{claim}
  \label{clm:split-sum}
  For $Q$ as in \cref{lem:rect-good-full} and any tuple $t \in Q$ it
  holds that
  \begin{align*}
    \begin{split}
      \sum_{H \in \calH_d}
      \chi_{H(t)}(G)
      =
      p^{-|S_{[\ell]} |}
      &\sum_{
      \substack{
      H \in \calH_d\\
      S_{[\ell]}  \subseteq H
      }
      }
      \chi_{(H\setminus S_{[\ell]})(t)}(G)\\
      &\quad{}+
      \sum_{i\in[\ell]}
      \sum_{j=i+1}^{k}
      p^{-|S_{[i-1]} \cup S_i^{\leq j-1}|}
      \sum_{
      \substack{
      H \in \calH_d(\set{i,j})\\
      S_{[i-1]} \cup S_i^{\leq j-1} \subseteq H 
      }
      }
      \chi_{(H\setminus (S_{[i-1]} \cup S_i^{\leq j-1}))(t)}(G)
      \eqperiod
    \end{split}
  \end{align*}
\end{claim}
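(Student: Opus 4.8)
The plan is to prove the identity by peeling off, one at a time, the ``star'' edges incident to $R$, organized as a telescoping recursion. Assume as in the setup that $R = [\ell]$. Enumerate the edges of $S_{[\ell]}$ — equivalently all pairs $\{i,j\}$ with $i<j$ and $i\le\ell$ — in lexicographic order as $\sigma_1<\cdots<\sigma_N$, where $N=|S_{[\ell]}|$, and let $P_m=\{\sigma_1,\ldots,\sigma_{m-1}\}$ be the set of star edges lexicographically smaller than $\sigma_m$, so $P_1=\emptyset$ and $P_{N+1}=S_{[\ell]}$. A direct check shows that if $\sigma_m=\{i,j\}$ then $P_m=S_{[i-1]}\cup S_i^{\leq j-1}$ — precisely the prefix appearing on the right-hand side of the claim — and under this bookkeeping $\sum_{m=1}^N$ becomes $\sum_{i\in[\ell]}\sum_{j=i+1}^k$.

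For a prefix $P=P_m$ write $\Phi(P)=\sum_{H\in\calH_d,\,P\subseteq H}\chi_{(H\setminus P)(t)}(G)$, so that $\Phi(\emptyset)$ is the left-hand side of the claim and $\Phi(S_{[\ell]})$ is the sum in its main term. The crux is the single-step recursion
\[
  \Phi(P_m)=\tfrac1p\,\Phi(P_{m+1})+\sum_{\substack{H\in\calH_d(\sigma_m)\\ P_m\subseteq H}}\chi_{(H\setminus P_m)(t)}(G).
\]
To establish this I would split $\{H\in\calH_d : P_m\subseteq H\}$ into three classes according to the status of $\sigma_m=\{i,j\}$ (note $i\le\ell$): (a) $\sigma_m\in H$; (b) $\sigma_m\notin H$ and $\vc(H\cup\sigma_m)\le d$; (c) $\sigma_m\notin H$ and $\vc(H\cup\sigma_m)=d+1$. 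Since adding one edge raises $\vc$ by at most one and $\vc(H)\le d$, class (c) is exactly $\{H\in\calH_d(\sigma_m):P_m\subseteq H\}$, and it accounts for the boundary term. Classes (a) and (b) are in bijection via adding, respectively removing, $\sigma_m$; this is well-defined because $\sigma_m\notin P_m$ and because $\vc$ is monotone under edge addition and removal. For a matched pair $H'\in(\mathrm b)$ and $H=H'\cup\sigma_m\in(\mathrm a)$ we have $(H\setminus P_m)(t)=(H'\setminus P_m)(t)\cup\{\{v_i,t_j\}\}$, a genuinely new $G$-edge since $t$ has distinct entries in distinct blocks (so $e\mapsto e(t)$ is injective on pairs); and since $i\in R$, goodness of $Q$ gives $t_j\in Q_j\subseteq N(v_i)$, hence $\{v_i,t_j\}\in E(G)$ and $\chi_{\{v_i,t_j\}}(G)=\tfrac{1-p}{p}$. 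Thus $\chi_{(H'\setminus P_m)(t)}(G)+\chi_{(H\setminus P_m)(t)}(G)=\bigl(1+\tfrac{1-p}{p}\bigr)\chi_{(H'\setminus P_m)(t)}(G)=\tfrac1p\,\chi_{(H'\setminus P_m)(t)}(G)$, and since $\sigma_m\notin H'$ we may replace $H'\setminus P_m$ by $H'\setminus P_{m+1}$. Re-indexing the pairs by $H\in(\mathrm a)=\{H\in\calH_d:P_{m+1}\subseteq H\}$ collapses the (a)$\cup$(b) contribution to $\tfrac1p\,\Phi(P_{m+1})$, proving the recursion.

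It then remains to unroll the recursion from $m=1$ to $m=N$, obtaining
\[
  \Phi(\emptyset)=p^{-N}\,\Phi(P_{N+1})+\sum_{m=1}^{N}p^{-(m-1)}\sum_{\substack{H\in\calH_d(\sigma_m)\\ P_m\subseteq H}}\chi_{(H\setminus P_m)(t)}(G),
\]
and to substitute $N=|S_{[\ell]}|$, $P_{N+1}=S_{[\ell]}$, $|P_m|=m-1=|S_{[i-1]}\cup S_i^{\leq j-1}|$, $P_m=S_{[i-1]}\cup S_i^{\leq j-1}$ for $\sigma_m=\{i,j\}$, and $\sum_m\to\sum_{i\in[\ell]}\sum_{j=i+1}^k$; this yields the claimed formula verbatim. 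The only delicate points are bookkeeping ones: verifying the prefix identification $P_m=S_{[i-1]}\cup S_i^{\leq j-1}$ and checking that the add/remove-$\sigma_m$ bijection respects both $\calH_d$ and the constraint $P_m\subseteq H$. The arithmetic $1+\tfrac{1-p}{p}=\tfrac1p$ and the appeal to goodness of $Q$ for $\{v_i,t_j\}\in E(G)$ are immediate. I expect the prefix identification to be the main (minor) obstacle, since it requires some care about how edges $\{a,b\}$ are written as ordered pairs $(\min,\max)$ and how that interacts with the $S_{[i-1]}$/$S_i^{\leq j-1}$ decomposition.
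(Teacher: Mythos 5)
Your proof is correct and follows essentially the same route as the paper: peel off the edges of $S_{[\ell]}$ one at a time in the order $\{1,2\},\{1,3\},\dots$, split the graphs not containing the current edge into the $(d,e)$-boundary and the "addable" ones, match the addable ones bijectively with those containing the edge to collapse $1+\tfrac{1-p}{p}$ to $\tfrac1p$ via goodness of $Q$, and telescope. Your only cosmetic difference is packaging the iteration as an explicit one-step recursion $\Phi(P_m)=\tfrac1p\Phi(P_{m+1})+(\text{boundary term})$ with the lexicographic prefix identification $P_m=S_{[i-1]}\cup S_i^{\leq j-1}$, which is exactly what the paper's "continue to rewrite in the above fashion" carries out.
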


The sum with the factor $p^{-|S_{[\ell]} |}$ in \cref{clm:split-sum}
is the so-called \emph{main sum}. Intuitively most of the measure is
in the main sum and it adds up to approximately $p^{-|S_{[\ell]} |}$ 
times the size of $Q$, that is,
\begin{align}
  \label{eq:1}
  p^{-|S_{[\ell]}|}
  \sum_{t \in Q}
  \sum_{
  \substack{
  H \in \calH_d\\
  S_{[\ell]}  \subseteq H
  }
  }
  \chi_{(H\setminus S_{[\ell]})(t)}(G)
  \approx
  p^{-|S_{[\ell]}|} \cdot |Q| \eqperiod
\end{align}
The latter sums in \cref{clm:split-sum} with coefficients
$p^{-|S_{[i-1]} \cup S_i^{\leq j-1}|}$ are the so-called
\emph{boundary sums}. All of these sums turn out to be tiny, that is,
for $i\in [\ell]$ and $j +1 \le i \le k$ it holds that
\begin{align}
  \label{eq:2}
  p^{-|S_{[i-1]} \cup S_i^{\leq j-1}|}
  \Big|
  \sum_{t\in Q}
  \sum_{
  \substack{
  H \in \calH_d(\set{i,j})\\
  S_{[i-1]} \cup S_i^{\leq j-1} \subseteq H
  }
  }
  \chi_{(H\setminus (S_{[i-1]} \cup S_i^{\leq j-1}))(t)}(G)
  \Big|
  \lesssim
  p^{-|S_{[\ell]}|}\cdot
  |Q| \cdot n^{-(d-\ell)} \eqperiod
\end{align}
Both of the bounds corresponding to \refeq{eq:1} and \refeq{eq:2}, and
stated formally below, are proven in \cref{sec:main-sum}.
\cref{clm:split-sum} together with these bounds will allow us to
conclude that the measure $\mu(Q)$ is tightly concentrated around
$p^{-|S_{[\ell]}|} \cdot |Q|$.

Let us first state the bound for the main sum. Intuitively we may
think of the below lemma as a version of \cref{lem:whole-space-again}
that holds on a local part of the graph.

\begin{restatable}{restatablelemma}{betterRectangle}
  \label{lem:rect-better}
  For $G$ as in \cref{lem:rect-good-full} the following holds for
  $p = n^{-2/D}$ and $\ell < d \le \eta D \le 2 \eta \log n$. If
  $ s \ge 10\,k^{13}n^{48\eta+\eps} \log n$, then all
  $(s, 1/k, p, d, [\ell])$-good rectangles $Q$ satisfy
  \begin{align*}
    \Big|
    |Q| -
    \sum_{t \in Q}
    \sum_{\substack{H \in \calH_{d}\\S_{[\ell]} \subseteq H}}
    \chi_{(H\setminus S_{[\ell]})(t)}(G)
    \Big|
    \le
    O\big(
    |Q| \,
    n^{-\eps/4}
    \big)
    \eqperiod
  \end{align*}
\end{restatable}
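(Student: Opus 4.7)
The plan is to first reduce the constrained sum to a whole-space computation on the reduced rectangle $Q'=\bigtimes_{i\in[k]\setminus[\ell]}Q_i$, and then repeat the core-partition argument used in the proof of \cref{lem:whole-space-again}, now run on $Q'$ and on graphs over $[k]\setminus[\ell]$.

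The reduction step rests on the following combinatorial observation: for $k$ sufficiently large compared to $d$ (which holds by the hypothesis $D\leq k$), every $H\in\calH_d$ with $S_{[\ell]}\subseteq H$ must satisfy $[\ell]\subseteq W$ for every minimum vertex cover $W$ of $H$. Indeed, if some $i\in[\ell]$ is omitted, then all $k-1$ leaves of the star $S_i$ must lie in $W$ (contradicting $|W|\leq d<k-1$); moreover at most one such $i$ can be omitted, since two omitted vertices in $[\ell]$ would leave the edge between them uncovered. Consequently such $H$ are in bijection with graphs $H''$ on $[k]\setminus[\ell]$ with $\vc(H'')\leq d-\ell$, via $H=S_{[\ell]}\cup H''$; and $\chi_{(H\setminus S_{[\ell]})(t)}(G)=\chi_{H''(t_{[k]\setminus[\ell]})}(G)$ depends only on the projection of $t$ onto $[k]\setminus[\ell]$. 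Since $t_{[\ell]}=(v_1,\ldots,v_\ell)$ is fixed on $Q$, the sum of interest equals
\begin{align*}
\sum_{t'\in Q'}\sum_{\substack{H''\text{ on }[k]\setminus[\ell]\\\vc(H'')\leq d-\ell}}\chi_{H''(t')}(G),
\end{align*}
whose $H''=\emptyset$ term contributes exactly $|Q'|=|Q|$. It remains to show that the contribution from non-empty $H''$ is bounded in magnitude by $O(|Q|n^{-\eps/4})$.

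For this, I would apply the core map $\core$ of \cref{lem:compression} on the reduced vertex set $[k]\setminus[\ell]$ and partition the non-empty $H''$ into families $\calH(F,E^\star_F)$ indexed by cores $F$ with $1\leq i:=\vc(F)\leq d-\ell$ and $|V(E(F))|\leq 3i$. For each such core I bound
\begin{align*}
\Big|\sum_{t'\in Q'}\sum_{H''\in\calH(F,E^\star_F)}\chi_{H''(t')}(G)\Big|
\end{align*}
using the $|Q|$-scaled character bound of \cref{it:bounded-char-tight} of \cref{def:well-behaved} applied to $Q'$ (with $B=[k]\setminus[\ell]$), where the required $(3/k,p)$-bounded common neighborhoods from $Q'_{V(E(F))}$ to the remaining blocks are supplied by \cref{it:good-bounded} of the good-rectangle definition. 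This yields a per-core bound of the form $O\bigl(p^{-|E(F)|}\cdot(\Lambda/(10\,k\log n))^{-i/4}\cdot|Q|\bigr)$ for a scale parameter $\Lambda$ reflecting the smallest block size $s$ of $Q'$. Summing via \cref{lem:count-H}, there are at most $2^{3i(i+\log k)}$ cores of vertex cover $i$; combined with $p^{-|E(F)|}\leq n^{12\eta i}$ (using $p=n^{-2/D}$, $d\leq 2\eta\log n$ and $|E(F)|\leq 3d\cdot i$) and the hypotheses $k\leq n^{1/66}$ and $s\geq 10\,k^{13}n^{48\eta+\eps}\log n$, each term is at most $n^{-\Omega(i)-\eps/4}\cdot|Q|$, and the geometric sum over $i\geq 1$ gives $O(|Q|n^{-\eps/4})$.

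The main technical obstacle is aligning the hypotheses of \cref{it:bounded-char-tight} with the structure of $Q'$: the block sizes of $Q'$ vary across $[s,n]$ rather than lying in a dyadic window $[\Lambda,4\Lambda]$, and some cores have $|V(E(F))|>d$, where the good-rectangle property does not directly furnish bounded common neighborhoods from $Q'_{V(E(F))}$. I expect to resolve the first issue by an inclusion-exclusion decomposition of $Q'$ into $(4\Lambda)$-small sub-rectangles analogous to the one in \cref{lem:bound-any-rectangle}, and the second by restricting to a $d$-subset of $V(E(F))$ while absorbing an extra factor $p^{-(|V(E(F))|-d)}\leq n^{O(\eta d)}$ into the polynomial slack between $s$ and the target bound.
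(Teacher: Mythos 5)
Your overall architecture matches the paper's: isolate the $H=S_{[\ell]}$ term, partition the remaining graphs by cores, bound each per-core character sum using the well-behavedness of $G$, and close with \cref{lem:count-H} and a geometric series. Your reduction to the induced problem on $[k]\setminus[\ell]$ (via the bijection $H\leftrightarrow S_{[\ell]}\cup H''$, justified by \cref{clm:vc-deg,clm:vc-star}) is correct and is in fact a clean repackaging of the paper's step from \refeq{eq:Qsimplify} to \refeq{eq:Qsimplify-1}; running $\core$ directly on $[k]\setminus[\ell]$ lets you count cores with plain \cref{lem:count-H} instead of the more delicate \cref{clm:F-star}, which the paper only truly needs for the boundary sums in \cref{lem:boundary}. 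The numerology in your last display also checks out.

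The genuine gap is in how you propose to make $Q'$ amenable to \cref{it:bounded-char-tight} of \cref{def:well-behaved}. That property only applies to rectangles that are $(4\Lambda)$-small, and its proof fundamentally needs this: the union bound over rectangles in \cref{lem:Gwellbehaved} only beats the failure probability because there are few $(4\Lambda)$-small rectangles. An inclusion-exclusion decomposition ``analogous to the one in \cref{lem:bound-any-rectangle}'' cannot supply this, for two reasons. First, inclusion-exclusion over block complements produces rectangles with blocks $V_i$ or $V_i\setminus Q'_i$ of size $\Theta(n)$ — it never produces small blocks, so it only pairs with \cref{it:bounded-char-general}, whose bound $n^{k-\Omega(\vc(F))}$ is proportional to $n^k$ rather than $|Q|$ and is useless here since $|Q|$ can be as small as $s^{k-\ell}\ll n^{k-O(d)}$. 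Second, even a direct partition of each block $Q'_i$ into chunks of size $\Theta(\Lambda)$ does not suffice: \cref{it:bounded-char-tight} requires the sub-rectangle itself to have $(3/k,p)$-bounded common neighborhoods from $Q_A$ into each chunk, and an arbitrary chunking destroys this (a chunk may consist entirely of non-neighbors of some tuple). The paper resolves exactly this obstacle with a probabilistic partition (\cref{lem:split-set}, packaged as \cref{cor:cut-up-C} and \cref{lem:good-single-sum}) that splits every relevant common neighborhood evenly among the chunks, thereby preserving goodness with parameter $3/k$. Your proof needs this ingredient (or an equivalent one) to go through; with \cref{lem:good-single-sum} in place of your inclusion-exclusion step, the rest of your argument is sound. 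Your second flagged issue ($|V(E(F))|$ up to $3(d-\ell)$ versus the parameter $d$ in \cref{def:good}) is a parameter-bookkeeping matter that the paper absorbs the same way, and is not a real obstruction.
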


On the other hand, reminiscent of the edge axioms, we have that the
boundary sums are quite small. Note that, in contrast to
\cref{lem:bound-any-rectangle}, the bound below depends on the size of the
rectangle $Q$.

\begin{restatable}{restatablelemma}{boundarySums}
  \label{lem:boundary}
  For $G$ as in \cref{lem:rect-good-full} the following holds for any
  constant $0 < \eta < 1/50$ and $\eps > 0$, for $p = n^{-2/D}$ and
  $\ell < d \le \eta D \le 2\eta\log n$. Let $D \le k \le n$,
  $s \ge 10\,k^2n^{48\eta + \eps} \log n$ and $i \in [\ell]$. Then all
  $(s, 1/k, p, d, [\ell])$-good rectangles $Q$ satisfy that if
  $j \le d+2$, then
  \begin{align*}
    p^{-|S_{[i-1]} \cup S_i^{\leq j-1}|} \cdot
    \Big|
    \sum_{t \in Q}
    \sum_{
    \substack{
    H \in \calH_d(\set{i,j})\\
    S_{[i-1]} \cup S_i^{\leq j-1} \subseteq H
    }
    }
    \chi_{(H\setminus (S_{[i-1]} \cup S_i^{\leq j-1}))(t)}(G)
    \Big|
    \le
    O
    \big(
    p^{-|S_{[\ell]}|} \,
    |Q| \,
    n^{-\eps(d-\ell)/4}
    \big)\eqcomma
  \end{align*}
  and, on the other hand, if $j \ge d+3$, then the above sum is empty.
\end{restatable}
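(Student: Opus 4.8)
\emph{Proof plan.} The statement divides into two regimes. For $j\ge d+3$ the inner index set is empty: if some $H\in\calH_d(\set{i,j})$ contained $S_{[i-1]}\cup S_i^{\le j-1}$, then, since $k$ is large and $S_{[i-1]}$ joins every vertex of $[i-1]$ to all of $[k]$, every minimum vertex cover of $H$ would contain $[i-1]$ (the only alternative needs at least $k-1>d$ vertices); but $H$ lies in the $(d,\set{i,j})$-boundary, so some minimum vertex cover of $H$ omits both $i$ and $j$, and omitting $i$ forces all leaves $\set{i+1,\dots,j-1}$ of the star $S_i^{\le j-1}\subseteq H$ into the cover, for a total of at least $(i-1)+(j-1-i)=j-2>d$ vertices, contradicting $\vc(H)=d$. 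For $j\le d+2$ we bound the sum by an argument mirroring the edge-axiom estimate (\cref{lem:small-measure,lem:bound-any-rectangle}), the new ingredient being that, since $Q$ is good with $R=[\ell]$, every edge meeting $[\ell]$ evaluates to the constant $\tfrac{1-p}{p}$ on each tuple of $Q$; this lets us strip those characters off and reduce to a character sum over the tail blocks $[k]\setminus[\ell]$, for which the bounded-character-sum properties of a well-behaved graph yield a bound proportional to $|Q|$.

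Fix $j\le d+2$ and write $P=S_{[i-1]}\cup S_i^{\le j-1}$, all of whose edges meet $[i]\subseteq[\ell]$. For $H\in\calH_d(\set{i,j})$ with $P\subseteq H$ and $t\in Q$, set $\tilde H=H|_{[k]\setminus[\ell]}$; the good-rectangle structure gives
\[
  \chi_{(H\setminus P)(t)}(G)=\Bigl(\tfrac{1-p}{p}\Bigr)^{e(H)}\chi_{\tilde H(t)}(G),
\]
where $e(H)$ is the number of edges of $H\setminus P$ meeting $[\ell]$ and $\chi_{\tilde H(t)}(G)$ depends only on $t_{[k]\setminus[\ell]}$. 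Since $|Q_{[\ell]}|=1$ we have $|Q|=|Q_{[k]\setminus[\ell]}|$; swapping the order of summation and grouping these $H$ by their tail, the quantity to be bounded equals
\[
  p^{-|P|}\Bigl|\sum_{\tilde H}w(\tilde H)\sum_{t'\in Q_{[k]\setminus[\ell]}}\chi_{\tilde H(t')}(G)\Bigr|,
\]
where $\tilde H$ ranges over graphs on $[k]\setminus[\ell]$ and $w(\tilde H)$ is the sum of $\bigl(\tfrac{1-p}{p}\bigr)^{|E'|}$ over all sets $E'$ of edges that meet $[\ell]$, lie outside $P$, and differ from $\set{i,j}$ (there are $N=|S_{[\ell]}|-|P|-1$ such edges), subject to $P\cup\tilde H\cup E'\in\calH_d(\set{i,j})$. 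Dropping that last condition bounds $w(\tilde H)\le(1+\tfrac{1-p}{p})^N=p^{-N}$, so the prefactor is at most $p^{-|P|-N}\le p^{-|S_{[\ell]}|}$. It remains to bound the tail sum by $O(|Q|\,n^{-\eps(d-\ell)/4})$, and the key combinatorial fact enabling this is that every tail with $w(\tilde H)\ne 0$ satisfies $\vc(\tilde H)\ge d-\ell$, since $[\ell]$ together with a minimum cover of $\tilde H$ covers $H$, which has $\vc(H)=d$.

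To bound the tail sum I would apply the compression map $\core$ of \cref{lem:compression} on the vertex set $[k]\setminus[\ell]$ to split the relevant tails into families $\calH(F,E^\star_F)$ indexed by cores $F$ with $\vc(F)\ge d-\ell$ and $|V(E(F))|\le 3\vc(F)\le 3d<D/4$, checking that the weights $w$ and the cancellation structure of the character sum are compatible with this grouping (the admissibility condition and the weights depend on a tail only through its core, up to a bounded refinement). Each family then contributes a character sum over $Q_{[k]\setminus[\ell]}$ for $F$, which I would control using \cref{it:bounded-char-tight} of \cref{def:well-behaved}: partition the large blocks of $Q_{[k]\setminus[\ell]}$ into pieces of size in $(\Lambda,4\Lambda]$ for a suitable $\Lambda$ with $20k\log n\le\Lambda\le s$ — this is where the hypothesis $s\ge 10k^2n^{48\eta+\eps}\log n$ is used — after verifying that each piece inherits the required $(3/k,p)$-bounded common neighborhoods from \cref{it:good-bounded,it:expected-neigh,it:bounded-error}; apply \cref{it:bounded-char-tight} to each $(4\Lambda)$-small, $(\Lambda,[k]\setminus[\ell])$-large piece and sum the resulting bounds $O\!\bigl(p^{-|E(F)|}(\Lambda/(10k\log n))^{-\vc(F)/4}|Q_{\mathrm{piece}}|\bigr)$ over the pieces (recovering a factor $|Q|$) and over the at most $n^{O(\vc(F))}$ cores with each value of $\vc(F)$ (\cref{lem:count-H}). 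Using $p^{-|E(F)|}\le n^{O(\eta\vc(F))}$ and $\vc(F)\ge d-\ell$, and taking $\eta$ small and $\eps$ a large enough constant multiple of $\eta$, the geometric series over $\vc(F)\ge d-\ell$ collapses to $O(|Q|\,n^{-\eps(d-\ell)/4})$; together with the prefactor $p^{-|S_{[\ell]}|}$ this is the claimed bound.

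The crux is this last step. One must establish that the cancellation within each core family genuinely survives — the free-edge weights $w$ couple the $[\ell]$-incident part to the tail, so this needs care — and align the block decomposition with the error sets of \cref{it:bounded-error} so that the pieces satisfy the common-neighborhood hypothesis of \cref{it:bounded-char-tight}, since an arbitrary sub-block may avoid a common neighborhood entirely. Finally one must balance $\Lambda$, $\eta$, and $\eps$ so that the per-core saving $(\Lambda/(10k\log n))^{-\vc(F)/4}$ simultaneously dominates the core count, the weight $p^{-|E(F)|}$, and the discarded edge $\set{i,j}$ while still producing the target $n^{-\eps(d-\ell)/4}$. By contrast, the emptiness argument and the character-stripping reduction are routine.
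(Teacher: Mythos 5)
Your handling of the routine parts is correct: the emptiness claim for $j\ge d+3$ is fine (the paper argues it even more directly via \cref{clm:vc-deg}, since vertex $i$ has degree $\ge d+1$ and must lie in every minimum vertex cover), and the character-stripping reduction using goodness of $Q$ matches the paper's. But the heart of the lemma is the weighted tail sum, which you explicitly leave open, and the route you sketch runs into two concrete obstructions. First, your weights $w(\tilde H)$ do not factor through the core of the tail: whether $P\cup\tilde H\cup E'$ lies in $\calH_d(\set{i,j})$ is a property of the full graph on $[k]$, and two tails in the same family $\calH(F,E^\star_F)$ (differing by free edges inside $[k]\setminus[\ell]$) can admit different sets $E'$. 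Without (near-)constant weights on each family you cannot invoke \cref{it:bounded-char-tight}, whose cancellation applies to the unweighted family sum. The paper sidesteps this by grouping by the core of the \emph{full} graph $H$ and using \cref{prop:core-boundary}, so that the boundary condition is determined by the core and the free edges incident to $[\ell]$ telescope exactly into a factor $p^{-|E^\star_F\cap E_1|}$.

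Second, and quantitatively fatal: by bounding $w(\tilde H)\le p^{-N}$ you spend the entire prefactor budget and must then show that the character sums, summed over all tail-cores $F$ with $\vc(F)=w\ge d-\ell$, total $O(|Q|\,n^{-\eps(d-\ell)/4})$. By \cref{lem:count-H} there are about $2^{3w(d+\log k)}=2^{3wd}k^{3w}$ such cores, while the per-core saving available under the stated hypothesis $s\ge 10\,k^2n^{48\eta+\eps}\log n$ is only $(k\,n^{48\eta+\eps})^{-w/4}$; neither $k^{-w/4}$ nor $n^{-\eps w/4}$ (for an arbitrary small constant $\eps$) beats $k^{3w}$, and the lemma permits $k$ as large as $n$. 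This is exactly the difficulty the paper flags before \cref{claim:bound-F-boundary-sum} and resolves there: the \emph{unspent} part of the prefactor, $p^{|E_1\setminus E^\star_F|}\le 2^{-|E_1\setminus E^\star_F|}$, is shown to contribute $2^{-k+3dw_B+O(d)+\ell}$ per core, and the global $2^{-k}$ absorbs the $\binom{k}{d}^3$-type core count, leaving only $2^{3dw_B}\le n^{6\eta w_B}$ to be beaten by the character-sum saving. Discarding that leftover factor, as your plan does, forfeits precisely the resource that makes the count close; so the "last step" you defer is not a technicality but the missing core of the proof.
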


Assuming these statements the proof of \cref{lem:rect-good-full} boils down
to a sequence of syntactic manipulations.

\begin{proof}[Proof of \cref{lem:rect-good-full}]
  According to \cref{clm:split-sum} the expression
  \begin{align}\label{eq:good-init}
    \Big|
    p^{-|S_{[\ell]} |} \cdot |Q| - 
    \sum_{t \in Q}
    \sum_{H \in \calH_d}
    \chi_{H(t)}(G)
    \Big|
  \end{align}
  is equal to
  \begin{align}
    \begin{split}\label{eq:expanded-main-boundary}
      \Big|
      p^{-|S_{[\ell]} |}  \cdot |Q|
      &-p^{-|S_{[\ell]}|} 
        \sum_{t \in Q}
        \sum_{
        \substack{
        H \in \calH_d\\
        S_{[\ell]}  \subseteq H
        }
        }
        \chi_{(H\setminus S_{[\ell]})(t)}(G)
        \\
      &\qquad{}\quad{}\,
        -
        \sum_{i\in[\ell]}
        \sum_{j=i+1}^{k}
        p^{-|S_{[i-1]} \cup S_i^{\leq j-1}|}
        \sum_{t \in Q}
        \sum_{
        \substack{
        H \in \calH_d(\set{i,j})\\
        S_{[i-1]} \cup S_i^{\leq j-1} \subseteq H
        }
        }
        \chi_{(H\setminus (S_{[i-1]} \cup S_i^{\leq j-1}))(t)}(G)
        \Big| \eqperiod
    \end{split}
  \end{align}
  Appealing to the triangle inequality, \cref{lem:rect-better} and
  \cref{lem:boundary}, we may upper bound
  \refeq{eq:expanded-main-boundary} by
  \begin{align}
    \begin{split}
        p^{-|S_{[\ell]} |} \cdot
        \Big| |Q| -
        \sum_{t \in Q}
        &\sum_{
        \substack{
        H \in \calH_d\\
        S_{[\ell]}  \subseteq H
        }
        }
        \chi_{(H\setminus S_{[\ell]}) (t)}(G)
        \Big|\\
      &\quad{}
        +
        \sum_{i\in[\ell]}
        \sum_{j=i+1}^{d+2}
        p^{-|S_{[i-1]} \cup S_i^{\leq j-1}|}
        \cdot
        \Big|
        \sum_{t \in Q}
        \sum_{
        \substack{
        H \in \calH_d(\set{i,j})\\
        S_{[i-1]} \cup S_i^{\leq j-1} \subseteq H
        }
        }
        \chi_{(H\setminus (S_{[i-1]} \cup S_i^{\leq j-1}))(t)}(G)
        \Big|\\
    &\le
      O\Big(
      p^{-|S_{[\ell]} |}
      \,
      |Q|
      \,
      \big(
      n^{-\eps/4} +
      \sum_{i\in[\ell]}
      \sum_{j=i+1}^{d+2}
      n^{-\eps(d-\ell)/4}
      \big)
      \Big)
    \end{split}
    \\ \label{eq:good-final-bound}
    &\le
      O
      \big(
      p^{-|S_{[\ell]} |}
      \,
      |Q|
      \,
      n^{-\eps/8}
      \big)\eqperiod
  \end{align}
  Putting everything together we may conclude that
  \refeq{eq:good-init} is upper bounded by the expression in
  \refeq{eq:good-final-bound}, as claimed.
\end{proof}

We now proceed to prove \cref{clm:split-sum}.

\begin{proof}[Proof of \cref{clm:split-sum}]
  Suppose $\ell \geq 1$ and let us start by considering the edge
  $\{1,2\}$. We first observe that for any $H$ such that $\{1,2\}\in H$, 
  it holds that
  \begin{equation}
  \chi_{H(t)}(G) = \frac{1-p}{p} \chi_{(H\setminus\{1,2\})(t)}(G)
  \end{equation} 
  by definition of a
  good rectangle as every tuple $t \in Q$ contains the edge
  $\set{1,2}$. Let us write
  \begin{align}
    \sum_{H \in \calH_d}
    \chi_{H(t)}(G)
    &=
    \frac{1-p}{p} \cdot
    \sum_{\substack{H \in \calH_{d}\\ \set{1,2} \in H}}
    \chi_{(H\setminus\{1,2\})(t)}(G)
    +
    \sum_{\substack{H \in \calH_{d}\\ \set{1,2} \not \in H}}
    \chi_{H(t)}(G) \eqperiod
  \end{align}

  Note that we can partition the set
  $\{H \in \calH_{d} : \set{1,2} \not\in H \}$ into two parts: the
  first part contains all the graphs in the boundary of $\{1,2\}$,
  that is, graphs $H \in \calH_d(\set{1,2})$, and the second set
  contains all the remaining graphs. Note that graphs $H$ contained in
  the latter set satisfy that $H\cup\{1,2\} \in \calH_d$. Hence, for
  every $t\in Q$, it holds that
  \begin{align}
    \begin{split}
        \sum_{H \in \calH_d}
        \chi_{H(t)}(G)
      &=
        \frac{1-p}{p} \cdot
        \sum_{
        \substack{
        H \in \calH_{d} \\
    \set{1,2} \in H\\
    }
    }
    \chi_{(H\setminus\{1,2\})(t)}(G)\\
    &\hspace{3.5cm}+ 
    \sum_{
    \substack{
    H \in \calH_{d} \\
    \set{1,2} \in H\\
    }
    }
    \chi_{(H\setminus\{1,2\})(t)}(G)
    +
    \sum_{H \in \calH_d(\set{1,2})}
    \chi_{H(t)}(G)
    \end{split}\\
    &=
      \frac{1}{p} \cdot
      \sum_{
      \substack{
      H \in \calH_{d} \\
      \set{1,2} \in H\\
      }
      }
      \chi_{(H\setminus\set{1,2})(t)}(G)  
      +
      \sum_{H \in \calH_d(\set{1,2})}
      \chi_{H(t)}(G)
      \eqperiod
  \end{align}
  If we continue to rewrite the first sum in the above fashion for edges
  $\set{1,3}, \ldots, \set{1,k}$ we obtain
  \begin{align}
    \label{eq:clique-main}
    \begin{split}
    \sum_{H \in \calH_d}
    \chi_{H(t)}(G)
    =
      p^{-(k-1)} \cdot 
      \sum_{
      \substack{
      H \in \calH_d\\
      S_1\subseteq H
      }
      }
      &\chi_{(H\setminus S_1)(t)}(G)\\
      &\qquad+ \sum_{j=2}^k p^{-(j-2)}  \cdot 
      \sum_{
      \substack{
      H \in \calH_d(\set{1,j})\\
      S_1^{\leq j-1} \subseteq H
      }
      }
      \chi_{(H\setminus S_1^{\leq j-1})(t)}(G) \eqperiod
    \end{split}
  \end{align}
  By iteratively applying the above arguments to vertices
  $i=2, \ldots, \ell$ we conclude that
  \begin{align}
    \begin{split}
      \sum_{H \in \calH_d}
      \chi_{H(t)}(G)
      &=
        p^{-|S_{[\ell]} |} \cdot
        \sum_{
        \substack{
        H \in \calH_d\\
        S_{[\ell]}  \subseteq H
        }}
        \chi_{(H\setminus S_{[\ell]})(t)}(G)\\
      &\qquad{}+
        \sum_{i\in[\ell]}
        \sum_{j=i+1}^{k}
        p^{-|S_{[i-1]} \cup S_i^{\leq j-1}|}
        \sum_{
        \substack{
        H \in \calH_d(\set{i,j})\\
        S_{[i-1]} \cup S_i^{\leq j-1} \subseteq H
        }}
        \chi_{(H\setminus (S_{[i-1]} \cup S_i^{\leq j-1}))(t)}(G) \eqcomma
    \end{split}
  \end{align}
  as claimed.
\end{proof}

\subsection{Concentration of the Main Sum and Bounding the Boundary Sums}
\label{sec:main-sum}

In this section we prove \cref{lem:rect-better} and
\cref{lem:boundary}. We rely on the following lemma which guarantees
bounded character sum on arbitrary good rectangles, regardless of
size. This is in contrast to \cref{it:bounded-char-tight} of
\cref{def:well-behaved} which only holds for small rectangles. The
more general statement follows from the latter by splitting any large
rectangle into many small rectangles while maintaining goodness so
that \cref{it:bounded-char-tight} of \cref{def:well-behaved} holds for
these small rectangles.

\begin{restatable}{restatablelemma}{concentrationLemma}
  \label{lem:good-single-sum}
  Let $n, k \in \N$ and $s, \eta, D, d \in \R^+$ be such that
  $d \le \eta D$, $k \le n$ and $s \ge 9\,k^2 n^{2\eta} d \ln n$. If
  $p = n^{-2/D}$, $G$ is a $D$-well-behaved graph and $F$ is a core
  graph, then all $(s, 1/k, p, d, R)$-good rectangles $Q$ satisfy, for
  $B = [k] \setminus R$, that
  \begin{align*}
    \Big|
		\sum_{t \in Q}
    \sum_{H \in \calH(F, E^\star_F[B])}
    \chi_{H[B](t)}(G)
    \Big|
    =
    O
    \big(
    |Q| \cdot
    p^{-{|E(F[B])|}} \cdot
    (s/10\,k\log n)^{-\vc(F[B])/4}
    \big)
    \eqperiod
  \end{align*}
\end{restatable}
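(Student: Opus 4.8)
The plan is to deduce this from \cref{it:bounded-char-tight} of \cref{def:well-behaved} by chopping the good rectangle $Q$ into many small pieces. Since the edge sets $H[B](t)$ for $H\in\calH(F,E^\star_F[B])$ only involve blocks of $B$, the character $\chi_{H[B](t)}(G)$ depends on $t$ only through $t_B$; and since $Q$ is $(s,1/k,p,d,R)$-good, every block $Q_i$ with $i\in R$ is a singleton, so $|Q|=|Q_B|$ and
\begin{align*}
  \Big|\sum_{t\in Q}\sum_{H\in\calH(F,E^\star_F[B])}\chi_{H[B](t)}(G)\Big|
  =
  \Big|\sum_{t\in Q_B}\sum_{H\in\calH(F,E^\star_F[B])}\chi_{H[B](t)}(G)\Big|\eqperiod
\end{align*}
It therefore suffices to bound the latter. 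Set $A=V\bigl(E(F)\bigr)\cap B$ and fix $\Lambda$ to be a suitable constant fraction of $s$: large enough that $\Lambda\ge 20k\log n$, which follows from $s\ge 9k^2n^{2\eta}d\ln n$ once $n$ is large, and small enough that any block $Q_i$ with $i\in B$ (which has $|Q_i|\ge s$) can be partitioned into parts of size in $(\Lambda,4\Lambda]$.

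The heart of the argument is to produce a partition of $Q_B$ into sub-rectangles $Q^{(1)},\ldots,Q^{(N)}$ such that each $Q^{(j)}$ is $(4\Lambda)$-small, is $(\Lambda,B)$-\compatible{}, and satisfies the common-neighborhood hypothesis of \cref{it:bounded-char-tight}: $G$ has $(3/k,p)$-bounded common neighborhoods from $Q^{(j)}_A$ to $Q^{(j)}_i$ for every $i\in B\setminus A$. (Equivalently, keeping the singleton blocks of $Q$ fixed, one wants each $Q^{(j)}$ to again be a good rectangle in the sense of \cref{def:good} with the same set $R$, since property~3 of \cref{def:good} then provides exactly this hypothesis.) Granting such a partition, we apply \cref{it:bounded-char-tight} of \cref{def:well-behaved} to each $Q^{(j)}$ with the same core $F$ and set $B$ --- using that $\vc(F)\le d\le D/4$, as holds for the cores $F=\core(H)$ with $H\in\calH_d$ that arise in our applications --- to obtain
\begin{align*}
  \Big|\sum_{t\in Q^{(j)}_B}\sum_{H\in\calH(F,E^\star_F[B])}\chi_{H[B](t)}(G)\Big|
  =
  O\Big(p^{-|E(F[B])|}\,\bigl(\Lambda/10k\log n\bigr)^{-\vc(F[B])/4}\,|Q^{(j)}_B|\Big)\eqperiod
\end{align*}
Summing over $j$, using the triangle inequality and $\sum_j|Q^{(j)}_B|=|Q_B|=|Q|$, and recalling $\Lambda=\Theta(s)$ (the resulting constant-factor change inside the power being absorbed into the $O(\cdot)$), yields the claimed bound.

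It remains to exhibit the partition, which is the main obstacle. The at most $3\vc(F)$ blocks in $A$ are cut into arbitrary consecutive intervals of size in $(\Lambda,2\Lambda]$, which already secures $(4\Lambda)$-smallness and $(\Lambda,B)$-largeness on them. For the blocks in $B\setminus A$ the cut must additionally keep common neighborhoods balanced: fixing a choice $Q'_A=\bigtimes_{i\in A}Q'_i$ of the $A$-intervals, one shows that a uniformly random partition of each $Q_i$, $i\in B\setminus A$, into parts of size in $(\Lambda,2\Lambda]$ is, with positive probability, such that $|N^\cap(t,P)|\in(1\pm 3/k)\,p^{|t|}\,|P|$ for every subtuple $t$ over vertices of $Q'_A$ and every resulting part $P$. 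This is where the lower bound on $s$ enters: it makes $\Lambda=\Theta(s)$ large enough that $p^{|t|}|P|$ is a sufficiently large multiple of $k^2\log n$ for all the relevant (short) tuples $t$ --- using $p=n^{-2/D}$ and $d=O(\eta D)$ --- so that a Chernoff bound for the hypergeometric random variable $|N^\cap(t,P)|$ beats a union bound over the --- at most quasipolynomially many, for a fixed $Q'_A$ --- pairs $(t,P)$. Doing this for every $Q'_A$ assembles the desired partition $\{Q^{(j)}\}_j$ of $Q_B$. The delicate point, and the reason the partition of the poly-sized $A$-blocks is decided first, is that there can be exponentially many pieces in total while the common neighborhood of a fixed tuple is only controlled in aggregate over a full block $Q_i$; tracking the precise dependence of $\Lambda$ and of the number of pieces on the parameters so that the union bound closes is the crux of the argument.
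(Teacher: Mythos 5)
Your proposal is correct and follows essentially the same route as the paper: the paper's proof also partitions the good rectangle into $(4s)$-small good sub-rectangles via a probabilistic balanced-splitting argument (\cref{cor:cut-up-C}, built on \cref{lem:split-set}, i.e.\ a Chernoff bound plus a union bound over the common neighborhoods of all tuples of size at most $d$), then applies \cref{it:bounded-char-tight} of \cref{def:well-behaved} to each piece with $\Lambda=\Theta(s)$ and sums. The only cosmetic difference is that the paper partitions every large block once, uniformly over all small tuples, which avoids your two-stage conditioning on the choice of the $A$-intervals.
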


We defer the proof of this lemma to \cref{sec:bound-dep}. Before we
proceed to prove \cref{lem:rect-better} and \cref{lem:boundary} let us
record two claims regarding vertex covers.

\begin{claim}\label{clm:vc-deg}
  Any minimum vertex cover of a graph $H$ contains all vertices of
  degree at least $\vc(H) + 1$.
\end{claim}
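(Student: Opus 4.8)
The plan is to argue by contradiction. Suppose $C$ is a minimum vertex cover of $H$, so $|C| = \vc(H)$, and suppose towards a contradiction that there is a vertex $v \in V(H)$ with $\deg_H(v) \ge \vc(H) + 1$ such that $v \notin C$. Since $C$ is a vertex cover, every edge of $H$ incident to $v$ must have its other endpoint in $C$; that is, $N_H(v) \subseteq C$. Therefore $|C| \ge |N_H(v)| = \deg_H(v) \ge \vc(H) + 1 > \vc(H) = |C|$, which is the desired contradiction.

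This is essentially the whole argument, so there is no real obstacle; the only thing to be careful about is the (standard) fact that a vertex cover covering all edges at $v$ while excluding $v$ must contain the entire neighborhood of $v$. Writing it out, the proof would read: \emph{Let $C$ be a minimum vertex cover of $H$ and suppose $v$ has degree at least $\vc(H)+1$. If $v \notin C$, then since $C$ covers the $\deg_H(v) \ge \vc(H)+1$ edges incident to $v$, we must have $N_H(v) \subseteq C$, so $|C| \ge \vc(H)+1$, contradicting $|C| = \vc(H)$. Hence $v \in C$.}
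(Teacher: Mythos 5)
Your proof is correct and is essentially identical to the paper's: both argue that a cover omitting a vertex of degree at least $\vc(H)+1$ must contain its entire neighborhood and hence has size at least $\vc(H)+1$, contradicting minimality. No issues.
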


\begin{proof}
  Any vertex cover of $H$ that does not contain a vertex $u$ of degree
  at least $\vc(H) + 1$ must contain all neighbors of $u$ and hence is
  of size at least $\vc(H) + 1$.
\end{proof}

\begin{claim}\label{clm:vc-star}
  For $i \in [k-1]$ and a graph $H$ on $k$ vertices it holds that if
  $S_{[i]} \subseteq H$, then $\vc(H \setminus S_{[i]}) = \vc(H) - i$.
\end{claim}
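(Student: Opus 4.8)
The plan is to sandwich $\vc(H \setminus S_{[i]})$ between $\vc(H)-i$ from both sides, using the hypothesis $S_{[i]} \subseteq H$ only for the harder direction.

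For the inequality $\vc(H \setminus S_{[i]}) \ge \vc(H)-i$, I would take a minimum vertex cover $C_0$ of $H \setminus S_{[i]}$ and observe that $C_0 \cup [i]$ is a vertex cover of $H$: every edge of $H$ either lies in $S_{[i]}$, in which case it is incident to $[i]$, or it survives in $H \setminus S_{[i]}$ and is covered by $C_0$. Hence $\vc(H) \le |C_0| + i$, which rearranges to the claimed bound. Note this direction does not even use the hypothesis $S_{[i]} \subseteq H$.

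For the reverse inequality the plan is to first produce a minimum vertex cover $C$ of $H$ with $[i] \subseteq C$, and then observe that $C \setminus [i]$ is a vertex cover of $H \setminus S_{[i]}$: since $S_{[i]}$ is exactly the set of edges incident to $[i]$, any edge surviving in $H \setminus S_{[i]}$ avoids $[i]$, is covered by $C$, and is therefore covered by $C \setminus [i]$. This gives $\vc(H \setminus S_{[i]}) \le |C| - i = \vc(H) - i$. To produce such a $C$, I would use that $S_{[i]} \subseteq H$ forces every $j \in [i]$ to have degree exactly $k-1$ in $H$. If $\vc(H) \le k-2$, then by \cref{clm:vc-deg} every minimum vertex cover of $H$ already contains $[i]$, and we are done. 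The only remaining case is $\vc(H) = k-1$, which forces $H$ to be the complete graph on $[k]$ (otherwise a non-edge $\set{u,v}$ would make $[k]\setminus\set{u,v}$ a vertex cover of size $k-2$); in that case $H \setminus S_{[i]}$ is a clique on the $k-i$ vertices of $[k]\setminus[i]$ together with $i$ isolated vertices, so $\vc(H \setminus S_{[i]}) = (k-i)-1 = \vc(H)-i$ can be read off directly.

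I do not expect a genuine obstacle: the argument is a short double count of vertex covers. The only point requiring care is the degenerate case $\vc(H) = k-1$, which is precisely the case where a vertex of degree $k-1$ need not belong to every minimum vertex cover, and which is handled by the explicit computation above (and which, as noted, can only occur when $H$ is complete).
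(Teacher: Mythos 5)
Your proof is correct and follows essentially the same route as the paper's: both directions come from the observation that $C\cup[i]$ covers $H$ whenever $C$ covers $H\setminus S_{[i]}$, together with \cref{clm:vc-deg} forcing $[i]$ into every minimum vertex cover. The only difference is how the degenerate case is isolated: the paper splits on $i=k-1$ versus $i<k-1$ and in the latter branch asserts $k-1\ge\vc(H)+1$, which tacitly assumes $\vc(H)\le k-2$ and fails when $H=K_k$ but $i<k-1$, whereas your split on $\vc(H)=k-1$ versus $\vc(H)\le k-2$ catches exactly that corner case (where the claim still holds by direct computation), so your write-up is in fact slightly more careful than the paper's.
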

\begin{proof}
  If $i = k-1$, then $H = S_{[i]}$ is the complete graph on $k$
  vertices and the statement readily follows. Otherwise, as every
  vertex in $[i]$ has degree $k - 1 \ge \vc(H)+1$, it follows by
  \cref{clm:vc-deg} that the set $[i]$ is contained in any minimum vertex
  cover $\lexvc$ of $H$. Since the vertices in $\lexvc \setminus [i]$ have to cover
  the edges in $H \setminus S_{[i]}$, and since any vertex cover $W'$
  of $H \setminus S_{[i]}$ is such that $W'\cup [i]$ is a vertex cover of $H$,
  we may conclude
  $\vc(H \setminus S_{[i]}) = \big|\lexvc \setminus [i]\big| = \vc(H) - i$.
\end{proof}

Finally we also need to revisit our map $\core$ as we need a good
bound on the number of cores that the graphs containing $S_{[\ell]}$
are mapped to. Before stating the claim let us recall the construction
of the map $\core$, as done in \cref{sec:core-proofs}: given a graph
$H$ with lexicographically first vertex cover $\lexvc$ we let $U_1$ be
the lex first maximal set of vertices with a matching from $U_1$ to
$\lexvc$ that covers all vertices in $U_1$. Similarly we let $U_2$ be
the lex first maximal set of vertices in $H \setminus U_1$ with a
matching from $U_2$ to $\lexvc$ covering all vertices in $U_2$.
Finally we define $\core(H) = H[\lexvc \cup U_1 \cup U_2]$.

\begin{lemma}\label{clm:F-star}
  Let $k, \ell, w \in \N$. If $\calF_{\ell}(w)$ denotes the set of
  cores $F \in \img(\core)$ such that
  $S_{[\ell]} \subseteq F \cup E^\star_F$ and $\vc(F) = \ell + w$,
  then it holds that
  \begin{align*}
    |
    \calF_{\ell}(w)
    |
    \leq
    2^{3w(\ell + w + \log k)} \eqperiod
  \end{align*}
\end{lemma}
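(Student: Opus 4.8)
The plan is to build an injection from $\calF_\ell(w)$ into a set of pairs $(\hat F,\psi)$, where $\hat F$ is a small core and $\psi$ records how the non-isolated vertices of $\hat F$ attach to $[\ell]$, and then to count such pairs with \cref{lem:count-H}.

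First I would extract the structural skeleton of an arbitrary $F\in\calF_\ell(w)$. Since every edge of $E^\star_F$ has exactly one endpoint in $V(E(F))$, every edge of $S_{[\ell]}$ whose endpoints both lie in $V(E(F))$ must already be an edge of $F$; hence $[\ell]\cap V(E(F))$ spans a clique in $F$ and each of its vertices is $F$-adjacent to every other vertex of $V(E(F))$. Next, for every $i\in[\ell]$ and $j\notin V(E(F))$ the edge $\{i,j\}$ must belong to $E^\star_F$, which is impossible unless $i\in V(E(F))$; as $|V(E(F))|\le 3\vc(F)=3(\ell+w)$, in the main regime $k>3(\ell+w)$ this gives $[\ell]\subseteq V(E(F))$, so in $F$ every vertex of $V(E(F))$ is adjacent to all of $[\ell]$. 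Now either the vertices of $[\ell]$ have degree $|V(E(F))|-1\ge\vc(F)+1$ in $F$, in which case \cref{clm:vc-deg} forces $[\ell]$ into the lexicographically first minimum vertex cover $\lexvc$ of $F$, or $|V(E(F))|\le\ell+w+1$ and $F$ is already tiny; the degenerate cases $k\le 3(\ell+w)$ and $|V(E(F))|\le\ell+w+1$ will be handled by a strictly easier direct count. So I may assume $[\ell]\subseteq\lexvc$, and then $\lexvc\setminus[\ell]$ has size $w$ and is a vertex cover of the graph $F^-:=F\setminus S_{[\ell]}$ obtained by deleting all $[\ell]$-incident edges, i.e.\ $\vc(F^-)\le w$.

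With this skeleton in hand, $F$ is determined by the set $V(E(F))\setminus[\ell]$ together with $F^-$: the remaining edges of $F$ are forced to be the clique on $[\ell]$ and the complete bipartite graph between $[\ell]$ and $V(E(F))\setminus[\ell]$. I would then argue that $V(E(F))\setminus[\ell]$ is itself essentially determined by $F^-$ and $\lexvc$: it is the union of the non-isolated vertices of $F^-$ and a ``pure entourage'' of vertices whose only $F$-edges go to $[\ell]$, and the entourage is forced because in \cref{alg:core} every vertex of $V(E(F))\setminus[\ell]$ is matchable to $\lexvc$ (it sees all of $[\ell]$), so, given $\lexvc$ and $F^-$, the greedy lexicographic choices for $U_1$ and $U_2$ pick out the entourage uniquely. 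Replacing $F^-$ by its own core $\hat F=\core(F^-)$ -- which by \cref{lem:compression} has $\vc(\hat F)\le w$ and at most $3w$ non-isolated vertices -- and setting $\psi(v)=N_F(v)\cap[\ell]$ for each non-isolated vertex $v$ of $\hat F$, I obtain the pair $(\hat F,\psi)$, and the reconstruction above shows the assignment $F\mapsto(\hat F,\psi)$ is injective.

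It then remains to count the pairs. By \cref{lem:count-H} with $b=w$ and $c=3w$ (summing the resulting geometric series over the true vertex-cover size), there are at most $2^{3w(w+\log k)}$ choices for $\hat F$, and for each of its at most $3w$ non-isolated vertices there are at most $2^{\ell}$ choices for $\psi(v)$; hence $|\calF_\ell(w)|\le 2^{3w(w+\log k)}\cdot 2^{3w\ell}=2^{3w(\ell+w+\log k)}$, as claimed. I expect the main difficulty to be the reconstruction/injectivity step -- precisely, proving that the entourage part of $V(E(F))$ is uniquely pinned down by $\lexvc$ and $F^-$, which requires understanding how the greedy lexicographic choices in \cref{alg:core} interact with $\lexvc\setminus[\ell]$ and with the matchings $M_1,M_2$, and which is exactly where the explicit description of $E^\star_F$ from \cref{sec:explicit} (characterizing which non-$[\ell]$ edges may be hidden in $E^\star_F$) is needed.
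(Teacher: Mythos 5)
Your opening structural analysis is sound and matches the skeleton of the paper's argument: $[\ell]\subseteq V\bigl(E(F)\bigr)$, every vertex of $V\bigl(E(F)\bigr)$ is $F$-adjacent to all of $[\ell]$, hence $[\ell]\subseteq\lexvc$, and $\lexvc\setminus[\ell]$ is the lex-first minimum vertex cover of $F^-=F\setminus S_{[\ell]}$. The gap is the injectivity of $F\mapsto(\hat F,\psi)$ with $\hat F=\core(F^-)$. First, by your own observations $N_F(v)\supseteq[\ell]$ for every $v\in V\bigl(E(F)\bigr)\setminus[\ell]$, so $\psi\equiv[\ell]$ is constant and carries no information; your encoding degenerates to $F\mapsto\core(F\setminus S_{[\ell]})$. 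Second, and fatally, $\core$ is many-to-one: the edges of $F^-$ lying in $E^\star_{\hat F}$, and the vertices of $V\bigl(E(F)\bigr)$ incident only to such edges, are discarded and nothing in the pair records them. Your reconstruction only shows that $F$ is determined by $(F^-,\lexvc)$; it does not bridge the loss from $F^-$ to $\core(F^-)$. Concretely, for $\ell=w=1$ take $F$ with edges $\set{1,j}$ for $j\in\set{2,\ldots,6}$ together with $\set{2,3},\set{2,4},\set{2,5}$, and $F'$ with edges $\set{1,j}$ for $j\in\set{2,\ldots,5}$ together with $\set{2,3},\set{2,4}$. Both are cores with $\vc=2$ and $\lexvc=\set{1,2}$, and both satisfy $S_1\subseteq F\cup E^\star_F$ (for $F$: $U_1=\set{3,4}$, $U_2=\set{5,6}$, so $A_{U_1}=A_{U_2}=\emptyset$; for $F'$: $U_1=\set{3,4}$, $U_2=\set{5}$ with $A_{U_2}=\set{2}\not\ni 1$), yet $\core(F\setminus S_1)=\core(F'\setminus S_1)=\set{\set{2,3},\set{2,4}}$ and the $\psi$'s agree, so $F$ and $F'$ collide.

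The paper's proof never passes to $\core(F^-)$: it spends $\log\binom{k}{w}$ bits each on $\lexvc\setminus[\ell]$ and on the non-forced parts of $U_1$ and $U_2$ (using that the $\ell$ lex-smallest matchable vertices are automatically in $U_1$, respectively $U_2$), which pins down $V\bigl(E(F)\bigr)$ including the entourage, and then spends $w\cdot 3(\ell+w)$ further bits to record \emph{all} edges incident to $\lexvc\setminus[\ell]$ --- that is, it keeps $F^-$ in full. Your count $2^{3w(w+\log k)}$ for $\hat F$ times the vacuous $2^{3w\ell}$ for $\psi$ reproduces the right exponent, but the encoded object does not determine $F$. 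To repair your route you would have to replace $(\hat F,\psi)$ by data retaining the full edge set of $F^-$ together with $V\bigl(E(F)\bigr)\setminus[\ell]$, at which point you have essentially rebuilt the paper's encoding.
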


\begin{proof}
  We argue that we can encode elements of the set of cores
  $\calF_{\ell}(w)$ by using few bits.  Fix a core
  $F \in \calF_{\ell}(w)$ and let $\lexvc, U_1, U_2$ as in the
  construction of $F$ (see \cref{sec:core-proofs} and the discussion
  above). By definition we have that $|\lexvc| = \ell + w$ and
  $|U_2| \leq |U_1| \leq \ell + w$.

  According to \cref{clm:vc-deg} every vertex in $[\ell]$ is in any
  minimum vertex cover of $F$ and thus contained in $\lexvc$; the lex
  first one. Hence there are only $w$ vertices in
  $\lexvc \cap \set{\ell+1, \ldots, k}$ to be specified. We spend
  $\log \binom{k}{w}$ bits to specify this set. Once we have specified
  this set we know that the lex smallest $\ell$ vertices outside
  $\lexvc$ are contained in $U_1$ by construction. With another
  $\log \binom{k}{w}$ bits we may thus encode the remaining vertices
  of $U_1$. Similarly, for $U_2$, we know that the lex smallest $\ell$
  vertices outside $\lexvc \cup U_1$ are in $U_2$. Hence
  $\log \binom{k}{w}$ bits suffice to specify $U_2$.

  At this point we know the relevant vertices of the core. It remains
  to encode the edges. Some edges are given: all edges from $[\ell]$
  to the rest of the core are present. As such there are at most
  $w \cdot 3(\ell + w)$ many edges left to be specified. The claim follows.
\end{proof}

With these claims at hand we are ready to prove \cref{lem:rect-better}
and \cref{lem:boundary}. We start with \cref{lem:rect-better} restated
here for convenience.

\betterRectangle*

\begin{proof}
  Let $B = \set{\ell +1, \ldots, k}$ and observe that 
  $H \setminus S_{[\ell]} = H[B]$. Further using that for
  $H = S_{[\ell]}$ it holds that
  $\sum_{t\in Q} \chi_{(H \setminus S_{[\ell]})(t)}(G) = |Q|$ we may
  bound
  \begin{align}
    \Big|
    |Q| -
    \sum_{t \in Q}
    \sum_{\substack{H \in \calH_d\\S_{[\ell]} \subseteq H}}
    \chi_{(H \setminus S_{[\ell]})(t)}(G)
    \Big|
    &=
      \Big| \label{eq:Qsimplify}
      \sum_{t \in Q}
      \sum_{w = \ell+1}^{d}
      \sum_{\substack{F \in \img(\core)\\ \vc(F) = w}}
      \sum_{
      \substack{
        H \in \calH(F, E^\star_F)\\
        S_{[\ell]} \subseteq H
      }
      }
      \chi_{H[B](t)}(G)
      \Big|\\
    &=\label{eq:Qsimplify-1}
      \Big|
      \sum_{w_B = 1}^{d-\ell}
      \sum_{\substack{F \in \img(\core)\\ S_{[\ell]} \subseteq F \cup E^\star_F\\ \vc(F[B]) = w_B}}
      \sum_{t \in Q}
      \sum_{
        H \in \calH(F, E^\star_F[B])
      }
      \chi_{H[B](t)}(G)
      \Big|\\
    &\le\label{eq:Qsimplify-2}
      \sum_{w_B = 1}^{d-\ell}
      \sum_{
      \substack{
      F \in \img(\core)\\
      S_{[\ell]} \subseteq F \cup E^\star_F\\
      \vc(F[B]) = w_B
      }
      }
      \Big|
      \sum_{t \in Q}
      \sum_{
        H \in \calH(F, E^\star_F[B])
      }
      \chi_{H[B](t)}(G)
      \Big|
    \eqcomma
  \end{align}
  where we appeal to \cref{clm:vc-star} to obtain
  \refeq{eq:Qsimplify-1} and to the triangle inequality for
  \refeq{eq:Qsimplify-2}.
  Since by \cref{lem:compression} it holds that
  $\big|V\bigl(E(F)\bigr)\big| \leq 3 \cdot \vc(F)$ we may apply
  \cref{lem:good-single-sum} with
  $\big|E\bigl(F[B]\bigr)\big| \le 3 \cdot \vc(F) \cdot
  \vc\bigl(F[B]\bigr) \le 3d \cdot \vc\bigl(F[B]\bigr)$ to the inner
  expressions to obtain the bound
  \begin{align}
    \Big|
    |Q| -
    \sum_{t \in Q}
    \sum_{\substack{H \in \calH_d\\S_{[\ell]} \subseteq H}}
    \chi_{H[B](t)}(G)
    \Big|
    &\le
      |Q| \cdot
      \sum_{w_B = 1}^{d-\ell}
      \sum_{\substack{F \in \img(\core)\\ S_{[\ell]} \subseteq F \cup E^\star_F\\ \vc(F[B]) = w_B}}
      O\big(
      p^{-3d \cdot w_B} \cdot
      (s/10\,k\log n)^{-w_B/4}
      \big) \eqperiod
  \end{align}
  Applying \cref{clm:F-star} along with the fact that
  $\ell + w_B\leq d \leq \etanew D \leq 2\etanew \log n$ allows us to
  conclude
  \begin{align}
    \Big|
    |Q| - 
    \sum_{t \in Q} 
    \sum_{\substack{H \in \calH_d \\ S_{[\ell]} \subseteq H}}
    \chi_{H[B](t)}(G)
    \Big|
    &\le
      O\big(
      |Q| \cdot
      \sum_{w_B = 1}^{d-\ell}
      2^{3w_B(d + \log k)} \cdot 
      (s/10\,n^{24 \etanew}k\log n)^{-w_B/4}
      \big)\\
    &\le
      O
      \big(
      |Q|
      \cdot
      \sum_{w_B = 1}^{d-\ell}
      (s/10\,k^{13}n^{48 \eta}\log n)^{-w_B/4}
      \big)\\
    &\le
      O
      \big(
      |Q|
      \cdot
      n^{-\eps/4}
      \big)
      \eqcomma
  \end{align}
  where we used that $\ell < d$, the bound
  $ s \ge 10\, k^{13}n^{48\eta+\eps} \log n$ and the fact that the final
  sum is a geometric series. This completes the proof of
  \cref{lem:rect-better} modulo \cref{lem:good-single-sum}.
\end{proof}

\label{sec:boundary-sum}

In order to show that the magnitude of the boundary sums is small, we
need to bound the number of core graphs that contain the edge set
$S_{i-1}\disjointunion S_i^{\leq j-1}$. The bound from
\cref{clm:F-star} turns out to be insufficient as it is with respect
to $d - (i-1)$, that is, the vertex cover size outside the set
$[i-1]$; as we can apply \cref{lem:good-single-sum} with
$B = \set{\ell+1, \ldots, k}$ only, we get concentration with respect
to the size of the vertex cover in the set $\set{\ell+1, \ldots,
  k}$. This is a problem as the vertex cover in the set
$\set{\ell+1, \ldots, k}$ may be considerably smaller than in the set
$\set{i, \ldots, k}$. We overcome this difference in parameters by
leveraging the difference in the exponential factors
$p^{|S_{[i-1]} \cup S_i^{\leq j-1}|}$ and $p^{|S_{[\ell]}|}$ as
follows.

\newcommand{\setB}{B}
\newcommand{\sizeinC}{c}

\begin{claim}\label{claim:bound-F-boundary-sum}
  Let $w_B, d, i, j,\ell \in [k]$, $i\leq \ell$,
  $\setB = \set{\ell + 1, \ldots, k}$,
  $E_0 = S_{[i-1]} \disjointunion S_i^{\leq j-1}$ and
  $E_1 = S_{[\ell]} \setminus E_0$. It holds that
  \begin{align*}
    \sum_{
    \substack{
    F \in \img(\core)\\
    F \in \calH_d(\set{i,j})\\
    E_0 \subseteq F \cup E^\star_F\\
    \vc(F[\setB]) = w_B
    }
    }
    2^{-|E_1 \setminus E^\star_F|}
    &
      \leq
      3d^2
      \cdot
      \binom{k}{d}^3
      \cdot
      2^{- k + 3dw_B+ 4d + \ell} \eqperiod
  \end{align*}
\end{claim}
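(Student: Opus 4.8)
The plan is a weighted encoding argument: bound the number of cores carrying a prescribed amount of structure, while using the weight $2^{-|E_1\setminus E^\star_F|}$ to pay for the exponentially many choices of edges. First I would dispose of a degenerate case. Suppose $j\geq d+3$ and $F$ is a core contributing to the sum, so $S_i^{\leq j-1}\subseteq E_0\subseteq F\cup E^\star_F$. Then in the graph with edge set $E(F)\cup E^\star_F$ the vertex $i$ has degree at least $j-2\geq d+1$; since by \cref{lem:compression} and \cref{cl:idempotent} this graph has the same lexicographically first minimum vertex cover $\lexvc$ as $F$, with $|\lexvc|=\vc(F)=d$, \cref{clm:vc-deg} forces $i\in\lexvc$, so $\lexvc$ covers $\set{i,j}$, contradicting $F\in\calH_d(\set{i,j})$. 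Hence for $j\geq d+3$ the index set is empty and the bound is trivial, and we may assume $j\leq d+2$.

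Second, I would record the structure shared by all cores $F$ in the sum. Write $\lexvc,U_1,U_2$ for the sets produced by \cref{alg:core} (which by \cref{cl:idempotent} do not depend on the chosen representative of $\core^{-1}(F)$), set $V^\star=V(E(F))=\lexvc\cup U_1\cup U_2$ and $Z=[k]\setminus V^\star$. Repeating the degree argument above with the full stars $S_{[i-1]}\subseteq E_0$ in place of $S_i^{\leq j-1}$ shows $[i-1]\subseteq\lexvc$. Since $E^\star_F\subseteq V^\star\times Z$ by \cref{lem:compression}, every edge of $E_0$ with both endpoints in $V^\star$ must already lie in $E(F)$ (it cannot be supplied by $E^\star_F$), while every edge of $S_{[i-1]}$ joining $[i-1]$ to $Z$ must lie in $E^\star_F$; thus $E(F)\supseteq E_0\cap\binom{V^\star}{2}$ and $E^\star_F\supseteq[i-1]\times Z$. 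I would also use \cref{clm:prop-core}, \cref{it:edge-outside}, to see that every edge of $E^\star_F$ is incident to $\lexvc\cap V(M_2)$, a set of size $|U_2|\leq d$ containing $[i-1]$.

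Third, split $E_1=S_{[\ell]}\setminus E_0$ by how many endpoints lie in $V^\star$: $E_1=E_1^{\star\star}\disjointunion E_1^{\star Z}\disjointunion E_1^{ZZ}$. Because $E^\star_F\subseteq V^\star\times Z$ and $E(F)\subseteq\binom{V^\star}{2}$, the edges of $E_1\setminus E^\star_F$ are exactly $E_1^{\star\star}\cup(E_1^{\star Z}\setminus E^\star_F)\cup E_1^{ZZ}$, so
\begin{equation*}
  2^{-|E_1\setminus E^\star_F|}=2^{-|E_1^{\star\star}|}\cdot 2^{-|E_1^{\star Z}\setminus E^\star_F|}\cdot 2^{-|E_1^{ZZ}|}\eqcomma
\end{equation*}
and the exponents of the first and last factors depend only on $V^\star$. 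To enumerate the cores, first choose $\lexvc$ (knowing $[i-1]\subseteq\lexvc$), $U_1$ and $U_2$ — at most $\binom{k}{d}^3$ choices — which fixes $V^\star$, $Z$, $E_1^{\star\star}$ and $E_1^{ZZ}$; then choose $E(F)$, consisting of the forced set $E_0\cap\binom{V^\star}{2}$, a fill of the $|E_1^{\star\star}|$ edge-slots inside $V^\star$, and at most $3d\,w_B$ further edges spanning a size-$w_B$ vertex cover of $F[B]\subseteq\binom{V^\star\cap B}{2}$ (at most $2^{3d}$ choices for that cover, since $|V^\star\cap B|\leq 3d$). The $2^{|E_1^{\star\star}|}$ ways of filling the free slots are exactly cancelled by the factor $2^{-|E_1^{\star\star}|}$, so
\begin{equation*}
  \sum_{F}2^{-|E_1\setminus E^\star_F|}\;\leq\;\binom{k}{d}^3\cdot 2^{3dw_B+3d}\cdot\max_{F}\,2^{-|E_1^{ZZ}|-|E_1^{\star Z}\setminus E^\star_F|}\eqperiod
\end{equation*}
Using $|E_1|=|S_{[\ell]}|-|E_0|=(\ell-i+1)\bigl(k-(\ell+i)/2\bigr)-\max(0,j-1-i)\geq k-\ell-2d$, it then suffices to show $|E_1^{\star\star}|+|E_1^{\star Z}\cap E^\star_F|$ is small — equivalently, that $E^\star_F$ has few edges incident to $\set{i,\dots,\ell}$ — after which the stated $3d^2\binom{k}{d}^3 2^{-k+3dw_B+4d+\ell}$ bound follows by folding the remaining polynomial and $2^{O(d)+\ell}$ slack into the constants.

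The last point is the crux. A priori $E^\star_F$ could contain $\Theta(|U_2|\cdot|Z|)=\Theta(dk)$ edges, and if many of them were incident to $\set{i,\dots,\ell}$ the weight would be worthless; the guaranteed block $[i-1]\times Z$ already shows $E^\star_F$ is genuinely large, but its edges are incident to $[i-1]$ and hence absent from $E_1$. Ruling out a large overlap $E_1\cap E^\star_F$ (and simultaneously controlling $|E_1^{\star\star}|$) requires the explicit description of $E^\star_F$ from \cref{sec:explicit}: an edge $\set{v,z}$ with $v\in\lexvc\cap V(M_2)$ lies in $E^\star_F$ only when $z$ is lexicographically late enough that adjoining $\set{v,z}$ neither enlarges nor lexicographically improves $U_1$ or $U_2$, and combining this with the maximality of $U_1,U_2$ via augmenting-path arguments in the spirit of the proof of \cref{cl:core-vc} forces all but a controlled number of the $E^\star_F$-edges incident to $\set{i,\dots,\ell}$ to be absent. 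Carrying out this combinatorial analysis of $E^\star_F$ — and verifying that the resulting slack is indeed only $O(d)+\ell$ rather than the naive $O(d^2)$ — is the step I expect to require the most care.
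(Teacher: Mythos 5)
Your framework — enumerate cores by choosing $\lexvc,U_1,U_2$ at cost $\binom{k}{d}^3$, then cancel the $2^{|E_1^{\star\star}|}$ edge-fill choices inside $V\bigl(E(F)\bigr)$ against the weight $2^{-|E_1^{\star\star}|}$, and charge the remaining $\le 3dw_B + O(d)$ edges to the $2^{3dw_B}$ term — is essentially the same accounting the paper performs (the paper phrases the cancellation as $2^{\binom{c}{2}-\binom{c-w_C}{2}}$ against $2^{-\binom{c}{2}+\binom{c-(w_C-w_B-1)}{2}}$, but the effect is the same). The problem is the step you yourself flag as the crux and leave undone: showing that the edges of $E_1$ with an endpoint outside $V\bigl(E(F)\bigr)$ contribute a further factor $2^{-(k-\ell)+O(d)}$. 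Without that, nothing in your argument produces the $2^{-k+\ell}$ in the claimed bound, so the proof is incomplete exactly where the content lies.

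Moreover, the target you propose for closing the gap — that $E^\star_F$ has few edges incident to all of $\set{i,\dots,\ell}$, to be extracted from the explicit characterization of $E^\star_F$ in \cref{sec:explicit} via augmenting paths — is both stronger than needed and, as far as I can see, false in general: when $i<\ell$, a vertex $v\in\set{i+1,\dots,\ell}\cap\lexvc$ is subject to no boundary condition and can have $\Omega(k)$ edges of $E^\star_F$ into $Z$, so $|E_1^{\star Z}\cap E^\star_F|$ need not be $O(d)+\ell$. What is actually needed, and what the paper proves, concerns vertex $i$ alone: let $\widetilde N(i)$ be the neighborhood of $i$ in $F\cup E^\star_F$; then $\set{i}\times\bigl(B\setminus\widetilde N(i)\bigr)\subseteq E_1\setminus E^\star_F$, and $|\widetilde N(i)|\le d$ because (a) by the definition of a core, every minimum vertex cover of $F$ is a vertex cover of $F\cup E^\star_F$, so it must contain $\widetilde N(i)$ unless it contains $i$, and (b) $F\in\calH_d(\set{i,j})$ rules out $i$ lying in any minimum vertex cover of $F$. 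This three-line argument supplies the missing $k-\ell-d$ edges of $E_1\setminus E^\star_F$ directly, with no recourse to the explicit description of $E^\star_F$; your proposal needs this (or an equivalent) to be a proof.
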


\begin{proof}
  We first argue that the set over which we sum is not so large and 
  then provide a lower bound on $|E_1 \setminus E^\star_F|$. These
  two bounds will together imply the claim.
  
  Let $C = \set{i, \ldots, k}$. Given parameters $w_C\leq d$ and
  $\sizeinC \leq 3d$, we consider cores $F \in \img(\core)$ such that
  $F \in \calH_d(\set{i,j})$, $E_0 \subseteq F \cup E^\star_F$,
  $\vc(F[\setB]) = w_B$, $\vc(F[C]) = w_C$ and
  $|V\bigl(E(F)\bigr) \cap C| = \sizeinC$.
  We claim that there are at most
  \begin{align}\label{eq:Fcount}
    \binom{k}{d}^3 \cdot 2^{\binom{\sizeinC}{2} - \binom{\sizeinC-w_C}{2}} 
  \end{align}
  such core graphs.  Indeed, it is enough to specify the $w_C\leq d$
  vertices in a minimum vertex cover of $F[C]$, the other at most $2d$
  vertices in $V\bigl(E(F)\bigr) \cap C$, and the edges between
  vertices in the identified minimum vertex cover of $F[C]$ and
  $V\bigl(E(F)\bigr) \cap C$.

  In order to obtain a lower bound on $|E_1 \setminus E^\star_F|$,
  note that the edge set $E^\star_F$ consists only of edges with
  exactly one endpoint in $V\bigl(E(F)\bigr)$. Hence all edges that
  are in $E_1$ and that have both endpoints in $V\bigl(E(F)\bigr)$ are
  also in $E_1 \setminus E^\star_F$. These include edges between
  $V\bigl(E(F)\bigr) \cap \set{i+1, \ldots, \ell}$ and
  $V\bigl(E(F)\bigr) \cap C$.  Note that all edges in
  $F[C] \setminus F[\setB]$ have one endpoint in
  $V\bigl(E(F)\bigr) \cap \set{i, \ldots, \ell}$ and thus
  \begin{align}
        |V\bigl(E(F)\bigr) \cap  \set{i, \ldots, \ell}|
      \ge
        \vc(F[C] \setminus F[\setB])
      \ge
        \vc(F[C]) - \vc(F[\setB])
      =
        w_C - w_B \eqperiod
  \end{align}
  This implies that $ |V\bigl(E(F)\bigr) \cap  \set{i+1, \ldots, \ell}| \ge w_C - w_B - 1$ and therefore
  \begin{align}
    |E_1 \setminus E^\star_F|
    \geq
    \binom{\sizeinC}{2} - \binom{\sizeinC - (w_C-w_B-1)}{2}
    \eqperiod
  \end{align}
  This bound, however, is not good enough for us. We need to also
  count edges in $E_1 \setminus E^\star_F$ that are adjacent to $i$
  and that we have not considered yet.  Note that these include all
  the edges in $(E_1 \setminus E^\star_F) \cap (\set{i} \times B)$.
  Let $\widetilde{N}(i)$ be the neighbors of $i$ in the graph
  $F \cup E^\star_F$.  In particular, we have that
  $[j-1] \subseteq \widetilde{N}(i)$ since
  $S_i^{\leq j-1} \subseteq E_0 \subseteq F \cup E^\star_F$.  This
  implies that
  $\set{i} \times (B\setminus \widetilde{N}(i)) \subseteq E_1
  \setminus E^\star_F$ and therefore
  $|(E_1 \setminus E^\star_F) \cap (\set{i} \times B)| \geq |\set{i}
  \times (B\setminus \widetilde{N}(i))| \geq |B| -
  |\widetilde{N}(i)|$.  Now note that it must be the case that
  $\widetilde{N}(i)$ is in every minimum vertex cover of $F$;
  otherwise we would have a minimum vertex cover of $F$ that either
  contains $i$ (and then $F$ would not be in the
  $(d, \set{i,j})$-boundary) or is not a vertex cover of the graph
  $F \cup E^\star_F$.  This implies that $|\widetilde{N}(i)| \leq d$.
  We may thus conclude that
    \begin{align}\label{eq:E1bound}
    |E_1 \setminus E^\star_F|
    \geq
    \binom{\sizeinC}{2} -
    \binom{\sizeinC - (w_C-w_B-1)}{2} +
    k - \ell -
    d
    \eqperiod
  \end{align}

  Using the bounds in~\cref{eq:Fcount} and~\cref{eq:E1bound} we may
  write
  \begin{align}
    \sum_{\mathclap{
    \substack{
    F \in \img(\core)\\
    F \in \calH_d(\set{i,j})\\
    E_0 \subseteq F \cup E^\star_F\\
    \vc(F[\setB]) = w_B\\
    \vc(F[C]) = w_C\\
    |V(E(F)) \cap C| = \sizeinC
    }
    }
    } \,
    2^{-|E_1 \setminus E^\star_F|}
    &\leq
    \binom{k}{d}^3
    \cdot
    2^{\binom{\sizeinC}{2} - \binom{\sizeinC - w_C}{2}}
    \cdot
    2^{- \binom{\sizeinC}{2} + \binom{\sizeinC - (w_C-w_B-1)}{2} - k +\ell + d}\\
    &=
    \binom{k}{d}^3
    \cdot
    2^{(\sizeinC - w_C)(w_B +1) - \binom{w_B+1}{2} - k + \ell + d} 
    \eqperiod 
  \end{align}
  Since $\sizeinC \leq 3d$, we have that
  $(\sizeinC - w_C)(w_B +1)- \binom{w_B +1}{2} \le 3d(w_B+1)$.  Finally, we need to sum over
  all possible values of $w_C$ and $\sizeinC$. Since $w_C \leq d$ and
  $\sizeinC \leq 3d$, the statement follows.
\end{proof}

In the remainder of this section we prove that boundary sums are
small. For convenience we restate the claim.
\pagebreak
\boundarySums*

\begin{proof}
Let us first consider the case $j \ge d+3$. For all such $j$ it holds
that all graphs $H$ we sum over contain the edges $S_{[i-1]} \cup
S_i^{\le d+2}$. As vertex $i$ has degree at least $d+1$ and $\vc(H) =
d$, according to \cref{clm:vc-deg}, it holds that $i$ is contained in
any minimum vertex cover of $H$. This implies that there are no graphs
$H$ satisfying $S_{[i-1]} \cup S_i^{\le d+2} \subseteq H$ that are
also in the $(d, \set{i,j})$-boundary. Thus the considered sum is
empty for $j \ge d+3$.

\newcommand{\valueT}{T}
\newcommand{\setC}{B}

We now assume $j \le d+2$.
We can also assume $j>i$ as otherwise the edge $\set{i,j}$ is in any
graph we consider and hence cannot be a boundary edge. Recall that
$i\le \ell$, and let $E_0 = S_{[i-1]} \cup S_i^{\leq j-1}$,
$E_1 = S_{[\ell]} \setminus
E_0$ 
and $\setC = \set{\ell + 1, \ldots k}$.
Similar to the proof of
\cref{lem:rect-better}, though this time we rely on
\cref{prop:core-boundary}, we may rewrite the boundary sum as
\begin{align}
    \sum_{t \in Q}
    \sum_{
    \substack{
    H \in \calH_d(\set{i,j})\\
    E_0 \subseteq H 
    }
    }
    \chi_{(H\setminus E_0)(t)}(G) 
    &=
    \sum_{t \in Q}
    \sum_{
    \substack{
    F \in \img(\core)\\
    F \in \calH_d(\set{i,j})
    }
    }
    \sum_{
    \substack{
    H \in \calH(F, E^\star_F)\\
    E_0 \subseteq H 
    }
    }
    \chi_{(H \setminus E_0)(t)}(G)
  \\
  &=
  \sum_{
  \substack{
  F \in \img(\core)\\
  F \in \calH_d(\set{i,j})\\
  E_0\subseteq F \cup E^\star_F
  }
  }
  \sum_{t \in Q}
  \sum_{
  H \in \calH(F, E^\star_F \setminus E_0)
  }
  \chi_{(H \setminus E_0)(t)}(G) 
  \eqperiod
\end{align}
We now focus on the inner two sums.  Since $S_{[\ell]} = E_0
\disjointunion E_1$ we have that $\chi_{(H \setminus E_0)(t)}(G) =
\chi_{(H \cap E_1)(t)}(G) \cdot \chi_{(H \setminus
  S_{[\ell]})(t)}(G)$.  By definition of a good rectangle, for all $t
\in Q$, the edges $S_{[\ell]}(t)$ are present in $G$ thus
$\chi_{(H \cap E_1)(t)}(G)
=
\left((1-p)/p\right)^{|H \cap E_1|}$.
Let $\valueT = |E^\star_F \cap E_1|$.  Using that $H\setminus
S_{[\ell]} = H[\setC]$ and
$E^\star_F\setminus S_{[\ell]} = E^\star_F[\setC]$
we can derive that
\begin{align}
    \sum_{t \in Q}
    \sum_{
    H \in \calH(F, E^\star_F \setminus E_0)
    }
    \chi_{(H \setminus E_0)(t)}(G)
  &=
  \sum_{t \in Q}
  \sum_{
  H \in \calH(F, E^\star_F \setminus E_0)
  }
  \chi_{(H \cap E_1)(t)}(G) \cdot  \chi_{(H \setminus S_{[\ell]})(t)}(G) 
  \\
    &=
	 \sum_{t \in Q}
	 \sum_{
	 H \in \calH(F, E^\star_F \setminus S_{[\ell]})
	 }
	\sum_{\nu = 0}^{\valueT}
	\binom{\valueT}{\nu} \left(\frac{1-p}{p}\right)^{\nu}
	\chi_{H\setminus S_{[\ell]}(t)}(G)
    \\
    &=
	\sum_{t \in Q}
	\sum_{
	H \in \calH(F, E^\star_F[\setC])
	}
	\chi_{H[\setC](t)}(G)
	\sum_{\nu = 0}^{\valueT}
	\binom{\valueT}{\nu} \left(\frac{1-p}{p}\right)^{\nu}
	\\
	&=
	p^{-\valueT}
	\sum_{t \in Q}
	\sum_{
	H \in \calH(F, E^\star_F[\setC])
	}
   \chi_{H[\setC](t)}(G) \eqperiod
\end{align}
Appealing to the triangle inequality we may thus bound
\begin{align}
  \begin{split}
  \Big|
    \sum_{t \in Q}
    \sum_{
    \substack{
    H \in \calH_d(\set{i,j})\\
    E_0 \subseteq H 
    }
    }
    \chi_{(H\setminus E_0)(t)}(G)
  \Big|
  &= 
 \Big|
  \sum_{
  \substack{
  F \in \img(\core)\\
  F \in \calH_d(\set{i,j})\\
  E_0\subseteq F \cup E^\star_F
  }
  }
	p^{-|E^\star_F \cap E_1|}
	\sum_{t \in Q}
	\sum_{
	H \in \calH(F, E^\star_F[\setC])
	}
	\chi_{H[\setC](t)}(G) 
  \Big|
  \end{split}\\
  &\le
  \sum_{
  \substack{
  F \in \img(\core)\\
  F \in \calH_d(\set{i,j})\\
  E_0 \subseteq F \cup E^\star_F
  }
  }
	p^{-|E^\star_F \cap E_1|}
    \Big|
	\sum_{t \in Q}
	\sum_{
	H \in \calH(F, E^\star_F[\setC])
	}
	\chi_{H[\setC](t)}(G) 
    \Big| \eqperiod 
   \label{eq:boundary-restated}
\end{align}
Fix $F$ and let $w_B =  \vc(F[\setC])$.
Note that the characters we sum in
\cref{eq:boundary-restated} solely depend on edges between blocks of
size at least $s$. We may thus apply \cref{lem:good-single-sum} to the
inner expressions and, using that
$p^{-|E(F[\setC])|}\leq p^{-{3 \cdot \vc(F) \cdot \vc(F[\setC])}} = p^{-{3dw_B}}
\leq n^{6\eta w_B}$ and $s \ge 10\,k^2n^{48\eta + \eps} \log n$, we obtain
\begin{align}\label{eq:boundary-single}
  \Big| 
  \sum_{t \in Q}
  \sum_{
  H \in \calH(F, E^\star_F[\setC])
  } 
  \chi_{H[\setC](t)}(G)
  \Big|
  &\le h(w_B) =
    O\big(|Q| \cdot (n^{24\etanew + \eps} k)^{-w_B/4}\big)
    \eqperiod
\end{align}

From \cref{clm:F-star} we know that there are at most
$2^{3(d-(i-1))(d + \log k)}$ many cores we sum over. The problem is
that this bound does not depend on $w_B$ and hence applying
\cref{clm:F-star} to \cref{eq:boundary-restated} does not readily
result in the desired bound.

In order to bound the number of cores, we partition them according to
the size $w_B$ of their minimum vertex cover.  Note that by
\cref{clm:vc-deg}, if
$S_{[i-1]} \subseteq E_0 \subseteq F\cup E^\star_F$ it must be the
case $S_{[i-1]} \subseteq F$ by the definition of core graphs.
Therefore, for such graphs $F$, we have that
$\vc(F[\setC]) = \vc(F \setminus S_{[\ell]}) \leq \vc(F \setminus
S_{[i-1]}) = d - (i-1)$, where the last equality follows from
\cref{clm:vc-star}.  Using that
$|S_{[\ell]}| = |E_0| + |E_1 \cap E^\star_F | + |E_1 \setminus
E^\star_F|$ since $S_{[\ell]} = E_0 \disjointunion E_1$, we can apply
\cref{claim:bound-F-boundary-sum} to obtain that
\begin{align}
 p^{-|E_0|} \cdot \Big|
    \sum_{t \in Q}
    \sum_{
    \substack{
    H \in \calH_d(\set{i,j})\\
    E_0 \subseteq H 
  }
  }
  \chi_{(H\setminus E_0)(t)}(G)
  \Big|
  &\leq 
    p^{-|E_0|}  \cdot
    \sum_{
    \substack{
    F \in \img(\core)\\
  F \in \calH_d(\set{i,j})\\
  E_0 \subseteq F \cup E^\star_F
  }
  }
  p^{-|E^\star_F \cap E_1|} \cdot h(\vc(F[\setC]))\\
  &=
    p^{-|S_{[\ell]} |}
    \sum_{w_B=d-\ell}^{d-(i-1)}
    h(w_B)
    \cdot
    \sum_{
    \substack{
    F \in \img(\core)\\
  F \in \calH_d(\set{i,j})\\
  E_0 \subseteq F \cup E^\star_F\\
  \vc(F[\setC]) = w_B
  }
  }
	p^{|E_1\setminus E^\star_F|}\\
  &\le\label{eq:boundary-final}
    p^{-|S_{[\ell]} |}
    3d^2
    \binom{k}{d}^3
    2^{- k + 5d}
    \sum_{w_B=d-\ell}^{d-(i-1)}
    2^{3dw_B}
    \cdot
    h(w_B) \eqcomma
\end{align}
where
$h(w_B) = O \big( |Q| \cdot (n^{24\etanew + \eps} k)^{-w_B/4} \big)$,
we use that $\ell < d$ and, in order to apply
\cref{claim:bound-F-boundary-sum}, we use that $p\leq 1/2$. Note that,
as $d \leq \eta D\leq \eta k < k/50$, we have that
$\binom{k}{d}^3 \le \big(\frac{4k}{d}\big)^{3d} < 200^{3k/50} < 2^{k/2}$.
This allows us to bound
$3d^2 \binom{k}{d}^3 2^{-k+5d} = O(1)$.

Finally, because $2^{3d} \leq n^{24\eta /4}$ and the sum
in \refeq{eq:boundary-final} is a geometric series with common ratio $2^{3d}
    \cdot
     (n^{24\etanew + \eps} k)^{-1/4} \leq  (n^{\eps} k)^{-1/4}$ and coefficient
$O\big(|Q| \cdot (n^{\eps} k)^{(d-\ell)/4}\big)$ we may conclude
that
\begin{align}
   p^{-|E_0|} \cdot \Big|
    \sum_{t \in Q}
    \sum_{
    \substack{
    H \in \calH_d(\set{i,j})\\
    E_0 \subseteq H 
  }
  }
  \chi_{(H\setminus E_0)(t)}(G)
  \Big|
  \leq
  O\big(
  p^{-|S_{[\ell]} |}
  \,
  |Q|
  \,
  n^{-\eps(d-\ell)/4}
  \big) \eqcomma
\end{align}
as claimed.
\end{proof}


\subsection{Bounds for All Good Rectangles}
\label{sec:bound-dep}

This section is devoted to the proof of \cref{lem:good-single-sum}.
We rely on the following lemma.

\begin{lemma}\label{lem:split-set}
  Let $a,b,c,f \ge 1$ be integers such that $b > 12 \ln (4a)$ and let
  $\gamma$ be a positive real number such that
  $(3a \ln(4af)/c)^{1/2} < \gamma < 1$.
  Let $U$ be a set satisfying $|U| = a \cdot b$ and
  $\calF \subseteq 2^U$ be a family of subsets over $U$ of size
  $|\calF| = f$, where each $F\in\calF$ is of size at least $c$. Then
  there is a partition of $U = \bigcup_{i=1}^{a} U_i$ such that
  \begin{enumerate}
  \item $b/2 \le |U_i| \le 3b/2$ for all $1 \le i \le a$,
    and \label{item:ui}
    
  \item for each set $F \in \calF$ and $1 \le i \le a$ it holds that
    $(1 - \gamma) |F|/a \le |F \cap U_i| \le
    (1+\gamma)|F|/a$. \label{item:F}
  \end{enumerate}
\end{lemma}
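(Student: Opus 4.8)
The plan is to prove \cref{lem:split-set} by the probabilistic method: assign each element of $U$ independently and uniformly at random to one of the $a$ parts $U_1, \dots, U_a$, and show that with positive probability the resulting random partition satisfies both \cref{item:ui} and \cref{item:F} simultaneously.

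First I would control the block sizes. Fix $i \in \set{1, \dots, a}$; then $|U_i|$ is a sum of $|U| = ab$ independent $\set{0,1}$-valued random variables, each equal to $1$ with probability $1/a$, so $\E[|U_i|] = b$. Applying the multiplicative Chernoff bound with deviation parameter $\delta = 1/2$ (and using the weaker exponent $\delta^2\mu/3$ to cover both tails at once) gives $\Pr\bigl[\,|U_i| \notin [b/2, 3b/2]\,\bigr] \le 2\exp(-b/12)$. A union bound over the $a$ choices of $i$, combined with the hypothesis $b > 12 \ln(4a)$, shows that the probability that \cref{item:ui} fails is strictly less than $2a \cdot \exp(-\ln(4a)) = 1/2$.

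Next I would control the intersections with the members of $\calF$. Fix $F \in \calF$ and $i \in \set{1, \dots, a}$; then $|F \cap U_i|$ is a sum of $|F| \ge c$ independent $\set{0,1}$-valued random variables, each $1$ with probability $1/a$, so $\mu := \E[|F \cap U_i|] = |F|/a \ge c/a$. Since $0 < \gamma < 1$, the multiplicative Chernoff bound yields $\Pr\bigl[\,\bigl|\,|F \cap U_i| - \mu\,\bigr| \ge \gamma \mu\,\bigr] \le 2\exp(-\gamma^2 \mu / 3) \le 2\exp\bigl(-\gamma^2 c / (3a)\bigr)$, where the last step uses $\mu \ge c/a$ to make the estimate uniform over $\calF$. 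Union-bounding over the $f$ sets $F \in \calF$ and the $a$ indices $i$, and using the hypothesis $\gamma^2 > 3a \ln(4af)/c$, the probability that \cref{item:F} fails is strictly less than $2af \cdot \exp(-\ln(4af)) = 1/2$.

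Combining the two bounds, a random partition fails \cref{item:ui} or \cref{item:F} with probability strictly less than $1/2 + 1/2 = 1$, so some partition satisfies both, which proves the lemma. I do not expect a genuine obstacle here; the only point requiring care is to match the constants in the Chernoff exponents to the precise numerical form of the hypotheses $b > 12\ln(4a)$ and $(3a\ln(4af)/c)^{1/2} < \gamma < 1$ — in particular, using the bound $\exp(-\delta^2\mu/3)$ for two-sided deviations and invoking $|F| \ge c$ to make the estimate uniform over $\calF$ — so that both union bounds close and give a total failure probability below $1$. The standing assumptions $a, b, c, f \ge 1$ guarantee $\ln(4a), \ln(4af) > 0$, so that these manipulations are meaningful.
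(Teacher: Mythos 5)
Your proposal is correct and matches the paper's proof essentially verbatim: the paper also colors each element of $U$ uniformly and independently with one of $a$ colors, applies the two-sided multiplicative Chernoff bound with exponent $\delta^2\mu/3$ to get failure probabilities $2a\exp(-b/12) < 1/2$ for the block sizes and $2af\exp(-\gamma^2 c/3a) < 1/2$ for the intersections, and concludes by a union bound. No issues.
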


\begin{proof}
  We use the probabilistic method to show that a partition 
  as claimed exists.
  Independently color each element in $U$ by a color in
  $\set{1, 2, \ldots, a}$. Let $U_i$ denote color class $i$. 

  Let us argue \cref{item:ui}. By the multiplicative Chernoff bound we
  have that for a single $i\in [a]$, \cref{item:ui} holds except with
  probability $2 \cdot \exp(-b/12)$. A union bound over the $a$ sets
  establishes that except with probability
  $2a \cdot \exp(-b/12) < 1/2$ \cref{item:ui} holds.

  To argue \cref{item:F} we again appeal to the multiplicative
  Chernoff bound to see that for a fixed~$F$ and $i$ the property
  holds except with probability
  $2 \cdot \exp(-\gamma^2 |F|/3a) \leq 2 \cdot \exp(-\gamma^2 c/3a) <
  1/2af $. A union bound over the family $\calF$ and $i \in [a]$ shows
  that \cref{item:F} holds except with probability less than~$1/2$.

  A final union bound shows the existence of a partition as claimed.
\end{proof}

As a corollary we can show that any good rectangle can be partitioned
into many small good rectangles with only a slight loss in
parameters.

\begin{corollary}
  \label{cor:cut-up-C}
  Let $n, k \in \N$ and $s,\eta,D,d \in \R^+$ be such that
  $d \leq \eta D$, $k \le n$ and
  $s = 9\,k^2 n^{2\eta} d \ln n$, let $p=n^{-2/D}$ and
  $R\subseteq [k]$.
  If $Q$ is a $(s, 1/k, p, d, R)$-good rectangle and
  $T \subseteq [k] \setminus R$, then there is a partition of $Q$
  into a set $\calQ$ of
   at most $(n/s)^{|T|}$ rectangles such that each $Q' \in \calQ$ 
  is $(s, 3/k, p, d, R)$-good and  
  satisfies $|Q'_i| \le 4 s$ for all $i \in T$.
\end{corollary}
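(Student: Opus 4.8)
The plan is to build $\calQ$ as a product of partitions of the individual blocks $Q_i$ with $i \in T$, each partition produced by \cref{lem:split-set}, while leaving all blocks outside $T$ untouched. Fix $i \in T$. If $|Q_i| \le 4s$ we do not split $Q_i$; otherwise we pick the number of parts $a_i$ so that $b_i := |Q_i|/a_i$ lies in the window $[2s, \tfrac{8}{3}s]$, which ensures $[b_i/2, 3b_i/2] \subseteq [s,4s]$. (This is possible once $|Q_i|$ is not too small; the narrow range of $|Q_i|$ where no single integer $a_i$ works is handled by an ad hoc splitting into $O(1)$ parts of the right size.) As the family fed to the lemma we take
\[
  \calF_i \;=\; \Set{\, N^\cap(t, Q_i) \;:\; t \text{ a tuple of size } \le d \text{ with blocks in } [k]\setminus(R\cup\set{i}) \,}.
\]
Since $Q$ is $(s,1/k,p,d,R)$-good and $d \le \eta D$ forces $p^{d} = n^{-2d/D} \ge n^{-2\eta}$, every $F \in \calF_i$ has $|F| \ge (1-1/k)\,p^{d}\,|Q_i| \ge (1-1/k)\,n^{-2\eta}\,|Q_i| =: c$, while $|\calF_i| \le (d+1)(kn)^{d} =: f_i$. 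Applying \cref{lem:split-set} to $U = Q_i$ with these $a_i$, $b_i$, $c$, family $\calF_i$, and a deviation parameter $\gamma$ equal to a sufficiently small multiple of $1/k$, we obtain $Q_i = \bigsqcup_j U^{(i)}_j$, and $\calQ$ is the set of all products $\bigl(\bigtimes_{i\in T} U^{(i)}_{j_i}\bigr) \times \bigl(\bigtimes_{i\notin T} Q_i\bigr)$.

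The bookkeeping is then straightforward. Property~\ref{item:ui} of \cref{lem:split-set} gives $|U^{(i)}_j| \in [b_i/2, 3b_i/2] \subseteq [s,4s]$, so every $Q' \in \calQ$ satisfies $|Q'_i| \le 4s$ for $i \in T$; since each part has size $\ge s$ we get $a_i \le |Q_i|/(2s) \le n/s$ and hence $|\calQ| = \prod_{i\in T} a_i \le (n/s)^{|T|}$. To see that each $Q' \in \calQ$ is $(s,3/k,p,d,R)$-good: the first two properties of \cref{def:good} are inherited, as every block of $Q'$ sits inside the corresponding block of $Q$ (and the non-$R$ blocks have size $\ge s$ by the above). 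For the third property, fix $S \subseteq [k]\setminus R$ with $|S|\le d$ and a target $i' \notin R\cup S$. If $i' \notin T$ then $Q'_{i'}=Q_{i'}$ and every $t \in Q'_S \subseteq Q_S$ has $|N^\cap(t,Q'_{i'})| \in (1\pm 1/k)p^{|t|}|Q'_{i'}|$ directly from the goodness of $Q$. If $i' \in T$, then $t$ uses only blocks in $[k]\setminus(R\cup\set{i'})$, so $N^\cap(t,Q_{i'}) \in \calF_{i'}$, and combining property~\ref{item:F} of \cref{lem:split-set} with the $(1/k)$-goodness of $Q$ yields
\begin{align*}
  \frac{\bigl|N^\cap(t,Q'_{i'})\bigr|}{\bigl|Q'_{i'}\bigr|}
  \;=\;
  \frac{\bigl|N^\cap(t,Q_{i'}) \cap U^{(i')}_j\bigr|}{\bigl|U^{(i')}_j\bigr|}
  \;\in\;
  \frac{(1\pm\gamma)(1\pm 1/k)}{1\mp\gamma}\,p^{|t|}
  \;\subseteq\;
  (1\pm 3/k)\,p^{|t|},
\end{align*}
the last inclusion holding precisely because $\gamma$ is a small enough multiple of $1/k$.

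The crux, and the main obstacle, is checking that \cref{lem:split-set} is genuinely applicable with such a small $\gamma$. The condition $b_i > 12\ln(4a_i)$ is immediate from $b_i \ge 2s \gg \log n$. The condition $\gamma > \bigl(3 a_i \ln(4 a_i f_i)/c\bigr)^{1/2}$ is the delicate one: reaching pieces of size $[s,4s]$ forces $a_i$ to be of order $|Q_i|/s$, and a larger $a_i$ forces $\gamma$ to be larger, whereas we can afford only $\gamma = O(1/k)$ if the common-neighbourhood parameter is to worsen merely from $1/k$ to $3/k$. From $a_i \le |Q_i|/(2s)$ and $c = (1-1/k)n^{-2\eta}|Q_i|$ we get $a_i/c \le n^{2\eta}/\bigl(2s(1-1/k)\bigr)$, and with $\ln(4 a_i f_i) = O(d\log n)$ this makes $3 a_i \ln(4 a_i f_i)/c = O\bigl(n^{2\eta} d\log n / s\bigr)$, which lies below the permitted $\gamma^{2} = \Theta(1/k^{2})$ exactly because of the hypothesis $s = 9\,k^{2} n^{2\eta} d\ln n$. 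Thus $\gamma = \Theta(1/k)$ works, and the only remaining point — the narrow range of $|Q_i|$ with no good integer $a_i$ — is dispatched by the extra splitting step mentioned above.
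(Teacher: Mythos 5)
Your proof is correct and follows essentially the same route as the paper's: apply \cref{lem:split-set} block by block for $i \in T$ with $U = Q_i$, parts of size $\approx 2s$, $\gamma = \Theta(1/k)$, and the family of common neighborhoods of $(\le d)$-tuples, whose sizes are lower-bounded via the goodness of $Q$ and $p^d \ge n^{-2\eta}$, so that the hypothesis $\gamma > (3a\ln(4af)/c)^{1/2}$ is exactly what the formula $s = 9k^2n^{2\eta}d\ln n$ guarantees. You are in fact somewhat more careful than the paper about the integrality of the number of parts and about tracking how the deviation parameter combines with the $(1/k)$-goodness of $Q$ to give $(3/k)$-goodness of the pieces.
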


\begin{proof}
  This is a direct consequence of \cref{lem:split-set}: fix $i \in T$
  such that $|Q_i| > 4s$. Apply \cref{lem:split-set} with $U = Q_i$,
  $b = 2s$, $a = |Q_i|/b$, $\gamma = 1/k$ and $\calF$ being the common
  neighborhoods in $Q_i$ of all tuples of size at most $d$. Thus we
  get the bound $f \le \binom{nk}{\leq d} \le (nk)^d$ and 
  that every set in $\calF$ is of size at least
  $c= (1-1/k) p^{d} |Q_i|$. 
  The bound on $s$ in the lemma statement
  has been chosen so that the conditions of \cref{lem:split-set} are
  satisfied.
  
  Thus we obtain a partition $\calQ_i$ of $Q_i$ such that each set
  $Q_i' \in \calQ_i$ satisfies $s \le |Q'_i| \le 3 s$ as
  required. Furthermore, each small tuple $t$ has a common
  neighborhood of size $(1\pm 1/k)^2p^{|t|}|Q_i'|$ in the set
  $Q_i'$. Here we used \cref{it:good-bounded} of \cref{def:good} which
  states that $Q$ has $(1/k,p)$-bounded common neighborhoods into
  $Q_i$. Because the interval $[(1-1/k)^2,(1+1/k)^2]$ is contained in
  $[1-3/k, 1+3/k]$ we have that each tuple $t$ has a $(3/k,p)$-bounded
  common neighborhood in $Q_i$.
  Applying \cref{lem:split-set} iteratively in the above manner to each
  large block $i \in T$ gives us the family $\calQ$ in the statement.
\end{proof}

We are now ready to prove \cref{lem:good-single-sum} restated here for
convenience.

\concentrationLemma*

\begin{proof}
  We first apply \cref{cor:cut-up-C} to an $(s, 1/k, p, d, R)$-good
  rectangle $Q$ with $T = [k] \setminus R$ to obtain a family $\calQ$
  of $(s, 3/k, p, d, R)$-good rectangles with the additional
  property that each block is bounded in size by $4s$.
  We can therefore write
  \begin{align}
    \Big|
    \sum_{t \in Q} \sum_{H \in \calH(F, E^\star_F[B])}
    \chi_{H[B](t)}(G)
    \Big|
    &\le
      \sum_{\tilde{Q} \in \calQ}
      \Big|
      \sum_{t \in \tilde{Q}}
      \sum_{H \in \calH(F,E^\star_F[B])}
      \chi_{H[B](t)}(G)
      \Big|\\
    &\le  \label{eq:applyProp4}
      \sum_{\tilde{Q} \in \calQ}
      O\big(
      |\tilde{Q}|
      \cdot
      p^{-{|E(F[B])|}}
      \cdot
      (s/10\,k\log n)^{-\vc(F[B])/4}
      \big)\\
    &=
      O
      \big(
      |Q| \cdot
      p^{-{|E(F[B])|}} \cdot
      (s/10\,k\log n)^{-\vc(F[B])/4}
      \big)\eqcomma
  \end{align}
  where \refeq{eq:applyProp4} follows directly 
  from \cref{it:bounded-char-tight} of \cref{def:well-behaved}
  by noting that every $\tilde{Q}\in \calQ$ is an $(s, 3/k, p, d, R)$-good rectangles
  such that $|\tilde Q_i| \le 4s$ for all $i \in [k]$, and 
  that every $|B|$-tuple $t_B \in \tilde{Q}_B$ has a unique $k$-tuple extension
  $t \supseteq t_B$ in $\tilde{Q}$.
\end{proof}

\section{Random Graphs Are Well-Behaved}
\label{sec:random-proof}

In this section we prove that graphs sampled from
$\calG(n,k,n^{-2/D})$ are asymptotically almost surely
$D$-well-behaved. For convenience we recall the precise statement
below.

\Gwellbehaved*

Recall that \cref{def:well-behaved} consists of four properties:
\cref{it:expected-neigh} states that the common neighborhood of small
tuples behave as expected. When we focus our attention on a subgraph
induced by a rectangle, we cannot expect that the common neighborhoods
in this rectangle still behave as expected---we may for example have
an isolated vertex in a rectangle. \Cref{it:bounded-error} of
\cref{def:well-behaved} guarantees thus a slightly weaker property: it
roughly states that every rectangle has only few small tuples whose
common neighborhood is ill-behaved.

While the first two properties of \cref{def:well-behaved} are of
combinatorial nature, the final two properties are more of analytical
nature. \Cref{it:bounded-char-general} states that character sums over
families $\calH(F, E^\star_F)$ are bounded for rectangles with large
enough minimum block size. Once we consider smaller rectangles, things
are not so well-behaved anymore. However, for certain rectangles, we
can still guarantee a similar property: \cref{it:bounded-char-tight}
of \cref{def:well-behaved} states that if the common neighborhoods of
small tuples in a rectangle are well-behaved, then we can still
guarantee that the mentioned character sums are bounded. In fact, this
bound is in some sense tighter than the one in
\cref{it:bounded-char-general} as it actually depends on the size of
the considered rectangle.

In \cref{sec:common-neigh} we start by proving that
\cref{it:expected-neigh,it:bounded-error} of \cref{def:well-behaved}
hold asymptotically almost surely. As previously mentioned, similar
notions have been used in previous papers on the clique formula
\cite{BIS07,BGLR12Parameterized,ABRLNR21}. Our arguments towards
\cref{it:expected-neigh,it:bounded-error} are heavily influenced by
these papers, and some are essentially identical.

Once we established the combinatorial properties, in
\cref{sec:bounded-char}, we then prove the analytical
\cref{it:bounded-char-tight,it:bounded-char-general} of
\cref{def:well-behaved}, modulo a probabilistic bound on the weighted
sums of Fourier characters. \Cref{sec:encode} is dedicated to the
proof of the probabilistic bound which, inspired by~\cite{AMP21norm},
follows via an encoding argument applied to a high moment version of
the Markov inequality.

\subsection{Random Graphs Have Bounded Common Neighborhoods}
\label{sec:common-neigh}

In this section we prove that asymptotically almost surely
$G\sim \calG(n, k, n^{-2/D})$ satisfy
\cref{it:expected-neigh,it:bounded-error} of \cref{def:well-behaved}.

\begin{lemma}
  \label{lem:neighbor}
  For any constant $\delta > 0$, any integer $D,k,n\in \N$ satisfying
  $k \le n$ and $4 \le D \le n^{\delta/2}$ the following holds for
  $\beta \ge n^{-1/4 + \delta/2}$. Except with probability
  $\exp(-\Omega(n^{\delta}))$ a graph $G \sim \calG(n, k, p)$
  has $(\beta ,p,D/4)$-bounded common neighborhoods in every block,
  for $p = n^{-2/D}$.
\end{lemma}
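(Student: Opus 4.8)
The plan is to unfold the definition of $(\beta,p,D/4)$-bounded common neighborhoods in every block and reduce it to a single Chernoff-plus-union-bound estimate. Concretely, I would first observe that it suffices to show the following: for every set $S\subseteq[k]$ of at most $D/4$ blocks, every tuple $t$ with one vertex in each block of $S$, and every block $i\in[k]\setminus S$, the common neighborhood satisfies $|N^\cap(t,V_i)|\in(1\pm\beta)p^{|S|}n$. Ranging over all such $S$ directly subsumes the quantification over $B\subseteq A$ with $|A|\le D/4$ appearing in the definition, so no generality is lost.

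Next I would fix one such triple $(S,t,i)$, write $\ell=|S|$, and note that $|N^\cap(t,V_i)|=\sum_{u\in V_i}\ind\bigl[u\in N_G(v)\text{ for all }v\in t\bigr]$. Since $i\notin S$, every pair $\{u,v\}$ with $u\in V_i$ and $v\in t$ is a genuine potential edge of the block model, sampled independently with probability $p$; hence each summand is a Bernoulli variable with mean $p^{\ell}$, and summands for distinct $u$ depend on disjoint edge sets, so they are mutually independent. Thus $|N^\cap(t,V_i)|$ is a sum of $n$ i.i.d.\ $\mathrm{Bernoulli}(p^{\ell})$ variables with mean $p^{\ell}n$, and the multiplicative Chernoff bound gives
\[
  \Pr\Bigl[\,\bigl|\,|N^\cap(t,V_i)|-p^{\ell}n\,\bigr|>\beta\,p^{\ell}n\,\Bigr]\;\le\;2\exp\!\bigl(-\beta^2 p^{\ell}n/3\bigr)\eqperiod
\]
Because $\ell\le D/4$ and $p=n^{-2/D}$ we have $p^{\ell}=n^{-2\ell/D}\ge n^{-1/2}$, so $p^{\ell}n\ge n^{1/2}$; combining this with the hypothesis $\beta^2\ge n^{-1/2+\delta}$ yields $\beta^2 p^{\ell}n\ge n^{\delta}$, so the failure probability for a fixed triple is at most $2\exp(-n^{\delta}/3)$.

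Finally I would take a union bound over all triples $(S,t,i)$. There are at most $(D/4+1)\,k^{D/4}$ choices of $S$, at most $n^{D/4}$ tuples $t$ for each, and at most $k\le n$ choices of $i$, so at most $(D/4+1)\,n^{D/2+1}$ events in total (using $k\le n$). Since $D\le n^{\delta/2}$, this count is $\exp\bigl(O(n^{\delta/2}\log n)\bigr)$, which is dwarfed by $\exp(n^{\delta}/3)$, and hence the probability that some triple deviates is $\exp(-\Omega(n^{\delta}))$, as claimed. I do not expect any real obstacle here; the only point requiring care is the parameter bookkeeping—verifying that $\beta^2 p^{\ell}n\ge n^{\delta}$ survives the worst case $\ell=D/4$, and that the number of events (exponential in $D$) is still overwhelmed by the per-event concentration bound under the hypothesis $D\le n^{\delta/2}$.
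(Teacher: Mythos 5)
Your proposal is correct and follows essentially the same route as the paper's proof: a multiplicative Chernoff bound for the common neighborhood of a fixed small tuple into a fixed block, with $\beta^2 p^{|t|} n \ge n^{\delta}$ in the worst case $|t| = D/4$, followed by a union bound over the at most $\exp\bigl(O(n^{\delta/2}\log n)\bigr)$ choices of tuple and target block. The only cosmetic difference is that the paper unions over arbitrary vertex sets of size at most $D/4$ (counted by $\binom{kn}{i}$) rather than over block sets and tuples, which changes nothing.
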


\begin{proof}
  Given a set of vertices $S$, let $X_v$ be the indicator random
  variable of whether a vertex $v$ is in the common neighborhood of
  $S$. Note that $\Pr[X_v = 1] = p^{|S|}$. Thus in expectation we have
  $n p^{|S|}$ many common neighbors per block. Applying the
  multiplicative Chernoff bound we get that
  \begin{align}
    \Prob{
    \big|
    \sum_{i=1}^n{X_i} -
    n\cdot p^{|S|}
    \big|
    \ge \beta n\cdot p^{|S|}
    }
    \le
    2 \cdot \exp\left(-\frac{\beta^2 n}{3 \cdot p^{-|S|}}\right).
  \end{align}
  Taking a union bound over all blocks and all sets $S$ of size less
  than or equal to $D/4$ we get
  \begin{align}
    \Pr[
    \exists S,V_i \text{ such that }
    N(S)\cap V_i \text{ has uncommon size}
    ]
    &\le
      k\cdot
      \sum_{i=1}^{D/4}{\binom{kn}{i}} 2 \cdot
      \exp\left(-\frac{\beta^2 n }{3\cdot p^{-i}}\right) \\
    &\le
      2 k \cdot
      \sum_{i=1}^{D/4}(kn)^i \cdot
      \exp\left(-\frac{n^{1/2+\delta}}{3 \cdot n^{2i/D}}\right) \eqcomma
  \end{align}
  which is exponentially small in $\Omega(n^\delta)$.
\end{proof}

While all small sets have common neighborhoods of expected size in any
block, this is not necessarily true for subsets of a block.  For
example, if $S \subseteq V_i$ consists of all non-neighbors of a
vertex $v$, then clearly $v$ does not have a neighborhood in $S$ of
expected size.  We can, nonetheless, show that for any large enough
set $S$ there is a small set of vertices $W$, which we refer to as
the \emph{error set} of $S$, such that all small sets of vertices not
intersection $W$ have a common neighborhoods in $S$ of expected
size. The following lemma works out the precise dependency between the
size of $W$ and the size of $S$.


\newcommand{\setWS}{W_{S,\ell}}
{
\begin{lemma}\label{lem:common-neigh}
  For all $k, n \in \N^+$, $k \le n$, $\gamma > 0$, and $0<p \leq 1/2$
  the following holds asymptotically almost surely for
  $G \sim \calG(n,k,p)$.
  For all $\ell \in [k]$, for all $i \in [k]$ and for every subset
  $S \subseteq V_i$ of size $s \ge 2 w$, for
  $w = 12 \ell \ln n / p^\ell \gamma^2$, there is a set $\setWS$ of
  size at most $w$ such that every tuple $t$, $|t| \le \ell$, disjoint
  from $V_i$ and $\setWS$ has a common neighborhood size in $S$ of
  expected size, that is, it holds that
  \begin{align*}
    |N^\cap(t, S)| =
    (1\pm \gamma)\E_G[|N^\cap(t,S)|] =
    (1\pm \gamma)p^{|t|}s \eqperiod
  \end{align*}
\end{lemma}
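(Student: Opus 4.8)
The plan is to prove the statement for a fixed block index $i\in[k]$ and a fixed tuple size $\ell\in[k]$ with failure probability $n^{-\omega(1)}$, and then union bound over the at most $k^2\le n^2$ choices of $(i,\ell)$. We may assume $2w\le n$, since otherwise no set $S$ of the prescribed size exists and the claim is vacuous; we may also assume $\gamma\le 1$, which is the regime in which the lemma is applied. Note that $w=12\ell\ln n/(p^\ell\gamma^2)\ge 12\ell\ln n$, so in particular $\ell\le w/(12\ln n)$ and $w\ge 12\ln n\to\infty$; the latter will provide ample slack in the estimates. Call a tuple $t$ disjoint from $V_i$ with $1\le|t|\le\ell$ \emph{bad for $S$} if $\bigl|N^\cap(t,S)\bigr|\notin(1\pm\gamma)p^{|t|}s$; tuples of size $0$ are never bad since $N^\cap(\emptyset,S)=S$, so a set $\setWS$ is a valid error set for $S$ exactly when no tuple of size at most $\ell$ disjoint from $V_i\cup\setWS$ is bad for $S$.

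The first step is a purely combinatorial reduction, in the spirit of the analogous lemmas of~\cite{BIS07,ABRLNR21}. Suppose $S\subseteq V_i$ with $|S|=s\ge 2w$ admits no valid error set of size at most $w$. Let $t_1,\dots,t_r$ be a maximal collection of pairwise disjoint bad tuples for $S$. If $r\le\lfloor w/\ell\rfloor$, then $W:=\bigcup_a V(t_a)$ has size at most $r\ell\le w$, and by maximality every bad tuple for $S$ meets $W$, so $W$ is a valid error set, a contradiction. Hence $r\ge m:=\lfloor w/\ell\rfloor+1$, so $S$ has $m$ pairwise disjoint bad tuples. It therefore suffices to bound, for each relevant $s$, the probability that some $S$ of size $s$ has $m$ pairwise disjoint bad tuples.

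For the probabilistic estimate, fix $S$ with $|S|=s$ and fix pairwise disjoint tuples $t_1,\dots,t_m$, each disjoint from $V_i$ and of size at most $\ell$. The event ``$t_a$ is bad for $S$'' depends only on the potential edges between $V(t_a)$ and $S$; as the $V(t_a)$ are pairwise disjoint, these edge sets are pairwise disjoint, so the $m$ events are mutually independent. For a single $a$, $\bigl|N^\cap(t_a,S)\bigr|$ is a sum of $s$ independent $\mathrm{Bernoulli}(p^{|t_a|})$ variables with mean $p^{|t_a|}s$, so the multiplicative Chernoff bound gives $\Pr[\,t_a\text{ bad for }S\,]\le 2\exp(-\gamma^2p^{|t_a|}s/3)\le 2\exp(-\gamma^2p^\ell s/3)$, whence $\Pr[\,t_1,\dots,t_m\text{ all bad for }S\,]\le\bigl(2\exp(-\gamma^2p^\ell s/3)\bigr)^m$. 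There are $\binom ns\le n^s$ choices of $S$ of size $s$, and at most $n^{3\ell m+2}$ collections of $m$ pairwise disjoint tuples of size at most $\ell$ with vertices in $V_{[k]\setminus i}$ (choose the union, then partition it), which by $\ell m\le w+\ell\le(1+o(1))w$ is $n^{(3+o(1))w}$. The calibration of $w$ enters here: since $m\ge w/\ell$ we have $m\gamma^2p^\ell\ge 12\ln n$, hence $\bigl(\exp(-\gamma^2p^\ell s/3)\bigr)^m\le\exp(-4s\ln n)=n^{-4s}$. A union bound together with $2^m\le n^{o(w)}$ now gives
\begin{align*}
  \Pr\bigl[\,\exists S,\ |S|=s:\ S\text{ has }m\text{ pairwise disjoint bad tuples}\,\bigr]
  &\le n^{(3+o(1))w}\cdot n^s\cdot n^{-4s}\\
  &= n^{(3+o(1))w-3s}
  \ \le\ n^{-(3-o(1))w}\eqcomma
\end{align*}
using $s\ge 2w$ in the last step.

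Summing this over the at most $n$ values $s\in\{\lceil 2w\rceil,\dots,n\}$ (a geometrically decaying series dominated by its term at $s=\lceil 2w\rceil$) and over the $k^2\le n^2$ pairs $(i,\ell)$ bounds the total failure probability by $O\bigl(n^{2-(3-o(1))w}\bigr)$, which tends to $0$ since $w\ge 12\ln n\to\infty$. The one delicate point is that the union bound over \emph{all} subsets $S\subseteq V_i$ nominally costs $\binom ns$, which is $2^{\Theta(n)}$ in the worst case; this is absorbed because ``$S$ has no small error set'' forces $m=\Theta(w/\ell)$ pairwise disjoint bad tuples, each contributing an independent factor $2\exp(-\Theta(\gamma^2p^\ell s))$, and $w$ is defined precisely so that the resulting product is $n^{-4s}$, which beats $\binom ns\le n^s$ with enough surplus to also absorb the $n^{O(w)}$ from counting the bad-tuple collections. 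Everything else is routine Chernoff-and-union-bound bookkeeping, with comfortable slack because $w\ge 12\ln n$.
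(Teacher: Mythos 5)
Your proof is correct and follows essentially the same route as the paper's: both arguments reduce the absence of a small error set to the existence of many pairwise disjoint bad tuples (you use one mixed-size maximal family where the paper uses one per tuple size, which only shifts constants), then exploit independence of disjoint tuples, the multiplicative Chernoff bound, and a union bound over $S$, the tuple collections, $s$, $\ell$, and $i$, with $w$ calibrated so that the per-collection failure probability $n^{-\Theta(s)}$ beats the $n^{s+O(w)}$ union-bound cost. The explicitly flagged restriction to $\gamma\le 1$ is also implicit in the paper's use of the two-sided Chernoff bound, so it is not a divergence.
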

}
\begin{proof}
  Given $\ell \in [k]$ and $S\subseteq V_i$, for some $i\in[k]$, we define
  a set $\setWS$ such that for every tuple $t$ with $|t| \leq \ell$ and
  that is disjoint from $V_i$ and $\setWS$ it holds that
  \begin{align}\label{eq:pf-common-neigh}
    |N^\cap(t, S)| =
    (1\pm \gamma)p^{|t|}s \eqperiod
  \end{align}
  Our goal is then to show that the probability of the event that
  there exists an $\ell \in [k]$, $i\in [k]$ and a set
  $S\subseteq V_i$ of size at least $2w$ such that $|\setWS| > w$ is
  at most $n^{-\Omega(1)}$.
  
  The set $\setWS$ is constructed by including all tuples of that do
  not satisfy \cref{eq:pf-common-neigh}. More formally, we first
  consider each possible tuple size $a = 1, 2, \ldots, \ell $ and
  separately construct sets $W_a$ as follows.  Let
  $t_1, t_2, \ldots, t_{w_a}$ be an arbitrary but fixed maximal
  sequence of pairwise disjoint tuples, each of size~$a$, where each
  tuple either has too many or too few common neighbors in $S$, that
  is, $|N^\cap(t_j, S)| > (1 + \gamma)p^{a}s$ or
  $|N^\cap(t_j, S)| < (1 - \gamma)p^{a}s$ for all $j\in [w_a]$.  We
  define $W_a = \bigcup_{j\in[w_a]} t_j$ and
  $\setWS = \bigcup_{a\in [\ell]} W_a$.
  
  Let $\widehat{w}_a = 6\ln n/p^a\gamma^2$. Note that if
  $w_a \leq \widehat{w}_a$ for all $a\in [\ell]$, then
  $|W_a| \leq a \cdot \widehat{w}_a$ and
  \begin{align}
    |W| \leq  \frac{6\log n}{\gamma^2} \sum_{a\in [\ell]} \frac{a}{p^a} 
    \leq  \frac{6\log n}{\gamma^2} \left(\frac{2\ell}{p^\ell}\right)
    = w \eqcomma
  \end{align}
  where we use that $p\leq 1/2$.  We can therefore bound the
  probability that $|\setWS| > w$ by bounding the probability that
  there exists an $a\in [\ell]$ such that $w_a \geq \widehat{w}_a$.
  
  Fix an $a\in [\ell]$. The probability that a given tuple of $t_j$ of
  size $a$ has too many or too few common neighbors in~$S$ can be
  bounded by the multiplicative Chernoff bound to obtain that
  \begin{align}
    \Prob{\big||N^\cap(t_j,S)| - p^{a}|S|\big| > \gamma \cdot p^{a}|S|}
    &\le 2 \cdot
      \exp(-\gamma^2\cdot p^a \cdot |S|/3) \eqperiod
  \end{align}
  Note that the presence of edges between disjoint tuples $t_j$ and
  $S$ are independent events. 
  Thus we can bound the probability that there
  there are at least $\widehat{w}_a$ many disjoint tuples 
  that have  too many or too few common neighbors in~$S$ by
  \begin{align}
    2 \cdot \exp
    \bigl(
    -
    \gamma^2 \cdot p^a \cdot s \cdot \widehat{w}_a/ 3 
    \bigr)
    =
    2\cdot \exp(-2s\ln n)
    \eqperiod
  \end{align}
  
  By taking a union bound over $\ell \in [k]$, $s = 2w, \ldots, n$ and
  $a \in [\ell]$, and then for each $s$ and $a$ taking a union bound
  over $i\in [k]$, over the choices of $S\subseteq V_i$ of size $s$,
  and over the choices for the $\widehat{w}_a$ tuples (of which there
  are at most
  $\binom{kn}{a}^{\widehat{w}_a} \leq (kn)^{a\widehat{w}_a}$ ), we
  conclude that the probability that there exists an $\ell$ and a set
  $S$ of size at least $2w$ such that $\setWS > w$ is at most
  \begin{align}
    \Prob{\exists S, \ell : |\setWS| > w}
    \leq
    &\sum_{\ell \in [k]}
      \sum_{s = 2w}^{n}
      \sum_{a\in [\ell]}
      2\cdot
      \exp(-2s\ln n + a\widehat{w}_a \ln (kn) + \ln k + s\ln n) \\
    \leq
    &\sum_{\ell \in [k]}
      \sum_{s = 2w}^{n}
      2\cdot
      \exp(-(s - 1)\ln n  + \ell\widehat{w}_\ell (2\ln n) + \ln \ell) \\
    \leq
    &\sum_{\ell \in [k]}
      \sum_{s = 2w}^{n}
      2\cdot
      \exp(-(s-2-w)\ln n ) \leq n^{-\Omega(1)}
      \eqcomma
  \end{align}
  as we wished to prove.
\end{proof}

Given a large rectangle $Q$ we would like to argue that after removing
the union of all error sets $W = \bigcup_{i} W_{Q_i}$ we are left with
a rectangle in which all small tuples have common neighborhoods of
expected size. To this end we rely on the following claim.

\newcommand{\sizealpha}{b}

\begin{claim}
  \label{lem:neigh-remove-set}
  Let $\sizealpha > 0$ and $0 < \gamma < 1$. Suppose we have a
  universe $U$ and a set $S \subseteq U$ satisfying
  $\frac{|S|}{|U|} \in (1 \pm \gamma)\sizealpha$. For any set
  $T \subseteq U$ satisfying
  $\frac{|T|}{|U|} \le \min\set{\gamma/2, \sizealpha\gamma}$ it holds
  that
  \begin{displaymath}
    \frac{|S \setminus T|}{|U \setminus T|}
    \in
    (1 \pm 3\gamma)\sizealpha
    \eqperiod
  \end{displaymath}
\end{claim}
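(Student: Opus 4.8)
The plan is to estimate the ratio $|S\setminus T|/|U\setminus T|$ from above and from below by elementary manipulations, writing $|S\setminus T| = |S| - |S\cap T|$ and $|U\setminus T| = |U| - |T|$, and using only that $0 \le |S\cap T| \le |T|$ together with the two hypotheses. For notational brevity I would set $u = |U|$, $s = |S|$, $\tau = |T|$, so the hypotheses read $(1-\gamma)\sizealpha u \le s \le (1+\gamma)\sizealpha u$ and $\tau \le \min\set{\gamma/2,\,\sizealpha\gamma}\,u$; note in particular that $\tau \le \gamma u/2 < u$, so $u - \tau > 0$ and the ratio is well defined.

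For the upper bound I would use $|S\setminus T| \le s \le (1+\gamma)\sizealpha u$ and $|U\setminus T| = u - \tau \ge (1-\gamma/2)u$, which gives $|S\setminus T|/|U\setminus T| \le (1+\gamma)\sizealpha/(1-\gamma/2)$. It then remains to check the scalar inequality $(1+\gamma)/(1-\gamma/2) \le 1+3\gamma$ for $0 < \gamma < 1$; clearing the (positive) denominator, this is equivalent to $0 \le \tfrac{3}{2}\gamma(1-\gamma)$, which holds.

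For the lower bound I would split into two cases according to whether $3\gamma \ge 1$. If $3\gamma \ge 1$ then $(1-3\gamma)\sizealpha \le 0 \le |S\setminus T|/|U\setminus T|$ and there is nothing to prove. If $\gamma < 1/3$, then $1-\gamma > \gamma$, hence $s \ge (1-\gamma)\sizealpha u > \gamma\sizealpha u \ge \tau$, so $|S\setminus T| \ge s - \tau > 0$; combining with $|U\setminus T| \le u$ yields $|S\setminus T|/|U\setminus T| \ge (s-\tau)/u = s/u - \tau/u \ge (1-\gamma)\sizealpha - \gamma\sizealpha = (1-2\gamma)\sizealpha \ge (1-3\gamma)\sizealpha$, as desired.

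The only genuine subtlety — the ``main obstacle'', such as it is — is in the lower bound: the naive estimate $|S\setminus T| \ge |S| - |T|$ is useless when $|T|$ exceeds $|S|$, and this can happen when $\gamma$ is large relative to $1/2$. The point is that exactly in that regime ($\gamma \ge 1/3$) the claimed lower bound $(1-3\gamma)\sizealpha$ is nonpositive and therefore vacuous, so the case split above handles it cleanly. Everything else is bookkeeping with the two hypotheses.
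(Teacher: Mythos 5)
Your proof is correct and follows essentially the same elementary route as the paper: bound the numerator and denominator separately using $|S\setminus T|\le|S|$, $|S\setminus T|\ge|S|-|T|$, and $(1-\gamma/2)|U|\le|U\setminus T|\le|U|$. The only cosmetic difference is in the lower bound, where the paper avoids your case split on $\gamma\ge 1/3$ by keeping the denominator $|U|-|T|$ and using the chain $\frac{|S|-|T|}{|U|-|T|}\ge\frac{|S|}{|U|}-\frac{|T|}{|U|(1-\gamma/2)}\ge(1-\gamma)\sizealpha-(1+\gamma)\gamma\sizealpha\ge(1-3\gamma)\sizealpha$, which is valid for all $0<\gamma<1$.
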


\begin{proof}
  Let us denote the size of $S$ by $s$, the size of $T$ by $t$ and the
  size of $U$ by $u$. It holds that
  \begin{align}
    \frac{|S \setminus T|}{|U \setminus T|} \le \frac{s}{u(1-\gamma/2)}
    \le (1+\gamma)^2\sizealpha \le (1 + 3\gamma)\sizealpha\eqperiod
  \end{align}
  For the lower bound observe that it holds
  \begin{align}
    \frac{|S \setminus T|}{|U \setminus T|}
    \ge \frac{s}{u-t} - \frac{t}{u-t}
    \ge \frac{s}{u} - \frac{t}{u(1-\gamma/2)}
    \ge (1-\gamma)\sizealpha - \frac{t}{u}(1+\gamma)
    \ge (1-3\gamma)\sizealpha \eqcomma
  \end{align}
  where, for the final inequality, we used that
  $t/u \le \sizealpha\gamma$.
\end{proof}

We are now ready to prove that, asymptotically almost surely,
$G\sim \calG(n, k, p)$ for $p=n^{-2/D}$, satisfies
\cref{it:bounded-error} of \cref{def:well-behaved} which states that
for all $\ell \leq D/4$, and
$s \ge C k^4 \ell \ln n / p^{2\ell}$, for a large enough constant $C$,
the graph $G$ has $(2s, s, 1/k, p, \ell)$-{bounded error sets}, i.e.,
that for all rectangles $Q = \bigtimes_{i \in [k]} Q_i$ satisfying
$|Q_i| \geq 2s$ or $|Q_i| = 0$ it holds that there exists a small set
$W$, $|W| \le s$, such that for all $S \subseteq [k]$ of size at most
$\ell$ it holds that all tuples
$t \in \bigtimes_{i \in S} (Q_i\setminus W) $ satisfy
$|N^\cap(t, Q_j\setminus W)| = (1\pm 1/k)p^{|t|}|Q_j \setminus W|$ for
all $j\in [k]\setminus S$.

\begin{corollary}\label{cor:bounded-error}
  There exists a constant $C \in \R^+$ such that the following holds
  for all $D \in \R^+$ and integers $k, n, s \in \N^+$ satisfying
  $k \le n$ and $p = n^{-2/D} \le 1/2$. A graph $G$ sampled from
  $\calG(n, k, p)$ has asymptotically almost surely
  $(2s, s, 1/k, p, \ell)$-bounded error sets for all
  $1 \le \ell \leq D/4$ and $s \ge C k^4 \ell \ln n / p^{2\ell}$.
\end{corollary}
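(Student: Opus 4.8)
The plan is to derive the statement from \cref{lem:common-neigh} by forming, for each rectangle $Q$, the union of the per-block error sets that lemma provides, and then to account for the deletion of this union by means of the purely deterministic \cref{lem:neigh-remove-set}. First I would invoke \cref{lem:common-neigh} with $\gamma = 1/(3k)$, a legitimate positive choice; with this $\gamma$ the lemma holds asymptotically almost surely, and I would condition on that event, after which the entire argument is deterministic. Writing $w_\ell = 12\,\ell\ln n/(p^\ell\gamma^2) = 108\,k^2\ell\ln n/p^\ell$, the lemma then supplies, for every $\ell\in[k]$, every block $V_i$ and every $S\subseteq V_i$ with $|S|\ge2w_\ell$, a set $W_{S,\ell}$ of size at most $w_\ell$ outside of which every tuple $t$ of size at most $\ell$ disjoint from $V_i$ satisfies $|N^\cap(t,S)|\in(1\pm\gamma)p^{|t|}|S|$.

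Next, I would fix an admissible pair $(\ell,s)$, that is, $1\le\ell\le D/4$ and $s\ge Ck^4\ell\ln n/p^{2\ell}$ with $C$ a large absolute constant to be pinned down (and where I may assume $\ell\le k$, since $S\subseteq[k]$ in \cref{def:bounded}), together with a rectangle $Q=\bigtimes_iQ_i$ in which each $|Q_i|\ge2s$ or $|Q_i|=0$, and set $W=\bigcup_{i:\,Q_i\ne\emptyset}W_{Q_i,\ell}$. One checks that $|Q_i|\ge2s\ge2w_\ell$ (using $p^\ell\le1$), so \cref{lem:common-neigh} applies to every non-empty block, and that $|W|\le k\,w_\ell=108\,k^3\ell\ln n/p^\ell\le s$; both inequalities hold as soon as $C$ is large enough. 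This $W$ is the error set demanded by \cref{def:bounded}.

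It then remains to verify the neighbourhood estimate: given $S\subseteq[k]$ with $|S|\le\ell$, a tuple $t\in\bigtimes_{i\in S}(Q_i\setminus W)$, and $j\notin S$, I would show $|N^\cap(t,Q_j\setminus W)|\in(1\pm1/k)p^{|t|}|Q_j\setminus W|$. The cases $Q_j=\emptyset$ and $t=\emptyset$ are immediate; otherwise $t$ is disjoint from $V_j$ and from $W\supseteq W_{Q_j,\ell}$, so \cref{lem:common-neigh} gives $|N^\cap(t,Q_j)|\in(1\pm\gamma)p^{|t|}|Q_j|$. Since $N^\cap(t,Q_j\setminus W)=N^\cap(t,Q_j)\setminus W$, I would conclude by applying \cref{lem:neigh-remove-set} with universe $Q_j$, inner set $N^\cap(t,Q_j)$, ratio $p^{|t|}$, and removed set $W\cap Q_j$: its hypothesis $|W\cap Q_j|/|Q_j|\le\min\{\gamma/2,p^{|t|}\gamma\}$ follows (using $|t|\le\ell$ and $p^\ell\le1$) from $|W|\le108\,k^3\ell\ln n/p^\ell$ and $|Q_j|\ge2Ck^4\ell\ln n/p^{2\ell}$ once $C$ is large, and its conclusion is exactly $|N^\cap(t,Q_j\setminus W)|\in(1\pm3\gamma)p^{|t|}|Q_j\setminus W|=(1\pm1/k)p^{|t|}|Q_j\setminus W|$, which completes the proof.

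The only substantive ingredient is \cref{lem:common-neigh}, which I am free to assume, so the real work --- and the step I would treat as the main obstacle --- is purely quantitative: verifying that a single value of $C$ makes all three requirements hold simultaneously, namely (i) every non-empty block of $Q$ exceeds $2w_\ell$, (ii) $|W|\le s$, and (iii) $|W\cap Q_j|/|Q_j|$ stays below $\min\{\gamma/2,p^\ell\gamma\}$. Each of (i)--(iii) is an inequality whose two sides share the factor $\ell\ln n$ together with powers of $k$, the larger side carrying in addition a factor $C$ and, in (ii)--(iii), extra powers of $k$ and of $p^{-\ell}\ge1$; hence a single large choice such as $C=324$ clears all of them at once, the powers of $p^\ell\le1$ only working in our favour.
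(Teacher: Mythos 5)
Your proposal is correct and follows essentially the same route as the paper's proof: both invoke \cref{lem:common-neigh} with $\gamma=1/(3k)$, take $W$ to be the union over non-empty blocks of the per-block error sets $W_{Q_i,\ell}$, and then use \cref{lem:neigh-remove-set} to absorb the deletion of $W$, with the constant $C$ chosen large enough to make the three quantitative conditions hold simultaneously. The only difference is cosmetic bookkeeping of the constants.
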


\begin{proof}
  Let $T \subseteq [k]$ and $Q = \bigtimes_{i \in [k]} Q_i$ be such
  that $|Q_i| \geq 2s$ if $i \in T$ and $|Q_i| = 0$ otherwise. Our
  goal is to show that there exists a small set of vertices $W
  \subseteq V(G)$, $|W| \le s$, such that for all $T' \subseteq T$ of
  size at most $\ell$ it holds that all tuples $t \in \bigtimes_{i \in
    T'} (Q_i\setminus W) $ satisfy
  \begin{align*}
    \bigl|N^\cap(t, Q_j\setminus W)\bigr| \in
    (1\pm {1}/{k})p^{|t|}\bigl|Q_j \setminus W\bigr|
  \end{align*}
  for all $j\in [k]\setminus T'$.

  Fix $\ell \leq D/4$, and the parameters $\gamma = 1/3k$ and $w=12
  \ell \ln n/ p^{\ell} \gamma^2 $.  Note that by choosing $C \ge 3^3
  \cdot 12$ it holds that $w \le s\gamma p^\ell /k$.  For each $i \in
  T$, let $W_i$ be the set, guaranteed to exist (asymptotically almost
  surely) by \cref{lem:common-neigh}, of size at most $w$ such that
  for all $T' \subseteq T \setminus i$ of size at most $\ell$ it holds
  that all tuples $t \in \bigtimes_{j \in T'} (Q_j \setminus W_i)$
  satisfy
  \begin{align}
    |N^\cap(t, Q_i)| = (1\pm \gamma)p^{|t|} |Q_i| \eqperiod
  \end{align}
  Note that to apply \cref{lem:common-neigh} we use the fact that
  $|Q_i| \geq 2w$ for all $i \in T$, which follows since $|Q_i| \geq
  2s$ and $w\le s \gamma p^\ell / k \le s$.

  Let $W = \cup_{i\in T} W_i$.  Since $w \le s\gamma p^\ell/k $, we
  have that $|W| \leq kw \le s\gamma p^\ell $. Moreover, for all $T'
  \subseteq T$ of size at most $\ell$ and all $j \in [k] \setminus
  T'$, it holds that all tuples $t \in \bigtimes_{i \in T'}
  (Q_i\setminus W) $ satisfy
  \begin{align}
    |N^\cap(t, Q_j \setminus W)| =
    (1\pm 3\gamma)p^{|t|}|Q_j \setminus W| \eqcomma
  \end{align}
  by \cref{lem:neigh-remove-set} where we use the bound
  $\frac{|W|}{|Q_j|} \leq \frac{|W|}{2s} \leq \frac{\gamma
    p^{\ell}}{2} \leq \min\{\gamma/2,\gamma p^{\ell}\}$.  Since $1/k =
  3\gamma$ and $|W| \leq s\gamma p^\ell \leq s$, the statement
  follows.
\end{proof}

\subsection{Random Graphs Have Bounded Character Sums}
\label{sec:bounded-char}

In order to show that random graphs are asymptotically almost surely
well-behaved, we rely on the following probabilistic bound on the
weighted sums of Fourier characters, inspired by~\cite{AMP21norm}.

\begin{restatable}{restatablelemma}{encodinglemma}
  \label{lem:encode}
  Let $F$ be a non-empty graph over the vertex set $[k]$ with
  $A = V\bigl(E(F)\bigr)$, and let $M \subseteq E(F)$ be any matching in $F$.
  Let $Q = \bigtimes_{u \in A} Q_u$ be a rectangle such that for every
  edge $\set{u,v} \in M$ it holds that $|Q_u \times Q_v| \ge \lbQ$.
  For any even $m \le \lbQ$, any $r \in \R^+$ and any function
  $\xi: Q \rightarrow [-r,r]$ it holds that
  \begin{align*}
    \PROB[{G[V_A]}]
    {\big|\sum_{t \in Q} \chi_{F(t)}(G)\xi(t)\big| > s} 
    \le
    \left(
    \frac{
    r
    \cdot
    p^{-{|E(F)|}} 
    \cdot
    \Bigl(
    \frac{m}{\lbQ}
    \Bigr)^{|M|/2}
    \cdot
    |Q|
    }
    {s}
    \right)^m 
    \eqcomma 
  \end{align*}
  where each edge in $\binom{V_A}{2}$ is sampled independently at
  random with probability $p$.
\end{restatable}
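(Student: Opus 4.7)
The plan is to apply an $m$-th moment Markov inequality to $X = \sum_{t \in Q} \chi_{F(t)}(G)\,\xi(t)$. Since $m$ is even, $X^m \ge 0$ and hence $\Pr\bigl[|X| > s\bigr] \le \E[X^m]/s^m$, so the goal reduces to bounding $\E[X^m]$ by $r^m \cdot p^{-m|E(F)|} \cdot (m/\lbQ)^{m|M|/2} \cdot |Q|^m$.

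I would first expand $\E[X^m] = \sum_{t_1, \ldots, t_m \in Q} \xi(t_1) \cdots \xi(t_m) \cdot \E\bigl[\prod_i \chi_{F(t_i)}(G)\bigr]$. Since potential edges are sampled independently, the character expectation factorizes over distinct edges as $\prod_e \E[\chi_e^{n_e}]$ where $n_e = |\set{i : e \in F(t_i)}|$. Because $\E[\chi_e] = 0$, this product vanishes whenever some $n_e = 1$; call an $m$-tuple \emph{well-paired} if every edge in the multi-set $\biguplus_i F(t_i)$ has multiplicity $\neq 1$, and let $\calT$ denote the set of such tuples. Using $|\xi(t)| \le r$ and the crude bound $|\E[\chi_e^{n_e}]| \le p^{-n_e}$ for $n_e \ge 2$ together with $\sum_e n_e = m|E(F)|$, one gets $\E[X^m] \le r^m \cdot p^{-m|E(F)|} \cdot |\calT|$.

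The main task—and the main obstacle—is to show $|\calT| \le (m/\lbQ)^{m|M|/2} \cdot |Q|^m$ (up to absolute constants) by an encoding argument in the spirit of~\cite{AMP21norm}. The key structural facts are that distinct edges of $F$ produce edges between distinct block-pairs of $G$ (so different $F$-edges never create the same $G$-edge), and that since $M$ is a matching in $F$, the well-pairing constraints imposed by different edges of $M$ involve disjoint blocks and hence decouple. Relaxing the condition defining $\calT$ so that only the constraints from edges of $M$ are enforced gives a valid over-count: for each $\{u,v\} \in M$ the length-$m$ sequence of pairs $\bigl((t_i(u), t_i(v))\bigr)_{i=1}^m$ in the universe $Q_u \times Q_v$ (of size $N_{uv} \ge \lbQ$) must have every element appearing $\neq 1$ times, while for $u \in A \setminus V(M)$ the coordinate $t_i(u)$ is free, contributing $|Q_u|^m$.

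The remaining ingredient is a classical count: the number of length-$m$ sequences from $[N]$ with every value appearing zero or at least two times is at most $(c\,m)^{m/2}\,N^{m/2}$ for an absolute constant $c$. I would prove this by encoding—each position is either a \emph{first occurrence} of a new value (at most $m/2$ such positions, since each value appears at least twice) or a \emph{back-reference} to an earlier position; specifying the first-occurrence positions, the value at each, and the back-pointers for the repeats reconstructs the sequence and yields at most $2^m \cdot N^{m/2} \cdot m^{m/2}$ configurations. Since $N_{uv} \ge \lbQ$, this gives $(c\,m)^{m/2} N_{uv}^{m/2} \le (c\,m/\lbQ)^{m/2} N_{uv}^m$ per matching edge, and multiplying over $\{u,v\} \in M$ and over the free coordinates produces the required bound on $|\calT|$. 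The delicate point will be absorbing the implicit constant $c$ into the quoted $(m/\lbQ)^{|M|/2}$, which should follow from a tighter pairing-based count (perfect matchings on $[m]$ have $(m-1)!! \approx (m/e)^{m/2}$ elements, giving the right shape) combined with the slack $m \le \lbQ$.
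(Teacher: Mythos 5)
Your proposal is correct and follows essentially the same route as the paper: an $m$-th moment Markov bound, the observation that only tuples in which every $G$-edge has multiplicity $\neq 1$ survive (with each surviving term bounded by $p^{-m|E(F)|}$), and a counting argument for those tuples driven by the matching $M$; your decoupling of the count into independent per-matching-edge sequence counts over $Q_u\times Q_v$ is a clean reorganization of the paper's global pointer-encoding and yields the same $(m/\kappa)^{m|M|/2}|Q|^m$ shape. The one discrepancy is the absolute constant $c^{m|M|/2}$ in your crude $2^m m^{m/2} N^{m/2}$ count, which you correctly flag; it can indeed be removed as you suggest (the multiplicity-two terms contribute only $(m-1)!!\,N^{m/2}\approx (mN/e)^{m/2}$ and higher multiplicities contribute less when $N\ge\kappa\ge m$), and in any case it is harmless in every application of the lemma since one may rescale $m$ by a constant.
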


It may be illustrative to consider the case when $F = \set{i,j}$ is a
single edge, $\xi$ is the constant $1$ function,
$|Q_i \times Q_j| = \kappa = |Q| = n$ and $p = 1/2$. In this setting
we obtain a bound on the absolute value of $n$ random variables taking
values in $\pm 1$ uniformly at random. For some setting of parameters
we essentially recover the Chernoff bound: we may, for example, set
$s = \sqrt{n\log n}$ and $m = \frac{\log n}{16}$ to obtain the bound
\begin{align}
  \Prob{\,\big|\sum_{e \in [n]} \chi_e\big| > \sqrt{n \log n}\,} 
    \le
    2^{-\log(n)/16} 
    \eqcomma
\end{align}
which is essentially the bound that Chernoff guarantees.

The proof of \cref{lem:encode} goes through a high moment version of
the Markov inequality. After some standard manipulations of the
resulting expressions we are left to bound the number of ways to map
$m$ copies of the graph $F$ into $Q$ such that every $2$-tuple in $Q$
is either mapped to at at least twice or not at all. We bound this
quantity via an encoding argument.

The main idea of the encoding argument is to only consider the edges
in the matching~$M$. Each such edge can be mapped in at least $\kappa$
ways. As $m \le \kappa$, we argue that it is beneficial to point to
the other copy of $F$ that maps this edge to the same place. This
drives the upper bound on the failure probability.  The details of the
argument are worked out in \cref{sec:encode}.

With this lemma in hand we are ready to prove the main result of this
section.

\begin{proof}[Proof of \cref{lem:Gwellbehaved}]
  \Cref{it:expected-neigh,it:bounded-error} hold asymptotically almost
  surely by \cref{lem:neighbor} and \cref{cor:bounded-error},
  respectively.

  For \cref{it:bounded-char-tight,it:bounded-char-general} we will
  apply \cref{lem:encode} to bound the probability that for a fixed
  core graph $F$ and rectangle $Q$ that satisfy the conditions stated
  $G$ does not have bounded character sums over $Q$ for $F$. We will
  then apply a union bound over all such $F$'s and $Q$'s.

  We start by defining a function $\xi_{F,Q,G,B}$ for every core $F$,
  rectangle $Q$, graph $G$ and subset $B \subseteq [k]$ that maps
  tuples $t_A\in Q_{A}$, for $A=V\bigl(E(F)\bigr) \cap B$, to real numbers as
  \begin{align}
    \xi_{F,Q,G,B}(t_A) =
    \sum_{t \in Q_B : t\supseteq t_A}
    \sum_{E \subseteq E^\star_F[B]}
    \chi_{E(t)}(G) =
    p^ {-|E^\star_F[B]|} \cdot
    \big|\Set{
    t \in Q_B\colon
    t\!\supseteq\! t_A
    \land
    E^\star_F[B](t)\! \subseteq\! E(G) }\big|
    \eqperiod
  \end{align}
  Observe that
  \begin{align}\label{eq:to-xi}
    \sum_{t \in Q_B} \sum_{H \in \calH(F, E^\star_F[B])}
    \chi_{H[B](t)}(G)
    &=
      \sum_{t\in Q_B}
      \chi_{F[B](t)}(G)
      \sum_{E \subseteq E^\star_F[B]}
      \chi_{E(t)}(G)\\
    &=
      \sum_{t_A \in Q_A}
      \chi_{F[B](t_A)}(G)\,
      \xi_{F,Q,G,B}(t_A)
      \eqperiod
  \end{align}
  
  Note that $\xi_{F,Q,G,B}$ only depends on $F$, $Q$ and
  $G[B]\setminus G[V_A]$, that is, it does not depend on the edges
  between vertices in $V_A$. This will be crucial for our analysis as
  it implies that once we fix $B$, $F$, $Q$ and
  $G_{\bar A} = G \setminus G[V_A]$ we can determine whether
  $\xi_{F,Q,G,B}$ is $r$-bounded.

  For a fixed $r$, let $\mathbf{X}_r(F,Q,G,B)$ be the event that
  $\xi_{F,Q,G,B}$ is $r$-bounded. By \cref{lem:encode}, we claim that
  for any $s$ if $Q_B$ is $(\kappa,s)$-\compatible, then for any
  $m \leq \kappa$ it holds that
 \begin{align}\label{eq:bounded-char}
   \Pr_{G}
     \left[
     \Bigl|
     \sum_{t \in Q_B}
     \sum_{H \in \calH(F, E^\star_F[B])}\!\!
     \chi_{H[B](t)}(G)
     \Bigr| > s \;\Bigg\vert\; \mathbf{X}_r(F,Q,G,B)
     \right]
   \le
    \left(
    \frac{
    r\,
    p^{-{|E(F[B])|}} 
    \,
    \Bigl(
    \frac{m}{\lbQ}
    \Bigr)^{\vc(F[B])/4}
    \,
    |Q_A|
    }
    {s}
    \right)^m \!.
 \end{align}
 Indeed, this follows by rewriting the left hand-side according to
 \cref{eq:to-xi} to obtain
 \begin{align}
   \begin{split}
     \Pr_{G}
     \left[
     \vphantom{\Big|\sum_{t \in Q_A}\Big|}\right.
     \Bigl|
     &\sum_{t \in Q_A}
     \chi_{F[B](t)}(G)\,
     \xi_{F,Q,G,B}(t)
     \Bigr|
     >
     s
     \;\Bigg\vert\; \mathbf{X}_r(F,Q,G,B)
     \left.\vphantom{\Big|\sum_{t_A \in Q}\Big|}
       \right]
     \\
     &\:\:\:=
       \sum_{\substack{G_{\bar A} \colon\\
       \mathbf{X}_r(F, Q, G, B) }}
       \Prob[{G \setminus G[V_A]}]{{G}_{\bar{A}} \;\big\vert\;
       \mathbf{X}_r(F,Q,G,B)} \cdot
       \Pr_{G[V_A]}
       \Bigl[
       \bigl|
       \sum_{t \in Q_A} \chi_{F[B](t)}(G)\,\xi_{F,Q,G,B}(t)\bigr| >
       s
       \Bigr] \eqcomma
   \end{split}
 \end{align}
  and applying \cref{lem:encode} for each fixed ${G}_{\bar{A}}$.
  
  We are now ready to prove that asymptotically almost surely $G$
  satisfies \cref{it:bounded-char-tight,it:bounded-char-general}. Fix
  a core $F$ and a rectangle $Q$.

  For \cref{it:bounded-char-general} we may assume that $Q$ is
  $(n/2,F)$-\compatible. Let $B = [n]$ and
  $C = [k] \setminus V\bigl(E(F)\bigr)$. By \cref{it:expected-neigh}
  $G$ has $(1/k, p, D/4)$-bounded common neighborhoods in every block
  hence $\xi_{F,Q,G,B}$ is $r$-bounded for
  $r = (1+1/k)^{|C|} n^{|C|} \le 3 \cdot n^{|C|}$. Applying
  \cref{eq:bounded-char} for this value of $r$ and for
  $s = 6 \cdot p^{-{|E(F)|}} \cdot n^{-\lambda \vc(F)/4} \cdot n^k$,
  $\kappa = n^2/4$ and $m = n^{2-\lambda}/4$, we derive that the
  probability that $G$ does not have $s$-bounded character sums over
  $Q$ for $F$ is at most $2^{-m} = 2^{-n^{2 - \lambda}/4}$. We
  conclude \cref{it:bounded-char-general} by taking a union bound over
  all cores of vertex cover at most $D/4$,
  of which, according to \cref{lem:count-H},
  there are at most
  $
  \sum_{i = 1}^{D/4} 2^{3i(i + \log k)}
  \leq 2\cdot 2^{D(D/4 +  \log k)}
  \ll 2^{ n^{2-\lambda}/4}
  $,
  and all $2^{nk} \ll 2^{n^{2-\lambda}/4}$ rectangles, using
  the fact that $\lambda < 1 - (\log k)/(\log n)$ and $k \le n^{1/3}$.
  
  \Cref{it:bounded-char-tight} follows by a similar argument. Further
  fix an integer $\Lambda \in [n]$, $\Lambda \ge 20\,k \log n$ and a
  subset $B \subseteq [k]$, let $A=V\bigl(E(F)\bigr) \cap B$ and
  $C = B \setminus A$.
  We assume $Q$ is $(4\Lambda)$-small and $(\Lambda,F)$-\compatible.
  Note that if $G$ is such that for every $i\in C$, $G$ has
  $(3/k,p)$-bounded common neighborhoods from $Q_A$ to $Q_i$, then
  $\xi_{F,Q,G,B}$ is $r$-bounded for
  $r =
  (1+3/k)^{|C|}\,\lvert Q_C\rvert <
  30\,\lvert Q_C\rvert$.

  We can therefore apply \cref{eq:bounded-char} with parameters
  $r = 30\,|Q_C|$,
  $\kappa = \Lambda^2$,
  $m = 10\Lambda k \log n \leq \kappa$, and
  $s =
  60\,
  p^{-{|E(F[B])|}} 
  (m/\kappa)^{\vc(F[B])/4} 
  |Q|$,
  to obtain that the probability that $G$
  does not have $s$-bounded character sums over $Q_B$ for $F$ is at most
  $2^{-m} = 2^{-10\Lambda k \log n}$. Applying a union bound over
  all $\Lambda \in [n]$ such that $\Lambda \geq 20\,k\log n$,
  all cores of vertex cover at most $D/4$, all $(4\Lambda)$-small
  rectangles and all subsets $B \subseteq [k]$
  we conclude that the probability that $G$ does not
  satisfy \cref{it:bounded-char-tight} is at most
  $2^{-\Lambda k \log n}$ since, by
  \cref{lem:count-H}, there are at most
  $
  \sum_{i = 1}^{D/4} 2^{3i(i + \log k)}
  \leq 2\cdot 2^{D(D/4 +  \log k)}
  \ll 2^{\Lambda k \log n}
  $ many core graphs,
  $
  \binom{n}{\leq 4 \Lambda}^k
  \leq
  2 \cdot 2^{4\Lambda k \log n}
  $
  rectangles that are $(4\Lambda)$-small and only $2^k$ subsets $B$.
\end{proof}

%
\subsection{Probabilistic Bound on Sums of Fourier Characters}
\label{sec:encode}

The rest of this section is dedicated to the proof of 
\cref{lem:encode}, restated here for convenience.

\encodinglemma*

\begin{proof}
  In order to bound the probability
  $\Prob[G]{|\sum_{t \in Q} \chi_{F(t)}(G)\xi(t)| > s}$ we resort to a
  high moment version of the Markov inequality and conclude that
  \begin{align}\label{eq:encode-markov}
    \PROB[{G[V_A]}]
    {\big|\sum_{t \in Q} \chi_{F(t)}(G)\xi(t)\big| > s}
    \le
    \frac{
    \E_{G[V_A]}
    \Bigl[
    \big(
    \sum_{t \in Q} \chi_{F(t)}(G)\xi(t)
    \big)^m
    \Bigr]
    }
    {s^{m}} \eqcomma
  \end{align}
  for even $m$. For conciseness, let us write $G$ instead of $G[V_A]$
  and rewrite above expectation to obtain that
  \begin{align}
    \label{eq:encode-expectation}
    \E_G
    \Bigl[
    \big(
    \sum_{t \in Q} \chi_{F(t)}(G)\xi(t)
    \big)^m
    \Bigr]
    &=
      \sum_{t_1, \ldots, t_m \in Q}
      \E_G
      \bigl[
      \prod_{i \in [m]}
      \chi_{F(t_i)}(G)
      \bigr]
      \cdot
      \prod_{i \in [m]}
      \xi(t_i)\\
    &\le
      \sum_{t_1, \ldots, t_m \in Q}
      \big|
      \E_G
      \bigl[
      \prod_{i \in [m]}
      \chi_{F(t_i)}(G)
      \bigr]
      \big|
      \cdot
      \big|
      \prod_{i \in [m]}\xi(t_i)
      \big|\\
    &\le
      \max_{t \in Q} |\xi(t)|^m \cdot
      \sum_{t_1, \ldots, t_m \in Q}
      \big|
      \E_G
      \bigl[
      \prod_{i \in [m]}
      \chi_{F(t_i)}(G)
      \bigr]
      \big|
      \eqperiod
  \end{align}
  Observe that the magnitude of each expectation in
  \cref{eq:encode-expectation} is $0$ unless each edge appears at
  least twice in the product, in which case it is at most
  $(1/p)^{|E(F)|\cdot m}$. Indeed, for $\ell \in \N$ it holds that
  %
  \begin{align}
    \big|
    \E_G
    \bigl[
    \chi_{e(t)}(G)^\ell 
    \bigr]
    \big|
    \leq (1-p)\cdot 1 + p\cdot \frac{(1-p)^\ell}{p^\ell}
    \leq (1-p)\cdot \frac{1}{p^{\ell}} + p\cdot \frac{1}{p^{\ell}}
    = \frac{1}{p^{\ell}}\eqperiod
  \end{align}
  Therefore, given $t_1, \ldots, t_m \in Q$ such that each edge
  appears at least twice, we have that
  \begin{align}
    \big|
    \E_G
    \big[
    \prod_{i \in [m]}
    \chi_{F(t_i)}(G)
    \big]
    \big|
    &\leq p^{-|E(F)|\cdot m}\eqperiod
  \end{align}
  
  It remains to bound the number of tuples $t_1, \ldots, t_m$ such
  that each edge appears at least twice. In more detail, we want to
  count the number of ordered sequences $(t_1, \ldots, t_m)$ such that
  if $F$ is mapped to each tuple $t_i$, then in the resulting
  multi-graph $\calF = \cup_{i \in [m]} F(t_i)$ each edge appears at
  least twice.

  To this end, let $\calT$ be the family of ordered sequences
  $(t_i)_{i \in [m]}$ that give rise to such multi-graphs. In what
  follows we bound the cardinality of $\calT$ by an encoding
  argument: we encode each element of $\calT$ with few bits and thus
  establish that this set is small.

  Let us fix one such sequence $(t_i)_{i \in [m]} \in \calT$. Encoding
  each tuple separately is too costly. To minimize the bits needed we
  use the property that in the resulting multi-graph
  $\calF = \cup_{i \in [m]} F(t_i)$ each edge is covered at least
  twice---in fact, we only use this property for edges in the
  matching. We encode the sequence as follows.
  \begin{itemize}
  \item First, for each $i \in [m]$, we go through the edges of the
    $i$th copy of $M$ in some fixed order. Suppose we consider an edge
    $\set{u,v} \in M$ in the $i$th copy of $F$. If this edge has not
    been matched yet, then we write down an integer $0 \le j \le m$
    indicating that the tuples $t_i$ and $t_j$ map the vertices $u$ and
    $v$ to the same place. The edge $\set{u,v}$ of the $j$th copy of $F$
    is then said to be matched. Otherwise, if the edge $\set{u, v}$ of
    the $i$th copy of $F$ has already been matched, then we continue
    with the next edge.
  \item After completing above step we encode the tuples. We use the
    knowledge of the previous step: if we indicated that the $u$th
    vertex of the tuples $t_i$ and $t_j$ are mapped to the same place,
    then we encode the target only once. More precisely, we encode the
    tuples as follows. Iterate over each $i \in [m]$ and
    $u \in V(E(F))$. If the $u$th vertex of $t_i$ has already been
    mapped, then continue with the next vertex. Otherwise, we write
    down an integer $0 \le v \le |Q_u|$ indicating that the $u$th
    vertex of $t_i$ is equal to $v \in Q_u$. If the $u$th vertex of
    $t_i$ has matched edges incident, then record the corresponding
    $u$th vertices as mapped.
  \end{itemize}

  It should be evident that this procedure can be inverted, i.e., from
  the information written we can uniquely decode the sequence
  $(t_i)_{i \in [m]}$, provided we know the graph $F$, the matching
  $M$, the rectangle $Q$ and the integer $m$.

  \newcommand{\numberMatched}{a}
  
  Let us tally the bits used in the above encoding. Suppose that in the
  first step we matched $\numberMatched$ edges.
  \begin{itemize}
  \item For each matched edge we wrote down $\log m$ bits for a total of
    $\numberMatched \cdot \log m$ bits, and
  \item at most
    $m \cdot \sum_{u \in V(E(F))} \log |Q_u| - \numberMatched \cdot \log \lbQ$ many
    bits in the second step: each matched edge reduces the total
    number of vertices to be mapped by $2$ (using that we only match
    edges in $M$). As for every edge $\set{u,v} \in M$ it holds that
    $|Q_u \times Q_v| \ge \lbQ$, we see that the number of bits
    required in the second step of the encoding is reduced by each
    matched edge by at least $\log \lbQ$ bits.
  \end{itemize}

  Hence we need at most
  $m \cdot \sum_{u \in V(E(F))} \log |Q_u| - \min_{\frac{m \cdot |M|}{2}
    \le \numberMatched \le m\cdot|M|} \numberMatched (\log \lbQ - \log m)$ many bits to encode
  such a sequence of tuples. As we assumed that $m \le \lbQ$ we
  conclude that
  \begin{align}
    \E_G
    \bigl[
    \big(
    \sum_{t \in Q} \chi_{F(t)}(G)\xi(t)
    \big)^m
    \bigr]
    &\le
    \max_{t \in Q} |\xi(t)|^m \cdot
    p^{-{|E(F)|\cdot m}} \cdot
    |\calT| \\
    &\le
    \left(
    \numberMatched \cdot
    p^{-{|E(F)|}} \cdot
    \Bigl(
    \frac{m}{\lbQ}
    \Bigr)^{|M|/2}
    \prod_{u \in V(E(F))} |Q_u|
    \right)^m
    \eqperiod
  \end{align}
  Substituting this bound in \cref{eq:encode-markov} results in the
  desired statement.
\end{proof}

\section{Concluding Remarks}
\label{sec:conclusion}

For $k \le n^{1/100}$ we prove an essentially tight average-case
$n^{\Omega(D)}$ coefficient size lower bound on Sherali-Adams
refutations of the $k$-clique formula for Erd\H{o}s-Rényi random
graphs with maximum clique of size $D$. In fact, we obtain a lower
bound on the sum of the magnitude of the coefficients appearing in a
(general) Sherali-Adams refutation.  The obvious problem left open is
to prove an $n^{\Omega(D)}$ \emph{monomial} size lower bound on
Sherali-Adams refutations of the clique formula.

One possible avenue to prove such a monomial size lower bound is to
argue that any Sherali-Adams proof of the clique formula can be
converted into a proof of the same monomial size but with small
coefficients. As shown in~\cite{GHJMPRT22} this is in general not
possible and one would hence have to leverage the structure of the
clique formula to argue that such a conversion exists. In fact, a
slightly weaker statement would suffice: recall that our lower bound
only counts the size of the coefficients of generalized monomials as
well as of monomials multiplied by edge axioms. As such we would just
need to be able to convert a general Sherali-Adams refutation into a
refutation with low coefficients for such monomials.

In contrast to previous lower bounds for clique, our proof strategy is
not purely combinatorial.  It might be fruitful to obtain an explicit
combinatorial description of $\mu_d$---we believe this could
potentially be used to prove average-case clique lower bounds for
other proof systems, including resolution.

A strength of our lower bound approach is that it is quite oblivious
to the encoding: one can introduce all possible extension variables
depending on a \emph{single} block and the lower bound argument still
goes through. This is because the only property we require of a
monomial $m$ is that the set of tuples
$Q_m = \set{t \mid \rho_t(m) \neq 0}$ whose associated assignment
$\rho_t$ sets $m$ to non-zero is a rectangle. By extending $\rho$ in
the natural manner to extension variables it is easy to see that $Q_m$
is still a rectangle.

Our lower bound strategy seems to fail quite spectacularly once the
edge probability is increased well beyond $1/2$. More precisely, once
$D = \omega(\log n)$, we fail to counter exponential in $d^2$ factors
that arise from encoding the core graphs: as long as $D = O(\log n)$
we can counter these with $s^{-d}$ terms, where $s$ is the minimum
block size of a good rectangle. As $s$ is clearly bounded by the block
size $n$, this approach fails once $D = \omega(\log n)$. We leave it
as an open problem to extend our result to the dense setting.

We rely on rather unorthodox pseudorandomness properties of the
underlying graph. It is natural to wonder whether these properties
follow from a previously studied notion of
pseudorandomness. Furthermore, it is wide open whether our lower bound
can be made explicit. In particular, we have not investigated whether
graphs that satisfy our pseudorandomness property can be constructed
deterministically.

Another application of our pseudo-measure $\mu_d$ is in communication
complexity. Suppose we consider the $k$-player number-in-hand model,
where player $i$ obtains a single node $u_i$ from block $V_i$. The
goal of the $k$ players is to find an edge missing in the induced
subgraph by the tuple $(u_1, \ldots, u_k)$. Consider the leaves of
such a communication protocol. Note that each leaf $\ell$ is
associated with a subrectangle $Q_\ell$ of an edge axiom. As the 
family
of these associated rectangles $Q_\ell$ partition the whole space, but
$|\mu_d(Q_\ell)| \le n^{-\Omega(D)}$, there must be at least
$n^{\Omega(D)}$ leaves.

Finally, we have not investigated whether our technique can be used to
obtain lower bounds for other proof systems. For example, is it
possible that with similar ideas one could obtain tree-like cutting
planes lower bounds with bounded coefficients? Possibly even with
unbounded coefficients?  The communication complexity view of the
problem suggests that this may be a viable approach.

\section*{Acknowledgements}
The authors are grateful to Albert Atserias, Per Austrin, Johan
Håstad, Jakob Nordström, Pavel Pudlák, Dmitry Sokolov, Joseph
Swernofsky, Neil Thapen and Marc Vinyals for helpful discussions and
feedback. In particular we would like to thank Albert Atserias who
observed that cores seem to be related to kernels.

We benefited from feedback of the participants of the Oberwolfach
workshop~2413 ``Proof Complexity and Beyond'' and would like to thank
the anonymous FOCS reviewers whose comments helped us improve the
exposition in the paper considerably.

This work was supported by the Approximability and Proof Complexity
project funded by the Knut and Alice Wallenberg Foundation. Part of
this work was carried out while all authors were associated with KTH
Royal Institute of Technology. Other parts were carried out while
taking part in the semester program \emph{Lower Bounds in
  Computational Complexity} in the fall of 2018 and the semster
programs \emph{Meta-Complexity} and \emph{Satisfiability: Extended
  Reunion} in the spring of 2023 at the Simons Institute for the
Theory of Computing at UC Berkeley.

Susanna F. de Rezende received funding from ELLIIT, from Knut and
Alice Wallenberg grants \mbox{KAW 2018.0371} and \mbox{KAW 2021.0307},
and from the Swedish Research Council grant \mbox{2021-05104}. Aaron
Potechin was supported by NSF grant \mbox{CCF-2008920}. Kilian Risse
is supported by Swiss National Science Foundation project
\mbox{200021-184656} “Randomness in Problem Instances and Randomized
Algorithms”.

\appendix
\section{Explicit Characterization of $E^\star_F$}
\label{sec:explicit}

Instead of implicitly describing $E^\star_F$ as done in
\cref{sec:cores-bounds} we can describe this edge set explicitly as
follows. Given a graph $F$ and a set of vertices $\lexvc$, for every
$U \subseteq V\bigl(E(F)\bigr) \setminus \lexvc$ such that there is a
matching from $U$ to $\lexvc$ that covers $U$, we let $A_U\subseteq
\lexvc$ be the set of vertices $w \in \lexvc$ that could potentially
be used to extend one of these matchings.  More formally, $A_U$ is the
set of vertices $w \in \lexvc$ such that there is a matching from $U$
to $\lexvc\setminus \set{w}$ that covers $U$; in this way, if any
vertex $v \in [k] \setminus (\lexvc \cup U)$ is adjacent to $w$, then
there is a matching from~$U \cup \set{v}$ to~$\lexvc$ that covers $U
\cup \set{v}$.  With this definition we can define $E^\star_F$
formally as follows.

\begin{definition}[$E^\star_F$]
  Let $F \in \img(\core)$ and let $\lexvc, U_1, U_2$ denote the
  objects from \cref{alg:core} run on $F$.  For a set
  $U \subseteq V\bigl(E(F)\bigr) \setminus \lexvc$ 
  with a matching between $U$ and $\lexvc$ of size $|U|$, we let
  $A_U \subseteq W$ be the set of vertices
  \begin{align*}
    A_U = 
    \Set{
    w \in \lexvc
    \mid \exists
    \text{ matching of size }
    |U|
    \text{ between }
    \lexvc \setminus \set{w}
    \text{ and }
    U
    } \eqcomma
  \end{align*}
  and define
  \begin{align*}
    E^\star_F =
    \bigcup_{v \in [k] \setminus V(E(F)) }
    \Set{
    \set{v,w}\mid
    w \in \lexvc
    \setminus
    (A_{U_1 \cap [v-1]} \cup A_{U_2 \cap [v-1]})
    }\eqperiod
  \end{align*}
\end{definition}

\begin{lemma}
  For every $F \in \img(\core)$ it holds that $\core(H) = F$ if and
  only if $H = F \disjointunion E$ for some $E \subseteq E^\star_F$.
\end{lemma}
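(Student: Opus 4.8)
The plan is to prove both implications directly from the algorithmic description of $\core$ in \cref{alg:core}. The central tool is a greedy characterization of the sets $\lexvc, U_1, U_2$ produced by \cref{alg:core}: for a fixed minimum vertex cover $\lexvc$, the subsets of $[k]\setminus\lexvc$ that admit a matching into $\lexvc$ covering them are exactly the independent sets of a transversal matroid, so the lexicographically first maximal such set is computed greedily, scanning $v=1,2,\dots$ and adjoining $v$ whenever the current set together with $v$ is still matchable. Consequently, for $H$ with lex-first minimum vertex cover $\lexvc$, we have $v\in U_1$ iff $v\notin\lexvc$ and $(U_1\cap[v-1])\cup\set{v}$ is matchable into $\lexvc$ in $H$, and $v\in U_2$ iff $v\notin\lexvc\cup U_1$ and $(U_2\cap[v-1])\cup\set{v}$ is matchable into $\lexvc$ within $[k]\setminus(\lexvc\cup U_1)$. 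I would record this as a short lemma (it is the standard greedy algorithm for transversal matroids, and the lex order in the paper is precisely the one for which it is correct). I also use that $\core(F)=F$ for $F\in\img(\core)$ and that, by \cref{cl:idempotent}, running \cref{alg:core} on any $H$ with $\core(H)=F$ yields the same $\lexvc,U_1,U_2$ as running it on $F$; in particular $V\bigl(E(F)\bigr)\subseteq\lexvc\cup U_1\cup U_2$, and the vertices $v\notin V\bigl(E(F)\bigr)$ in the definition of $E^\star_F$ are exactly those outside $\lexvc\cup U_1\cup U_2$.

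For the forward direction, suppose $\core(H)=F$ and put $E=H\setminus F$, so $H=F\disjointunion E$. Fix $e=\set{v,w}\in E$. Since $\lexvc$ covers $H$ we may take $w\in\lexvc$, and since $e\notin F=H[\lexvc\cup U_1\cup U_2]$ we get $v\notin\lexvc\cup U_1\cup U_2$, hence $v\in[k]\setminus V\bigl(E(F)\bigr)$. To conclude $e\in E^\star_F$ it remains to show $w\notin A_{U_1\cap[v-1]}$ and $w\notin A_{U_2\cap[v-1]}$. If $w\in A_{U_1\cap[v-1]}$ there is a matching $N\subseteq F\subseteq H$ of size $|U_1\cap[v-1]|$ between $U_1\cap[v-1]$ and $\lexvc\setminus\set{w}$; adjoining $e$ gives a matching in $H$ covering $(U_1\cap[v-1])\cup\set{v}$ into $\lexvc$, so by the greedy characterization $v\in U_1$, a contradiction. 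The argument for $A_{U_2\cap[v-1]}$ is identical, the relevant matchings staying within $[k]\setminus U_1$ because $U_2,\lexvc$ and $v$ all avoid $U_1$.

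For the reverse direction, fix $E\subseteq E^\star_F$ and run \cref{alg:core} on $F\disjointunion E$; the claim is that it produces the same $\lexvc,U_1,U_2$ as on $F$, whence $\core(F\disjointunion E)=(F\disjointunion E)[\lexvc\cup U_1\cup U_2]=F$, the last equality because every edge of $E$ has an endpoint outside $\lexvc\cup U_1\cup U_2$. The set $\lexvc$ is unchanged: it covers $F\disjointunion E$ (each edge of $E$ carries its $\lexvc$-endpoint), so $\vc(F\disjointunion E)=\vc(F)$ and every minimum vertex cover of $F\disjointunion E$ is one of $F$, so the lex-first ones coincide. For $U_1$ I would induct on $v$ using the greedy characterization: assuming the greedy runs on $F$ and on $F\disjointunion E$ agree on $[v-1]$ with common current set $U_1\cap[v-1]$, it suffices that $(U_1\cap[v-1])\cup\set{v}$ is matchable into $\lexvc$ in $F$ iff in $F\disjointunion E$; only $\Leftarrow$ needs work. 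Take a matching $M$ in $F\disjointunion E$ witnessing matchability that uses the fewest edges of $E$; if it uses an $E$-edge, pick $e_0=\set{v_0,w_0}\in M\cap E$ whose endpoint $v_0$ in $(U_1\cap[v-1])\cup\set{v}$ is smallest. Then $M$ covers $U_1\cap[v_0-1]$ using only $F$-edges and matches those vertices into $\lexvc\setminus\set{w_0}$, so $w_0\in A_{U_1\cap[v_0-1]}$, contradicting $e_0\in E^\star_F$; hence $M\subseteq F$. The same reasoning handles $U_2$ (ground set $[k]\setminus(\lexvc\cup U_1)$ and sets $A_{U_2\cap[v_0-1]}$).

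I expect the principal difficulty to be organizational rather than conceptual: one must set up the greedy/transversal-matroid characterization of $U_1,U_2$ carefully, and then keep track that the three runs of \cref{alg:core} — on $H$, on $F$, and on $F\disjointunion E$ — refer to one and the same triple $\lexvc,U_1,U_2$, so that the sets $A_U$ (defined in terms of $F$) line up with the prefixes $U_1\cap[v-1]$ and $U_2\cap[v-1]$ appearing in the matchability conditions. The one genuinely non-routine step is the ``fewest $E$-edges, smallest endpoint'' reduction in the reverse direction, which is exactly where the defining property of $E^\star_F$ is consumed.
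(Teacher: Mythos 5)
Your proof is correct and takes essentially the same route as the paper's: in both directions everything reduces to showing that the first vertex at which the runs of \cref{alg:core} on $F$ and on $F\disjointunion E$ (respectively on $H$) could diverge would have to be adjoined via an edge $\set{v,w}$ that is neither in $F$ nor allowed by the condition $w\notin A_{U_1\cap[v-1]}\cup A_{U_2\cap[v-1]}$ defining $E^\star_F$. Your greedy/transversal-matroid characterization of the lexicographically first maximal matchable set is a clean repackaging of the minimal-counterexample arguments the paper runs directly on the lex order (and your ``fewest $E$-edges'' step simplifies further, since every vertex of $U_1\cap[v-1]$ lies in $V\bigl(E(F)\bigr)$ and so at most one edge of the witnessing matching can come from $E$), so the two proofs differ only in presentation.
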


\begin{proof}
  We first argue that given $F \in \img(\core)$ then for any
  $E \subseteq E^\star_F$ it holds that
  $\core( F \disjointunion E) = F$.  Let $H = F \disjointunion E$ for
  some $E \subseteq E^\star_F$ and denote by $\lexvc, U_1$ and $U_2$
  the sets obtained by \cref{alg:core} on~$F$. Note that $H$ has the
  same edges as $F$ on the vertices $\lexvc \cup U_1 \cup U_2$. Let us
  argue that a run of \cref{alg:core} on $H$ results in the same sets
  $\lexvc$, $U_1$ and $U_2$.
  \begin{enumerate}
    
  \item All edges in $E^\star_F$ are incident to a vertex in $\lexvc$
    and hence $\lexvc$ is a vertex cover of $H$. Since $\lexvc$ is the
    lexicographically first minimum vertex cover of $F$ it must be the
    lexicographically first minimum vertex cover of $H$.

  \item We now argue that the set $U_1$ is the lexicographically first
    maximal subset of $[k] \setminus \lexvc$ with a matching in $H$ of
    size $|U_1|$ from $U_1$ to $\lexvc$. To see this, let
    $U'_1 \subseteq [k] \setminus \lexvc$ be the lexicographically
    first maximal set such that there is a matching $M$ in $H$ between
    $U'_1$ and $\lexvc$ that covers $U_1'$, and suppose for the sake
    of contradiction that $U'_1 \neq U_1$.  This implies that either
    $U'_1 \supsetneq U_1$ or $U'_1$ is lexicographically smaller than
    $U_1$.  Let $v$ be the lexicographically first vertex in the
    symmetric difference of $U'_1$ and $U_1$.  Note that it must be
    the case that $v\in U'_1$ (since either $U'_1 \supsetneq U_1$ or
    $U'_1$ is maximal and lexicographically smaller than $U_1$).  Let
    $w$ be the vertex in $\lexvc$ that $v$ is matched to in $M$. Note
    that $\{v,w\}$ cannot be in $F$ since the set
    $(U_1'\cap U_1)\cup \set{v}$ would contradict the choice of $U_1$.
    Moreover, it holds that
    $w \in A_{U'_1 \cap [v-1]} = A_{U_1 \cap [v-1]}$ hence $\{v,w\}$
    is not in $E^\star_F$. But $\set{v,w} \not\in E^\star_F \cup F$
    contradicts $\set{v,w} \in H \subseteq E^\star_F \cup F$.

  \item Similarly, we argue that the set $U_2$ is the
    lexicographically first maximal subset disjoint of
    $\lexvc \cup U_1$ with a matching in $H$ of size $|U_2|$ from
    $U_2$ to $\lexvc$. To see this, let
    $U'_2 \subseteq [k] \setminus (\lexvc \cup U_1)$ be the
    lexicographically first maximal set such that there is a matching
    $M$ in $H$ between $U'_2$ and $\lexvc$ that covers $U_2'$, and
    suppose for the sake of contradiction that $U'_2 \neq U_2$.  This
    implies that either $U'_2 \supsetneq U_2$ or $U'_2$ is
    lexicographically smaller than $U_2$.  Let $v$ be the
    lexicographically first vertex in the symmetric difference of
    $U'_2$ and $U_2$.  Note that it must be the case that $v\in U'_2$
    (since either $U'_2 \supsetneq U_2$ or $U'_2$ is maximal and
    lexicographically smaller than $U_2$).  Let $w$ be the vertex in
    $\lexvc$ that $v$ is matched to in $M$. Note that $\{v,w\}$ cannot
    be in $F$ since the set $(U_2'\cap U_2)\cup \set{v}$ would
    contradict the choice of $U_2$.  Moreover, it holds that
    $w \in A_{U'_2 \cap [v-1]} = A_{U_2 \cap [v-1]}$, hence $\{v,w\}$
    is not in $E^\star_F$. But $\set{v,w} \not\in E^\star_F \cup F$
    contradicts $\set{v,w} \in H \subseteq E^\star_F \cup F$.
  \end{enumerate}
  
  For the other direction, let $\lexvc, U_1$ and $U_2$ be the sets obtained by
  \cref{alg:core} on $H$, and let $F = H[\lexvc \cup U_1 \cup U_2]$.
  Assume, for sake of contradiction, that $H$ contains an edge
  $e \notin F \cup E^\star_F$.
  We analyse four cases depending on where the endpoints of $e$ are located.
  \begin{enumerate}
  \item If $e$ is not incident to a vertex in $\lexvc$, then this contradicts 
  the fact that $\lexvc$ is a vertex cover of~$H$. 

  \item If both endpoints of $e$ are in $V\bigl(E(F)\bigr) = \lexvc \cup U_1 \cup U_2$ 
    then $H$ has different edges than $F$ on the vertices $V\bigl(E(F)\bigr)$,
   contradicting that $F = H[\lexvc \cup U_1 \cup U_2]$.
  \end{enumerate}
    We are left to analyse the case when $e = \{v,w\}$ for some $w \in \lexvc$ and
    $v \in [k] \setminus V\bigl(E(F)\bigr)$. 
    Note that since $e \notin E^\star_F$ it must hold that 
    $w\in \lexvc \cap A_{U_1 \cap [v-1]}$ or $w\in \lexvc \cap A_{U_2 \cap [v-1]}$.

  \begin{enumerate}
\addtocounter{enumi}{2}
  \item If $e = \{v,w\}$ for some $w \in W\cap A_{U_1 \cap [v-1]}$ and
    $v \in [k] \setminus V\bigl(E(F)\bigr)$ (hence $v\not\in U_1$), 
    then there is a matching
    between $U=(U_1 \cap [v-1]) \cup \{v\}$ and $\lexvc$ that covers $U$. 
    Since $U$ precedes $U_1$ lexicographically,
    this contradicts our choice of $U_1$.

  \item If $e = \{v,w\}$ for some $w \in W\cap A_{U_2 \cap [v-1]}$ and
    $v \in [k] \setminus V\bigl(E(F)\bigr)$ (hence $v\not\in U_2$),  
    then there is a matching
    between $U=(U_2 \cap [v-1]) \cup \{v\}$ and $\lexvc$ that covers $U$. 
    Since $U$ precedes $U_2$ lexicographically,
    this contradicts our choice of $U_2$.
  \end{enumerate}
  In each case, we obtained a contradiction and therefore we conclude that if
  $\core(H) = F$ then $H$ cannot contain an edge $e \notin F \cup E^\star_F$. 
  This completes the proof of the lemma.
\end{proof}


\bibliographystyle{alpha}
\bibliography{references}

\end{document}